%% file: main.tex
\documentclass[11pt,a4paper]{article}
\usepackage[lmargin=1.0in,rmargin=1.0in,bottom=1.0in,top=1.0in,twoside=False]{geometry}
\usepackage{hchang}
\usepackage{mathtools}
\usepackage{graphicx}%
\usepackage{enumerate}
\usepackage{tikz}
\usetikzlibrary{calc}
\usetikzlibrary{shapes}
\usetikzlibrary{decorations.pathmorphing}
\usetikzlibrary{decorations.pathreplacing, calligraphy}
\tikzset{snake it/.style={decorate, decoration=snake}}

\usepackage{caption}    %
\usepackage{subcaption} %
\usepackage{thmtools} %
\usepackage{float}
\usepackage[T1]{fontenc}

\usepackage{enumitem}

\usepackage{microtype}
\usepackage{comment}
\usepackage{cleveref}
\usepackage[normalem]{ulem}

\usepackage{stmaryrd}
\usepackage{marvosym}
\usepackage{booktabs}
\usepackage{multirow}
\usepackage{makecell}

\newif\ifanonymous

\ifanonymous%
\else%
\nolinenumbers%
\fi%

\newcommand{\cutgraph}{\textrm{\CutLeft}}

\newtheorem{lemma}{Lemma}[section]
\newtheorem{theorem}[lemma]{Theorem}
\newtheorem{corollary}[lemma]{Corollary}
\newtheorem{definition}[lemma]{Definition}
\newtheorem{claim}[lemma]{Claim}

\newtheorem{remark}{Remark}

\newcommand{\Oh}{\mathcal{O}}
\newcommand{\dist}{\mathsf{dist}}
\newcommand{\prof}[2]{\mathsf{prof}_{#1}[#2]}
\newcommand{\ppath}[1]{\llbracket #1 \rrbracket}
\newcommand{\eps}{\varepsilon}
\newcommand{\real}{\mathbb{R}}

\newcommand{\cut}{\cutgraph}

\newcommand{\DL}{\mathcal{D}}

\newcommand{\dltri}[1]{(p_{#1}, \mathfrak{b}_{#1}, \mathfrak{t}_{#1})}
\newcommand{\dlpt}[1]{\mathfrak{#1}}

\newcommand{\LDL}{L^{\text{\#}}}

\setuptodonotes{size=\normalsize}

\newcommand{\Mpath}[2]{\llbracket #1#2 \rrbracket}
\newcommand{\inst}{\mathcal{I}}
\newcommand{\instind}[1]{\inst^{#1}}
\newcommand{\instexcl}[1]{\instind{\neg #1}}
\newcommand{\Gind}[1]{G^{#1}}
\newcommand{\Gexcl}[1]{\Gind{\neg #1}}
\newcommand{\Udist}{\widehat{\dist}}
\newcommand{\Uprof}[2]{\widehat{\mathsf{prof}}_{#1}[#2]}

\begin{document}

\ifanonymous
\author{Anonymous}
\else
\author{
  Kacper Kluk%
  \thanks{Institute of Informatics, University of Warsaw, Poland. \texttt{k.kluk@uw.edu.pl}}
  \and%
  Hung Le%
  \thanks{University of Massachusetts, Amherst, USA. \texttt{hungle@cs.umass.edu}}
  \and%
  Wojciech Nadara%
  \thanks{Institute of Informatics, University of Warsaw, Poland. \texttt{w.nadara@uw.edu.pl}}
  \and%
  Marcin Pilipczuk%
  \thanks{Institute of Informatics, University of Warsaw, Poland. \texttt{m.pilipczuk@uw.edu.pl}}
  \and%
  Hector Tierno%
  \thanks{University of Massachusetts, Amherst, USA. \texttt{htierno@umass.edu}}
  \and%
  Vinayak%
  \thanks{University of Massachusetts, Amherst, USA. \texttt{vvinayak@umass.edu}}
}
\fi
\title{A Polynomial Coreset for Furthest Neighbor in Planar Metrics%
\ifanonymous%
\else%
\thanks{K.K. and Ma.P. are supported by Polish National Science Centre SONATA BIS-12 grant number 2022/46/E/ST6/00143. W.N. is supported by European Union’s Horizon 2020 research
and innovation programme, grant agreement No. 948057 — BOBR. H.L., Vinayak, and H.T. are  supported by NSF grant CCF-2517033 and NSF CAREER Award CCF-2237288.}
\fi%
}

\date{}

\maketitle

\begin{abstract}

A furthest neighbor data structure on a metric space $(V,\dist)$ and a set $P \subseteq V$ answers the following query: given $v \in V$, output $p \in P$ maximizing $\dist(v,p)$; in the approximate version, it is allowed to report any $p \in P$ with $\dist(v,p) \geq (1-\eps)\max_{p' \in P} \dist(v,p')$ for an accuracy parameter $\eps \in (0,1)$. 
A particular type of approximate furthest neighbor data structure is an $\eps$-coreset: a small subset $Q \subseteq P$ such that for every query $v \in V$ there is a feasible answer
$p \in Q$. 

Our main result is that in planar metrics there always exists an $\eps$-coreset
for furthest neighbors of size bounded polynomially in $(1/\eps)$. 
This improves upon an exponential bound of Bourneuf and Pilipczuk~\cite{BP25}
and resolves an open problem of de Berg and Theocharous~\cite{BT24}
for the case of polygons with holes.

On the technical side, we develop a connection between $\eps$-coreset for furthest neighbors and an invariant of a metric space that we call an \EMPH{$\eps$-comatching index} --- a sibling of \text{$\eps$-(semi-)ladder} index, a.k.a,  $\eps$-scatter dimension, as defined by Abbasi et al~\cite{ABBCGKMSS23}. While the $\eps$-(semi-)ladder index of planar metrics admits an \EMPH{exponential} lower bound, we show that the $\eps$-comatching index of planar metrics is \EMPH{polynomial}, all in $1/\eps$. The exponential separation between $\eps$-(semi-)ladder and $\eps$-comatching is rather surprising, and the proof is the main technical contribution of our work.

Building on our new technical ideas, we provide two extensions:
\begin{enumerate}
    \item We show an $\eps$-coreset for $k$-center in planar metrics of size polynomial in $(1/\eps)$ for \EMPH{fixed $k$} (furthest neighbor coreset is a special case of $k$-center when $k=1$). We complement this result with an exponential coreset lower bound of $2^{\Omega(1/\eps)}$ for $k$-center when $k = 1/\eps$. In the regime of fixed $k$, our result substantially improves the recent $\eps$-coreset for $k$-center of Bourneuf and Pilipczuk~\cite{BP25} who showed a coreset bound $O(k)^{\poly(1/\eps)}$.

    \item We extend all our results to metrics induced by graphs of bounded Euler genus, with a polynomial dependency on the genus in the bound on the size of the coreset. We complement these upper bounds with a lower bound construction showing that, going beyond bounded genus graphs, e.g., minor-free graphs,  the comatching index and $\eps$-coreset for furthest neighbors must be exponential in $1/\eps$. 
  
\end{enumerate}
\end{abstract}

\thispagestyle{empty}

\newpage
\thispagestyle{empty}
\tableofcontents
\thispagestyle{empty}
\newpage

\clearpage
\setcounter{page}{1}

\input{introduction}

\input{prelim}

\input{comatching2coreset}

\input{comatching}
\input{reductions_bounds}
\input{kcentre}

\input{OtherApps}

\input{genus}
\input{lower-bound}

\subsection*{AI Disclosure}
The example of \Cref{thm:kcenter-lb2}, presented in \Cref{ss:lb2}, has been found by ChatGPT 5.4 Pro. We have found the example of \Cref{thm:kcenter-lb} by hand and asked ChatGPT to look for an improvement in the regime where $k$ and $d$ differ significantly. ChatGPT autonomously found the ``local gadget'' used in the construction. 
We have verified the correctness of the construction; the write-up is also done by the authors. 

\ifanonymous%
\else%
\paragraph{Acknowledgments.} H.L. would like to thank Lazar Milenković, Arnold Filtser, and Omrit Filtser for helpful discussions in the early stage of this work.
\fi%
\bibliographystyle{plain}
\bibliography{ref}

\end{document}

%% file: introduction.tex
\section{Introduction}

\paragraph{Furthest neighbor.}  Nearest neighbor and furthest neighbor searches are basic and well-studied proximity problems in computational geometry.  In the furthest neighbor problem, the goal is to preprocess a given set $P$ of
points in a metric space $(V,\dist)$ to quickly answer \emph{furthest neighbor queries}: for a point $v \in V$, return a point $p \in P$ that maximizes $\dist(v,p)$~\cite{AFW88,MMR01,CGBook08,PSSS17,BT24}.
In the approximate version, an accuracy parameter $\eps > 0$ is fixed 
and we want to return any $p \in P$ with $\dist(v,p) \geq (1-\eps)\max_{p' \in P} \dist(v,p')$. The furthest neighbor problem is closely related to computing the diameter of point sets.  In Euclidean spaces, computing approximate furthest neighbors can be efficiently reduced to computing approximate nearest neighbors~\cite{GIV01}.  In the graph setting, the distance to the furthest neighbor of a vertex is called the \EMPH{eccentricity } of the vertex. Computing the diameter and eccentricities is a fundamental graph problem that has been studied for decades.

A natural approach to the furthest neighbor problem is to identify,
in the preprocessing phase, a small set $Q \subseteq P$ that correctly represents $P$:
for any $v \in V$, there is a valid answer $p \in Q$. Then, answering a query boils down to a search in $Q$; if $Q$ is small, even an exhaustive search can be efficient. 
Such a set $Q$ is a \EMPH{coreset} for the furthest neighbor problem. 

\begin{definition}[Furthest Neighbor $\eps$-Coreset]\label{def:furthest}
  Given a metric space $(V,\dist)$, a set $P \subseteq V$ and a~real $\varepsilon > 0$, a \EMPH{furthest neighbor $\eps$-coreset} is a subset $Q \subseteq P$ such that
  for every $v \in V$ there exists $q \in Q$ such that: 
  \[ \dist(v,q) \geq (1-\varepsilon) \max \{\dist(v,p)~|~p \in P\}. \]
\end{definition}

In general, a \EMPH{coreset} of a dataset is a small subset such that a solution of a problem on the coreset gives a good (approximate) solution for the problem on the original dataset. Coreset has been a very well-studied topic since it offers a simple approach for solving problems in massive datasets: simply construct a coreset of the (big) input and solve the problems on the coreset. A particularly productive line of research is on coresets for \emph{clustering problems}, specifically $k$-means, $k$-median, and $(k,z)$-clustering, for all well-studied metrics, e.g.,  general metrics~\cite{FL11,CSS21,uniform22}, Euclidean metrics~\cite{FL11,FSS20,CSS21,uniform22,CDRSS25}, doubling metrics~\cite{FL11,Chen09,HJLW18}, and planar/minor-free metrics~\cite{BBHJKW20,BJKW21,CDRSS25}. For these problems, good coresets (of polylogarithmic size or smaller) are known to exist, and known bounds are (nearly) optimal~\cite{BBHJKW20}. 

Unfortunately, without any additional assumptions, the question for a coreset for the furthest neighbor problem is uninteresting:
There exists a very simple set of $n$ points such that any coreset for furthest neighbor of the point set has size $\Omega(n)$~\cite{BP25}. 
For Euclidean and doubling metrics, a simple packing bound gives a coreset of size $\eps^{-O(d)}$, which is optimal up to a small constant factor in the exponent~\cite{CGJK25}. Therefore, unlike other clustering counterparts, coresets for furthest neighbor suffer from the curse of dimensionality: the size of the coresets grows exponentially with the dimension.

In planar metrics, which are the shortest path metrics of edge-weighted planar graphs, Bourneuf and Pilipczuk~\cite{BP25} construct an $\eps$-coreset for furthest neighbors\footnote{Their coreset results are for $k$-center in which the furthest neighbor coreset is a special case, specifically when $k = 1$.} with an \EMPH{exponential size} of $2^{\poly(1/\eps)}$ where $\poly(\cdot)$ is a polynomial function. Despite its huge size, this result shows that it is possible to construct an  $\eps$-coreset for furthest neighbors whose size is independent of $n$.  As we will discuss in more detail below, there is a crucial bottleneck in their technique, and the exponential dependency on $1/\eps$ is the barrier. In this work, we overcome this barrier and construct the first  $\eps$-coreset of size polynomial in $1/\eps$ for planar metrics.

\begin{theorem}\label{thm:1center} Given a point set $P$ in a planar metric $(V,\dist)$, and a parameter $\eps\in(0,1)$, there exists a furthest neighbor $\eps$-coreset of size $\poly(1/\eps)$  for $P$ that can be constructed in polynomial time.
\end{theorem}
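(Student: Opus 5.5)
The plan has two halves, following the abstract: a reduction from coresets to a combinatorial invariant, and a polynomial bound on that invariant for planar metrics.

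\textbf{Definitions and greedy construction.} An \emph{$\eps$-comatching} of length $\ell$ is a sequence of pairs $(v_1,p_1),\dots,(v_\ell,p_\ell)$ with $v_i\in V$, $p_i\in P$, and a scale $r>0$ such that
\[
\dist(v_i,p_j)<(1-\eps)r \quad\text{for } j<i, \qquad \dist(v_i,p_i)\ge r .
\]
The \emph{comatching index} is the maximum such $\ell$. I would build $Q$ greedily. Start with $Q=\{p_0\}$ for an arbitrary $p_0$. While some query $v$ has $\max_{q\in Q}\dist(v,q)<(1-\eps)\max_{p\in P}\dist(v,p)$, add a furthest point $p$ of $v$ to $Q$. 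Each such query can be detected in polynomial time by checking all vertices $v$ against all of $P$.

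\textbf{Bounding the greedy run.} Call the recorded pairs $(v_i,p_i)$. Their radii $R_i=\dist(v_i,p_i)$ vary, so I would first scale-bucket them. Any two points of $P$ are within distance $D=\operatorname{diam}(P)$, and $p_0$ is already in $Q$. Hence $R_i\le \dist(v_i,p_0)+D<(1-\eps)R_i+D$, so $R_i<D/\eps$. A comparable bound gives $R_i\ge D/2$, since otherwise $p_0$ would already answer $v_i$ well.

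So all radii lie in $[D/2,D/\eps]$. Partition this range into $O(\eps^{-1}\log(1/\eps))$ buckets of ratio $1+\eps/2$. Within one bucket, the pairs form an $(\eps/2)$-comatching at a common scale $r$: for $j<i$ we have $\dist(v_i,p_j)<(1-\eps)R_i$, and $R_i\le (1+\eps/2)r$. The greedy length is therefore at most $O(\eps^{-1}\log(1/\eps))$ times the $(\eps/2)$-comatching index. It remains to show the comatching index of planar metrics is $\poly(1/\eps)$.

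\textbf{Bounding the comatching index (main obstacle).} First, reductions: by standard scaling, assume $r=1$. Restrict attention to a shortest-path tree or a balanced shortest-path separator that splits a long comatching into pieces crossing few shortest paths. The idea is to use the planar structure, where $O(1)$ shortest paths separate the graph, and to encode each $v_i$ by its distance profile $\prof{v_i}{\cdot}$ along a separator path discretized at granularity $\eps$. Each profile is a $1$-Lipschitz function.

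The condition "far from $p_i$, near all earlier $p_j$" makes profiles evolve monotonically. The danger is that this yields only an exponential count of profiles, which is exactly the Bourneuf–Pilipczuk barrier. To get polynomial, I would exploit planarity more strongly. The shortest paths $\llbracket v_i p_i\rrbracket$ must cross the paths $\llbracket v_j p_j\rrbracket$ in an ordered, uncrossing manner. I would try a charging argument: each new pair forces, on some separator path, a ``breakpoint'' of the lower envelope of the $\prof{p_j}{\cdot}$ to shift by at least $\eps$. A length-$O(1/\eps)$ path with $\poly(1/\eps)$ envelope complexity (planar envelope arguments à la Davenport–Schinzel) would then bound the count.

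This step, showing polynomial rather than exponential growth despite semi-ladders being exponential, is where I expect the real difficulty lies. The distinguishing feature of comatchings to exploit is that only the \emph{diagonal} pairs are far.
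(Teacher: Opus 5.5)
The gap is in your first reduction. The object you call an $\eps$-comatching requires $\dist(v_i,p_j)<(1-\eps)r$ only for $j<i$. Up to reversing the order, that is exactly an $\eps$-semi-ladder (an $\eps$-scatter), and it is all the greedy run can certify, because nothing constrains $\dist(v_i,p_j)$ for $j>i$. In planar metrics this index is \emph{not} polynomial: Soko{\l}owski's construction~\cite{Sokolowski21} gives an $\eps$-ladder, hence a semi-ladder, of length $2^{\Omega(1/\eps)}$. So the statement you reduce to (``the comatching index of planar metrics is $\mathrm{poly}(1/\eps)$'', with your definition) is false. Bounding the greedy run through this index cannot beat the exponential bound of Bourneuf and Pilipczuk.

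Your closing remark that ``only the diagonal pairs are far'' describes the two-sided notion, $\dist(v_i,p_j)\le(1-\eps)r$ for all $i\neq j$. Nothing in your argument produces that object, and the symmetry is genuinely needed downstream. For example, the uncrossing fact that $\ppath{p_iq_j}$ and $\ppath{p_jq_i}$ are disjoint uses that \emph{both} off-diagonal distances are at most $(1-\eps)R$.

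The missing idea is a reduction whose output is a two-sided comatching, and the paper gets it by abandoning greedy. Bucket the queries by $\dist(v,z_v)\in[i\delta,(i+1)\delta)$, where $\delta=\eps\Delta/4$ and $z_v$ is a furthest point of $v$. For each query set $S^v=\{u\in P:\dist(v,u)\ge(i-1)\delta\}$; any hitting set of the family $\{S^v\}$ is a coreset for that bucket. These sets are complements of balls, so they have VC-dimension at most $4$ in planar metrics, and LP rounding gives a hitting set of size $O(\tau^*\log\tau^*)$.

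To bound $\tau^*$, sample $2K$ queries from an optimal dual packing, with probabilities $y_v/\tau^*$. The dual constraint at $z_v$ shows that a sample $u$ satisfies $z_v\in S^u$ with probability at most $1/\tau^*$. After deleting conflicts you keep $K=\lfloor\tau^*/4\rfloor$ queries such that $\dist(u,z_v)<(i-1)\delta$ for \emph{all} distinct $u,v$ and $\dist(v,z_v)\ge i\delta$. This is a true $(\eps^2/4)$-comatching.

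Your second half is also only a hope: no mechanism links a new pair to an $\eps$-shift of an envelope breakpoint. Contrary to your worry, rounded distance profiles are not exponentially many. Profiles to a $\delta$-net $Z$ of a few short paths take only $O(|Z|^4\eps^{-4})$ values, by the VC-dimension of profile set systems and Sauer--Shelah. This yields the separation lemma, which drives the reductions to local and then to linked noncrossing instances. There, a topological analysis and a final cut-and-paste step give the bound. That step rearranges $2|I|$ paths of length at most $(1-\eps)R$ into $2|I|-4$ paths of length at least $R$. It bounds a structured subfamily by $O(1/\eps)$, hence the comatching index by $\mathrm{poly}(1/\eps)$.
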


Theorem~\ref{thm:1center} implies similar results for 
many geometric settings in the plane.
Recently, de Berg and Theocharous~\cite{BT24} studied $\eps$-coreset for simple polygon, and constructed an $\eps$-coreset of size $O(1/\eps^2)$. They asked if such a coreset exists for \EMPH{polygon with holes}, and suggested that a different approach is needed. 
Theorem~\ref{thm:1center} resolves their question in the
affirmative.
(We discuss formally how \Cref{thm:1center} yields an $\eps$-coreset for furthest neighbor
in polygon with holes in \Cref{subsec:other}.)

\paragraph{Structural invariants of metric spaces.}
Consider the following very simple greedy algorithm for constructing a~furthest neighbor $\eps$-coreset: as long as there exists a point $v\in V$ that has no $(1-\eps)$-approximate furthest neighbor in $Q$, then add the furthest neighbor $q$ of $v$ in $P$ to $Q$; $v$ is called the \EMPH{witness} for (the addition of) $q$. To bound the size of $Q$, let $\ell = |Q|$ and $q_1, \ldots, q_{\ell}$ be the points in $Q$ ordered by the greedy algorithm, i.e., $q_{j}$ is added after $q_{i}$ whenever $i < j$. Let  $L = \{(q_1,v_1), \ldots, (q_{\ell}, v_{\ell})\}$ be the sequence of pairs where $v_{i}$ is witness of $q_i$, $i \in [\ell]$. 
By a simple bucketing trick, one could assume that the distances $\dist(q_i,v_i)$
are roughly the same, for simplicity, say equal to $R$. (This trick incurs a loss of $1/\eps$ factor in the size of the final $\eps$-coreset, which is considered negligible here.) As $q_j$ is added after $q_1,\ldots, q_{j-1}$, we have $\dist(q_i,v_j) < (1-\eps)R$ for every $i < j$. 

Abbasi et al.~\cite{ABBCGKMSS23} identified the maximum length of such a structure $L$
as a critical invariant of a metric space that governs the complexity of a natural algorithmic
approach to a wide family of clustering problems and called it an \EMPH{$\eps$-scatter}. Bourneuf and Pilipczuk~\cite{BP25} observed that the structure $L$ resembles the notion of a \EMPH{semi-ladder} in 
the theory of structurally sparse graphs~\cite{FPST18,Sokolowski21,AA14}.

The main focus of this theory are classes of graphs
that can be defined using first-order logic
in classes of sparse graphs~\cite{DBLP:conf/icalp/BonnetBEGMPPT25,DBLP:journals/lmcs/Dreier23,DBLP:conf/focs/DreierEMMPT24,DBLP:conf/lics/DreierGKPT22,DBLP:journals/corr/abs-2601-14906,DBLP:conf/stoc/DreierMS23,DBLP:conf/icalp/DreierMST23,DBLP:conf/stoc/DreierMT24,DBLP:conf/stoc/DreierT25,DBLP:journals/tocl/GajarskyKNMPST20,DBLP:conf/icalp/GajarskyMMOPPSS23}, such as planar graphs,
graphs excluding a fixed minor, or, more generally,
bounded expansion graphs and nowhere dense graphs~\cite{sparsity}.
This area experienced rapid growth in the last decade,
with the main goal being a full understanding of the limits
of efficient first-order model checking algorithms. 
We refer to a recent survey~\cite{DBLP:journals/corr/abs-2501-04166} for a broader introduction.

Let us make a small detour into the theory of structurally sparse graphs.
Let $H$ be a simple unweighted graph. Let $L = (p_i,q_i)_{i=1}^\ell$ be a sequence
of vertices in $H$. We say that $L$ is
\begin{description}
\item[a semi-ladder] if $p_iq_i \notin E(H)$ for every $i \in [\ell]$, but $p_iq_j \in E(H)$
for every $1 \leq i < j \leq \ell$;
\item[a ladder] if for every $i,j \in [\ell]$ it holds that $p_iq_j \in E(H)$ if and only if $i < j$;
\item[a comatching] if for every $i,j \in [\ell]$ it holds that $p_iq_j \in E(H)$ if and only if $i \neq j$. 
\end{description}
Note that a semi-ladder does not specify adjacency between $p_i$ and $q_j$ for $i > j$
and none of the above definitions cares about the adjacencies between the $p$-vertices and between the $q$-vertices. A simple Ramsey argument shows that a huge semi-ladder contains a large ladder or a large comatching.

In the setting of structurally sparse graphs, the graph $H$ is first-order interpreted
in some other graph $G$, that is, its edges represent pairs $(u,v)$ of vertices of $G$ satisfying
a fixed first-order formula $\phi(u,v)$.
A classic ``benchmark'' formula is ``the distance between $u$ and $v$ is at most $d$''
for a constant $d$. Following this analogy, 
in our ``metric'' setting, the graph $H$ represents
pairs of vertices that are ``close'': whose distance is at most $(1-\eps)R$. 
This gives the following definition (see Figure~\ref{fig:comatching-ladder}).

\begin{figure}[ht]
    \centering
    \begin{subfigure}{0.32\textwidth}
        \centering
        \begin{tikzpicture}[scale=1, every node/.style={circle, draw, fill=white, inner sep=1.5pt}]
            \foreach \i in {1,...,5}{
                \node (p\i) at (\i*1,3) {$p_{\i}$};
                \node (q\i) at (\i*1,0) {$q_{\i}$};
            }       
            \foreach \i in {1,...,5}{
                \draw[dashed, thick] (p\i) -- (q\i);
            }       
            \foreach \i in {1,...,5}{
                \foreach \j in {1,...,5}{
                    \ifnum\j<\i
                        \draw[] (p\i) -- (q\j);
                    \fi
                }
            }
        \end{tikzpicture}
        \caption{$\eps$-semi-ladder}
        \label{fig:semi-ladder}
    \end{subfigure}
    \begin{subfigure}{0.32\textwidth}
        \centering
        \begin{tikzpicture}[scale=1, every node/.style={circle, draw, fill=white, inner sep=1.5pt}]
            \foreach \i in {1,...,5}{
                \node (p\i) at (\i*1,3) {$p_{\i}$};
                \node (q\i) at (\i*1,0) {$q_{\i}$};
            }       
            \foreach \i in {1,...,5}{
                \draw[dashed, thick] (p\i) -- (q\i);
            }       
            \foreach \i in {1,...,5}{
                \foreach \j in {1,...,5}{
                    \ifnum\j<\i
                        \draw[] (p\i) -- (q\j);
                    \fi
                    \ifnum\j>\i
                        \draw[dashed,thick] (p\i) -- (q\j);
                    \fi
                }
            }
        \end{tikzpicture}
        \caption{$\eps$-ladder}
        \label{fig:ladder}
    \end{subfigure}
    \begin{subfigure}{0.32\textwidth}
        \centering
        \begin{tikzpicture}[scale=1, every node/.style={circle, draw, fill=white, inner sep=1.5pt}]
            \foreach \i in {1,...,5}{
                \node (p\i) at (\i*1,3) {$p_{\i}$};
                \node (q\i) at (\i*1,0) {$q_{\i}$};
            }
            \foreach \i in {1,...,5}{
                \draw[dashed, thick] (p\i) -- (q\i);
            }
            \foreach \i in {1,...,5}{
                \foreach \j in {1,...,5}{
                    \ifnum\i=\j\relax
                    \else
                        \draw[] (p\i) -- (q\j);
                    \fi
                }
            }
        \end{tikzpicture}
        \caption{$\eps$-comatching}
        \label{fig:comatching}
    \end{subfigure}
    \caption{The dotted line indicates distance $\geq R$, and the solid line indicates distance $\leq(1-\eps)R$.}
    \label{fig:comatching-ladder}
\end{figure}
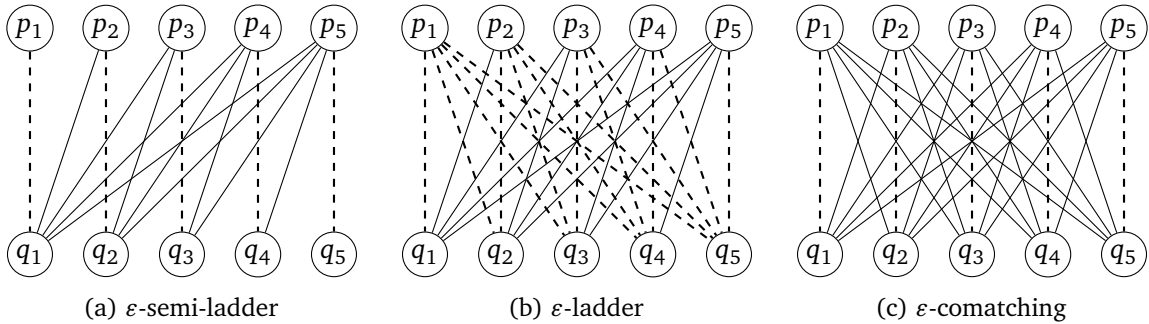

\begin{definition}[metric ladders]\label{def:metric-ladder}
Let $\eps > 0$ be an accuracy parameter and let $(V,\dist)$ be a metric space.
A sequence $L = (p_i,q_i)_{i=1}^\ell$ of points in $V$ is 
\begin{description}
    \item[an $\eps$-semi-ladder] if for every $i \in [\ell]$ we have $\dist(p_i,q_i) > R$
    while for every $1 \leq i < j \leq \ell$ we have $\dist(p_i,q_i) \leq (1-\eps)R$ for some $R > 0$;
    \item[an $\eps$-ladder] if for every $i,j \in [\ell]$ we have $\dist(p_i,q_j) \leq (1-\eps)R$
    if $i < j$ and $\dist(p_i,q_j) > R$ otherwise, for some $R > 0$;
    \item[an $\eps$-comatching] if for every $i \in [\ell]$ we have $\dist(p_i,q_i) > R$
    while for every $i,j \in [\ell]$, $i \neq j$ we have $\dist(p_i,q_i) \leq (1-\eps)R$ for some $R > 0$.
\end{description}
\end{definition}
The notion of an $\eps$-semi-ladder concides with the notion of an $\eps$-scatter
of~\cite{ABBCGKMSS23}, but we prefer the former name as the name ``semi-ladder'' is well established in the theory of structurally sparse graphs.

Following the nomenclature of structurally sparse graphs, we call the maximum length of an $\eps$-semi-ladder, $\eps$-ladder, or $\eps$-comatching of a metric space $(V,\dist)$ the
\EMPH{$\eps$-semi-ladder index}, \EMPH{$\eps$-ladder index}, or \EMPH{$\eps$-comatching index}, respectively.
(The $\eps$-semi-ladder index is called an $\eps$-scatter dimension in~\cite{ABBCGKMSS23}.)

Bourneuf and Pilipczuk~\cite{BP25} were first to observe the connection
between the notions of $\eps$-scatter of~\cite{ABBCGKMSS23} and the aforementioned concepts from the theory of structurally sparse graphs. They adopted the crucial tools from the latter to the metric settings and proved that, for any proper minor-closed graph class $\mathcal{G}$, the class of metrics induced by edge-weighted graphs from $\mathcal{G}$ has $\eps$-semi-ladder index bounded by\footnote{The subscript $\mathcal{G}$ in $\mathrm{poly}_{\mathcal{G}}$ indicates the dependency on the size of the excluded minors of the graph family $\mathcal{G}$.} $2^{\mathrm{poly}_{\mathcal{G}}(1/\eps)}$. Note that this result applies to the classes of planar graphs, graphs embeddable on a fixed surface, or graphs of bounded treewidth, since they are all minor-closed. It also immediately implies a $2^{\mathrm{poly}(1/\eps)}$ bound on the
size of the furthest neighbor coreset in planar metrics constructed by the aforementioned simple greedy algorithm. 
Unfortunately, the exponential size is the best possible with this approach: there exists an $\eps$-ladder in planar metrics of size $2^{\Omega(1/\eps)}$ as constructed by Sokołowski~\cite{Sokolowski21}. Hence, to bypass the exponential barrier, we need to depart from the greedy approach, which is closely tied to $\eps$-semi-ladder index.

To motivate our new approach, consider now the following special case: the metric $(V,\dist)$ is given by a simple unweighted graph and for every $v \in V$ it holds
that $\max_{p \in P} \dist(v,p) = R$ for some integer $R$. Furthermore, $\eps < \frac{1}{R}$, that is, $(1-\eps)R > R-1$. In other words, a furthest neighbor $\eps$-coreset $Q$
need to include, for every $v \in v$, at least one $p \in P$ with $\dist(v,p) = R$.

Let $Q$ be an inclusion-wise minimal $\eps$-coreset in this special case. By the minimality,
for every $p \in Q$ there exists $v_p \in V$ with $p$ being the only point in $Q$ at distance
exactly $R$. We observe that $(v_p,p)_{p \in Q}$ is an $\eps$-comatching in $V$!
So, planar metrics may have exponential $\eps$-ladder index~\cite{Sokolowski21},
but maybe a better coreset for furthest neighbor can be obtained by looking at comatchings? If one wishes to follow this approach, there are two major obstacles: 
\begin{itemize}
    \item \textbf{Obstacle (i).~} showing that $\eps$-comatching index and $\eps$-coreset are polynomially related. 
    \item  \textbf{Obstacle (ii).~} bounding the size of $\eps$-comatching index as a polynomial function of $1/\eps$.
\end{itemize}
Obstacle (ii) is clearly the most difficult: $\eps$-comatching and $\eps$-semi-ladder are very similar (see \Cref{def:metric-ladder}), and there is currently no evidence to support that an exponential separation between them exists.  Nevertheless, we take exactly this approach and resolve both obstacles along the way.

\subsection{Our main results}\label{subsec:mainresult}

Our first result is a formal connection between the $\eps$-comatching index and a coreset for furthest neighbor, overcoming obstacle (i). Unfortunately, the above minimality argument stops working if we consider weights.  Instead, we devise a new reduction based on linear programming. Specifically, we round a set-cover-like linear program for metrics where the ball set system has a bounded VC-dimension. And luckily, the ball set system of planar metrics has VC-dimension $4$. Thus, our reduction from $\eps$-coreset to  $\eps$-comatching exploits planarity in a crucial way via VC-dimension. This contrasts with the reduction from $\eps$-coreset to $\eps$-semi-ladder in the prior work~\cite{BP25}, which holds for any metric space. 

Let $B(v,r) = \{u\in V| \dist(u,v)\leq r\}$ be a ball of radius $r\geq 0$ centered at $v$. Let $\mathcal{B}_V =  (V, \{B(v,r)\}_{v\in V, r\in \real^+})$ be the set system with ground set $V$, and all the balls as the family of sets. 

\begin{theorem}[Coreset to Comatching]\label{thm:comatching2coreset}
  Let $(V,\dist)$ be a metric space,  $P \subseteq V$ be a set of points, and a parameter $\varepsilon \in (0,1)$. Let $L(\eps)$ be the $\eps$-comatching index of $(V,\dist)$.
  Let $d$ be the VC-dimension of $\mathcal{B}_V$. Then there exists a furthest neighbor $\varepsilon$-coreset for $P$ of size $O(d\eps^{-2}\cdot L(\eps^2/4)\log L(\eps^2/4))$ that is computable in polynomial time. 
\end{theorem}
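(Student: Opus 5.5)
We reduce to a covering problem and round a fractional solution using VC-dimension, in the spirit of the Brönnimann--Goodrich / $\eps$-net rounding.

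\emph{Bucketing.} For every query $v\in V$ let $r(v)=\max_{p\in P}\dist(v,p)$. Partition the queries by $r(v)$ into scales that are powers of $(1-\eps/2)$, and handle each scale separately. Since $r(v)\le 2r(v')$ for all $v,v'$ (both are within $\dist(v,v')\le r(v')$ of the far point), the relevant values of $r$ lie in a range of ratio $O(1)$. So only $O(1/\eps)$ scales are relevant, and this accounts for one $\eps^{-1}$ factor. Within one scale all $r(v)$ lie in $[R(1-\eps/2),R]$. For each such $v$ let $S_v=\{p\in P:\dist(v,p)\ge(1-\eps/2)R\}$. Every $p\in S_v$ is a valid answer for $v$, and $S_v=P\setminus B(v,r')$ for $r'$ just below $(1-\eps/2)R$. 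Our goal is a small hitting set $Q\subseteq P$ for the family $\{S_v\}$. These sets are complements of balls restricted to $P$, so the set system has VC-dimension at most $d$.

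\emph{LP and rounding.} Write the hitting-set LP: minimize $\sum_p x_p$ subject to $\sum_{p\in S_v}x_p\ge1$ for every $v$ and $x\ge0$. By the standard $\eps$-net theorem, if the LP optimum is $\tau$, then sampling from the distribution $x/\tau$ yields a $(1/\tau)$-net, i.e.\ a hitting set of size $O(d\,\tau\log\tau)$. Polynomial-time computability follows from solving the LP and derandomizing the net construction, or from repeated sampling with verification. It therefore suffices to show $\tau=O(\eps^{-1}L(\eps^2/4))$. Together with the $\eps^{-1}$ from bucketing this gives the stated bound (the extra $\eps$ factors can be absorbed or rearranged).

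\emph{Bounding the LP by the comatching index.} This is the main obstacle. By LP duality, $\tau$ equals the maximum fractional packing: weights $y_v\ge0$ with $\sum_{v:p\in S_v}y_v\le1$ for every $p$. We want to extract a large $\eps^2/4$-comatching from a large packing. The idea is to refine the scale further into $O(1/\eps)$ thinner sub-buckets of width $\eps^2R/4$ according to $r(v)$, and pick the heaviest one. Then sample queries $v_1,\dots,v_m$ with probability proportional to $y$, take $m\approx\tau/\eps$, and let $p_i$ be a furthest point of $v_i$, so that $\dist(v_i,p_i)=r(v_i)$.

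We need $\dist(v_i,p_j)\le (1-\eps^2/4)r$ for $i\ne j$, i.e.\ each $p_j$ must be close to all other sampled queries. The packing constraint says that each $p_j$ lies in $S_v$ for only total weight $\le1$. So with total weight about $\tau$, a random other query has $p_j$ far from it with probability about $1/\tau$. With $m\approx\sqrt\tau$ samples a union bound gives a comatching directly, but that loses a square. A better approach is a Turán/greedy cleanup: build the conflict graph on sampled indices, with an edge when $p_j\in S_{v_i}$. Its expected number of edges is $\lesssim m^2/\tau$, so an independent set of size about $\tau$ survives. On the survivors, cross distances are $<(1-\eps/2)R$, while the diagonal distances satisfy $\dist(v_i,p_i)=r(v_i)\ge$ the sub-bucket lower end. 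Choosing the thresholds so that the gap between them is a $(1-\eps^2/4)$ factor gives an $\eps^2/4$-comatching, hence $\tau\lesssim L(\eps^2/4)$ up to the $\eps^{-1}$ bucketing loss.

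The delicate point is matching the thresholds. The near and far thresholds have to be set so that the diagonal is strictly ``far'' and the off-diagonal is ``near'' with a common $R'$. This is why the accuracy degrades to $\eps^2/4$: an $\eps$-window is subdivided into $1/\eps$ windows of relative width $\eps^2/4$. I expect this to need care, possibly redefining $S_v$ with threshold $(1-\eps/2)r(v)$ per query rather than a common $R$.
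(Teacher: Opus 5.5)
Your overall architecture is the paper's. You bucket queries by $r(v)$, set up the hitting-set LP over complements of balls, round it via VC-dimension, and bound the LP value by sampling from the optimal dual packing and deleting conflicting samples to obtain a comatching. Two steps, however, fail as written.

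First, the claim $r(v)\le 2r(v')$ is false. The triangle inequality only gives $r(v)\le r(v')+\dist(v,v')$, and queries far from $P$ have arbitrarily large $r(v)$. So the number of relevant scales is not bounded in terms of $\eps$. You need the separate treatment of far queries that the paper uses. Let $\Delta$ be the diameter of $P$ and fix any $p_0\in P$. Every $v$ with $\dist(v,p_0)>\Delta/\eps$ is served by $p_0$ alone, and all remaining queries satisfy $\Delta/2\le r(v)\le 2\Delta/\eps$. This leaves $O(\eps^{-1}\log(1/\eps))$ multiplicative scales; the paper uses $O(\eps^{-2})$ additive ones.

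Second, and more importantly, your comatching extraction produces no gap between diagonal and off-diagonal distances. You cut $S_v$ at $(1-\eps/2)R$, which is exactly the lower end of the scale $[(1-\eps/2)R,R]$. Surviving pairs therefore only satisfy $\dist(v_i,p_j)<(1-\eps/2)R\le \dist(v_j,p_j)$, which is no multiplicative separation. Passing to the heaviest sub-bucket does not change the sets $S_v$, and that sub-bucket may be the lowest one, where the gap is still zero. The per-query threshold $(1-\eps/2)r(v)$ you float does not help by itself either, since $r(u)$ and $r(v)$ may sit at opposite ends of the scale.

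The missing idea is that the threshold of each hitting-set instance must lie a fixed margin below the lower end of the bucket it serves. In other words, buckets must be narrower than the slack you spend on the coreset guarantee. The paper takes buckets $i\delta\le r(v)<(i+1)\delta$ with $\delta=\eps\Delta/4$ and threshold $(i-1)\delta$:
\begin{itemize}
\item a hit loses less than $2\delta\le \eps\, r(v)$, because $r(v)\ge \Delta/2$;
\item a non-threatening pair gives $\dist(u,z_v)<(i-1)\delta$ against $\dist(v,z_v)\ge i\delta$, a relative gap $1/i\ge \eps^2/4$ since $i\le 4\eps^{-2}$.
\end{itemize}
This, not the subdivision of an $\eps$-window into thinner windows, is where the $\eps^2/4$ comes from.

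In your multiplicative setup the same fix reads: use buckets $a\le r(v)<(1+\eps/4)a$ and threshold $(1-\eps/2)a$. Hits are then valid because $(1-\eps/2)/(1+\eps/4)\ge 1-\eps$, and no sub-bucketing is needed. The extracted pairs even form an $(\eps/2)$-comatching, since off-diagonal distances are below $(1-\eps/2)a$ while diagonal ones are at least $a$.
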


If $(V,\dist)$ is a planar metrics, it is known that $\mathcal{B}_V$ has VC-dimension at most $4$; more generally, for $K_h$-minor-free metrics, the VC-dimension is at most $h-1$~\cite{CEV2007,LW24}.   Thus, Theorem~\ref{thm:comatching2coreset} reduces the algorithmic question of finding a coreset to a purely combinatorial question
of the $\eps$-comatching index of planar metrics. 

Our next result shows that the $\eps$-comatching index of planar metrics is bounded polynomially in $1/\eps$, fully resolving obstacle (ii). We found this result rather surprising, given the exponential lower bound for the $\eps$-semi-ladder index by Soko{\l}owski~\cite{Sokolowski21}. %

\begin{restatable}[Comatching in Planar Metrics]{theorem}{CoMatchingTheorem}\label{thm:comatching-bound}
There exists a polynomial $L_{\mathrm{planar}}$ such that
for every $\eps \in (0,1)$ and every planar metric $(V,\dist)$,
the $\eps$-comatching index of $(V,\dist)$ is at most $L_{\mathrm{planar}}(1/\eps)$.
\end{restatable}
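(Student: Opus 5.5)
Fix an $\eps$-comatching $(p_i,q_i)_{i=1}^\ell$ at scale $R$ in the shortest-path metric of an edge-weighted planar graph $G$. After subdividing edges and rescaling we may assume $R=1$, so $\dist(p_i,q_i)>1$ and $\dist(p_i,q_j)\le 1-\eps$ for $i\ne j$. The goal is to show $\ell\le\poly(1/\eps)$.

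The plan is a contradiction argument that exploits the symmetry of the comatching, which a ladder does not have. First, by the triangle inequality the $p$'s are pairwise close. Indeed, $\dist(p_i,p_j)\le \dist(p_i,q_k)+\dist(q_k,p_j)\le 2-2\eps$ for any $k\ne i,j$, and likewise for the $q$'s. Second, fix shortest paths $\pi_{ij}$ from $p_i$ to $q_j$ for all $i\ne j$. The key observation is that $\pi_{ij}$ and $\pi_{ji}$ must not "cross usefully". If a vertex $x$ lies on both $\pi_{ij}$ and $\pi_{ji}$, then
\[ \dist(p_i,q_i)+\dist(p_j,q_j)\le \dist(p_i,x)+\dist(x,q_i)+\dist(p_j,x)+\dist(x,q_j)=|\pi_{ij}|+|\pi_{ji}|\le 2-2\eps, \]
contradicting $\dist(p_i,q_i)+\dist(p_j,q_j)>2$. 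So $\pi_{ij}\cap\pi_{ji}=\emptyset$ for all $i\ne j$. More robustly, the same computation shows that $\pi_{ij}$ and $\pi_{ji}$ stay at distance more than $\eps$ from each other. This $\eps$-robust disjointness is what should give polynomial rather than exponential behaviour.

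Third, I would use planarity through this disjointness. Consider the auxiliary structure on the $2\ell$ terminals in which each pair $\{i,j\}$ contributes two vertex-disjoint paths: $p_i\to q_j$ and $p_j\to q_i$. A first attempt is to show directly that $K_{\ell,\ell}$-like routing with such disjointness forbids a large $\ell$. This fails as stated, because paths for different pairs may share vertices. To control that sharing, I would discretize distances into $O(1/\eps)$ levels. Concretely, I would pick a small subset of indices via Ramsey or Erd\H{o}s--Szekeres arguments on the rounded distances $\dist(p_i,p_j)$, $\dist(q_i,q_j)$ and on the positions where the paths meet. Applied with polynomially many colors and only bounded-size structures, these arguments keep the loss polynomial. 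The aim is to find three indices $i,j,k$ and a cyclic arrangement in the embedding such that $\pi_{ij}$ must intersect $\pi_{ji}$ or come within distance $\eps$ of it. For this I plan to use the standard planar fact that shortest paths can be chosen to cross at most once (by uncrossing, via consistent tie-breaking), together with a Jordan-curve argument. Once the tie-breaking makes the shortest paths form a family of pseudo-segments, I would study the arrangement of the $\ell(\ell-1)$ paths. Each pair $\pi_{ij},\pi_{ji}$ is disjoint, and each path $\pi_{ij}$ for fixed $i$ forms a tree rooted at $p_i$. The contradiction should come from a counting bound on such arrangements, for example the crossing lemma applied to the planar "meeting graph".

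Fourth, I would take into account that the constraint is only approximate. Paths of other pairs can pass close to each other without intersecting. To handle this, I would introduce a shortcut: whenever $\pi_{ij}$ and $\pi_{ab}$ meet at a vertex $x$, the rerouted walks $p_i\to x\to q_b$ and $p_a\to x\to q_j$ have lengths summing to at most $2-2\eps$. This gives an "exchange" relation on crossings, which I expect to force the crossing structure to be essentially ordered along a few scales. The main obstacle is exactly this step: turning these exchange inequalities plus planarity into a polynomial bound rather than the exponential one that would come from the ladder-style analysis. The naive uncrossing analysis of Bourneuf and Pilipczuk loses exponentially, and Soko{\l}owski's construction shows that asymmetric ladders exist. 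So the argument must use, at every recursion step, that both $(i,j)$ and $(j,i)$ are close pairs, presumably to make each step eliminate a constant fraction of the indices while losing only $O(\eps)$ in the slack.
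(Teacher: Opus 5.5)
Your opening observations are correct and match the paper's basic toolkit. These are the consistent tie-breaking (so that two shortest paths meet in a single subpath), the disjointness of $\ppath{p_iq_j}$ and $\ppath{p_jq_i}$ (\Cref{clm:disjoint-black-side}), and the exchange inequality at a common vertex (\Cref{clm:uncross}). From your third step on, however, you list tools without a mechanism, and you leave the decisive step explicitly as the ``main obstacle''. So there is no proof yet. The missing idea is how to turn planarity into usable structure at only polynomial cost.

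The paper does this through a separation lemma. Suppose $X$ is covered by $\alpha$ paths of length $O(R)$. Take a $\delta$-net $Z$ of $X$ with $|Z|=O(\alpha/\eps)$ and $\delta=\eps R/100$. Rounded distance profiles to $Z$ form a set system of VC-dimension at most $4$, so they take only $O(|Z|^4\eps^{-4})$ values. If three $p$'s share a profile, then no path $\ppath{p_aq_b}$ with $a\neq b$ among these three indices comes within $\delta$ of $Z$. Hence none of the three pairs is separated by $X$ (\Cref{lem:sep-three-pairs,lem:sep-bound}).

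This lemma yields a reduction to \emph{local} instances, where the terminals of each pair lie in one face of the union of the paths among the other pairs. It then yields a reduction to \emph{linked noncrossing} instances, via a Lipton--Tarjan tree decomposition built from a spanning tree whose root paths are covered by $O(\log^2\ell)$ subpaths of comatching paths. Only in that setting does the topological analysis go through. It uses an equidistant structure with respect to two reference pairs, a common order around their terminals, and boxes with the Type A/Type B dichotomy resolved by Dilworth. It ends with a rearrangement of $2|I|$ paths of length at most $(1-\eps)R$ into $2|I|-4$ paths, each joining the two terminals of a single pair and hence of length at least $R$. This gives $|I|\le 2/\eps$ (\Cref{clm:final-punch}).

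Two of the concrete ingredients you name would also fail as stated. First, Ramsey arguments on pair colourings, such as the rounded value of $\dist(p_i,p_j)$, do not keep the loss polynomial. With $c=\Theta(1/\eps)$ colours, even a monochromatic triangle is not guaranteed on $2^c$ vertices in general: colour pairs of $\{0,1\}^c$ by the first coordinate in which they differ. Without a structural Ramsey theorem for these metrics, which you do not supply, you are back at the exponential barrier. The paper only homogenises unary data or orders, and all of these lose polynomially. The unary data are tags relative to one or two fixed reference pairs, or profile classes to a fixed small set counted by Sauer--Shelah; the order arguments are Dilworth and Erd\H{o}s--Szekeres.

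Second, your target contradiction from three indices cannot be driven by the properties you list alone. Place $\Theta(1/\eps)$ points $p_i$ equally spaced on one half of a cycle and put $q_i$ at their antipodes. This is a planar $\eps$-comatching in which every triple is pairwise close, has $\pi_{ij}$ and $\pi_{ji}$ disjoint and more than $\eps R$ apart, satisfies all exchange inequalities, and has pseudo-segment shortest paths. So a bounded-size contradiction needs additional properties extracted at polynomial cost; in the paper, this is a shared rounded profile to a small net. Otherwise the bound must come from a counting argument over $\Omega(1/\eps)$ pairs in a structured configuration, as in the paper's final rearrangement. The crossing lemma applied to an unspecified ``meeting graph'' supplies neither.
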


Theorems~\ref{thm:comatching2coreset} and~\ref{thm:comatching-bound}
give immediately the main result of this work, namely Theorem~\ref{thm:1center}.
\begin{proof}[Proof of \Cref{thm:1center}] By \Cref{thm:comatching-bound}, we have $L(\eps) = \poly(1/\eps)$ and hence $L(\eps^2/4) = \poly(1/\eps)$. As VC dimension of $\mathcal{B}_V$  is at most $4$ in planar metrics, by \Cref{thm:comatching2coreset}, we can construct a furthest neighbor $\eps$-coreset of size $O(\eps^{-2}L(\eps^2/4)\log L(\eps^2/4)) = \poly(1/\eps)$. 
\end{proof}

We give an overview of the proof of Theorem~\ref{thm:comatching-bound} in Section~\ref{ss:over}.

\begin{table}[htbp]
    \centering
    \setlength{\tabcolsep}{1.5em} 
    \renewcommand{\arraystretch}{1.5} 
    
    \begin{tabular}{llll}
    \toprule
    \textbf{Problem} & \textbf{Graph class} & \textbf{Bounds} (UB/LB) & \textbf{Reference}\\
    \midrule
    
    \multirow{6}{*}{\makecell[l]{$\eps$-comatching / \\ Furthest neighbor \\ $\eps$-coreset}} 
    & Simple polygons & $O(1/\eps^2)$ (UB) & \cite{BT24} \\
    & Planar & $\poly(1/\eps)$ (UB) & Thm.~\ref{thm:1center}  \\
    & Bounded genus $g$ & $\poly(g, 1/\eps)$ (UB)& Thm.~\ref{thm:genus} \\
    & Treewidth $\leq 5$ & $2^{\Omega(1/\eps)}$ (LB) & Cor.~\ref{cor:minor-lb-FNcoreset} \\
    & $K_5$-minor free & $2^{\Omega(1/\eps)}$ (LB) & Cor.~\ref{cor:minor-lb-FNcoreset} \\
    & $H$-minor-free & $2^{\mathrm{poly}_H(1/\eps)}$ (UB) & \cite{BP25} \\
    
    \addlinespace[1.5ex]
    \midrule
    \addlinespace[1.5ex]
    
    \multirow{5}{*}{\makecell[l]{$\eps$-coreset for \\ $k$-Center}} 
    & Trees & $\Omega(2^{1/\eps})$ when $k \sim 1/\eps$ (LB) & Thm.~\ref{thm:kcenter-lower-bound}\\\cmidrule{2-4}
    & \multirow{2}{*}{Planar} & $\left({1/\eps}\right)^{2^{O(k)}}$ for fixed $k$ (UB) &Thm.~\ref{thm:kcenter} \\
    & & $\Omega(4^{1/\eps})$ when $k \sim 1/\eps$ (LB) & Thm.~\ref{thm:kcenter-lower-bound} \\ \cmidrule{2-4}
    & Bounded genus $g$ & $\poly(g) \cdot (1/\eps)^{2^{O(k)}}$ (UB) & Thm.~\ref{thm:genus2} \\
    \cmidrule{2-4}
    & $H$-minor-free & $k^{\mathrm{poly}_H(1/\eps)}$ (UB) & \cite{BP25} \\
    \bottomrule
    \end{tabular}
    \caption{Summary of bounds of $\eps$-coreset for furthest neighbor and $k$-center in various graph classes. (UB) marks upper bounds and (LB) marks lower bounds.}
    \label{tab:coreset_bounds}
\end{table}

\subsection{Other results}

We extend our main results in \Cref{subsec:mainresult} in two different directions: %
\begin{enumerate}
    \item We investigate other minor-closed graph families. We show that bounded-genus metrics have polynomial $\eps$-comatching index. We also establish \EMPH{exponential lower bounds} on $\eps$-comatching index for other families, including bounded-treewidth metrics (and therefore, general minor-free metrics).
    \item We extend the coreset for furthest neighbors to $k$-center, establishing an $\eps$-coreset of size $\poly(1/\eps)$ \EMPH{for a fixed $k$} in planar and bounded-genus metrics, and provide an exponential lower bound of $2^{\Omega(1/\eps)}$  when $k = 1/\eps$. To get the upper bound, we introduce a variant of $\eps$-comatching called $(k,\eps)$-comatching. The bulk of the technical details is to bound the size of   $(k,\eps)$-comatching. For $\eps$-comatching, there is a clear symmetry between $p$-points and $q$-points. However,  $(k,\eps)$-comatching does not have the same symmetry, and hence bounding its size is much more demanding.
    
\end{enumerate}

Our results are summarized in \Cref{tab:coreset_bounds}. We start with bounded genus graphs.

\subsubsection{Graphs of bounded genus}

We observe that our results for planar metrics generalize to metrics induced by graphs
embeddable on any fixed surface. 

\begin{theorem}\label{thm:genus}
For any integer $g \geq 0$, $\eps \in (0,1)$, and a metric space $(V,\dist)$
induced by a graph embedded in a surface of Euler genus at most $g$,
the $\eps$-comatching index of $(V,\dist)$ is bounded by
$\Oh(g^4 \eps^{-8}) \cdot L_{\mathrm{planar}}(1/\eps)$,
where $L_{\mathrm{planar}}$ comes from Theorem~\ref{thm:comatching-bound}.
\end{theorem}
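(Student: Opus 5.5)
We reduce to the planar case (\Cref{thm:comatching-bound}) by cutting the surface along a small number of shortest paths, which is the standard way to handle bounded genus in distance problems. Let $G$ be the graph, embedded in a surface of Euler genus $g$, and let $L=(p_i,q_i)_{i=1}^\ell$ be an $\eps$-comatching with threshold $R$. First we normalize: after subdividing edges we may assume all points of $L$ are vertices, and we only care about distances up to roughly $R$. Using a standard tree-cotree argument relative to a shortest-path tree $T$ rooted at some vertex $r$, we find a set $\mathcal{C}$ of $\Oh(g)$ shortest paths from $r$ (the fundamental cycles of the $\Oh(g)$ non-tree, non-cotree edges) such that cutting $G$ along them yields a planar graph (a disk).

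The difficulty is that cutting changes distances. A shortest path between two vertices in $G$ may cross the cut paths, and the cut graph only gives an upper bound on $\dist$. We therefore handle the comatching in pieces.

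\emph{Step 1: localize.} Partition the pairs according to their distances to the $\Oh(g)$ shortest paths in $\mathcal{C}$. Cover each such path $C$ with $\Oh(\dist$-length$/(\eps R))$ portals, keeping only the portion of $C$ within distance $\Oh(R)$ of the points. A shortest path crossing $C$ can be rerouted through a nearest portal with additive error $\eps R/4$ or so, at the cost of replacing $\eps$ by a constant fraction of it. Each point $p_i$ (and $q_i$) is then assigned its vector of approximate portal distances, rounded to multiples of $\eps R$.

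\emph{Step 2: Ramsey/pigeonhole on types.} Not all portals matter. What matters is which cut path the near-shortest paths $p_i \to q_j$ use. By pigeonhole we pass to a subsequence of pairs in which, for every $i\neq j$, the certificate path of length $\le(1-\eps)R$ either avoids all of $\mathcal{C}$ or uses a specific one. The pairs using a specific path $C$ with a portal can be bounded directly. If many $q_j$ reach $p_i$ via portal $x$ within $(1-\eps)R$, then by the triangle inequality $\dist(p_j,q_j)\le \dist(p_j,x)+\dist(x,q_j)$. Comparing rounded portal distances, a $p$- and a $q$-type that are both "close" to the same portal are forced to be close, which contradicts $\dist(p_i,q_i)>R$ unless the types differ. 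This bounds such configurations by the number of types, roughly $\Oh(g\eps^{-1})$ portals per path times rounding classes. Since the remaining graph is planar and distances there dominate those in $G$, any subsequence whose short certificates avoid the cuts is an $\Theta(\eps)$-comatching in the planar cut graph, of length at most $L_{\mathrm{planar}}(\Oh(1/\eps))$. In total this should give $\ell\le \Oh(g^4\eps^{-8})\cdot L_{\mathrm{planar}}(1/\eps)$ after absorbing constants into the polynomial, and the exponents come from two layers of pigeonholing over (path, portal, rounded distance) pairs for $p$- and $q$-sides.

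The main obstacle is Step 2. The comatching condition is about all pairs $i\neq j$, and a single pair $(p_i,q_j)$ might have its short path crossing the cuts while the planar distance is large. So we cannot simply restrict to a subsequence avoiding the cuts. The first thing we would try is to fix for each point its rounded distance vector to all $\Oh(g/\eps)$ portals and argue by contradiction. If two pairs $i,j$ have $p_i,q_j$ connected only through portal $x$ within $(1-\eps)R$, then the rounded distances $d(p_i,x)+d(x,q_j)\le(1-\eps/2)R$. If additionally $p_j$ shares the type of $p_i$, then $\dist(p_j,q_j)\le R$, a contradiction. Hence within a class of identical $p$-types, all cross-pairs must use cut-avoiding paths, so each class is a planar comatching. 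The number of classes is what yields the polynomial factor in $g$ and $1/\eps$.
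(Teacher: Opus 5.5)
Your last paragraph reaches the right mechanism, and it is also the paper's. You bucket the pairs by the rounded distance profile of $p_i$ to a $\delta$-net $Z$ of the cut. Inside a bucket, no cross path $\ppath{p_i q_j}$ with $i\neq j$ can touch the cut, since otherwise $\dist(p_j,q_j)\le\dist(p_i,q_j)+\Oh(\delta)<R$. Hence each bucket is an $\eps$-comatching in the planar graph $G\cut H$: cross distances are preserved and diagonal ones can only grow. The gap is the number of buckets. This is exactly where the polynomial bound has to come from, and your argument never controls it. A type is a whole vector of rounded distances to $|Z|=\Theta(g/\eps)$ portals, each taking $\Theta(1/\eps)$ values. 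Pigeonholing alone therefore gives $(1/\eps)^{\Theta(g/\eps)}$ classes, a bound exponential in $1/\eps$. Your accounting (``portals per path times rounding classes'', ``two layers of pigeonholing'') treats a type as a single (portal, distance) pair and does not produce $g^4\eps^{-8}$. The missing ingredient is the VC-dimension bound for distance profiles, \Cref{thm:prof-vc} in the form of \Cref{lem:Uprof-vc}. In a planar graph, the sets $\{(z,i) : \Udist(v,z)\le i\}$ over the ground set $Z\times\{0,\dots,\Oh(1/\eps)\}$ have VC-dimension at most $4$. By Sauer--Shelah there are then only $\Oh((|Z|\eps^{-1})^4)=\Oh(g^4\eps^{-8})$ distinct rounded profiles.

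For VC-dimension $4$ to apply, you must measure the profiles in the planar graph $G\cut H$, not in $G$. Applied to $G$ itself, \Cref{thm:prof-vc} only gives VC-dimension $\Oh(\sqrt g)$, through the largest clique minor of a genus-$g$ graph, and that does not give the $g^4$ factor. The contradiction argument then needs one extra step, as in the paper. Fix a copy $\eta(v)$ of each vertex in $G\cut H$, and let $u$ be the \emph{first} vertex of $V(H)$ on $\ppath{p_i q_j}$. The prefix from $p_i$ to $u$ survives the cut, so $\dist_G(p_i,u)=\dist_{G\cut H}(\eta(p_i),u^*)$ for a suitable copy $u^*$ of $u$. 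For $p_j$ the inequality $\dist_G(p_j,u)\le\dist_{G\cut H}(\eta(p_j),u^*)$ suffices, since cutting only increases distances. Chaining through the net point near $u^*$ then gives $\dist_G(p_j,q_j)<\dist_G(p_i,q_j)+3\delta<R$.

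Two smaller repairs are also needed. First, rounding to multiples of $\eps R$ is too coarse: the rounding error alone consumes the entire slack $\eps R$, and your chain only yields $\dist(p_j,q_j)\le(1+\eps/2)R$. Use a unit such as $\delta=\eps R/100$ instead. Second, no rerouting through portals is needed. The cross shortest paths in a bucket avoid the cut entirely, so the bucket is an $\eps$-comatching with the same $R$, and you get $L_{\mathrm{planar}}(1/\eps)$ rather than $L_{\mathrm{planar}}(\Oh(1/\eps))$.
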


The proof of Theorem~\ref{thm:genus} follows the standard way of cutting a surface
along $\Oh(g)$ shortest paths and individual edges.

Theorem~\ref{thm:genus}, together with Theorem~\ref{thm:comatching2coreset}, generalize Theorem~\ref{thm:1center} to metrics induced by graphs embeddable on a fixed surface; the size of the coreset now depends polynomially on $1/\eps$ and the Euler genus of the surface.

\paragraph{Beyond surface-embedded graphs: lower bounds.}
The discussed results so far uncover a surprising separation between $\eps$-comatching
and $\eps$-ladder indices in planar and, more generally, bounded genus metrics. 
It is natural to ask how wide this phenomenon is, in particular, whether
it applies to any proper minor-closed graph class. We show that actually our results
almost completely exhaust the plateau of positive results. 

As all lower bounds we present in this paper are for unweighted graphs, let us make
a handy definition: for an (unweighted) graph $G$
and an integer $d \geq 1$, a \EMPH{$d$-comatching} is a family $\mathcal{M}$
of pairs of vertices of $G$ such that
\begin{enumerate}
    \item for every $(p,q) \in \mathcal{M}$, we have $\dist(p,q) > d$;
    \item for every distinct $(p,q),(p',q') \in \mathcal{M}$, we have $\dist(p,q') \leq d$.
\end{enumerate}
Clearly, a $d$-comatching is an $\eps$-comatching for any $\eps < \frac{1}{d+1}$. First, we give the following exponential lower bound on the size of $d$-comatching.  Our construction is a modification of the construction of Sokołowski~\cite{Sokolowski21}, with a detailed analysis of its properties.

\begin{restatable}[Lower bound construction for furthest neighbor and comatchings]{theorem}{SokoLowerBound}
\label{thm:comatching-lb}
For every integer $k \geq 3$ there exists a graph $G_k$ with the following properties:
\begin{enumerate}
    \item \label{it:com-lb1} $G_k$ contains a $(2k-1)$-comatching of size $2^k$.
    \item \label{it:com-lb2} $G_k$ admits a tree decomposition, where every bag is of size at most $6$ and every adhesion is of size at most $2$ (hence, in particular, it is of constant treewidth);
    \item \label{it:com-lb3} $G_k$ is $K_{3,4}$-minor-free;
    \item \label{it:com-lb4} $G_k$ is $K_5$-minor-free;
    \item \label{it:com-lb5} the Euler genus of $G_k$ is between $2^{k-3}$ and $2^k$.
\end{enumerate}
\end{restatable}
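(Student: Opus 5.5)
The plan is to build $G_k$ as two complete binary trees of depth $k$ that are interlinked by ``crossing'' edges at every branching. All five properties then follow from the local, $2$-separated structure of this graph. For every binary string $\sigma$ of length at most $k$ we create two vertices $p_\sigma$ and $q_\sigma$. Tree edges join $p_\sigma$ to $p_{\sigma 0}$ and $p_{\sigma 1}$, and likewise $q_\sigma$ to $q_{\sigma0}$ and $q_{\sigma1}$. At each internal $\sigma$ we add the cross edges $p_{\sigma0}q_{\sigma1}$ and $p_{\sigma1}q_{\sigma0}$. We also join the two roots $p_\emptyset$ and $q_\emptyset$ by a path whose length is large enough not to interfere, and we subdivide or pad edges as needed to fix the exact distances. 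The comatching is $\mathcal{M}=\{(p_s,q_s) : s\in\{0,1\}^k\}$, which has $2^k$ pairs.

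\textbf{Distances (item~\ref{it:com-lb1}).} Let $s\neq t$, and let $\sigma$ be their longest common prefix, of length $i$. The route from $p_s$ up to $p_{\sigma s_{i+1}}$, across to $q_{\sigma t_{i+1}}$, and down to $q_t$ has length $2(k-i-1)+1\le 2k-1$. For the other direction, take any path from $p_s$ to $q_s$. It must use an odd number of cross edges, or else the long root path. I would attach to each vertex a potential, namely its depth together with the side ($p$ or $q$) and the subtree it lies in. Each cross edge enters a sibling subtree rather than the subtree containing $s$. So the path must cross an odd number of times, and each pair of consecutive crossings costs at least two extra tree levels. This gives length at least $2k$. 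This bookkeeping is the main technical check, and I expect it to need care: it may require subdividing tree edges (for instance doubling them) so that a crossing never saves enough to break the bound. If that happens, the target $(2k-1)$ is restored by rescaling.

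\textbf{Tree decomposition and minors (items~\ref{it:com-lb2}--\ref{it:com-lb4}).} For each internal $\sigma$ take the bag $\{p_\sigma,q_\sigma,p_{\sigma0},p_{\sigma1},q_{\sigma0},q_{\sigma1}\}$, which has $6$ vertices. Connect it to the bags of its children. The adhesions are then $\{p_{\sigma b},q_{\sigma b}\}$, of size $2$. The root path goes into one extra bag of width $2$, after a suppression of degree-$2$ vertices. Hence $G_k$ is a subgraph of a graph obtained from $6$-vertex pieces by $2$-sums (clique-sums of order $\le 2$). Both $K_5$ and $K_{3,4}$ are $3$-connected, so a minor of either one must appear as a minor of a single torso, that is, a piece together with its virtual edges. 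A torso has at most $6$ vertices, so $K_{3,4}$, which has $7$ vertices, cannot occur. For $K_5$ we check directly that each $6$-vertex torso is planar, or at least $K_5$-minor-free: it is two $3$-vertex stars plus two cross edges and the virtual edges, which is a small graph.

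\textbf{Genus (item~\ref{it:com-lb5}).} For the upper bound, we embed $G_k$ by drawing each torso on the sphere except for a single crossing between its two cross edges, which we resolve with one handle (or crosscap). This uses $O(1)$ per internal node and $2^k-1$ internal nodes, so at most $2^k$ after normalization. For the lower bound, we find about $2^{k-3}$ vertex-disjoint subgraphs, taking every eighth node (for example nodes at depth $\equiv 0\pmod 3$). Each such subgraph contains a $K_{3,3}$-minor formed from its own gadget together with a cycle through its root path. We then invoke additivity of Euler genus over blocks, or over $2$-amalgamations up to constants, or alternatively an Euler-formula count using the high girth of the construction, to get Euler genus at least $2^{k-3}$. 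The step I would attempt first, and expect to be the obstacle, is making this additivity rigorous. The fallback is an Euler characteristic argument: count edges against faces using that every face of an embedding must have length at least the girth of the cross-edge cycles.
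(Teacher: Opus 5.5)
Your plan for items~2--5 is essentially sound, but item~1 has a real gap. With the weights you specify, $\mathcal{M}$ is not a comatching at all, and the repair you suggest (doubling tree edges, then rescaling) cannot fix it. Take $s\in\{0,1\}^k$ and $\sigma=s_1\cdots s_{k-1}$. The walk $p_s\to p_\sigma\to p_{\sigma\bar s_k}\to q_{\sigma s_k}=q_s$ has length $3$. Your claim that a cross edge always lands in a sibling subtree misses that a path can first move to the sibling inside the $p$-tree and then cross back into the subtree of $s$.

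No uniform choice of lengths helps. Suppose tree edges have length $a$ and cross edges length $b$. Then $\dist(p_s,q_s)\le 2a+b$. Now take $t$ with $t_1\neq s_1$. Every $p_s$--$q_t$ path passes through $\{p_{s_1},q_{s_1}\}$, so its depth varies by at least $2(k-1)$. It also uses a cross edge or the long root path. Hence $\dist(p_s,q_t)\ge 2(k-1)a+b>\dist(p_s,q_s)$ for $k\ge 3$. The matched pair is strictly \emph{closer} than an unmatched one, and this inequality is scale-invariant.

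The missing idea is that cross edges must get longer with depth. Keep tree edges of unit length and give each cross edge joining two vertices at depth $h$ length $2h-1$, then subdivide to make the graph unweighted; this is the paper's weighting.

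With it, the upper bound is exact. For $t\neq s$ with longest common prefix $\sigma$, $|\sigma|=i$, your route through the cross edge below $\sigma$ has length $(k-i-1)+(2i+1)+(k-i-1)=2k-1$.

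The lower bound then needs one cross edge, not a parity count. Depth changes by $1$ along tree edges and by $0$ along cross edges. So a $p_s$--$q_s$ path through a cross edge $p_xq_y$, with $x,y$ siblings at depth $h$, has length at least $(k-h)+(2h-1)+(k-h)=2k-1$. Equality would force both remaining pieces to be monotone paths inside one tree, so $x$ and $y$ would both be prefixes of $s$, which is impossible. By integrality $\dist(p_s,q_s)\ge 2k$.

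With this fix the rest of your plan matches the paper's route: the same $6$-vertex bags, one crosscap per internal node (you need exactly one, giving $2^k-1$, not $O(1)$), and $2^{k-3}$ vertex-disjoint copies of $G_3$ hanging below depth $k-3$ for the genus lower bound.

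Your torso argument for items~3--4 is a cleaner alternative to the paper's bag-and-adhesion counting, once you note that each torso is exactly $K_{3,3}$. It is not planar, as you hoped, but it is $K_5$-minor-free and has fewer than $7$ vertices, which is all you need.

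Drop the high-girth Euler-formula fallback for the genus. The graph contains $6$-cycles while $|E|-|V|$ is only about $2^{k+1}$, so that count gives only a trivial bound. Rely on additivity of Euler genus over vertex-disjoint subgraphs instead.
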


Observe that if $\mathcal{M}$ is a $d$-comatching in $G$ and $P = \{p~|~(p,q) \in \mathcal{M}\}$, then for every $\eps < \frac{1}{d+1}$, any $\eps$-coreset for furthest neighbor needs to store the entire $P$, as $p$ is the only valid answer for the query $q$ for any $(p,q) \in \mathcal{M}$. Thus, we obtain the following direct corollary of \Cref{thm:comatching-lb}:

\begin{corollary}\label{cor:minor-lb-FNcoreset}
There exists a constant $c > 0$ such that for every $\eps \in (0,1)$
there exists a  $K_{5}$-minor-free graph of treewidth at most $5$ where any $\eps$-coreset for furthest neighbor must have size  at least $2^{c/\eps}$.
\end{corollary}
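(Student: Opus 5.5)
The corollary follows directly from \Cref{thm:comatching-lb} together with the observation stated just before it. Given $\eps \in (0,1)$, the plan is to choose $k$ as large as possible subject to $\eps < \frac{1}{2k}$, so that a $(2k-1)$-comatching in $G_k$ is an $\eps$-comatching. Concretely, take $k = \lceil 1/(2\eps)\rceil - 1$, or simply $k = \lfloor 1/(4\eps) \rfloor$ when this is at least $3$; either choice gives $2k < 1/\eps$ and $k = \Omega(1/\eps)$. Then $G_k$ is $K_5$-minor-free by item~\ref{it:com-lb4}. By item~\ref{it:com-lb2} it has treewidth at most $5$, since its bags have size at most $6$. By item~\ref{it:com-lb1} it contains a $(2k-1)$-comatching $\mathcal{M}$ with $|\mathcal{M}| = 2^k$.

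Next, put $P = \{p : (p,q)\in\mathcal{M}\}$ and take any furthest neighbor $\eps$-coreset $Q \subseteq P$. Fix $(p,q) \in \mathcal{M}$ and consider the query $v = q$.
\begin{itemize}
\item Since the graph is unweighted, we have $\dist(p,q) \geq 2k$.
\item Every other $p' \in P$ satisfies $\dist(p',q) \leq 2k-1$.
\item The maximum distance from $q$ to $P$ is therefore at least $2k$, so a valid answer needs distance at least $(1-\eps)2k > 2k-1$, using $\eps < 1/(2k)$.
\end{itemize}
Hence $p$ is the only valid answer for $q$, so $p \in Q$. We also need the $p$'s to be distinct. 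This holds because if $p = p'$ for two different pairs $(p,q),(p',q')$, then $\dist(p,q') \leq 2k-1$ contradicts $\dist(p',q') \geq 2k$. So $|Q| \geq 2^k = 2^{\Omega(1/\eps)}$.

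The only remaining care point is small $k$. For $\eps$ large enough that the chosen $k$ would be below $3$, i.e.\ $\eps$ bounded below by an absolute constant, the claimed bound $2^{c/\eps}$ is at most a constant once $c$ is small. It is then witnessed trivially by a single-vertex graph, where the coreset has size $1 \geq 2^{c/\eps}$ whenever $c/\eps \le 0$... Since $2^{c/\eps}>1$, it is cleaner to use $G_3$, which has a comatching of size $8$. That works for all $\eps < 1/6$, and choosing $c \le \eps_0 \log_2 8$ covers the range. For $\eps \geq 1/6$ we need a coreset lower bound of $2$, which a path with two far endpoints gives. No real obstacle is expected; the main step is just matching constants so that $(1-\eps)\cdot 2k > 2k-1$ and $2^k \geq 2^{c/\eps}$ hold simultaneously.
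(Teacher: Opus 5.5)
Your proposal is correct and follows the paper's argument: apply \Cref{thm:comatching-lb} with $k=\Theta(1/\eps)$ chosen so that $\eps<\frac{1}{2k}$, observe that in a $(2k-1)$-comatching each $p$ is the unique valid answer to the query $q$, and conclude that any coreset must contain all $2^k$ points; your handling of large $\eps$ (via $G_3$ and a two-point instance) fills in a detail the paper leaves implicit. One small correction is that $G_k$ is integer-weighted rather than unweighted, but integrality of the weights is all you need to pass from $\dist(p,q)>2k-1$ to $\dist(p,q)\ge 2k$.
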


Note that \Cref{cor:minor-lb-FNcoreset} gives little space left outside 
graph classes of bounded genus.

\subsubsection{Coreset for $k$-center}

For a set $X\subseteq (V,\dist)$, we define $\dist(v,X) = \min_{x\in X}\dist(v,x)$.  In the $k$-center problem, the coreset preserves the maximum distance from any set of $k$ centers to a point in $P$, as formally defined below.

\begin{definition}[$k$-Center $\varepsilon$-Coreset] \label{def:coreset-kcenter}
    Given a point set $P$ in a metric $(V,\dist)$, a $k$-center $\varepsilon$-coreset for $P$ is a subset $Q\subseteq P$ such that for every set $X\subseteq V$ of at most $k$ potential centers,  there exists a~point $q\in Q$ such that
    \[d(q,X) \geq (1-\varepsilon) \max \{\dist(p,X)~|~p \in P\} .\]
\end{definition}
Observe that for $k=1$, the $k$-center $\eps$-coreset is exactly the $\eps$-coreset for furthest neighbor. 

Bourneuf and Pilipczuk~\cite{BP25} showed that $k$-center in planar and minor-free metrics admits an $\eps$-coreset of size $O(k)^{\poly(1/\eps)}$; the dependency on $1/\eps$ is exponential even when $k = 1$. For a fixed $k$,  we show that one can get a polynomial dependency on $1/\eps$. To complement our upper bound, we show an exponential lower bound when $k \sim 1/\eps$; our lower bound holds even for \EMPH{tree metrics}.

\begin{theorem}\label{thm:kcenter} Given a point set $P$ in a planar metric $(V,\dist)$, a number of centers $k\in \mathbb{N}$, and a parameter $\eps\in(0,1)$, there exists a $k$-center $\eps$-coreset of size $1/\eps^{2^{O(k)}}$  for $P$.
\end{theorem}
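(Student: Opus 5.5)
The plan is to copy the two-step structure behind \Cref{thm:1center}. The paper's outline says this is done through a $(k,\eps)$-comatching, so I take that route. First I reduce the coreset question to bounding a combinatorial invariant, using an LP-rounding argument in the spirit of \Cref{thm:comatching2coreset}. Then I bound that invariant in planar metrics by $(1/\eps)^{2^{O(k)}}$, inducting on $k$ with \Cref{thm:comatching-bound} as the base case $k=1$.

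\textbf{Step 1 (definition and reduction).} Call a sequence $(X_i,q_i)_{i=1}^\ell$, with $|X_i|\le k$ and $q_i\in V$, a $(k,\eps)$-comatching if there is $R>0$ with $\dist(q_i,X_i)>R$ for all $i$ and $\dist(q_j,X_i)\le(1-\eps)R$ for all $i\ne j$.

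After bucketing the scales $\max_{p}\dist(p,X)$ into $O(1/\eps)$-factor classes, the coreset problem is a hitting-set problem. The ground set is $P$, and for each $X$ the set to hit is $F_X=\{p: \dist(p,X)\ge(1-\eps)R\}$. The sets $F_X$ are complements of unions of $k$ balls, so their VC-dimension is $O(dk\log k)$ with $d\le 4$.

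I write the fractional hitting-set LP and take an optimal dual, a fractional packing of the sets $F_X$. I then argue, as in the proof of \Cref{thm:comatching2coreset}, that a large fractional packing value forces a large $(k,\eps^2/4)$-comatching, after discretizing the radii. Standard $\eps$-net rounding then gives a coreset of size $O(dk\log k\cdot \eps^{-2}\cdot L_k\log L_k)$, where $L_k$ is the $(k,\eps^2/4)$-comatching index. Since the reduction has this form, it suffices to bound $L_k$.

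\textbf{Step 2 (bounding $L_k$ by induction on $k$).} Given a long $(k,\eps)$-comatching, write $X_i=\{x_i^1,\dots,x_i^k\}$. The off-diagonal condition $\dist(q_j,X_i)\le(1-\eps)R$ says each $q_j$ is close to some center of $X_i$. This colours each pair $(i,j)$ with a center index in $[k]$.

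Apply a Ramsey argument to this colouring (Ramsey for $k$ colours on pairs), or alternatively a dependent-choice/pigeonhole scheme. This extracts a subsequence of length roughly $L^{1/2^{O(k)}}$ where one coordinate $c$ is responsible for all close pairs in a structured way. Along this subsequence, $(x_i^c,q_i)$ forms an ordinary $\eps$-comatching, because $\dist(q_i,x_i^c)\ge\dist(q_i,X_i)>R$ automatically. \Cref{thm:comatching-bound} then bounds its length by $\poly(1/\eps)$.

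The difficulty is that a monochromatic pattern may hold only on one side (say $i<j$), which yields a ladder-like structure. Ladders can be exponentially long by Sokołowski's construction, so this case cannot be bounded directly. To handle it I would do the reduction recursively: if no single coordinate covers both sides, drop one center and recurse on a $(k-1)$-type structure. The ordered sides can be split across two coordinates, each controlled by a structure with one fewer free center. Squaring the bound at each level gives $L_k\le L_{k-1}^{O(1)}$, hence $(1/\eps)^{2^{O(k)}}$, matching the claimed bound.

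\textbf{Main obstacle.} I expect the hardest step to be showing that the asymmetric patterns arising in Step 2 (close to $X_i$ via different centers depending on whether $i<j$ or $i>j$) can still be bounded polynomially in planar graphs. This is exactly the asymmetry the introduction warns about. My first attempt would be to generalize the planar argument of \Cref{thm:comatching-bound} itself, fixing shortest paths from $q_i$ to each center, rather than treating it as a black box.
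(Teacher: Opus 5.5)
Your Step~1 is the paper's reduction (\Cref{thm:kcenter-comatching}) and is fine. The gap is in Step~2, at two points.

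\textbf{Ramsey step.} Ordinary Ramsey for $k$ colours on pairs only guarantees a homogeneous set of size logarithmic in $L$, and random colourings show this is tight. So it cannot produce a subsequence of length $L^{1/2^{O(k)}}$. Used honestly, it gives only $L_k\le 2^{\mathrm{poly}(k/\eps)}$, which is exactly the exponential bound one wants to beat. In addition, ``the center responsible for $(i,j)$'' is not well defined, since several centers of $X_i$ may be close to $q_j$.

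The missing ingredient is the polynomial Ramsey theorem for graphs of bounded VC-dimension (\Cref{thm:NSS}). For each center index $c\in[k]$, build two graphs on $[L]$. For $a<b$, $ab$ is an edge of the first iff $\dist(q_b,x_a^c)\le(1-\eps)R$, and of the second iff $\dist(q_a,x_b^c)\le(1-\eps)R$. Their neighbourhoods are Boolean combinations of a constant number of balls and intervals, so they have VC-dimension $O(1)$ in planar metrics. Applying \Cref{thm:NSS} to all $O(k)$ such graphs in succession gives a set of size $L^{\delta^{O(k)}}$ on which every graph is a clique or independent. This is where the exponent $2^{O(k)}$ actually comes from.

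\textbf{Asymmetric case.} This case does not reduce to smaller $k$. After homogenizing, suppose no single $c$ gives a clique in both directions. Then there are indices $c^{\to}\neq c^{\leftarrow}$, one witnessing closeness on each side of the diagonal. This structure uses only two centers per pair, so ``dropping a center and recursing'' never brings you down to $k=1$. A $(2,\eps)$-comatching is not controlled by \Cref{thm:comatching-bound}, so your induction on $k$ has no valid inductive step.

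What the paper does instead (\Cref{thm:comatching-to-ladder}) is homogenize also at the looser threshold $(1-\eps/2)R$. Then either some $c$ gives an $(\eps/2)$-comatching, or the non-witnessing directions of $c^{\to}$ and $c^{\leftarrow}$ are uniformly far ($>(1-\eps/2)R$). In the latter case one gets an $(\eps/2)$-double ladder. These extra ``far'' conditions are essential for the planar analysis; in double-ladder notation, \Cref{clm:kcenter-touch-topdown} needs $\dist(p_a,\dlpt{b}_c)>R$ for $a<c$.

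One then needs a separate polynomial bound for double ladders in planar metrics, \Cref{thm:doubleladder-bound}. Its proof redoes the local and noncrossing reductions and adds a new rib-cage and Type~T/Type~B analysis. You correctly name this as the main obstacle but leave it open, and it is the heart of the theorem. So the proposal does not yet prove the statement.
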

\begin{theorem}\label{thm:kcenter-lower-bound}
For every $\eps \in (0,1/2)$, for $k \coloneqq \lceil \frac{1}{\eps} \rceil - 2$,
there exists
\begin{enumerate}[itemsep=0px]
    \item a tree metric such that any $k$-center $\eps$-coreset must have size at least $\frac{2^{1/\eps}}{4}$;
    \item a planar metric such that any $k$-center $\eps$-coreset must have size at least $\frac{4^{1/\eps}}{4}$.
\end{enumerate}
\end{theorem}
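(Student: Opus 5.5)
We build explicit hard instances in which the $k$-center coreset has to contain every point of $P$. The template is a ``comatching'' for $k$-center. We want points $p_1,\dots,p_N \in P$ and center sets $X_1,\dots,X_N$ with $|X_i| \le k$ such that $\dist(p_i,X_i) > R$ while $\dist(p_j,X_i) \le (1-\eps)R$ for all $j \ne i$. Given this, any $k$-center $\eps$-coreset must contain $p_i$, since for the query $X_i$ it is the only point reaching $(1-\eps)\max_p \dist(p,X_i)$. So it suffices to construct such a family with $N \ge 2^{1/\eps}/4$ in a tree and $N \ge 4^{1/\eps}/4$ in a planar graph. We work in unweighted graphs with an integer radius $R$ and use $\eps < 1/(R+1)$, so the condition reads $\dist(p_i,X_i) \ge R+1$ versus $\dist(p_j,X_i) \le R$.

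For the tree, we take a complete binary tree of depth $m = k+1$ (or $k+2$; we tune the constants at the end). Let $P$ be its $2^m$ leaves, and set $R = m$, so the root is at distance exactly $m$ from every leaf. Index a leaf by its root-to-leaf path $p$. Let $s_1,\dots,s_{m-1}$ be the siblings of the internal vertices on that path, excluding the root, together with the sibling of the leaf itself. Let $X_p$ consist of one vertex placed in each subtree hanging off the path: for the sibling at depth $t$, put the center at that sibling, or at a suitable depth inside its subtree. Placed correctly, every leaf that branches off the path at depth $t$ is within distance $\le R$ of its center, while $p$ itself stays at distance $> R$ from all centers. This uses roughly $m$ centers, which is about $k$.

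The calibration is the delicate part. A leaf $q$ that branches off at depth $t$ is at distance $m-t$ below the sibling, so a center exactly at the sibling is within $m-t \le R$ of $q$. But $p$ is at distance $(m-t)+1$ from that same center, which is also $\le R$ when $t \ge 1$. So the centers must instead be pushed down, to depth $d_t$ inside the sibling subtree. We must then check that a single center covers all leaves of the subtree. This suggests putting each center at depth about $m-t$ below the sibling (at a leaf, say), where it covers the leaves of its subtree within the subtree's diameter while staying far from $p$. We would work out the exact inequalities: subtree diameter $2(m-t-1)$ versus distance to $p$ of $2(m-t)+1$, which gives a gap of about $3$. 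Then we would rescale $R$ and $\eps$ so that $\eps \approx 1/m$ still separates the values. The loss of a factor $4$ and the relation $k = \lceil 1/\eps\rceil - 2$ should come out of this bookkeeping.

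For the planar bound we would replace binary branching by $4$-ary branching, using a planar gadget (for instance a grid-like or ``double tree'' structure) in which each level splits into $4$ while a single center per level can still cover the three non-chosen branches without approaching $p$. In a tree this is impossible, because one center cannot cover three sibling subtrees without also being close to the chosen branch. Planarity allows extra shortcut edges that connect the three off-path branches to one another at a common vertex, away from the path. The main obstacle is designing this gadget so that the shortcuts do not create short paths to the chosen leaf $p$ and still fit in a planar embedding. We would first try connecting, at each level, the three sibling subtrees through an auxiliary vertex placed in the face between them, and then verify the distances by induction on depth.
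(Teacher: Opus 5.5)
Your reduction is the one the paper uses. A family $(p_i,X_i)$ with $|X_i|\le k$, $\dist(p_i,X_i)\ge d+1$ and $\dist(p_j,X_i)\le d$ for $j\neq i$ forces every $p_i$ into any $k$-center $\eps$-coreset once $\eps<1/(d+1)$. With $d=k=\lceil 1/\eps\rceil-2$, sizes $2^k$ and $4^k$ give the two bounds.

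\textbf{The tree construction fails.} This is not bookkeeping: with $P$ the leaves of a bare complete binary tree, the calibration you describe cannot be completed for any $R\ge 2$.
\begin{itemize}
\item Let $q$ be the sibling leaf of $p$ (your last center $s_m$). Every vertex $c\notin\{p,q\}$ reaches both through their common parent, so $\dist(c,p)=\dist(c,q)$, while $\dist(p,q)=2$. Hence no center is close to $q$ and far from $p$ unless $R<2$. In that case each center is within distance $1$ of at most two leaves, so only $\Oh(k)$ pairs fit.
\item The same failure occurs at every level $t\ge m/2+1$ when $R=m$. A vertex outside the subtree of $s_t$ is no farther from $p$ than from the leaves below $s_t$, since every path into that subtree passes through $u_{t-1}$. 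A vertex inside that subtree is within distance $2(m-t+1)\le m$ of $p$.
\end{itemize}
The root cause is that a single threshold $R$ must serve all levels. In the bare tree, the distances separating $p$ from the level-$t$ branch scale like $m-t$, and no rescaling of $R$ and $\eps$ fixes a level-dependent scale.

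\textbf{The missing idea is depth compensation.} The paper takes the binary tree of depth $k$ and attaches to every non-root vertex $u$ at depth $h$ a pendant path of length $h$ ending in $\hat u$. It sets $X_v=\{\hat u : u \text{ is a sibling of a vertex of } Q_v\}$, where $Q_v$ is the root-to-$v$ path. Then $\dist(v,\hat u)=k+2\dist(u,V(Q_v))$, so $\dist(v,X_v)=k+2$. For leaves $v_1\neq v_2$, let $u_1$ be the child of their lowest common ancestor towards $v_1$. Then $u_1$ lies on $Q_{v_1}$ and $\hat u_1\in X_{v_2}$, so $\dist(v_1,X_{v_2})\le k$.

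\textbf{The planar part is not a construction.} Your suggested auxiliary vertex joined to three sibling branches cannot work as stated. The graph is fixed, but which of the four children lies on the path depends on the query leaf, so each branching point must serve all four choices at once. The paper does this with a wheel:
\begin{itemize}
\item The four children of a vertex form a $4$-cycle, each joined to that vertex.
\item The center for a path vertex $x$ of the cycle is the antipode of $x$. It is at distance $2$ from $x$ and at most $1$ from the other three.
\item At level $i$ this center is lifted by a pendant path of length $i-1$. As in the tree, this gives $\dist(v,X_v)=k+1$ and $\dist(v,X_w)\le k$ for leaves $v\neq w$.
\end{itemize}
Gluing the wheels tree-like keeps the graph planar, since every block is a wheel or an edge, and yields $4^k$ leaves.
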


We prove Theorems~\ref{thm:kcenter} and~\ref{thm:kcenter-lower-bound} by introducing variants of $\eps$-comatching and $d$-matching, called \EMPH{$(k,\eps)$-comatching} and \EMPH{$(k,d)$-comatching}, respectively. In a $(k,\eps)$-comatching, one side, instead of being points, contains \EMPH{sets of points} of size $k$. More precisely, each (ordered) pair in a $(k,\eps)$-comatching $\mathcal{M}$ is of the form $(p,X)$ where $X$ is a set of $k$ points. A $(k,d)$-comatching is similar to a $d$-comatching earlier for a positive integer $d$ where the distance threshold is $d$. Note that a $(k,d)$-comatching is a $(k,\eps$)-comatching for any $\eps < \frac{1}{d+1}$. 

\paragraph{The lower bound: $(k,d)$-comatching.} Our lower bound is established via $(k,d)$-comatching in unweighted planar graphs. Specifically, we show two constructions. The first construction is a simple construction that gives a weaker bound, but constructs a tree; the second one gives a planar graph with a stronger lower bound. Both constructions rule out fully polynomial dependency on $k$ and $d$ .

\begin{restatable}{theorem}{KKComatching}\label{thm:kcenter-lb}
For every integer $k \geq 1$ there exists a tree $G_k$
with a  $(k,k)$-comatching of size $2^k$.
\end{restatable}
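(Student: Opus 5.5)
We build the tree explicitly as a complete binary tree of depth $k$ with pendant paths. Pairs are indexed by binary strings $s\in\{0,1\}^k$, so there are $2^k$ of them. Each $p_s$ is a leaf. Each $X_s$ picks, at every depth, the ``off-path'' sibling of the root-to-$p_s$ path, pushed away from the tree by a pendant path whose length is tuned to the depth.

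\textbf{Construction.} Let $T$ be the complete rooted binary tree of depth $k$. Its nodes are binary strings of length at most $k$ (the root is the empty string, and $w$ has children $w0,w1$). For $s\in\{0,1\}^k$, let $p_s$ be the leaf $s$. For $i\in[k]$, let $u_{s,i}$ be the node $s_1\dots s_{i-1}\bar{s_i}$ at depth $i$, the sibling of the $i$-th node on the path from the root to $s$. At every node $w$ of depth $i\ge 2$, attach a pendant path of length $i-1$ and call its far endpoint $x_w$; for $i=1$ set $x_w=w$. Set $X_s=\{x_{u_{s,1}},\dots,x_{u_{s,k}}\}$. These are $k$ distinct points, since they lie at different depths. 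The resulting graph $G_k$ is clearly a tree.

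\textbf{Distance computation.} Fix $s$ and $i$. The leaf $s$ and the node $u_{s,i}$ have lowest common ancestor $s_1\dots s_{i-1}$ at depth $i-1$. Hence
\[
\dist(p_s,u_{s,i}) = (k-i+1)+1 = k-i+2,
\]
and therefore
\[
\dist(p_s,x_{u_{s,i}}) = (k-i+2)+(i-1) = k+1 > k.
\]
This holds for every $i$, so $\dist(p_s,X_s)=k+1>k$.

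Now take $t\neq s$, and let $j$ be the first index with $s_j\neq t_j$. Then $u_{t,j}=s_1\dots s_{j-1}\bar{t_j}=s_1\dots s_j$ is an ancestor of the leaf $s$. Therefore
\[
\dist(p_s,x_{u_{t,j}}) = (k-j)+(j-1) = k-1 \le k,
\]
which gives $\dist(p_s,X_t)\le k$.

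Together, $\{(p_s,X_s)\}_{s\in\{0,1\}^k}$ is a $(k,k)$-comatching of size $2^k$.

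\textbf{Main obstacle.} The only delicate step is choosing the pendant lengths $\ell_i$. They must satisfy two constraints simultaneously:
\begin{itemize}
\item the ``own'' distance $k-i+2+\ell_i$ must exceed $k$ for every depth $i$;
\item the ``first-difference'' distance $k-j+\ell_j$ must be at most $k$.
\end{itemize}
The choice $\ell_i=i-1$ satisfies both, with slack exactly $2$ coming from the detour through the sibling. The rest of the argument is a routine check of distances in a tree.
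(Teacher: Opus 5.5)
Your proof is correct and follows the paper's construction essentially verbatim. The paper also takes a complete binary tree of depth $k$, attaches pendant paths at the non-root nodes, and lets $X_s$ consist of the pendant endpoints attached to the siblings of the nodes on the root-to-$p_s$ path. The only difference is that you use pendant length $i-1$ at depth $i$ where the paper uses length $i$. This shifts the own and cross distances from the paper's $k+2$ and $k$ (its identity $\dist(v,\hat u)=k+2\dist(u,V(Q_v))$) to your $k+1$ and $k-1$, which is equally valid for a $(k,k)$-comatching.
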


\begin{restatable}{theorem}{SoullessTheorem}\label{thm:kcenter-lb2}
For every integers $k,d \geq 1$ there exists a planar graph $G_{k,d}$
with a $(k,d)$-comatching of size at least
    \[ \left(2\left\lfloor\frac{d}{k}\right\rfloor+2\right)^k \quad\mathrm{if\ }k \leq d;\qquad\qquad
        \left(3\left\lfloor\frac{k}{d}\right\rfloor+1\right)^d \quad\mathrm{if\ }d \leq k. \]
    (Note that both bounds give $4^k$ for $k=d$.)
\end{restatable}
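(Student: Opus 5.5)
The plan is an explicit hierarchical construction. Throughout I read a $(k,d)$-comatching as a family of pairs $(p,X)$ with $|X|\le k$, $\dist(p,X)>d$ for every pair, and $\dist(p,X')\le d$ for every two distinct pairs $(p,X),(p',X')$. The index set will be words $a\in[s]^\ell$, where $s$ is the base of the bound and $\ell$ its exponent. So $(s,\ell)=(2m+2,k)$ with $m=\lfloor d/k\rfloor$ in the first regime, and $(s,\ell)=(3t+1,d)$ with $t=\lfloor k/d\rfloor$ in the second. For each word $a$ I will build a vertex $p_a$ and a set $X_a$.

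The graph is a rooted tree of \emph{gadgets}, in the spirit of the tree of \Cref{thm:kcenter-lb}. There is one gadget for every prefix $w\in[s]^{<\ell}$, its $s$ ``exits'' lead to the gadgets of the prefixes $w c$, and $p_a$ sits at the leaf reached by following $a$. The set $X_a$ gets a block of centers at each level $i$. These centers are placed in the gadget of the prefix $a_{<i}$ (the first $i-1$ letters of $a$) and are chosen according to the letter $a_i$.

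The correctness argument should run as follows. Take $b\ne a$, and let $i$ be the first coordinate where they differ. Then $p_a$ lies in the subtree of the prefix $a_{<i}=b_{<i}$, so it can reach the level-$i$ centers of $X_b$. It will be close to them precisely because $a_i\neq b_i$. Conversely, every level-$i$ center of $X_a$ has been chosen to be far from the branch $a_i$, so $\dist(p_a,X_a)>d$. Since gluing gadgets at cut vertices (a cactus-like tree) preserves planarity, $G_{k,d}$ is planar.

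In the regime $k\le d$ each level contributes one center, and its gadget is a cycle of length $2m+2$. The exits are the cycle vertices $0,\dots,2m+1$, and exit $c$ leads to child $wc$. The level-$i$ center of $X_b$ is the cycle vertex antipodal to $b_i$. On this cycle exit $c$ is at distance $\le m$ from every vertex except its antipode, to which it is at distance $m+1$. Thus from inside branch $a_i$ the center is at cycle-distance $m+1$ exactly when $a_i=b_i$, and at most $m$ otherwise. Next I will subdivide the tree edges so that the depth from every leaf to each level's cycle adds up correctly. Concretely, I want the distance from $p_a$ to the exit vertex $a_i$ of the level-$i$ cycle on its root path to be exactly $d-m$. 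Then $\dist(p_a,\text{center})$ is $d+1$ for $b=a$ and at most $d$ otherwise. The deeper cycles lie inside the path, so their lengths must be accounted for: the path from $p_a$ up to level $i$ passes through cycles $i+1,\dots,\ell$ via shortest arcs between their entry and exit points. Since $k(m+1)\le d+k$, the budget fits once the connecting paths are shortened appropriately (and using $d\ge km$).

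I also have to rule out shortcuts. The only route from $p_a$ to a vertex of another branch passes through the cut vertex, so distances decompose and no shortcut exists.

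In the regime $d\le k$ the tree has $d$ levels, each of cost $1$, and each level gets $t$ centers. The gadget must have $3t+1$ exits with the following property: for every letter $c$ there is a set $S_c$ of $t$ centers, none of them adjacent to exit $c$ (within the local distance budget), while every other exit is adjacent to some center of $S_c$. One option is a cycle of length $3t+1$ with the exits on it, taking $S_c$ to be every third vertex starting next to the neighbours of $c$. Every exit except $c$ is then within distance $1$ of $S_c$; this is the standard domination of $C_{3t+1}$ minus a vertex by $t$ vertices. The distances are then adjusted in the same way as before, with $d$ levels of cost about $1$ each.

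I expect the main obstacle to be this exact distance bookkeeping across levels. Passing through a deeper gadget must not shorten or lengthen the paths unevenly between different branches. If the cycle traversal breaks the uniformity, I would instead hang each gadget's exits on pendant paths of equalized lengths.
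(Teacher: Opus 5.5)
\textbf{Gap: the level-to-level distance bookkeeping.} Your skeleton is the right one: a tree of cycle gadgets glued at cut vertices, one block of centers per level, correctness via the first letter where $a$ and $b$ differ, and the antipodal pattern on $C_{2m+2}$ and the domination pattern on $C_{3t+1}$. But the distance requirement you set up cannot be met while the centers lie on the cycles.

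Because the level-$i$ center of $X_a$ sits on the level-$i$ cycle, at cycle-distance $m+1$ from exit $a_i$, you need $\dist(p_a,\text{exit } a_i)=d-m$ for every word $a$ and \emph{every} level $i$. However, $p_a$ hangs off exit $a_{i+1}$ of the level-$(i+1)$ gadget, and that gadget hangs off exit $a_i$, in both cases through cut vertices. Hence
\[ \dist(p_a,\text{exit } a_i)=\dist(p_a,\text{exit } a_{i+1})+\dist(\text{exit } a_{i+1},\text{exit } a_i), \]
and the last term is positive for all but at most one value of $a_{i+1}$. So the cycles get strictly farther from the leaf as you go up, and the conditions for different $i$ contradict each other. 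For instance, if the top level is tuned, then for most $a$ the level-$\ell$ center of $X_a$ is at distance at most $d$ from $p_a$, which violates $\dist(p_a,X_a)>d$.

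This is not a question of total budget; the inequality $k(m+1)\le d+k$ is beside the point. Subdividing tree edges or equalizing entry-to-exit paths (your fallback) does not help either: that only makes distances uniform across branches of the same level. Nor does letting the cycle radius $m_i$ vary with the level. With entry-to-exit paths of length $\lambda_i$, and closeness at the first differing level witnessed by the level-$i$ centers as in your argument, you need $2\lambda_i\ge m_i+1$ and $m_{i-1}\le m_i-\lambda_i$. So the radii at least halve from level to level, and the product is nowhere near $(2m+2)^k$.

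\textbf{The missing idea.} The centers must be moved off the cycles by pendant paths whose length depends on the level. In the paper, write $d=d_1+\dots+d_h$, with $d_i\ge m$ in the first regime and $d_i=1$ in the second. The level-$i$ gadget is the cycle $C^i$ plus a root joined to every cycle vertex by a path of length $d_i$. This root does not shortcut the cycle, since $2d_i\ge d_i+1$. The next-level subtree is glued by its root to a cycle vertex, so every leaf below is at distance exactly $\sum_{j>i}d_j$ from the level-$i$ cycle. Each level-$i$ center is then the far end $\hat u$ of a pendant path of length $\sum_{j<i}d_j$ attached to a cycle vertex $u$. If $v$ denotes the vertex of $C^i$ on the way up from $p_a$, this gives
\[ \dist(p_a,\hat u)=d-d_i+\dist(v,u) \]
uniformly in $i$. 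This is at least $d+1$ for the own centers and at most $d$ at the first level where the words differ. With this change your construction becomes exactly the paper's.
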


Note that if $\mathcal{M}$ is a $(k,d)$-comatching, then for
$P = \{p~|~(p,X) \in \mathcal{M}\}$,  and any $\eps < \frac{1}{d+1}$, any $\eps$-coreset for $k$-center needs
to keep the entire $P$, as $p$ is the only valid answer for the query $X$ for $(p,X) \in \mathcal{M}$. Thus, by setting $k = d = \lceil \frac{1}{\eps} \rceil - 2$, we have $\eps < \frac{1}{k+1}$, so our lower bound for $(k,d)$-comaching in \Cref{thm:kcenter-lb} implies the exponential lower bound $\frac{2^{1/\eps}}{4}$ for tree metrics claimed in \Cref{thm:kcenter-lower-bound}. For planar metrics, \Cref{thm:kcenter-lb2} implies a slightly better lower bound, of $\frac{4^{1/\eps}}{4}$, when $k = d = \lceil \frac{1}{\eps} \rceil -2$ (so again $\eps < \frac{1}{k+1}$).

\paragraph{The upper bound: $(k,\eps)$-comatching.} The upper bound proof of Theorem~\ref{thm:kcenter} follows the same general structure as the proof of Theorem~\ref{thm:1center}.  By a similar reduction as in the case $k=1$, based on linear programming and VC-dimension, we could reduce the $k$-center coreset construction to bounding the maximum size of a $(k,\eps^2/4)$-comatching. 

\begin{restatable}{theorem}{KCenterComatching}\label{thm:kcenter-comatching}
Let $(V,\dist)$ be a metric space,  $P \subseteq V$ be a set of points, and a parameter $\varepsilon \in (0,1)$. Let $L(k,\eps)$ be the length of the maximum $(k,\eps)$-comatching in $(V,\dist)$. Let $d$ be the VC-dimension of $\mathcal{B}_V$. Then there exists a $k$-center $\varepsilon$-coreset for $P$ of size $O(dk\log k\eps^{-2}\cdot L(k,\eps^2/4)\log L(k,\eps^2/4))$. %
\end{restatable}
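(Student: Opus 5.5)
The plan mirrors the proof of \Cref{thm:comatching2coreset}: we phrase the coreset question as a hitting-set problem, bound its fractional optimum by the $(k,\eps^2/4)$-comatching index, and then round using VC-dimension. Here a $(k,\eps)$-comatching means a sequence $(p_i,X_i)_{i=1}^\ell$ with $|X_i|\le k$, $\dist(p_i,X_i)>R$ for all $i$, and $\dist(p_i,X_j)\le(1-\eps)R$ for all $i\neq j$.

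\textbf{Hitting-set formulation.} For a query $X\subseteq V$ with $|X|\le k$, let $r_X=\max_{p\in P}\dist(p,X)$ and define
\[ S_X=\{p\in P:\dist(p,X)\ge(1-\eps)r_X\}=P\setminus\bigcup_{x\in X}B\bigl(x,(1-\eps)r_X\bigr)^{\circ}, \]
where $B^{\circ}$ denotes the open ball. A set $Q$ is a $k$-center $\eps$-coreset exactly when $Q$ hits every $S_X$.

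\textbf{Rounding via VC-dimension.} Each $S_X$ is the complement of a union of at most $k$ balls. By the standard bound on $k$-fold unions of a range space of VC-dimension $d$ (open balls behave the same up to a perturbation of radii), this family has VC-dimension $O(dk\log k)$. Let $\tau^*$ be the optimum of the fractional hitting-set LP: minimize $\sum_p x_p$ subject to $\sum_{p\in S_X}x_p\ge 1$ for every $X$. Normalizing $x$ to a probability measure, every $S_X$ has measure at least $1/\tau^*$. The $\varepsilon$-net theorem of Haussler--Welzl then gives a hitting set of size $O(dk\log k\cdot\tau^*\log\tau^*)$. It can be found in polynomial time by the Brönnimann--Goodrich iterative reweighting, with the separation step done by checking queries; I would mostly defer constructivity, as the statement does. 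It therefore suffices to show
\[ \tau^*=O\bigl(\eps^{-2}L(k,\eps^2/4)\bigr). \]

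\textbf{Bounding the fractional optimum (the main step).} By LP duality, $\tau^*$ equals the maximum total weight of a fractional packing $y\ge0$ on queries such that $\sum_{X:\,p\in S_X}y_X\le1$ for every $p\in P$. I would split the threshold window $[(1-\eps)r_X,r_X]$ into $4/\eps$ layers of width $\tfrac{\eps}{4}r_X$. I would also bucket the radii $r_X$ into classes where they agree up to a factor $1+\eps^2/8$. Only $O(1/\eps)$ classes matter per "scale", and a shifting/charging argument over scales should lose just these factors, giving the $\eps^{-2}$. Fix a class with common radius $R$ and a layer; it remains to show that the packing weight restricted to it is $O(L(k,\eps^2/4))$.

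For that I would argue greedily. Take queries $X_1,X_2,\dots$ in the class, and for each $X_i$ choose a witness $p_i$ with $\dist(p_i,X_i)\ge(1-\eps/4\cdot\text{layer})R$ (in particular $p_i\in S_{X_i}$). Because every $p$ carries load at most $1$, each $p_i$ lies in $S_{X_j}$ for only a bounded weight of other queries $X_j$, and symmetrically. After discarding these conflicts, which loses a constant factor in weight, we get pairs with $\dist(p_i,X_j)<(1-\eps^2/4)\cdot R'$ for all $i\ne j$ and $\dist(p_i,X_i)>R'$ for a common threshold $R'$. Such a family is a $(k,\eps^2/4)$-comatching, so it has size at most $L(k,\eps^2/4)$. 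The layering supplies the gap $\eps^2/4$ between "far" and "close" in the appropriate layer.

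I expect this step to be the main obstacle. The delicate point is turning the fractional load condition into a genuine comatching in which all cross pairs are close; a Turán-type selection in the conflict graph is the first thing I would try. The rest is routine: summing over classes and layers, and plugging $\tau^*$ into the $\varepsilon$-net bound, yields the claimed size.
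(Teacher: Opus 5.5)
Your outer framework matches the paper's. You hit every $S_X$, use that complements of $k$-fold unions of balls have VC-dimension $O(dk\log k)$, round via $\varepsilon$-nets, and bound the fractional optimum by extracting a comatching from the dual. But the step you flag as the main obstacle is where the proof lives, and your sketch of it does not work. \textbf{The first gap is control of the scales.} A single LP over all queries mixes radii $r_X$ from arbitrarily many scales. A bound of $O(L)$ per radius class therefore says nothing about $\tau^*$ unless you know which classes can carry dual weight, and the ``shifting/charging argument over scales'' does not provide this. The paper fixes it as follows:
\begin{itemize}
\item Choose $v_1^0,\dots,v_{k+1}^0\in P$ by farthest-first traversal, and let $\Delta$ be the distance from this set to its farthest point of $P$.
\item These points are pairwise at distance at least $\Delta$, so by pigeonhole every $X$ with $|X|\le k$ has $r_X\ge\Delta/2$.
\item Every point of $P$ is within $\Delta$ of some $v_i^0$, so whenever $r_X>\varepsilon^{-1}\Delta$, some $v_i^0$ is already a valid answer.
\item Only $r_X\in[\Delta/2,\varepsilon^{-1}\Delta]$ remains. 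This window is cut into $O(\varepsilon^{-2})$ additive buckets of width $\delta=\varepsilon\Delta/4$, each with its own LP and hitting set.
\end{itemize}
The factor $\varepsilon^{-2}$ in the statement is simply the number of buckets. Rounding per bucket also keeps the logarithm at $\log L$ rather than the $\log(\varepsilon^{-2}L)$ your single global LP would give.

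\textbf{The second gap is that the layering does not create the separation.} If the witness $p_i$ is just some point of a layer, $\dist(p_i,X_i)$ can be close to $(1-\varepsilon)r_{X_i}$. Non-conflict only gives $\dist(p_i,X_j)<(1-\varepsilon)r_{X_j}$, so near and far are not separated. (Also, $4/\varepsilon$ layers of width $\varepsilon r_X/4$ do not fit into a window of width $\varepsilon r_X$.) The separation must come from two things: taking the witness to be the farthest point $z_{X_i}$, and all queries in a bucket sharing essentially one radius. The paper makes this exact by using, in bucket $i$, the common threshold $S_i^X=\{u\in P:\dist(X,u)\ge(i-1)\delta\}$. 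A non-conflicting pair then gives $\dist(z_X,Y)<(i-1)\delta$ against $\dist(z_X,X)\ge i\delta$, a relative gap of $1/i\ge\varepsilon^2/4$.

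\textbf{The third gap is in how you turn dual weight into a comatching.} The dual constraint at $z_{X_i}$ bounds only the total weight of queries $X_j$ with $z_{X_i}\in S_{X_j}$, i.e.\ those threatening $X_i$. It does not bound the reverse, so your ``and symmetrically'' is false. Discarding conflicts deterministically also does not convert weight into a count. What works is sampling:
\begin{itemize}
\item Draw $2K$ queries independently from the normalized dual of the bucket, with $K=\lfloor\tau_i^*/4\rfloor$.
\item The expected number of ordered threatening pairs is at most $4K^2/\tau_i^*\le K$.
\item Delete one query per such pair. Repeated samples threaten each other, since every query threatens itself.
\item The remaining $K$ pairs $(z_X,X)$ form a $(k,\varepsilon^2/4)$-comatching, so $\tau_i^*\le 4L(k,\varepsilon^2/4)+3$.
\end{itemize}
With these ingredients, a global-LP variant of your plan could also be completed, but the proposal as written contains none of them.
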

 Working with $(k,\eps)$-comatchings is more challenging than with $\eps$-comatchings
as in Theorem~\ref{thm:comatching-bound}.
The problem here is that the sets on one side of the $(k,\eps)$-comatching introduce a lot of difficulty due to lack of symmetry, if we were to follow the same approach for bounding the size of $\eps$-comatching. Specifically, the $\eps$-comatching proof technique crucially exploits the symmetry between pairs: for each $(p, q)\in \mathcal{M}$, the point $q$ is ``close" to every other $p_i$, which is essential for establishing a packing-type contradiction. However, in a $(k,\eps)$-comatching $\mathcal{M}'$, this symmetry disappears, i.e., for some $(p, X)\in \mathcal{M}'$, different points in $X$ can serve as witnesses of closeness to different $p_i$s, and may be far from all other points. This lack of symmetry makes it difficult to restrict the point-set pairs within a~topological boundary, and the tools established for comatching no longer apply here.

We re-establish the symmetry using Ramsey-type arguments. 
Luckily, we are working with set systems of bounded VC-dimension
and we can use the recent polynomial Ramsey bounds for graphs of bounded VC-dimension by Nguyen, Scott, and Seymour~\cite{NSS24}. We establish that a huge $(k,\eps)$-comatching
contains either a large (regular) $(\eps/2)$-comatching or a large \EMPH{$(\eps/2)$-double-ladder}
defined as follows (see \Cref{fig:double-ladder} for an illustration).

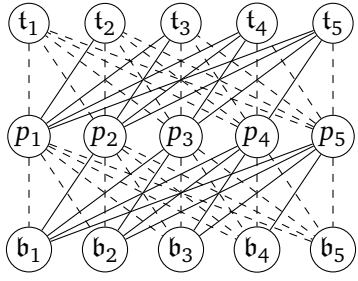
\begin{figure}[ht]
        \centering
        \begin{tikzpicture}[scale=1, every node/.style={circle, draw, fill=white, inner sep=1pt}]
            \foreach \i in {1,...,5}{
                \node (t\i) at (\i*1,3) {$\mathfrak{t}_{\i}$};
            }        
            \foreach \i in {1,...,5}{
                \node (p\i) at (\i*1,1.5) {$p_{\i}$};
            }       
            \foreach \i in {1,...,5}{
                \node (b\i) at (\i*1,0) {$\mathfrak{b}_{\i}$};
            }        
            \foreach \j in {1,...,5}{
                \foreach \i in {1,...,5}{
                    \ifnum\i>\j\relax
                    \else
                        \draw[dashed] (p\j) -- (t\i);
                    \fi
                }
            }       
            \foreach \i in {1,...,5}{
                \foreach \j in {1,...,5}{
                    \ifnum\i>\j\relax
                    \else
                        \draw[dashed] (p\i) -- (b\j);
                    \fi
                }
            }        
            \foreach \i in {1,...,5}{
                \foreach \j in {1,...,5}{
                    \ifnum\i<\j
                        \draw[] (p\i) -- (t\j);
                    \fi
                }
            }      
            \foreach \j in {1,...,5}{
                \foreach \i in {1,...,5}{
                    \ifnum\i<\j
                        \draw[] (p\j) -- (b\i);
                    \fi
                }
            }
        \end{tikzpicture}
        
    \caption{$\eps$-double ladder. Again, the dotted line indicates distance $\geq R$, and the solid line indicates distance $\leq(1-\eps)R$.}
    \label{fig:double-ladder}
\end{figure}

\begin{restatable}[$\varepsilon$-double ladder]{definition}{DoubleLadderDef}\label{def:double-ladder}
Let $\DL = \{\dltri{1}, \dltri{2},\ldots, \dltri{\ell}\}\subseteq V^3$  be a sequence of $\ell$ \emph{ordered triplets} of points in a metric $(V,\dist)$, and $\eps\in (0,1)$ be a parameter.  We say that $\DL$ is an \EMPH{$\eps$-double ladder} if there exists $R > 0$ such that:
    \begin{enumerate}
        \item for every $1 \leq i \leq j \leq \ell$,
        \[ \dist(p_j, \dlpt{t}_i) > R \quad \mathrm{and} \quad \dist(p_i, \dlpt{b}_j) > R;\]
        \item for every $1 \leq i < j \leq \ell$,
        \[ \dist(p_i, \dlpt{t}_j) \leq (1-\varepsilon)R \quad \mathrm{and} \quad \dist(p_j, \dlpt{b}_i) \leq (1-\varepsilon)R.\]
    \end{enumerate}
The sets of points $\{\dlpt{t}_1,\ldots, \dlpt{t}_{\ell}\}$ and $\{\dlpt{b}_1,\ldots, \dlpt{b}_{\ell}\}$ are  called the \EMPH{top} and  \EMPH{bottom} of the double ladder\footnote{We will be using \texttt{mathfrak} font for the top and bottom points of a double ladder}, respectively.
\end{restatable}

Observe that an $\eps$-double ladder $\mathcal{D} = (p_i,\dlpt{t}_i,\dlpt{b}_i)_{i=1}^\ell$
is an $(k=2,\eps)$-comatching $(p_i,X_i)_{i=1}^\ell$
with $X_i = \{\dlpt{t}_i, \dlpt{b}_i\}$.
The vertex $\dlpt{t}_i$ serves as the ``close to $p_j$'' 
vertex of $X_i$ for $j < i$ while $\dlpt{b}_i$ serves
this role for $j > i$. 

\begin{restatable}{theorem}{KComatchingToDoubleLadder}\label{thm:comatching-to-ladder} Let $(V,\dist)$ be a metric space such that the ball set system $\mathcal{B}_V$ has VC-dimension at most $d$. Let $\eps \in (0,1)$ be a parameter. There exists a universal constant $c$ and a polynomial $f_{k,d}(x) = x^{(c\cdot d)^{4k}}$ such that if $(V,\dist)$ contains a $(k,\varepsilon)$-comatching  of size at least $f_{k,d}(\ell)$, then it contains  either an $(\varepsilon/2)$-comatching or an $(\varepsilon/2)$-double ladder  of size $\ell$. 
\end{restatable}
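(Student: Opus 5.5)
The plan is to extract the structure by repeated Ramsey-type cleaning, using the polynomial Ramsey bounds of Nguyen, Scott and Seymour~\cite{NSS24} for graphs of bounded VC-dimension. Let $\mathcal{M}=(p_i,X_i)_{i=1}^N$ be the $(k,\eps)$-comatching with threshold $R$. Thus $\dist(p_i,x)>R$ for all $x\in X_i$, and for $i\neq j$ some $x\in X_j$ has $\dist(p_i,x)\le(1-\eps)R$.

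\textbf{Step 1: assign witnesses.} Order each $X_j=\{x_j^1,\ldots,x_j^k\}$ arbitrarily. For $i\neq j$, let $c(i,j)\in[k]$ be the least index $a$ with $\dist(p_i,x_j^a)\le(1-\eps)R$. This colours the ordered pairs of $[N]$ with $k$ colours.

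\textbf{Step 2: make the colouring structured, one colour at a time.} For each fixed $a$, consider the bipartite graph $H_a$ between the $p$-side and the $x^a$-side, where $p_i x_j^a$ is an edge iff $\dist(p_i,x_j^a)\le(1-\eps/2)R$. Adjacency in $H_a$ is membership of $p_i$ in a ball around $x_j^a$, so $H_a$ has VC-dimension at most $d$ (up to a constant). Using the half-threshold $(1-\eps/2)R$ leaves a gap: pairs with colour $a$ satisfy $\le(1-\eps)R$, while the diagonal pairs satisfy $>R$. I will apply the Ramsey-type statement of~\cite{NSS24} (an ordered/bipartite version for bounded VC-dimension) to pass to an index subset of polynomial size on which, for every pair of distinct indices $a,b$ and every ordered pair $i<j$, the adjacency of $(p_i,x_j^a)$ and of $(p_j,x_i^a)$ in $H_a$ is constant. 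In other words, the sequence becomes ``homogeneous above and below the diagonal'' for each $a$. Iterating over $a\in[k]$ and running an ordered Erd\H{o}s--Szekeres style step costs an exponent that is a power of $(cd)$ per round. This gives the claimed $x^{(cd)^{4k}}$ after $O(k)$ rounds, with each round polynomial with exponent $(cd)^{O(1)}$.

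\textbf{Step 3: read off the outcome.} Once homogeneous, every pair $i<j$ must be covered by some $a$ with $p_i x_j^a$ an edge. If a single $a$ covers both directions, then $(p_i,x_i^a)$ is an $(\eps/2)$-comatching with threshold $R$: the diagonal is $>R$ and the off-diagonal is $\le(1-\eps/2)R$. Here one must be careful: the comatching needs distances $>R'$ and $\le(1-\eps/2)R'$ for a single $R'$, and $R'=R$ works. Otherwise some $a$ covers all pairs above the diagonal and a different $b$ covers all below. Setting $\mathfrak{t}_i=x_i^a$ and $\mathfrak{b}_i=x_i^b$ gives condition~2 of the double ladder. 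The remaining issue is condition~1: we need $\dist(p_j,\mathfrak{t}_i)>R$ for $i\le j$, not merely non-adjacency in $H_a$, which only gives $>(1-\eps/2)R$. To fix this I will rescale: take $R'=(1-\eps/2)R$, so that non-edges give $>R'$ and edges with distance $\le(1-\eps)R\le(1-\eps/2)R'$ give the required closeness. This is why the output parameter is $\eps/2$. To make this work, Step~2 should homogenize using the colour relation ``$\le(1-\eps)R$'' for edges versus ``$>(1-\eps/2)R$'' for non-edges. Pairs falling in the intermediate band are handled by a three-way colouring, and the Ramsey step must kill the middle colour. The VC-dimension bound applies to each threshold separately.

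\textbf{Main obstacle.} The main difficulty is Step~2: getting a \emph{polynomial} Ramsey bound that simultaneously homogenizes $k$ ordered bipartite relations with thresholds. A plain Ramsey argument is exponential. I would first try the NSS24 result that bounded VC-dimension graphs contain polynomial-size homogeneous sets (or bi-cliques/anti-bi-cliques). I would combine it with an ordered variant, splitting the index set into an initial and a final half, so that the above-diagonal and below-diagonal parts become complete or empty bipartite pairs. Recursing on halves, as in Erd\H{o}s--Hajnal-for-ordered-graphs arguments, would then yield polynomial dependence.
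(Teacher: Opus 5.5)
Your endgame is right: homogenize, for each coordinate $a\in[k]$, the above-diagonal and below-diagonal relations at the two thresholds $(1-\eps)R$ and $(1-\eps/2)R$. Then split into a comatching case (threshold $R$) and a double-ladder case (threshold $R'=(1-\eps/2)R$, using $1-\eps\le(1-\eps/2)^2$).

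However, Step~2, which carries the whole polynomial bound, is not actually established. \Cref{thm:NSS} gives a clique or an independent set in an ordinary graph of bounded VC-dimension. It says nothing directly about a single index set on which a bipartite relation between $\{p_i\}$ and $\{x^a_j\}$ is constant above and below the diagonal. Your fallback is to halve the index set, find homogeneous bipartite pairs, and recurse ``as in Erd\H{o}s--Hajnal for ordered graphs''. That would need complete/anticomplete pairs of polynomial size in bounded-VC bipartite graphs, which is a different and stronger statement than the one you cite. It would also need an extra argument making the types agree across recursion levels. Neither is supplied, so as written the bound $x^{(cd)^{4k}}$ is unproved.

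The missing idea is to fold the order into an ordinary graph on the index set. For each $a\in[k]$, each threshold $\theta\in\{(1-\eps)R,(1-\eps/2)R\}$ and each direction, take the graph on $[N]$ in which, for $i<j$, $ij$ is an edge iff $\dist(p_i,x^a_j)\le\theta$ (respectively $\dist(p_j,x^a_i)\le\theta$). In, say, the forward graph, the neighbourhood of $j$ is $\{i<j : p_i\in B(x^a_j,\theta)\}\cup\{i>j : x^a_i\in B(p_j,\theta)\}$. This is a union of two pulled-back balls, each intersected with an interval. So each of these $4k$ graphs has VC-dimension $O(d)$, and \Cref{thm:NSS} applies verbatim. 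Iterating it $4k$ times gives $I$ with $|I|\ge N^{\delta^{4k}}$ on which every one of these graphs is a clique or an independent set. That is exactly your ``homogeneous above and below the diagonal''.

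Note also that no Ramsey step can ``kill the middle colour''; the case analysis does that. Pick $a$ whose forward $(1-\eps)R$-graph on $I$ is a clique; it exists because every pair $i<j$ has a witness and each graph is homogeneous.
\begin{itemize}
\item If the backward $(1-\eps/2)R$-graph of $a$ is also a clique, output the $(\eps/2)$-comatching $(p_i,x^a_i)_{i\in I}$ at threshold $R$.
\item Otherwise every backward distance for $a$ exceeds $(1-\eps/2)R$. Make the symmetric choice of $b$ (backward $(1-\eps)R$-graph a clique, and, if the comatching case does not fire, forward distances for $b$ above $(1-\eps/2)R$). Then $(p_i,x^b_i,x^a_i)_{i\in I}$ is the $(\eps/2)$-double ladder at $R'$.
\end{itemize}
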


Using a similar general approach as for Theorem~\ref{thm:comatching-bound}
(but with significantly different arguments in the last phase of the proof) we establish 
a polynomial bound on a maximum size of double ladder in a planar graph

\begin{theorem}[Double Ladders in Planar Metrics]\label{thm:doubleladder-bound}
There exists a polynomial $L'_{\mathrm{planar}}$ such that 
for any $\eps \in (0,1)$, any $\eps$-double ladder in planar metrics has size at most $L'_{\mathrm{planar}}(1/\eps)$. 
\end{theorem}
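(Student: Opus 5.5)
We follow the same three-phase strategy used for Theorem~\ref{thm:comatching-bound}. The first two phases are reused nearly verbatim and only the final packing phase is new. Fix an $\eps$-double ladder $\DL = \dltri{i}_{i=1}^{\ell}$ with threshold $R$ in a planar graph $G$. We aim to show $\ell \le \poly(1/\eps)$.

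\textbf{Step 1: normalization.} We discard indices to make the configuration geometrically regular, losing only polynomial factors in $1/\eps$. By pigeonhole over $\Oh(1/\eps)$ distance buckets, we may assume that all ``far'' distances $\dist(p_j,\dlpt{t}_i)$ and $\dist(p_i,\dlpt{b}_j)$ for $i\le j$ lie in $(R,2R]$. This uses the triangle inequality through a close pair, so the far distances are already at most $(2-\eps)R$ up to a chain of two close pairs. Next we fix a shortest-path tree rooted at one point, say $p_1$. Following the first phase of the comatching proof, we choose shortest paths $\ppath{p_i\dlpt{t}_{i+1}}$ and $\ppath{p_{i+1}\dlpt{b}_i}$ realizing the ``close'' relations between consecutive indices. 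The ladder condition is preserved under passing to subsequences, so any subsequence is again an $\eps$-double ladder. Hence we may apply Erd\H{o}s--Szekeres type selections freely, with polynomial loss, to fix the cyclic order in which these paths leave a common separator.

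\textbf{Step 2: topological confinement.} As in the comatching proof, we cut the plane along a small number of shortest paths, using $\Oh(1/\eps)$ of them sampled at distance scale $\eps R$ (a shortest-path separator / profile argument). Every surviving triple $(p_i,\dlpt{t}_i,\dlpt{b}_i)$ is then confined to a single face of the cut graph. Pairs that are far ($>R$) cannot cross a cut shortest path near a point where the path is within $\eps R/4$ of both endpoints. Using the profiles $\prof{\cdot}{\cdot}$ along the cutting paths, we encode each $p_i$ (and each top and bottom point) by its distance profile rounded to multiples of $\eps R/4$ at $\poly(1/\eps)$ sample points. The key point is to show that the relevant profiles are monotone with $\Oh(1)$ breakpoints, so only $\poly(1/\eps)$ distinct profiles occur and not exponentially many. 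This step is where planarity is used, and it is where the Sokołowski lower bound for ladders is circumvented.

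\textbf{Step 3: final packing, the main obstacle.} In the comatching case, symmetry gives a contradiction once two indices $i\ne j$ share profiles: $q_i$ is close to $p_j$ and $q_j$ to $p_i$, so swapping subpaths along a crossing yields $\dist(p_i,q_i)\le (1-\eps/2)R$. For a double ladder this symmetry is replaced by a two-sided one. For $i<j$, $p_i$ is close to $\dlpt{t}_j$ and $p_j$ is close to $\dlpt{b}_i$, while $p_i$ is far from $\dlpt{b}_j$ and $p_j$ is far from $\dlpt{t}_i$. I would look at the two close paths $\ppath{p_i\dlpt{t}_j}$ and $\ppath{p_j\dlpt{b}_i}$ inside the face. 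Planarity plus the fixed cyclic order from Step 1 should force these two paths to cross at some vertex $x$. If they do, then
\[
\dist(p_i,\dlpt{b}_i)+\dist(p_j,\dlpt{t}_j) \le \dist(p_i,\dlpt{t}_j)+\dist(p_j,\dlpt{b}_i) \le 2(1-\eps)R,
\]
which contradicts $\dist(p_i,\dlpt{b}_i)>R$ and $\dist(p_j,\dlpt{t}_j)>R$. The crossing is not automatic; this is exactly the missing symmetry. I would therefore first use a triple $i<j<k$ and choose two of the four paths $\ppath{p_i\dlpt{t}_j}, \ppath{p_i\dlpt{t}_k}, \ppath{p_j\dlpt{b}_i}, \ppath{p_k\dlpt{b}_j}$ whose endpoints interleave on the face boundary. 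The rounded profiles from Step 2 handle the case where the points are not on the boundary, at additive error $\eps R/2$, which the $(1-\eps)$ slack absorbs. If no interleaving pair exists among $\poly(1/\eps)$ indices with equal profiles, a Ramsey/Erd\H{o}s--Szekeres argument on the boundary orders gives a long monotone pattern. That pattern contradicts the far conditions via the same exchange inequality. Combining the losses from all three steps gives the polynomial $L'_{\mathrm{planar}}$.
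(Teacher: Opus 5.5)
The main gap is Step~3. The exchange inequality you write down cannot be paired with a topological argument that forces a crossing. Read contrapositively, it proves that $\ppath{p_i\dlpt{t}_j}$ and $\ppath{p_j\dlpt{b}_i}$ are \emph{always} disjoint. More generally, $\ppath{p_a\dlpt{t}_b}\cap\ppath{\dlpt{b}_c p_d}=\emptyset$ whenever $a<b\le d$ and $a\le c<d$ (\Cref{clm:kcenter-touch-topdown}). This covers every pair among your four paths whose intersection would violate a far condition, so ``forcing'' such a crossing amounts to proving the theorem outright. The pairs that can genuinely meet, e.g.\ $\ppath{p_i\dlpt{t}_k}$ and $\ppath{p_j\dlpt{b}_i}$, give exchange inequalities containing a close or an unconstrained term, i.e.\ nothing.

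Your fallback, that a long monotone pattern contradicts the far conditions ``via the same exchange inequality'', is unsupported: a monotone pattern is exactly what survives in the paper. After passing to an equidistant ladder ($\dist(p_1,\dlpt{t}_a)$ and $\dist(\dlpt{b}_1,p_a)$ constant up to $\delta$), the ribs $\gamma_a$ built from pieces of $\ppath{p_1\dlpt{t}_a}$, the link, and $\ppath{\dlpt{b}_a p_L}$ are ordered. Each $\ppath{\dlpt{b}_1 p_a}$ then crosses, in order, either all later $\ppath{p_1\dlpt{t}_b}$ or all earlier $\ppath{p_L\dlpt{b}_b}$, and exchanging at these crossings is again vacuous. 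The contradiction there is quantitative, not a forbidden crossing. The $2|I|$ paths, each of length at most $(1-\eps)R$, are cut at consecutive crossing vertices and reassembled, without reusing edges, into $2|I|-4$ walks of length at least $R$ each. In the Type B case, half of these walks are only of length at least $\dist(\dlpt{b}_1,p_L)$, which equidistance compensates for. This gives $|I|=\Oh(\eps^{-1})$, and nothing in your plan plays the role of this packing step.

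Step~2 also does not create the setting in which any such topological reasoning is available. What is needed is that each triple is unseparated by $G^{\neg\sigma}$, the union of the quadratically many shortest paths of all \emph{other} triples, and that links can be chosen pairwise disjoint. The paper gets the first from the charging argument of \Cref{lem:DL-to-local} together with the three-profile separation lemma. It gets the second from a balanced split of the Lipton--Tarjan tree decomposition along a spanning tree whose root paths are covered by $\Oh(\log^2 L)$ ladder paths (\Cref{lem:log-cover-spanner}). Confining triples to faces of a cut along $\Oh(1/\eps)$ shortest paths says nothing about how the ladder's own paths wind around one another, which is all Step~3 could use.

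Finally, your justification for polynomially many profiles is false: a profile along a shortest path need not be monotone with $\Oh(1)$ breakpoints. For example, join one vertex by edges of weight $n$ to every tenth vertex of a unit-weight path of length $n$; the path stays shortest and the profile oscillates. The polynomial count comes from the VC-dimension of distance-profile set systems plus Sauer--Shelah (\Cref{thm:prof-vc}, \Cref{lem:Uprof-vc-DL}). Since that count holds equally in the planar graphs carrying Soko{\l}owski's exponential ladders, it cannot be where that barrier is overcome.
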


\noindent Now \Cref{thm:kcenter} is a simple consequence of \Cref{thm:comatching-bound,thm:kcenter-comatching,thm:comatching-to-ladder,thm:doubleladder-bound}.

\begin{proof}[Proof of \Cref{thm:kcenter}] By \Cref{thm:comatching-bound,thm:comatching-to-ladder,thm:doubleladder-bound}, the maximum size of a $(k,\eps)$-comatching is at most $f_{k, 4}(\poly(1/\eps)) = (1/\eps)^{2^{O(k)}}$. Then by \Cref{thm:kcenter-comatching}, we can construct an $\eps$-coreset for $k$-center of size:
\begin{equation*}
    O(\eps^{-2} f_{k,4}(\poly(1/\eps)) \cdot \log  f_{k,4}(\poly(1/\eps))   =  (1/\eps)^{2^{O(k)}}~,
\end{equation*}
as claimed.
\end{proof}

Furthermore, we note that the technique we used to generalize from planar metrics
to metrics of bounded genus works here as well.
\begin{theorem}\label{thm:genus2}
For any integer $g \geq 0$, $\eps \in (0,1)$, and a metric space $(V,\dist)$
induced by a graph embedded in a surface of Euler genus at most $g$,
the maximum size of an $\eps$-double ladder in $(V,\dist)$ is bounded by
$\Oh(g^4 \varepsilon^{-8}) \cdot L'_{\mathrm{planar}}(1/\eps)$,
where $L'_{\mathrm{planar}}$ comes from Theorem~\ref{thm:doubleladder-bound}.
\end{theorem}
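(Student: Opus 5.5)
We follow the same route as for Theorem~\ref{thm:genus}: cut the surface into a planar piece along few shortest paths, then pay a polynomial factor to localize a large double ladder into the planar part, and finally apply Theorem~\ref{thm:doubleladder-bound}.

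\textbf{Step 1: make all triplets live on one planar piece.} Let $\DL=(p_i,\dlpt{t}_i,\dlpt{b}_i)_{i=1}^\ell$ be an $\eps$-double ladder with threshold $R$ in a graph $G$ of Euler genus $g$. By the standard cut-graph construction (e.g.\ a tree-cotree decomposition based on a shortest path tree), there is a set $\mathcal{P}$ of $\Oh(g)$ shortest paths such that cutting $G$ along $\bigcup\mathcal{P}$ yields a planar graph $G'$. Pieces of these paths far from all relevant points are irrelevant. Split each path in $\mathcal{P}$ into subpaths of length $\eps R/8$. Only the $\Oh(1/\eps)$ subpaths within distance $\Oh(R)$ of a given point matter for it, so this gives $\Oh(g/\eps)$ relevant portals per point.

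\textbf{Step 2: classify the pairs.} For each ``close'' pair, namely $(p_i,\dlpt{t}_j)$ with $i<j$ and $(p_j,\dlpt{b}_i)$ with $i<j$, fix a shortest path. Either it avoids $\bigcup\mathcal{P}$, so the distance is realized in $G'$, or it crosses some portal segment. Each triplet is colored by which portal segments near $p_i$, $\dlpt{t}_i$, $\dlpt{b}_i$ are used. By a pigeonhole/Ramsey-type argument on the ordered structure (as for comatchings, but applied separately to the two half-ladders), we aim to extract a subsequence of length $\ell/\Oh(g^4\eps^{-8})$ in which one of two cases holds:
\begin{itemize}
\item[(a)] A single portal segment $s$ of length $\le \eps R/8$ lies on close paths for two pairs with opposite far relations. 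Then the triangle inequality through $s$ gives a contradiction. The intended argument: if $p_i$ and $p_{i'}$ both have close paths through $s$ to $\dlpt{t}_j$ and $\dlpt{t}_{j'}$ respectively, we can reroute $p_{i'}\to s\to\dlpt{t}_{j}$ and obtain distance $\le(1-\eps/2)R$, which contradicts a required ``far'' pair for a suitable index choice. This forces each portal to be used only $\Oh(1)$ times in the ordered pattern.
\item[(b)] All close pairs are realized in $G'$.
\end{itemize}
The exponent $4$ in $g^4$ and the exponent $8$ in $\eps^{-8}$ come from squaring the portal count $\Oh(g/\eps)$ twice: once for the two endpoints, and once for the two ladder sides (top and bottom).

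\textbf{Step 3: apply the planar bound.} In case (b), distances in $G'$ are at least those in $G$, so the far relations persist and the close relations hold by construction. The sub-ladder is therefore an $\eps$-double ladder in the planar metric of $G'$, and its length is at most $L'_{\mathrm{planar}}(1/\eps)$ by Theorem~\ref{thm:doubleladder-bound}. This gives the stated bound.

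\textbf{Main obstacle.} The hardest part is the extraction in Step 2, because of the asymmetry of the double ladder. Unlike a comatching, ``close'' holds only in one direction for each of $\dlpt{t}$ and $\dlpt{b}$. The portal-rerouting contradiction must therefore pick indices respecting the order, for example $i<j$ close and $j\le i'$ far. Our first attempt would be to reuse verbatim the argument from the proof of Theorem~\ref{thm:genus}, applied separately to the two half-ladders $(p_i,\dlpt{t}_i)$ and $(p_i,\dlpt{b}_i)$, each of which is an $\eps$-ladder pattern restricted to close pairs.
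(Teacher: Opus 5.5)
\textbf{There is a genuine gap in Step~2, which is the heart of the proof.}

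First, the extraction is not backed by an argument. Your coloring records which portal segments a triplet's close paths use, so it ranges over subsets of $\Oh(g/\eps)$ segments. Pigeonhole therefore only loses a factor $2^{\Oh(g/\eps)}$; neither the bound $\ell/\Oh(g^4\eps^{-8})$ nor the ``squaring the portal count twice'' explanation follows from it.

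Second, and more seriously, the contradiction in (a) does not work. Rerouting $p_{i'}\to s\to\dlpt{t}_j$ is short only if $\dist(p_{i'},s)\approx\dist(p_i,s)$, and your coloring does not record this. Without it, uncrossing through $s$ only gives
$\dist(p_{i'},\dlpt{t}_j)+\dist(p_i,\dlpt{t}_{j'})\le 2(1-\eps)R+\eps R/4$.
This is a contradiction only if both rerouted pairs are far, i.e.\ $j\le i'$ and $j'\le i$. Together with $i<j$ and $i'<j'$ that would force $j\le i'<j'\le i<j$, which is impossible. So two top (or two bottom) close paths may share a portal with no contradiction. For example, all paths $\ppath{p_1\dlpt{t}_j}$ may run through one short segment of the cut, and nothing in your argument rules this out. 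Hence case (b) cannot be forced this way.

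\textbf{The missing idea} (Lemma~\ref{lem:genus} in the paper) is to combine \emph{distance profiles} with the \emph{diagonal} far pairs.
\begin{itemize}
\item Restrict $G$ to the union of close paths, so the diameter is at most $8R$. Then the copies of the cut vertices in the planar cut graph $G'$ are $\delta$-covered, with $\delta=\eps R/100$, by a set $Z$ of $\Oh(g/\eps)$ vertices.
\item Group the triplets by the rounded distance profile of $p_i$ to $Z$, computed in $G'$. Because $G'$ is planar, Lemma~\ref{lem:Uprof-vc-DL} (VC-dimension at most $4$ plus Sauer--Shelah) bounds the number of groups by $\Oh(|Z|^4\eps^{-4})=\Oh(g^4\eps^{-8})$. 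This is where the exponents come from.
\item Within one group, suppose a close path $\ppath{p_i\dlpt{t}_j}$ with $i<j$ hits the cut. Let $u$ be its first cut vertex, $u^\ast$ the copy of $u$ reached by the prefix of the path, and $z\in Z$ a vertex within $\delta$ of $u^\ast$. Equal profiles of $p_i$ and $p_j$ give $\dist(p_j,u)\le\dist(p_i,u)+3\delta$. Hence $\dist(p_j,\dlpt{t}_j)\le(1-\eps)R+3\delta<R$, contradicting the far pair of triplet $j$ itself.
\item Bottom paths $\ppath{p_j\dlpt{b}_i}$ are handled symmetrically, via the far pair $(p_i,\dlpt{b}_i)$.
\end{itemize}
Thus every close path inside a group avoids the cut, and both ladder sides are handled at once, so there is no need to treat the half-ladders separately. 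Your Step~3 then finishes the proof.
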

Similarly as for furthest neighbor, Theorem~\ref{thm:genus2} generalizes
the $k$-center coreset of Theorem~\ref{thm:kcenter} to graphs of bounded genus.

\subsection{Organization of the paper}

In~\Cref{ss:over} we give a short overview of the proof of Theorem~\ref{thm:comatching-bound}.
After preliminaries in \Cref{sec:prelims},
we focus on \Cref{thm:1center}: 
\Cref{sec:comatchingtocoreset} proves \Cref{thm:comatching2coreset}
and \Cref{sec:comatching-bound} proves \Cref{thm:comatching-bound}.
Then, we move to $k$-center:
\Cref{sec:kcenter} proves \Cref{thm:kcenter-comatching} and \Cref{thm:comatching-to-ladder}
while \Cref{sec:doubleladder} proves \Cref{thm:doubleladder-bound}.
In \Cref{subsec:other}, we briefly discuss how to cast polygons with holes
onto planar metrics.
Extension to bounded genus graphs is discussed in \Cref{sec:genus}.
Finally, Section~\ref{sec:lb} proves our lower bounds, \Cref{thm:comatching-lb},
 \Cref{thm:kcenter-lb}, and \Cref{thm:kcenter-lb2}.

\section{Overview of the Proof of Theorem~\ref{thm:comatching-bound}}\label{ss:over}

In this section, we give the high-level ideas of our key technical result, which bounds the size of $\eps$-matching in planar metrics by a polynomial function of $1/\eps$. Our proof is rather involved, and some steps rely heavily on the planar embedding of the graph to narrow down the possible configurations of shortest paths. Since the phenomenon of polynomial bounds for $\eps$-comatching index
is restricted to planar and bounded genus metrics and is not present in, say, bounded
treewidth graphs, the very topological nature of our proof is most likely inevitable. 

Let $L = (p_i,q_i)_{i=1}^\ell$ be an $\eps$-comatching in a planar graph $G$
and let $R$ be the witnessing distance, i.e., $\dist(p_i,q_i) > R$ 
but $\dist(p_i,q_j) \leq (1-\eps)R$ for $i \neq j$. 
For every $i \neq j$, fix a shortest path $\Mpath{p_i}{q_j}$
between $p_i$ and $q_j$ in a consistent way, that is, any two shortest chosen shortest
paths are either disjoint or intersect in a subpath.
Without loss of generality, assume that every edge of $G$ lies on some path $\Mpath{p_i}{q_j}$.
It is easy to see that, as long as $\ell \geq 3$, the diameter of $G$ is bounded
by $3R$. 

\paragraph{Approximate distance profiles.}
An important tool in the analysis are so-called \emph{approximate distance profiles}.
Denote $\delta = \frac{\eps}{100} R$; we think of $\delta$ as a unit of distance, so that
an error of a few $\delta$s does not matter. 
For vertices $u,v \in V(G)$, let
\[ \Udist(u,v) = \left\lceil \dist(u,v)/\delta \right\rceil. \]
That is, $\Udist(u,v)$ is the distance from $u$ to $v$ measured in the number of $\delta$s,
rounded up. Note that, as the diameter of $G$ is bounded by $3R$,
we have $\Udist(u,v) = \Oh(1/\eps)$. 

Let $Q$ be a path of length $\Oh(R)$ in $G$. Let $P$ be a set of portals on $Q$, placed
equidistantly at distance $\delta$ on $Q$; note that $|P| = \Oh(1/\eps)$. 
The \EMPH{approximate distance profile} of a vertex $v \in V(G)$ is
the function mapping $P \ni p \mapsto \Udist(v,p) \in \mathbb{N}$.
There are two important observations about distance profiles. First, due to the placement of $P$ along $Q$, an approximate distance profile of $v$ allows us to deduce, up to an error of $2\delta$,
the distance from $v$ to any vertex on $Q$. Second, it follows from the VC-dimension bound
on the ball system of planar graphs that the number of different distance profiles is
bounded polynomially in $1/\eps$ (note that the naive bound is exponential in $|P| = \Oh(1/\eps)$). 

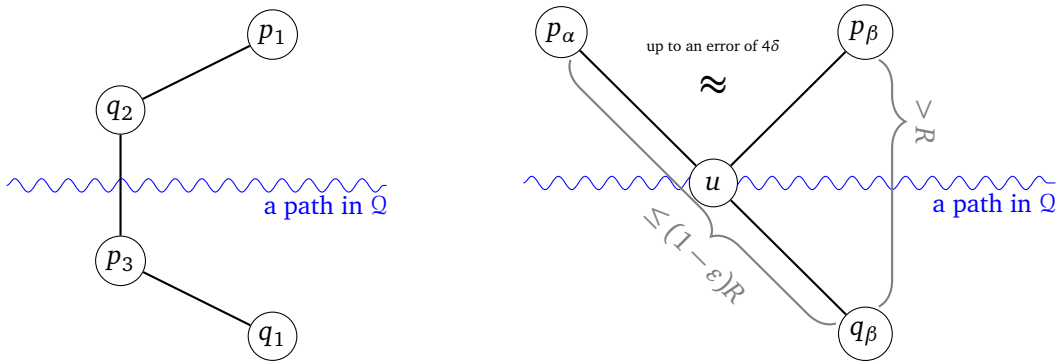
\begin{figure}[!htb]
\centering
\begin{subfigure}{0.49\textwidth}
        \centering
        \begin{tikzpicture}[scale=1]
        \tikzstyle{vertex} = [circle, fill=black, minimum size=4pt, inner sep=0pt]
      \tikzstyle{fvertex} = [circle, draw=black, fill=white, minimum size=18pt, inner sep=2pt]

    \path[draw=blue,snake it] (-2.5, 0) -- (2.5,0);
    \draw[blue] (1.7, -0.3) node {\small a path in $\mathcal{Q}$};

    \node[fvertex] (p1) at (1, 2) {$p_1$};
    \node[fvertex] (q2) at (-1, 1) {$q_2$};
    \node[fvertex] (p3) at (-1, -1) {$p_3$};
    \node[fvertex] (q1) at (1, -2) {$q_1$};
    \draw[thick] (p1) -- (q2) -- (p3) -- (q1);
        \end{tikzpicture}
        \caption{Proof that $\mathcal{Q}$ needs to intersect
        some $\Mpath{p_\alpha}{q_\beta}$.}\label{fig:over:sep1}
    \end{subfigure}%
    \begin{subfigure}{0.49\textwidth}
        \centering
        \begin{tikzpicture}[scale=1]
    \tikzstyle{vertex} = [circle, fill=black, minimum size=4pt, inner sep=0pt]
      \tikzstyle{fvertex} = [circle, draw=black, fill=white, minimum size=18pt, inner sep=2pt]

    \path[draw=blue,snake it] (-2.5, 0) -- (4.5,0);
    \draw[blue] (3.7, -0.3) node {\small a path in $\mathcal{Q}$};

    \node[fvertex] (pa) at (-2, 2) {$p_\alpha$};
    \node[fvertex] (qb) at (2, -2) {$q_\beta$};
    \node[fvertex] (pb) at (2, 2) {$p_\beta$};
    \node[fvertex] (u) at (0,0) {$u$};
    \draw[thick] (pa) -- (u);
    \draw[thick] (pb) -- (u);
    \draw[thick] (qb) -- (u);

    \draw [thick,decorate,gray,decoration={brace,raise=5pt,amplitude=10pt,aspect=0.25}] (pb) -- (qb) 
    node [pos=0.25, anchor=south, sloped, yshift=15pt] {$> R$};

    \draw [thick,decorate,gray,decoration={brace,mirror,raise=5pt,amplitude=10pt,aspect=0.6}] (pa) -- (qb) 
    node [pos=0.6, anchor=north, sloped, yshift=-15pt] {$\leq (1-\eps)R$};

    \draw (0, 1.3) node {\Large $\approx$};
    \draw (0,1.8) node {\tiny up to an error of $4\delta$};
\end{tikzpicture}
\caption{The final contradiction.}\label{fig:over:sep2}
\end{subfigure}
\caption{Illustration for the proof of the separation lemma.}\label{fig:over:sep}
\end{figure}

\paragraph{Separation lemma.}
Let $\mathcal{Q}$ be a family of $m$ paths in $G$, each of length $\Oh(R)$, 
and let $I \subseteq [\ell]$ be the set of those indices $i \in [\ell]$
such that $p_i$ and $q_i$ are separated by $V(\mathcal{Q})$,
i.e., do not lie in the same connected component
of $G-V(\mathcal{Q})$, where $V(\mathcal{Q})$ denotes the vertices lying on the paths $\mathcal{Q}$. 
Assume there exists three indices $i_1,i_2,i_3 \in I$ such that $p_{i_1}$, $p_{i_2}$,
and $p_{i_3}$ have the same approximate distance profile to each path in $\mathcal{Q}$. 
As $V(\mathcal{Q})$ separates $p_{i_1}$ from $q_{i_1}$,
$V(\mathcal{Q})$ needs to intersect somewhere the walk
being the concatenation of $\Mpath{p_{i_1}}{q_{i_2}}$, $\Mpath{q_{i_2}}{p_{i_3}}$, and $\Mpath{p_{i_3}}{q_{i_1}}$
(see Figure~\ref{fig:over:sep1}).
That is, for some $\alpha,\beta \in [3]$, $\alpha \neq \beta$,
the path $\Mpath{p_{i_\alpha}}{q_{i_\beta}}$ necessarily intersects $V(\mathcal{Q})$;
let $u$ be a vertex of the intersection. We obtain a contradiction as follows (see also Figure~\ref{fig:over:sep2}):
\begin{align*} R &< \dist(p_{i_\beta}, q_{i_\beta}) \leq \dist(p_{i_\beta}, u) + \dist(u, q_{i_\beta}) \\&\leq 
\dist(p_{i_\alpha}, u) + 4\delta + \dist(u, q_{i_\beta}) \\
&= \dist(p_{i_\alpha}, q_{i_\beta}) + 4\delta \leq (1-\eps)R + 4\delta < R  \quad \text{(as $\delta = \frac{\eps}{100} R$)}\end{align*}
Here, the third inequality comes from the assumption that $p_{i_\alpha}$ and $p_{i_\beta}$
have the same approximate distance profile to the paths of $\mathcal{Q}$. 

Hence, the size of $|I|$ is bounded by twice the number of possible approximate distance profiles to the paths in $\mathcal{Q}$, which is bounded polynomially in $m/\eps$ by the same VC-dimension based argument as for the individual path.

\begin{figure}[!h]
    \centering
    \begin{subfigure}{0.45\textwidth}
        \centering        \includegraphics[width=\linewidth]{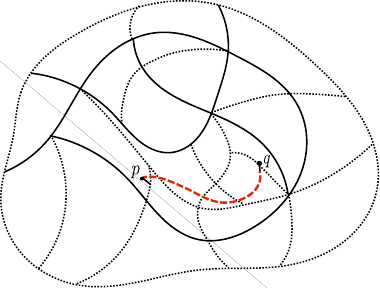}
        \caption{Local instance}
    \end{subfigure}
    \hfill
    \begin{subfigure}{0.45\textwidth}
        \centering        \includegraphics[width=\linewidth]{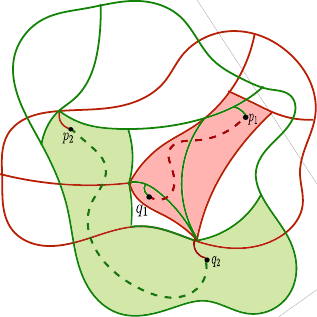}
        \caption{Linked non-crossing instance}
    \end{subfigure}
    \caption{The plane drawing in (a) depicts a local instance $G$, and the local property is illustrated by the pair $(p, q)$. Here the edges of $G$ which belong to paths $\Mpath{p_{j_1}}{q_{j_2}}$ for $p_{j_1}, q_{j_2} \neq p, q$ are drawn as solid lines and any other edges as dotted lines. We can draw a link (the dashed red curve) in the plane connecting $p$ and $q$ without intersecting any of the solid lines, so $(p, q)$ does indeed have the local property. The linked non-crossing instance is shown in (b) with respect to pairs $(p_1, q_1)$ and $(p_2, q_2)$. For each pair, the link represented by the dashed line does not intersect the solid lines of the same color to show the local property analogously to a). Furthermore the links cannot intersect each other (while maintaining the local property) since the regions in which links can lie are disjoint, as represented, which guarantees the non-crossing property.} 
    \label{fig:local_noncrossing_intro}
\end{figure}

\paragraph{Phase one: reduction to a local instance.}
The separation lemma motivates the following reduction. For $I \subseteq [\ell]$ and $i \in [\ell] \setminus I$, we say that $I$ \EMPH{separates} $i$
if $p_i$ and $q_i$ are separated by paths $\{ \Mpath{p_{j_1}}{q_{j_2}}~|~j_1,j_2 \in I, j_1 \neq j_2 \}$. The separation lemma implies the number of indices $I$ can separate
is bounded polynomially in $|I|/\eps$. 
A charging argument implies that if $\ell$ is huge, there is a large subset
$J \subseteq [\ell]$ such that for every $i \in J$, $J \setminus \{i\}$ does not separate
$i$.

We henceforth restrict to only pairs in $J$.
That is, we assume that for every $i \in [\ell]$, the set $[\ell] \setminus \{i\}$ does
not separate $i$. In other words, we can draw in a plane a link connecting 
$p_i$ with $q_i$ in a way that it does not intersect any edge of $G$
that belongs to a path $\Mpath{p_{j_1}}{q_{j_2}}$ for $j_1,j_2 \neq i$. 
Let us call such an instance a \EMPH{local} instance (see Figure~\ref{fig:local_noncrossing_intro}).
Our goal is now to prove that in a local instance, $\ell$ is bounded polynomially
in $1/\eps$.

\paragraph{Phase two: reduction to a linked noncrossing instance.}
In the second phase, we would like to comb our local instance so that the links do not cross
each other. To this end, we employ again the separation lemma in the following setting.

A well-known fact is that every connected planar graph admits a tree decomposition,
whose every bag is a union of three shortest path with a common endpoint. 
This is actually a special case of a more general statement: for every rooted spanning
tree $T$ of a planar graph $G$, there exists a tree decomposition of $G$
whose every bag is a union of three upwards paths in $T$~\cite{DBLP:journals/siamcomp/LiptonT80}. 
(The first statement is just the application of the second statement to any shortest-path
tree in $G$.) 
Instead of a shortest path tree, we construct a spanning tree $T$ of $G$ where each
upward path can be covered by $\Oh(\log \ell)$ paths $\Mpath{p_i}{q_j}$. 
In other words, if we obtain a tree decomposition of $G$ from $T$,
then  only very few --- $\Oh(\log \ell)$ --- links can intersect
a single bag of the decomposition.
By recursively separating the graph $G$ with bags of the obtained tree decomposition
in a balanced manner, we obtain a large number of indices $i$ whose links are separated
by the chosen bags --- and thus cannot intersect. 

Hence, a huge local instance needs to necessarily contain a large set $I \subseteq [\ell]$
whose links pairwise do not intersect. In what follows, we focus only on $I$:
that is, we assume that all links are pairwise disjoint, call such an instance
a \EMPH{linked noncrossing instance} (see Figure~\ref{fig:local_noncrossing_intro}), and proceed with bounding $\ell$ in such an instance.

\begin{figure}[!h]
\centering
\begin{tikzpicture}[scale=1]
    \tikzstyle{vertex} = [circle, fill=black, minimum size=4pt, inner sep=0pt]
      \tikzstyle{fvertex} = [circle, draw=black, fill=white, minimum size=18pt, inner sep=2pt]

    \node[fvertex] (pi) at (-2, 2) {$p_i$};
    \node[fvertex] (qi) at (-2, -2) {$q_i$};
    \node[fvertex] (qj) at (2, -2) {$q_j$};
    \node[fvertex] (pj) at (2, 2) {$p_j$};
    \node[fvertex] (u) at (0,0) {$u$};
    \draw[thick] (pi) -- (u);
    \draw[thick] (pj) -- (u);
    \draw[thick] (qj) -- (u);
    \draw[thick] (qi) -- (u);

    \draw [thick,decorate,gray,decoration={brace,mirror,raise=5pt,amplitude=10pt,aspect=0.5}] (pi) -- (qi) 
    node [pos=0.5, anchor=north, sloped, yshift=-15pt] {$> R$};

    \draw [thick,decorate,gray,decoration={brace,raise=5pt,amplitude=10pt,aspect=0.5}] (pj) -- (qj) 
    node [pos=0.5, anchor=south, sloped, yshift=15pt] {$> R$};

    \draw [thick,decorate,gray,decoration={brace,mirror,raise=5pt,amplitude=10pt,aspect=0.75}] (pi) -- (qj) 
    node [pos=0.75, anchor=north, sloped, yshift=-15pt] {$\leq (1-\eps)R$};

    \draw [thick,decorate,gray,decoration={brace,mirror,raise=5pt,amplitude=10pt,aspect=0.25}] (pj) -- (qi) 
    node [pos=0.25, anchor=south, sloped, yshift=15pt] {$\leq (1-\eps)R$};

\end{tikzpicture}
\caption{The contradiction in the proof of Lemma~\ref{lem:over:ex}.}\label{fig:over:ex}
\end{figure}
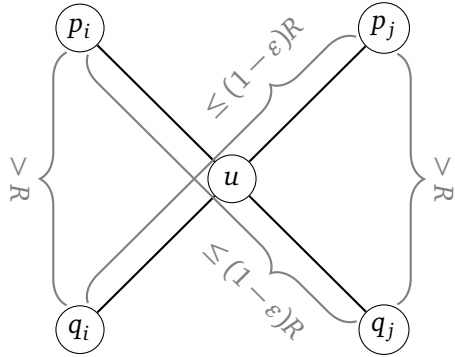

\paragraph{Phase three: topological study of a linked noncrossing instance.}
In a linked noncrossing instance, we finally have enough topological constraints
to employ a detailed case-by-case analysis how such a graph needs to look like in the plane. 
Beside the hard-earned constraint that links are pairwise disjoint, we make a number of
observations in the following flavor (see Figure~\ref{fig:over:ex}):
\begin{lemma}\label{lem:over:ex}
    For every $i \neq j$, the paths $\Mpath{p_i}{q_j}$ and
    $\Mpath{p_j}{q_i}$ are disjoint.
\end{lemma}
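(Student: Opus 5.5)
We argue by contradiction, following Figure~\ref{fig:over:ex}. Suppose that for some $i \neq j$ the paths $\Mpath{p_i}{q_j}$ and $\Mpath{p_j}{q_i}$ share a vertex $u$. Because both are shortest paths through $u$, we can split each at $u$:
\[ \dist(p_i,q_j) = \dist(p_i,u) + \dist(u,q_j), \qquad \dist(p_j,q_i) = \dist(p_j,u) + \dist(u,q_i). \]
The plan is to use the four subpaths at $u$ to build short walks between the matched pairs $p_i,q_i$ and $p_j,q_j$, and then compare their lengths with the comatching constraints.

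The key step is a swapping (uncrossing) argument. Concatenating subpaths at $u$ gives walks $p_i \to u \to q_i$ and $p_j \to u \to q_j$. By the triangle inequality,
\[ \dist(p_i,q_i) + \dist(p_j,q_j) \leq \dist(p_i,u) + \dist(u,q_i) + \dist(p_j,u) + \dist(u,q_j). \]
The right-hand side regroups exactly as $\dist(p_i,q_j) + \dist(p_j,q_i)$.

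Now apply the comatching constraints. The left-hand side exceeds $2R$, since $\dist(p_i,q_i) > R$ and $\dist(p_j,q_j) > R$. The right-hand side is at most $2(1-\eps)R < 2R$, since $i \neq j$ gives $\dist(p_i,q_j) \leq (1-\eps)R$ and $\dist(p_j,q_i) \leq (1-\eps)R$. This is a contradiction, so the two paths must be disjoint.

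I expect no real obstacle: the argument uses only the metric and the choice of shortest paths. In particular it needs neither planarity, nor the local or linked noncrossing structure, nor the consistency of the chosen paths. The one point to check is that "intersect" means sharing a vertex. If the paths could instead cross only in the interior of an edge of the plane drawing, that would not count as an intersection in the graph. Since the chosen paths are paths in $G$, a common point is necessarily a vertex $u \in V(G)$, so the decomposition above is valid.
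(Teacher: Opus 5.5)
Your proof is correct and is essentially the paper's own argument. You split both paths at a common vertex $u$ and regroup the four segments into walks $p_i \to u \to q_i$ and $p_j \to u \to q_j$, which gives $2R < \dist(p_i,q_i)+\dist(p_j,q_j) \leq \dist(p_i,q_j)+\dist(p_j,q_i) \leq 2(1-\eps)R$. This is the same uncrossing step the paper uses (Claim~\ref{clm:uncross} and Claim~\ref{clm:disjoint-black-side}).
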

\begin{proof}
    By contradiction, assume that a vertex $u$ lies on both paths. Then
    we obtain a contradiction as follows:
    \begin{align*}
        2(1-\eps)R &\geq \dist(p_i,q_j) + \dist(p_j, q_i) \\
        &  = \left(\dist(p_i, u) + \dist(u,q_j)\right) + \left(\dist(p_j, u) + \dist(u, q_i)\right) \\
        &  = \left(\dist(p_i, u) + \dist(u,q_i)\right) + \left(\dist(p_j, u) + \dist(u, q_j)\right) \\
        & \geq \dist(p_i, q_i) + \dist(p_j, q_j) > 2R.
    \end{align*}
\end{proof} 
It turns out that the only way to have a somewhat large number of pairs in a comatching
in a linked noncrossing instance is to arrange them as in Figure~\ref{fig:over:final}.
\begin{figure}[htb]
\centering
\begin{tikzpicture}
    \tikzstyle{vertex} = [circle, fill=black, minimum size=4pt, inner sep=0pt]
    \tikzstyle{fvertex} = [circle, draw=black, fill=white, minimum size=18pt, inner sep=2pt]

    \node[vertex] (q) at (1, 5) {};
    \node[vertex] (p) at (14, 5) {};
    \foreach \n in {1,2,3,4,5,6,7} {
      \node[fvertex] (p\n) at (\n*2-1, 0) {$p_\n$};
      \node[fvertex] (q\n) at (\n*2, 0) {$q_\n$};
      \draw[thick,dashed] (p\n) -- (q\n);
      \draw[thick] (p\n) -- (q);
      \draw[thick] (q\n) -- (p);
    }
\end{tikzpicture}
\caption{The final picture in the proof of Theorem~\ref{thm:comatching-lb}: how
a large linked noncrossing instance needs to look like.}\label{fig:over:final}
\end{figure}
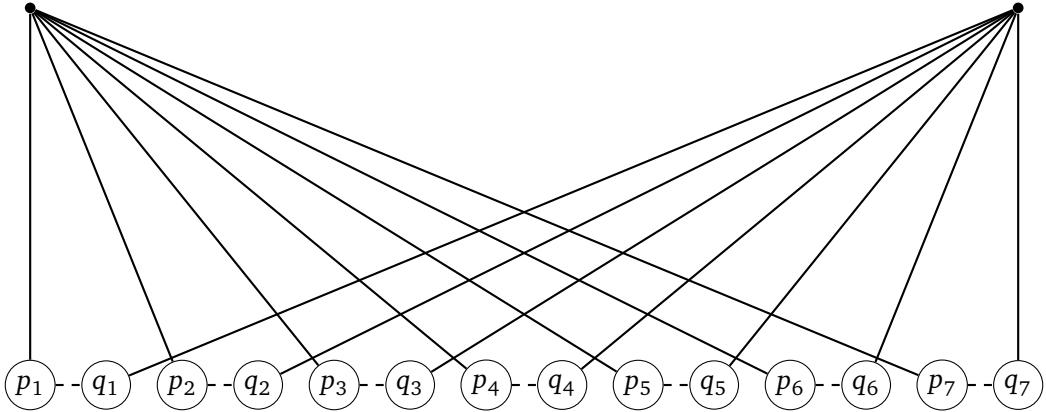

However, it can be shown that such an arrangement can fit only $\ell = \Oh(1/\eps)$. 
This concludes the overview of the proof of Theorem~\ref{thm:comatching-bound}.

\paragraph{A remark on double-ladders.}
Our bound on the size of a double-ladder, i.e., Theorem~\ref{thm:doubleladder-bound},
follows the same general outline. The first two phases are analogous, and differ only in
minor details.
The third phase, however, while also can be summarized as a detailed case-by-case analysis
using sufficient amount of topological constraints, is different --- the particular
arguments and case analysis are very different.

\paragraph{A remark on extending to bounded genus graphs.}
For Theorems~\ref{thm:genus} and~\ref{thm:genus2}, 
we use a standard tool of cutting open a graph of Euler genus $g$ into a planar graphs
by cutting $\Oh(g)$ shortest paths and single edges.
Observe that the argumentation
used in the proof of the separation lemma is also useful in this setting: 
if we partition the pairs of vertices of a comatching (or triples of vertices of a double ladder) into buckets according
to the approximate distance profile
to the cutset, then within a single bucket, all shortest paths $\Mpath{p_i}{q_j}$ cannot 
touch the cutset (or we get a similar contradiction as in the separation lemma), and thus essentially induce a comatching (or a double ladder) in a planar graph.

%% file: prelim.tex
\section{Preliminaries}\label{sec:prelims}

\subsection{VC dimension}
For a universe $U$ and a set system $\mathcal{A}$ of subsets of $U$
the \emph{VC-dimension} of $\mathcal{A}$ is the maximum size of a subset $X \subseteq U$
that is \emph{shattered} by $\mathcal{A}$; a set $X$ is shattered by $\mathcal{A}$
if for every $Y \subseteq X$ there exists $A \in \mathcal{A}$ with $A \cap X = Y$. 
One of the basic properties of set systems of bounded VC dimension is the following.
\begin{lemma}[Sauer-Shelah lemma]
  The number of sets in
  a set system $\mathcal{A}$ of VC-dimension at most $d$ over a universe $U$ 
  is at most
  \[ \sum_{i=0}^d \binom{|U|}{i} = \Oh(|U|^d). \]
\end{lemma}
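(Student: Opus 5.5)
The plan is to use the standard shatter-function argument: bound the number of distinct traces of $\mathcal{A}$ on $U$ by induction on $|U|$ and $d$. Since the statement counts sets of $\mathcal{A}$ over the universe $U$, I identify each set with its trace on $U$; the claim is that $|\mathcal{A}| \le \Phi_d(n) := \sum_{i=0}^d \binom{n}{i}$ where $n = |U|$. I will prove this by induction on $n+d$.

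\textbf{Base cases.} If $d = 0$, then no singleton is shattered. Hence any two sets of $\mathcal{A}$ agree on every element, so $|\mathcal{A}| \le 1 = \Phi_0(n)$. If $n = 0$, then $|\mathcal{A}| \le 1$ trivially.

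\textbf{Induction step.} Fix $x \in U$ and let $U' = U \setminus \{x\}$. I form two systems on $U'$:
\begin{itemize}
\item $\mathcal{A}_1 = \{A \setminus \{x\} : A \in \mathcal{A}\}$;
\item $\mathcal{A}_2 = \{B \subseteq U' : B \in \mathcal{A} \text{ and } B \cup \{x\} \in \mathcal{A}\}$.
\end{itemize}
The restriction map $\mathcal{A} \to \mathcal{A}_1$ is injective except that the two sets $B$ and $B\cup\{x\}$ collapse exactly when $B \in \mathcal{A}_2$. This gives $|\mathcal{A}| = |\mathcal{A}_1| + |\mathcal{A}_2|$.

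Next, $\mathcal{A}_1$ has VC-dimension at most $d$, since any set shattered by $\mathcal{A}_1$ lies in $U'$ and is also shattered by $\mathcal{A}$. For $\mathcal{A}_2$ the bound is $d-1$: if $Y \subseteq U'$ is shattered by $\mathcal{A}_2$, then $Y \cup \{x\}$ is shattered by $\mathcal{A}$, because each trace on $Y$ is realized both with and without $x$. By induction,
\[ |\mathcal{A}| \le \Phi_d(n-1) + \Phi_{d-1}(n-1) = \Phi_d(n), \]
using Pascal's rule $\binom{n-1}{i} + \binom{n-1}{i-1} = \binom{n}{i}$.

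\textbf{Asymptotic bound.} For constant $d$ and $n \ge 1$, each term satisfies $\binom{n}{i} \le n^i \le n^d$. Therefore $\Phi_d(n) \le (d+1)n^d = \Oh(|U|^d)$.

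\textbf{Main obstacle.} The only delicate point is verifying that $\mathcal{A}_2$ has VC-dimension at most $d-1$, i.e., that lifting a shattered $Y$ to $Y \cup \{x\}$ is legitimate. This holds directly from the definition of $\mathcal{A}_2$, which requires both $B$ and $B \cup \{x\}$ to lie in $\mathcal{A}$. Everything else is bookkeeping.
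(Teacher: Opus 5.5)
The paper does not prove this lemma; it quotes it as the classical Sauer--Shelah lemma. Your argument is the standard textbook proof (split off an element $x$, pass to the projected system $\mathcal{A}_1$ and the ``doubled'' system $\mathcal{A}_2$, then apply Pascal's rule), it is correct as written, and the $(d+1)$ factor hidden in $\Oh(|U|^d)$ is harmless since the paper only uses $d=4$.
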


Let $(V,\dist)$ be a metric space.
For $v \in V$ and $r > 0$, we define the ball $B(v,r) = \{u \in V~|~\dist_G(v,u) \leq r\}$. 
The set system $\{B(v,r)~|~v \in V(G), r > 0\}$ of subsets of $V$
is called \emph{the ball set system of $(V,\dist)$}. 

Graphs with positive edge-weights naturally induce a metric space on their set of vertices
via shortest path distances. 
In such metric spaces, if the underlying graph excludes
a small minor, the ball set system has small VC-dimension.

\begin{theorem}[\cite{CEV2007,DBLP:journals/dm/BousquetT15}]
  Let $G$ be an edge weighted graph with no $K_h$ as a minor. 
  Then, the ball set system $\{B(v,r)~|~v \in V(G), r > 0\}$ has VC-dimension
  at most $h-1$. 
\end{theorem}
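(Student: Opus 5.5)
The statement immediately above is the cited theorem (Chepoi--Estellon--Vax\`es; Bousquet--Thomass\'e) that in a $K_h$-minor-free edge-weighted graph the ball set system has VC-dimension at most $h-1$. The plan is a contrapositive argument. Suppose a set $X=\{x_1,\dots,x_h\}$ of $h$ vertices is shattered by balls; we will construct a $K_h$ minor with branch sets built around the $x_i$. For every pair $\{i,j\}$ shattering gives a ball $B_{ij}=B(c_{ij},r_{ij})$ with $B_{ij}\cap X=\{x_i,x_j\}$. Fix shortest paths $P_{ij}^i$ from $c_{ij}$ to $x_i$ and $P_{ij}^j$ from $c_{ij}$ to $x_j$. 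Every vertex on these paths lies inside $B_{ij}$, so the walk $x_i \leadsto c_{ij}\leadsto x_j$ avoids every $x_k$ with $k\neq i,j$.

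\textbf{Step 1 (splitting each connection).} Cut each walk at a ``midpoint''. Let $w$ be a vertex of $P_{ij}^i\cup P_{ij}^j$, and compare $\dist(w,x_i)$ with $\dist(w,x_j)$. Vertices on $P_{ij}^i$ that are nearer to $x_i$ go to $x_i$'s side, and symmetrically for $x_j$. The aim is to obtain, for each $i$, a connected set
\[ T_i=\{x_i\}\cup\bigcup_{j\neq i}(\text{the $x_i$-half of the $(i,j)$ connection}), \]
with $T_i$ and $T_j$ adjacent for every $i \neq j$.

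\textbf{Step 2 (disjointness via uncrossing).} The key claim is that the $T_i$ can be chosen pairwise disjoint. Suppose the $x_i$-half of connection $(i,j)$ meets the $x_k$-half of connection $(k,l)$ at a vertex $u$. Add the two inequalities that hold at $u$, using $x_k\notin B_{ij}$ and $x_i\notin B_{kl}$, as in the exchange argument of Lemma~\ref{lem:over:ex}. Concretely, compare $\dist(c_{ij},u)+\dist(u,x_k)\geq \dist(c_{ij},x_k) > r_{ij} \geq \dist(c_{ij},x_i)$ with the analogous inequality for $c_{kl}$. This should yield a strict contradiction whenever $u$ lies on both the ``$i$-part'' and the ``$k$-part''. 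To make the argument clean, use consistent shortest paths (unique via generic perturbation of weights or lexicographic tie-breaking), so that intersecting paths share subpaths. In addition, define $T_i$ more canonically, as the set of vertices $u$ on any connection path whose closest element of $X$ (measured along the relevant path, ties broken consistently) is $x_i$. Each $T_i$ is then connected: it is a union of path segments rooted at $x_i$, because along a shortest path toward $x_i$ the property ``$x_i$ is the nearest'' is preserved. Disjointness is automatic under a Voronoi-type assignment.

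\textbf{Step 3 (adjacency).} The walk $x_i\leadsto c_{ij}\leadsto x_j$ contains only vertices assigned to $x_i$ or to $x_j$. This holds because any vertex $w$ of $B_{ij}$ on it satisfies $\dist(w,x_i)$ or $\dist(w,x_j)$ smaller than $\dist(w,x_k)$, since $x_k\notin B_{ij}$. The walk starts in $T_i$ and ends in $T_j$, so some edge joins $T_i$ and $T_j$. This gives a $K_h$ minor, a contradiction.

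\textbf{Main obstacle.} The inequality in Step 3 is the crux and is false as stated. A vertex $w$ on $P^i_{ij}$ satisfies $\dist(w,x_i)=\dist(c_{ij},x_i)-\dist(c_{ij},w)$ and $\dist(w,x_k)\geq \dist(c_{ij},x_k)-\dist(c_{ij},w) > r_{ij}-\dist(c_{ij},w)\geq \dist(w,x_i)$. So the inequality actually holds on $P^i_{ij}$, and likewise for $x_j$ on $P^j_{ij}$. The real subtlety is therefore the Voronoi tie-breaking together with connectivity of Voronoi cells under a consistent shortest-path forest. I would handle this by perturbing weights so that all distances are distinct and shortest paths are unique, and then taking true graph Voronoi cells of $X$ (with the multi-source shortest-path forest). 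These cells are connected and disjoint, and the computation above shows that each walk stays within $\mathrm{cell}(x_i)\cup\mathrm{cell}(x_j)$. That completes the argument. Since the minor gives $K_h$ from $h$ shattered points, the VC-dimension is at most $h-1$.
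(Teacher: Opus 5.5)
The paper does not prove this statement; it only cites \cite{CEV2007,DBLP:journals/dm/BousquetT15}, so there is no in-paper proof to compare against. Your final argument is correct and is essentially the standard proof from those sources. That argument takes graph Voronoi cells of $X$ and uses the computation that for $w$ on $P^i_{ij}$ and $k\notin\{i,j\}$ one has $\dist(w,x_k)\geq \dist(c_{ij},x_k)-\dist(c_{ij},w)>r_{ij}-\dist(c_{ij},w)\geq \dist(w,x_i)$. Hence $P^i_{ij}\cup P^j_{ij}\subseteq \mathrm{cell}(x_i)\cup\mathrm{cell}(x_j)$, so consecutive vertices of the walk give an edge between the two cells. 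It only uses that every $2$-subset of $X$ is cut out by a ball, so it proves slightly more than stated.

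Two clean-ups are needed. First, discard Steps 1--2 rather than keeping them as part of the proof. The uncrossing in Step 2 does not give disjointness of the full halves: the shortest path from $c_{ij}$ to $x_i$ and the shortest path from $c_{jl}$ to $x_j$ can share a vertex, because $x_j\in B_{ij}$ leaves no inequality to exchange. Likewise, your Step 3 justification ``any vertex of $B_{ij}$'' is wrong in general; only vertices on the two chosen shortest paths satisfy the inequality, as you later verified.

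Second, no perturbation is needed. Take a shortest-path forest rooted at $X$, for instance from a super-source joined to each $x\in X$ by a zero-weight edge. Its trees are connected and pairwise disjoint, each contains exactly one element of $X$, and every vertex lies in the tree of some nearest element of $X$. Your strict inequality then places every vertex of $P^i_{ij}\cup P^j_{ij}$ in the tree of $x_i$ or $x_j$, whatever the tie-breaking. If you do perturb weights, you must first move each $r_{ij}$ strictly into the nonempty interval $(\max(\dist(c_{ij},x_i),\dist(c_{ij},x_j)),\ \min_{k\neq i,j}\dist(c_{ij},x_k))$, so that the ball memberships survive a small perturbation.
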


For a metric space $(V,\dist)$ and $Z \subseteq V(G)$, for every $v \in V$ we define
\emph{the distance profile of $v$ to $Z$} as a function $\prof{Z}{v} : Z \to \mathbb{R}$
defined as $\prof{Z}{v}(z) = \dist(v,z)$. 
With a distance profile $\prof{Z}{v}$ we associate a set
\[ \mathcal{Z}_Z[v] = \{(z,x) \in Z \times \mathbb{R}~|~\prof{Z}{v}(z) \leq x\}. \]
An important property of the set system $\{ \mathcal{Z}_Z[v]~|~v \in V\}$ is that it has small
VC-dimension in graphs excluding a fixed minor. 
\begin{theorem}[Theorem 3~\cite{LW24}]\label{thm:prof-vc}
  Let $G$ be an edge weighted graph with no $K_h$ as a minor  and let $Z \subseteq V(G)$.   Then, the set system $\{\mathcal{Z}_Z[v]~|~v \in V(G)\}$ has VC-dimension
  at most $h-1$. 
\end{theorem}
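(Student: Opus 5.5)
Since this theorem is quoted from \cite{LW24}, I would re-derive it with the classical minor-building argument of \cite{CEV2007} for balls, adapted to profiles. Suppose for contradiction that the system $\{\mathcal{Z}_Z[v]~|~v \in V(G)\}$ shatters a set $S = \{(z_1,x_1),\ldots,(z_h,x_h)\}$ of size $h$. The aim is to exhibit a $K_h$ minor in $G$.

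First, the $z_i$ are pairwise distinct. If $z_i = z_j$ with $x_i \le x_j$, then every $\mathcal{Z}_Z[v]$ containing $(z_i,x_i)$ also contains $(z_j,x_j)$. So the subset $\{(z_i,x_i)\}$ of $S$ could not be cut out, contradicting shattering.

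Next, shattering applied to two-element subsets gives, for every pair $\{i,j\}$, a vertex $v_{ij}$ with $\dist(v_{ij},z_i) \le x_i$ and $\dist(v_{ij},z_j) \le x_j$, but $\dist(v_{ij},z_k) > x_k$ for all $k \notin \{i,j\}$. Fix consistent shortest paths and let $P_{ij}^i$ be a shortest $z_i$--$v_{ij}$ path and $P_{ij}^j$ a shortest $z_j$--$v_{ij}$ path. The intended model is
\[ B_i = \bigcup_{j \ne i} V(P^i_{ij}). \]
Each $B_i$ is connected, since all its paths start at $z_i$. For every pair $\{i,j\}$, the sets $B_i$ and $B_j$ touch at $v_{ij}$, or are joined along $P^i_{ij} \cup P^j_{ij}$.

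The heart of the proof is the following exchange argument showing that different branch sets barely interact. Suppose a vertex $u$ lies on $P^i_{il}$ and on $P^k_{km}$ with $k \ne i$. Then
\[ \dist(v_{km},z_i) + \dist(v_{il},z_k) \le \dist(v_{km},u)+\dist(u,z_i)+\dist(v_{il},u)+\dist(u,z_k) = \dist(v_{il},z_i)+\dist(v_{km},z_k) \le x_i + x_k. \]
Whenever $i \notin \{k,m\}$ and $k \notin \{i,l\}$, the left-hand side exceeds $x_i + x_k$, a contradiction. So intersections between $B_i$ and $B_k$ can only come from the \emph{same} pair, namely $l = k$ or $m = i$.

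I expect this residual case to be the main obstacle. In it, $B_i$ and $B_k$ overlap along paths attached to $v_{ik}$. That overlap is harmless for adjacency but destroys disjointness.

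To fix it, I would modify the construction per pair. For each pair $\{i,k\}$, concatenate $P^i_{ik}$ with the reverse of $P^k_{ik}$ to get a $z_i$--$z_k$ walk $W_{ik}$ through $v_{ik}$. Then take a split point: an initial segment goes to $B_i$ and the rest to $B_k$, choosing the split so that each part stays within a shortest path from its own endpoint.

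I would then re-run the exchange inequality for vertices lying on the $z_i$-part of $W_{il}$ and the $z_k$-part of $W_{km}$. Every mixed case still has at least one strict inequality on the ``foreign'' side, because the third index is absent from the pair. The only surviving coincidence is $\{i,l\} = \{k,m\}$, and there the split point itself resolves it.

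Once the $h$ sets are disjoint, connected and pairwise adjacent, we have a $K_h$ minor, contradicting the hypothesis. Hence the VC-dimension is at most $h-1$.
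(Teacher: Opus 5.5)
Note that the paper gives no proof of this statement; it only quotes \cite{LW24}. Your argument has a genuine gap at the disjointness step. The exchange inequality yields a contradiction only if \emph{both} $\dist(v_{km},z_i)>x_i$ and $\dist(v_{il},z_k)>x_k$ hold, that is, only if $l\neq k$ \emph{and} $m\neq i$. Suppose exactly one of these fails, say $l=k$ but $m\neq i$. Then $\dist(v_{ik},z_k)\le x_k$, and the inequality only tells you $\dist(v_{ik},z_k)<x_k$. One strict inequality on the foreign side is not enough. So collisions between $P^i_{ik}$ and $P^k_{km}$ are not excluded, although these paths belong to two \emph{different} pairs $\{i,k\}$ and $\{k,m\}$. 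Your claim that only $\{i,l\}=\{k,m\}$ survives is false.

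Such collisions really occur. Take edges $z_1u$, $uv_{12}$, $z_2u$, $v_{23}z_3$ of weight $1$, the edge $uv_{23}$ of weight $2$, and $z_2v_{12}$ of weight $3/2$, with thresholds $x_1=2$, $x_2=3.2$, $x_3=1$. Realize the other six patterns as follows:
\begin{itemize}
\item $z_3$ itself realizes $\{3\}$;
\item pendant vertices at $z_1,z_2,z_3$ of weights $3/2,1,10$ realize $\{1\}$, $\{2\}$, $\emptyset$;
\item a vertex $y$ with edges to $z_1,z_3$ of weights $2,1$ realizes $\{1,3\}$;
\item a vertex $y'$ with edges to $z_1,z_2,z_3$ of weights $2,3.1,1$ realizes $\{1,2,3\}$.
\end{itemize}
These additions do not change the unique shortest paths $z_1uv_{12}$ and $z_2uv_{23}$, and these two paths meet at $u$. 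A split point chosen on each walk separately cannot resolve a collision between two different walks. Here your rule that each part stays inside a shortest path from its own endpoint even forces $u$ into the $z_1$-part of $W_{12}$, since $z_2v_{12}u$ is not shortest. Meanwhile $u$ must go to $B_2$ or $B_3$ on $W_{23}$.

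What is missing is a \emph{global} splitting rule. Put $f_j(w)=\dist(w,z_j)-x_j$, and let $V_i$ be the set of vertices $w$ for which $i$ minimizes $f_j(w)$, with ties broken consistently (say by smallest index).
\begin{itemize}
\item \emph{Disjointness.} The sets $V_i$ partition $V(G)$, so disjointness is automatic.
\item \emph{Connectivity.} If $w\in V_i$ and $w'$ lies on a shortest $w$--$z_i$ path, then $f_i(w')=f_i(w)-\dist(w,w')$ while $f_k(w')\ge f_k(w)-\dist(w,w')$. Hence $w'\in V_i$. The witness of the singleton $\{(z_i,x_i)\}$ lies in $V_i$, so each $V_i$ is connected and contains $z_i$.
\item \emph{Adjacency.} Your triangle inequality gives, for every $w$ on a shortest path from $v_{ij}$ to $z_j$ and every $k\notin\{i,j\}$, that $f_j(w)\le-\dist(v_{ij},w)<f_k(w)$. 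So this path lies in $V_i\cup V_j$, and symmetrically for the path from $v_{ij}$ to $z_i$. Since $v_{ij}\in V_i\cup V_j$, suppose $v_{ij}\in V_i$. Then the path from $v_{ij}$ to $z_j\in V_j$ contains an edge between $V_i$ and $V_j$; the case $v_{ij}\in V_j$ is symmetric.
\end{itemize}
This gives the $K_h$ minor. In the example above the rule puts $u$ into $V_2$, because $f_2(u)<f_1(u)$. That is, it splits $W_{12}$ at a point your rule forbids.
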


Let $\mathcal{A}$ be a set. We define its \EMPH{$k$-fold set intersection} by $\mathcal{A}_{\sqcap}^{k} = \{A_1\cap \ldots\cap A_k~|~ A_i\in \mathcal{A} ~\forall i\in [k]\}$. We use the following bound on the VC dimension of $k$-fold set intersection: 

\begin{lemma}[Lemma 3.2.3.~\cite{DBLP:journals/jacm/BlumerEHW89}]\label{lm:vc-facts} Let $(U,\mathcal{A})$ be a set system of VC dimension at most $d$, then the VC dimension of $(U,\mathcal{A}_{\sqcap}^{k})$ is $O(d k\log k)$
\end{lemma}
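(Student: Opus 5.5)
This is the classical Blumer--Ehrenfeucht--Haussler--Warmuth lemma, so the plan is the standard counting argument via Sauer--Shelah. Write $D$ for the VC-dimension of $(U,\mathcal{A})$, at most $d$. Fix any finite set $X \subseteq U$ with $|X| = m$ that is shattered by $\mathcal{A}_{\sqcap}^{k}$. The goal is to show $m = O(dk\log k)$.

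First, bound the number of traces on $X$. The trace system $\{A \cap X : A \in \mathcal{A}\}$ is a set system over $X$ whose VC-dimension is at most $d$. By the Sauer--Shelah lemma it has at most $\sum_{i=0}^{d}\binom{m}{i} \le (em/d)^d$ distinct members (or at most $(m+1)^d$, if one prefers a cruder form).

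Second, pass to intersections. Every set in $\mathcal{A}_{\sqcap}^{k}$ traces on $X$ as $(A_1\cap X)\cap\dots\cap(A_k\cap X)$. Such a trace is determined by a $k$-tuple of traces of $\mathcal{A}$ on $X$. Hence the number of distinct traces of $\mathcal{A}_{\sqcap}^{k}$ on $X$ is at most $(em/d)^{dk}$.

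Third, use shattering. Since $X$ is shattered, all $2^m$ subsets of $X$ must appear as traces, which gives
\[ 2^m \le \left(\frac{em}{d}\right)^{dk}, \qquad\text{i.e.}\qquad m \le dk\log_2\!\left(\frac{em}{d}\right). \]
If $d = 0$, then $\mathcal{A}$ has at most one trace, so $\mathcal{A}_{\sqcap}^{k}$ does too, and $m = 0$. So assume $d \ge 1$ and set $t = m/d$. The inequality becomes $t \le k\log_2(et)$.

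Finally, solve this inequality, which is the only computation needing care. Suppose $t \ge Ck\log_2 k$ for a large constant $C$, with $k\ge 2$; the case $k=1$ is trivial. Then $k\log_2(et) = O(k\log k + k\log C)$, which is less than $Ck\log_2 k$ once $C$ is large enough, a contradiction. Hence $t = O(k\log k)$, so $m = O(dk\log k)$. The bound holds for every shattered finite $X$, so the VC-dimension of $(U,\mathcal{A}_{\sqcap}^{k})$ is $O(dk\log k)$. I do not expect a real obstacle here; the only mild care is handling small $d$ and $k$ in the final inequality.
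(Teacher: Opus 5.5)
Your argument is correct and is the standard Sauer--Shelah counting proof of the cited lemma of Blumer et al.\ (the paper gives no proof of its own, only the citation): bound the number of traces of $\mathcal{A}_{\sqcap}^{k}$ on a shattered $m$-set by $(em/d)^{dk}$ and solve $2^m \le (em/d)^{dk}$. Four small points need tightening. The estimate $(em/d)^d$ holds only for $m \ge d$, though the case $m<d$ is trivial. The ``cruder'' $(m+1)^d$ is not interchangeable, since it only yields $O(dk\log(dk))$. In the last step, check the inequality at $t = Ck\log_2 k$ and then use that $t - k\log_2(et)$ is increasing for $t \ge k/\ln 2$, because your $O(k\log k + k\log C)$ estimate only holds at that threshold. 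Finally, the bound must be read as $O(dk\log(2k))$ for $k=1$ to be covered.
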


\subsection{Tree decompositions of planar graphs}

The following theorem encapsulates the main construction leading to the 
celebrated Lipton-Tarjan separator theorem for planar graphs.
\begin{theorem}[see~\cite{DBLP:journals/siamcomp/LiptonT80}] \label{thm:tree-decomp}
 Let $G$ be a plane triangulation
 (i.e., a simple connected graph embedded in the plane where every face is a triangle)
 and let $H$ be a spanning tree of $G$ with a designated root $r$.
 Let $T$ be a subgraph of the dual of $G$ with $V(T)$ being the set of all faces of $G$
 and $E(T)$ being the duals of all edges of $E(G) \setminus E(H)$.
 Furthermore, for every $t \in V(T)$, let $\beta(t) \subseteq V(G)$ consist
 of the vertex sets of the three upward paths in $H$ from the vertices of the face $t$ to
 the root $r$.
 
 Then, $T$ is a tree and $(T,\beta)$ is a tree decomposition of $G$. 
 In particular, for every edge $t_1t_2 \in E(T)$, the adhesion $\beta(t_1) \cap \beta(t_2)$
 is the vertex set of the two upward paths in $H$ from the endpoints of the common edge
 of the faces $t_1$ and $t_2$ to the root $r$. 
\end{theorem}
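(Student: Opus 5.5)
The plan is to verify three things in order: that $T$ is a tree, that $(T,\beta)$ satisfies the two axioms of a tree decomposition, and that the adhesions have the claimed form.

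\textbf{$T$ is a tree.} Let $n = |V(G)|$. A plane triangulation with $n \geq 3$ has $3n-6$ edges and $2n-4$ faces by Euler's formula. Hence $|E(G)\setminus E(H)| = (3n-6)-(n-1) = 2n-5$, which is $|V(T)|-1$. It therefore suffices to show that $T$ is connected. Suppose not, and split the faces into two nonempty groups with no $T$-edge between them. The primal edges dual to the dual edges crossing this split form an edge cut of the dual graph. Such a cut contains the edge set of a primal cycle, by planar cut--cycle duality. All of these edges lie in $E(H)$, contradicting that $H$ is a tree.

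\textbf{Tree decomposition axioms.} Every edge $xy$ of $G$ lies on some face $t$, and $\beta(t) \ni x,y$, so every vertex and every edge is covered. The main step is connectivity of $\{t : v \in \beta(t)\}$. Let $S = V(H_v)$ be the vertex set of the subtree of $H$ rooted at $v$. Then $v\in\beta(t)$ iff some vertex of $t$ lies in $S$, so we must show that the faces incident to $S$ induce a connected subgraph of $T$.

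I will contract the connected set $S$ (along the tree edges of $H_v$) into a single vertex $s$; this preserves planarity and the rotation around the new vertex. The faces incident to $S$ then appear in a cyclic order around $s$, and consecutive faces share an edge incident to $s$. Such a shared edge either was an edge with at least one endpoint in $S$ and not in $H$, and then it gives a $T$-edge, or it lies in $H$. An $H$-edge leaving $S$ must be the unique parent edge of $v$, since edges to children stay inside $S$ and were contracted. So the cyclic sequence is broken in at most one place and yields a path in $T$. Faces all of whose vertices lie in $S$ disappear under contraction and must be handled separately. Such a face has at most two $H$-edges, so it is joined by a $T$-edge across its non-$H$ edge, and an induction on these faces attaches them to the rest. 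I expect this bookkeeping around the contraction to be the main technical obstacle. If it becomes messy, the fallback is a direct argument walking along the boundary of the region formed by the faces incident to $S$.

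\textbf{Adhesions.} Let $t_1,t_2$ share the edge $xy$, with third vertices $a$ and $c$ respectively. Clearly the two upward paths from $x$ and $y$ lie in $\beta(t_1)\cap\beta(t_2)$. Conversely, suppose $z \in \beta(t_1)\cap\beta(t_2)$ is not an ancestor of $x$ or $y$. Then $z$ is an ancestor of both $a$ and $c$, so the subtree $H_z$ is connected and contains $a$ and $c$. Moreover $H_z$ avoids the fundamental cycle $C$ formed by $xy$ together with the $H$-paths from $x$ and $y$ to their lowest common ancestor, because $z$ is not an ancestor of $x$ or $y$. Since $xy$ is not in $H$, the edge $xy$ is a chord of this cycle, and the faces $t_1$ and $t_2$ lie on opposite sides of $C$. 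By the Jordan curve theorem, $a$ and $c$ are therefore separated by $C$, contradicting the connectivity of $H_z$.
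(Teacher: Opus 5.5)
The paper does not prove this statement; it quotes it from Lipton--Tarjan, so there is no in-paper argument to compare against. Your proposal is a correct self-contained proof once one step is tightened.

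That step is your treatment of faces with all three vertices in $S=V(H_v)$. You write that such a face ``is joined by a $T$-edge across its non-$H$ edge, and an induction attaches them to the rest.'' This does not explain why a chain of such faces must reach a face meeting $V(G)\setminus S$, rather than closing up among themselves. There are two easy repairs.
\begin{itemize}
\item Contract $H_v$ as a multigraph, keeping loops and parallel edges. Contracting non-loop edges preserves faces bijectively, and it preserves which faces lie on the two sides of each remaining edge. So every face meeting $S$, including those entirely inside $S$, has an angle at $s$, and your rotation argument covers all of them at once with no separate case.
\item Alternatively, keep your split and use the connectivity of $T$ that you already proved. Let $F_S$ be the set of faces meeting $S$. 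A component of $T[F_S]$ consisting only of faces inside $S$ would have a $T$-edge leaving it. That edge is dual to a non-$H$ edge with both ends in $S$, so the face beyond it also lies in $F_S$, a contradiction.
\end{itemize}
There is also a small wording slip in the adhesion step: $xy$ is an edge of the fundamental cycle $C$, not a chord. It is exactly because $xy\in E(C)$ that $t_1$ and $t_2$ lie on opposite sides of $C$.

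For comparison, the usual proof obtains the connectivity axiom from the same Jordan-curve argument you use for adhesions. Take a $T$-edge $t_1t_2$ dual to $xy$. Every other $T$-edge is dual to an edge not on $C$, so the two components of $T-t_1t_2$ are precisely the faces on the two sides of $C$. Suppose a vertex $u$ lies in bags on both sides but not on the upward paths from $x$ and $y$. Then $H_u$ avoids $C$ while containing vertices strictly on both sides, which is impossible. This shows that, for every edge of $T$, any vertex occurring on both sides lies in both endpoint bags; that is equivalent to the connectivity axiom, and the same argument gives the adhesion formula. Your rotation argument around $H_v$ is more local and avoids the Jordan curve theorem for the connectivity axiom, at the price of the contraction bookkeeping above.
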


We will need also the following standard argument.

\begin{lemma}\label{lem:tree-decomp-split}
    Let $k, n > 0$. Let $G$ be a planar graph on $n$ vertices and let $(T, \beta)$ be a tree decomposition of $G$ such that the degree of every vertex in $T$ is at most $3$. Then, for any weight function $\chi : V(G) \to \{0, 1\}$, one of the following holds:
    \begin{itemize}
        \item there is a bag $b \in V(T)$ such that $\chi(\beta(b)) \geq \frac{1}{4k} \chi(V(G))$, or
        \item there exists a subset $F \subseteq E(T)$ of $k$ edges such that the union of bags of each connected component of $T - F$ has weight at least $\frac{1}{4k} \chi(V(G))$.
    \end{itemize}
\end{lemma}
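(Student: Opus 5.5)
The plan is a greedy bottom-up cutting argument on the rooted tree $T$. Write $W = \chi(V(G))$ and $\tau = \frac{W}{4k}$. Suppose the first alternative fails, so every bag satisfies $\chi(\beta(b)) < \tau$; I will produce $F$.

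\textbf{Setup.} Root $T$ at an arbitrary node $r$. Since every node has degree at most $3$, every non-root node has at most $2$ children and $r$ has at most $3$. Throughout the process we maintain a current subtree $T'$ containing $r$, starting with $T' = T$. For a node $t$ of $T'$, let $T'_t$ be the subtree of $T'$ rooted at $t$, and set $w(t) = \chi\bigl(\bigcup_{s \in V(T'_t)} \beta(s)\bigr)$.

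\textbf{Cutting step.} In one step, pick a lowest node $t$ of $T'$ with $w(t) \ge \tau$. By minimality every child $c$ of $t$ has $w(c) < \tau$, and $\chi(\beta(t)) < \tau$. Hence
\[
w(t) < \chi(\beta(t)) + 3\tau \le 4\tau = \frac{W}{k}.
\]
If $t \ne r$, then $t$ has at most two children, so in fact $w(t) < 3\tau$. In that case put the edge from $t$ to its parent into $F$ and delete $T'_t$ from $T'$. The deleted component then has weight in $[\tau, 3\tau)$.

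\textbf{Weight of the remainder.} Every vertex that occurs both in a bag of the cut-off part and in a bag of the rest lies in the adhesion of the cut edge, by the tree-decomposition property. That adhesion is contained in a bag that stays in $T'$. So the union of bags of $T'$ loses at most the weight of the removed component, i.e.\ less than $3\tau$. After $j$ cuts,
\[
\chi\Bigl(\bigcup_{s \in V(T')} \beta(s)\Bigr) > W - 3j\tau .
\]

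\textbf{Main obstacle: the root is never chosen.} I must check that for every $j \le k-1$ the chosen node is not $r$. Then $k$ cuts happen, and the final remainder still has weight more than $W - 3k\tau = W/4 \ge \tau$.

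First, a node of weight at least $\tau$ exists: $w(r)$ equals the remainder's weight, which is $> W - 3(k-1)\tau \ge \tau$.

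Second, if the chosen node were $r$, the bound above gives $w(r) < 4\tau$. But
\[
w(r) > W - 3(k-1)\tau = 4k\tau - 3k\tau + 3\tau = (k+3)\tau \ge 4\tau ,
\]
a contradiction.

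\textbf{Conclusion.} After $k$ steps, $F$ consists of $k$ edges. The components of $T - F$ are the $k$ removed pieces, each of weight at least $\tau$ at the moment it was removed. Its union of bags only grows relative to what was measured in $T'$, because later cuts are made elsewhere and do not shrink that piece. The last component is the final $T'$, of weight at least $W/4 \ge \tau$. So the second alternative holds. Planarity of $G$ plays no role beyond what is already given.
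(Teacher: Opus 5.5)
Your proof is correct, but it takes a different route from the paper's. The paper inducts on the number of cut edges and keeps the invariant that every component of $T-F$ already has weight at least $\frac{W}{4k}$. At round $i$ it applies pigeonhole over the $i$ current components to find one, $C$, of weight at least $W/i \ge W/k$. It then walks along edges of $C$ towards the heavier side, using light bags and degree at most $3$, until it finds an edge splitting $C$ into two parts of weight at least $\frac{W}{4k}$ each. You instead root $T$ and always cut from the component containing the root, peeling off a lowest subtree of weight at least $\tau$. The degree bound caps each peeled piece below $3\tau$, and the budget $W-3j\tau$ shows that the root component is never itself selected and never becomes too light during the first $k$ rounds. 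Your version avoids both the pigeonhole step and the somewhat informal walk. It also gives extra information for free: $k$ pieces of weight in $[\tau,3\tau)$ plus a remainder of weight more than $W/4$. The paper's version needs no root and keeps every component heavy at every intermediate stage, which makes its induction self-contained. There are a few cosmetic points. For $j=0$ the remainder weight equals $W$, so your strict inequality $>W-3j\tau$ should be $\ge$ there; this is harmless, since $W=4k\tau\ge 4\tau$ directly. The component of $T-F$ containing a peeled node $t$ is exactly $T'_t$ at the time of removal, so its weight is unchanged rather than ``only grows''. Finally, the adhesion remark is unnecessary, because subadditivity of $\chi$ over unions of bags already bounds the loss by $w(t)$.
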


\begin{proof}
    Let $W=\chi(V(G))$. Now, by induction, we assume that there is no bag $b\in V(T)$ such that $\chi(\beta(b))\geq W/4k$ and we have picked $i-1$ edges such that each connected component of $T-F$ has weight at least $W/4k$. We can always find a connected component $C$ of weight at least $W/i\geq W/k$.
    
    We now pick the $i^{th}$ edge in $C$ as follows: start with an arbitrary edge $(u, v)$ and check if it separates $C$ into two components of size at least $W/4k$ each. If not, let $H$ be the component with higher weight attached to $v$ (w.l.o.g.). The vertex $v$ cannot have degree $1$, since then $H$ is a single vertex with weight at least $3W/4k$. If $v$ has degree $2$, then take the potential cut edge to be $(v, w)$ (considering $w$ is the other neighbor of $v$). If $\chi(V(C- H)\cup \{\beta(v)\})\geq W/4k$, we are done. Otherwise, continue with the next edge towards $w$. If $v$ has degree $3$, let $v_1, v_2$ be the other neighbors of $v$, and let $T_1, T_2$ resp. be the subtrees corresponding to them. Since $\chi(\beta(v))<W/4k$, we have $\chi(V(T_1)\cup V(T_2))\geq W/2k$. This means one of $T_1$ or $T_2$ has weight at least $W/4k$. Assume $\chi(V(T_1))\geq W/4k$. If $\chi(C-T_1)\geq W/4k$, we cut the $(v, v_1)$ edge. If not, we continue towards $v_1$ and its other neighbors. Note that in this whole procedure, the weight of the lighter component is strictly increasing. Therefore, there must exist a cut edge with separates $C$ into components having weight at least $W/4k$ each. We perform this operation $k$ times to obtain a set of $k$ edges as desired.
\end{proof}

\subsection{The routing lemma}

We will also need the following routing lemma.
\begin{lemma}\label{lem:log-cover-spanner}
Let $G$ be a graph on $n$ vertices and let $T_1, \dots, T_k$ be a family of trees that are subgraphs of $G$ such that:
\begin{itemize}
    \item $V(T_1) \cap V(T_i) \neq \emptyset$ for every $i \in [k]$,
    \item $\bigcup_{i \in [k]} V(T_i) = V(G)$.
\end{itemize}
Then, there exists a spanning tree $H$ of $G$ rooted at some vertex $r$, such that for every $v \in V(G)$, the path from $v$ to $r$ can be partitioned into $\mathcal{O}(\log k \cdot \log |V(G)|)$ parts, each being a subpath of one of the trees $T_1, T_2, ..., T_k$.
\end{lemma}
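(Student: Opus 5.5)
We build $H$ by a divide-and-conquer (heavy-light style) construction. The goal is that every root-to-vertex path in $H$ alternates between few ``segments'', each lying inside a single $T_i$.

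\textbf{Contracting trees.} Consider the auxiliary graph $\mathcal{T}$ whose vertices are the trees $T_1,\dots,T_k$, with $T_i T_j$ adjacent when $V(T_i)\cap V(T_j)\neq\emptyset$. By the first assumption, $\mathcal{T}$ is connected: it contains a star centered at $T_1$. So every tree is at distance at most one from $T_1$ in $\mathcal{T}$, and a naive choice already works:
\begin{itemize}
\item take $H$ to contain $T_1$, rooted at some $r\in V(T_1)$;
\item for each $i\ge 2$, attach $T_i$ at a vertex of $V(T_1)\cap V(T_i)$.
\end{itemize}
The trouble is that the trees overlap, so the union is not a tree. We must pick, for each vertex $v$, a single parent.

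\textbf{Defining the parent pointers.} For a vertex $v\notin V(T_1)$, fix one $i(v)$ with $v\in T_{i(v)}$ and root $T_{i(v)}$ at a vertex $x_{i(v)}\in V(T_1)\cap V(T_{i(v)})$. Let the parent of $v$ be its parent in $T_{i(v)}$. The danger is that walking up changes trees many times, because the parent may have been assigned a different index. Then the path to $r$ could switch trees up to $\Theta(|V(G)|)$ times, so the choice of indices must be controlled.

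\textbf{Controlling the switches.} This is the main obstacle, and it is where the $\log k\cdot\log n$ bound comes from. I would process the trees in a balanced recursive order:
\begin{enumerate}
\item Grow $H$ as a forest from the root, always extending along one tree $T_i$ from its current attachment.
\item When several trees claim a vertex, give it to the tree whose already-built subtree is larger. This is a union-by-size rule.
\item A path switches from tree $A$ to tree $B$ only when the component containing it at least doubles in size. That gives $O(\log n)$ switches per path, one factor.
\item Within a single switch, the relevant tree is located by binary search over the $k$ indices. Concretely, recursively halve the set of trees: first merge pairs of trees into components, then pairs of components, for $\log k$ levels. Each level costs at most $O(\log n)$ subpaths by the doubling argument.
\end{enumerate}
Multiplying the two factors gives $O(\log k\cdot\log n)$ pieces, each a subpath of some $T_i$.

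\textbf{Finishing.} Finally I would check two things. First, $H$ spans $V(G)$, by the second assumption. Second, $H$ is acyclic, since each vertex receives exactly one parent and parent depths strictly decrease along the construction.

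I expect the hardest part to be making the doubling charge rigorous when trees overlap arbitrarily. A path re-entering the same tree must still be charged to a size doubling, and I would check this step first.
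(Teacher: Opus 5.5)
There is a genuine gap. The central quantitative claim, that every $v$-to-$r$ path in $H$ splits into $\mathcal{O}(\log k\cdot\log n)$ tree-subpaths, is asserted but not established, and the mechanisms you sketch do not deliver it.

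\begin{itemize}
\item The size-doubling charge in your step~3 is borrowed from union-by-size. There it bounds how often a \emph{single} element changes its representative. Here, the number of parts on the root path of $v$ is determined by the tree choices made at \emph{every} vertex of that path, and nothing in your rule ties those choices to the growth of $v$'s own component.
\item The natural way to merge two components $A$ and $B$ is to keep $A$'s tree and route $B\setminus A$ along $B$'s tree until it hits $A$. This only gives the additive bound $f(A\cup B)\le f(A)+f(B)$ on the maximum number of parts. A balanced merge hierarchy of depth $\log k$ therefore yields just $\mathcal{O}(k)$.
\item Step~4 presupposes that pairs of trees merge into connected components. Only intersections with $T_1$ are guaranteed, so $T_i\cup T_j$ for $i,j\ge 2$ may be disconnected.
\item ``Binary search over the $k$ indices'' has no concrete content.
\end{itemize}
Acyclicity is not the issue; it is fine if you grow $H$ from the root.

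The missing idea is the one the paper uses. Fix a uniformly random order of $T_2,\dots,T_k$ with $T_1$ first, and root each $T_i$ at a vertex of $V(T_1)\cap V(T_i)$. Give every vertex its parent in the first tree, in this order, that contains it.
\begin{itemize}
\item The parent of $v$ lies in the tree assigned to $v$. Hence the assigned trees are non-increasing in the order along any path to $r$, and each tree contributes a single contiguous subpath.
\item The trees actually used are prefix minima among the trees met along the path, listed in order of first contact. The $a$-th tree is a prefix minimum with probability at most $1/a$, so the expected number of parts is $\mathcal{O}(\log k)$.
\item The $\log n$ factor comes from derandomization, not from size doubling. By Markov's inequality, some fixed order is good for at least half of the remaining vertices. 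The good vertices are closed under taking parents, so one fixes their parents and repeats on the rest, which takes $\mathcal{O}(\log n)$ rounds.
\end{itemize}
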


\begin{proof}
    
    By our assumptions, $\cup_iT_i$ spans $V(G)$, and each $T_i$ has a non-empty intersection with $T_1$. We start by showing a randomized procedure that generates a spanning tree $H$ which achieves the $\Oh(\log k)$ bound in expectation.

    We choose $r$ as an arbitrary vertex of $V(T_1)$ and root each tree $T_i$ in an arbitrary vertex of $V(T_1) \cap V(T_i)$ for every $i \in \{2, \dots, k\}$.
    Let $\pi$ be a randomly and uniformly chosen permutation $\pi = (\pi_2, \pi_3, ..., \pi_k)$ of the set $\{2, \dots, k\}$. For the sake of simplicity, we put $\pi_1 = 1$. Let $<_\pi$ be the linear ordering on $\{T_1, \dots, T_k\}$ induced by $\pi$, i.e., we have $T_i <_\pi T_j$ iff $\pi^{-1}(i) < \pi^{-1}(j)$.

    For each $v \in V(G)$, we put $T_v$ to be the $<_\pi$-smallest tree $T_i$ which contains $v$. For $v \neq r$, we put $e_v$ to be the edge connecting $v$ to its parent in $T_v$. Note that such edge always exists as the roots of all $T_i$ belong to $V(T_1)$ and $T_1$ is $<_\pi$-smallest. We define $H$ as the union of all edges $e_v$ for all $v \in V(G) - \{r\}$. It's easy to see that $H$ is indeed a spanning tree of $G$.

    Finally, for the proof of the desired property. Pick any $v \in V(G) - \{r\}$ and let $P_v$ be the path from $v$ to $r$ in $H$.
    (We think of $P_v$ as oriented from $v$ to $r$.)
    Let $\mathcal{T}_v$ be the set of trees $T_i$ that intersect $P_v$
    Order $\mathcal{T}_v$ as $T_{i_1}, \ldots, T_{i_\ell}$ according to the
    earliest vertex of the intersection with $P_v$, breaking ties arbitrarily.
    We say that an index $a$ is \emph{important} if 
    $T_{i_a}$ is $<_\pi$-minimum among $\{T_{i_1}, \ldots, T_{i_a}\}$
    and \emph{used} if $T_{i_a} = T_u$ for some $u \in V(P_v)$. 
    Observe that a used index is necessarily important.
    
    Consider the statement $A_{a, v}$:
    the index $a$ is important for the vertex $v$.
    The crucial observation is that 
    \[ \Pr(A_{a,v}) \leq \frac{1}{a}. \]
    Indeed, the path $P_v$ up to the intersection with $T_{i_a}$
    does not depend on the position of the tree $T_{i_a}$ 
    in the order $\pi$, so 
    all $a$ positions of $T_{i_a}$ with respect
    to $\{T_{i_1}, \ldots, T_{i_{a-1}}\}$ in the order $<_\pi$
    are equally probable, and only one makes $a$ important for $v$.
    
    Now, let $X_v$ be the random variable for the number of trees $T_i$ needed to cover $P_v$.
    Observe that $X_v$ is bounded by the number of used indices for $v$, 
    as for every $u \in V(P_v) \setminus \{r\}$ we can cover the edge
    going upwards from $u$ with $T_u$. 
    This quantity is, in turn, bounded by the number of important indices for $v$.
    Hence, in expectation $E_{\pi}[X_v] \leq \Sigma_{a=1}^\ell \Pr(A_{a, v}) \leq \Sigma_{a=1}^{\ell} \frac{1}{a}$, which is the harmonic series; the finite harmonic series up to $L$ is known to be bounded by $\mathcal{O}(\log L)$.

    Now, we proceed with the deterministic construction. For a set
    $\emptyset \neq W \subseteq V(G)$, let $X_W$ be equal to $X_v$ assuming we sample $v$ uniformly randomly from $W$.
    Again, $E[X_W] = \Oh(\log k)$.
    By the Markov inequality, $P(X_W \geq 2E[X_W]) \leq \frac{1}{2}$, hence for at least half of all pairs $(\pi, v)$, the value of $X_v$ assuming $\pi$ is at most $2E[X_W]$. In particular, there exists a permutation $\pi_0$ s.t. $\Pr(X_W \leq 2E[X_W] \mid \pi = \pi_0) \geq \frac{1}{2}$.
    Thus, fixing $\pi = \pi_0$, the $v$-to-$r$ paths for half
    of the vertices $v \in W$ will be covered by fewer than $2E[X_W] = \mathcal{O}(\log k)$ trees. 

    We perform the above procedure iteratively. We first put $W = V(G)$, find $\pi_{0, V(G)}$ and corresponding $H$ (according to our randomized procedure), and fix the parent vertex according to $H$ for every $v$ for which we can cover the $v$-to-$r$ path in $H$ using at most $2E[X_W]$ trees. Afterwards, we restrict $W$ to vertices with parent not yet fixed and repeat this step.

    At every step, the vertices with parent already fixed form a connected subtree containing $r$. Every step fixes at least half of the vertices not yet fixed, hence the total number of steps will be at most $\Oh(\log |V(G)|)$. %
    Any $v$-to-$r$ path in the final tree obtained this way can be decomposed into $\Oh(\log |V(G)|)$ subpaths, one for each step, and each of those is coverable with $\Oh(\log k)$ trees, which gives us the desired $\Oh(\log k \cdot \log |V(G)|)$ bound.
\end{proof}

%% file: comatching2coreset.tex
\section{From Comatching to Coreset: Proof of~\Cref{thm:comatching2coreset}}\label{sec:comatchingtocoreset}

This section is devoted to the proof of~\Cref{thm:comatching2coreset}. 
We will need a classic tool: an LP rounding approximation algorithm for \textsc{Hitting Set}
in set systems of bounded VC dimension.

\subsection{\textsc{Hitting Set} LP}\label{subsec: hitLP}

For a family $\mathcal{A}$ of subsets of a universe $U$, a set $X \subseteq U$
is called a \emph{hitting set} of $\mathcal{A}$ if $X \cap A \neq \emptyset$ for every $A \in \mathcal{A}$.
The \textsc{Hitting Set} problem, given $U$ and $\mathcal{A}$, asks for a hitting set
of minimum possible size.

Below is the LP relaxation for the natural formulation of the \textsc{Hitting Set} problem
and its dual (which is a set packing problem). 

\begin{equation*}
  \begin{aligned}
  \mathrm{min} & \sum_{u \in U} x_u & \\
  \mathrm{s.t.} & \sum_{u \in A} x_u \geq 1 & \forall_{A \in \mathcal{A}} \\
  & x_u \geq 0 & \forall_{u \in U} \\
  \end{aligned}\qquad\qquad\qquad\qquad%
  \begin{aligned}
  \mathrm{max} & \sum_{A \in \mathcal{A}} y_A & \\
  \mathrm{s.t.} & \sum_{A \in \mathcal{A}~|~u \in A} y_A \leq 1 & \forall_{u \in U} \\
  & y_A \geq 0 & \forall_{A \in \mathcal{A}} \\
  \end{aligned}
\end{equation*}

We need the following classic result about rounding the \textsc{Hitting Set} LP 
in set systems of bouned VC-dimension.

\begin{theorem}[\cite{PachAgarwal,Matousek}]\label{thm:hs-round}
  There exists a universal constant $c$ such that every \textsc{Hitting Set}
  instance $(U,\mathcal{A})$ admits a solution of size at most $c \cdot d \cdot \tau^\ast \cdot \log(\tau^\ast)$,
  where $d$ is the VC-dimension of $\mathcal{A}$ and $\tau^\ast$ is the optimum value
  of the linear programming relaxation. Furthermore, such a solution can be found in polynomial time.
\end{theorem}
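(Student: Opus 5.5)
The plan is the standard $\eps$-net rounding of the LP (Even--Rawitz--Shahar, Brönnimann--Goodrich).

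\textbf{Solving and normalizing the LP.} First solve the primal LP in polynomial time, obtaining an optimal fractional solution $(x_u)_{u\in U}$ of value $\tau^\ast=\sum_u x_u$. If $\mathcal{A}=\emptyset$ there is nothing to prove. Otherwise, any constraint forces $\tau^\ast\ge 1$. Define the probability measure $\mu(u)=x_u/\tau^\ast$ on $U$. Every constraint $\sum_{u\in A}x_u\ge 1$ then reads $\mu(A)\ge 1/\tau^\ast$. Setting $\eta=1/\tau^\ast$, every set of $\mathcal{A}$ is therefore $\eta$-heavy with respect to $\mu$.

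\textbf{Applying the $\eps$-net theorem.} Next invoke the $\eps$-net theorem of Haussler and Welzl, in its weighted form. For a set system of VC-dimension $d$ and any probability measure $\mu$, a sample of $m=O\big(\tfrac{d}{\eta}\log\tfrac{1}{\eta}\big)$ points drawn independently from $\mu$ is, with probability at least $1/2$, an $\eta$-net. That is, the sample intersects every $A\in\mathcal{A}$ with $\mu(A)\ge\eta$. Since every set is $\eta$-heavy, such a net is a hitting set of size $O(d\,\tau^\ast\log\tau^\ast)$. In the degenerate regime $\tau^\ast<2$, where $\log\tau^\ast$ is tiny, replace $\log\tau^\ast$ by $\log(\tau^\ast+2)$; this only affects the constant $c$. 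It is also fine to first round the weights to rationals and pass to a multiset universe, so that the unweighted $\eps$-net theorem applies verbatim. Duplicating points does not increase the VC-dimension.

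\textbf{Making it polynomial time.} The remaining step is to turn the probabilistic argument into a polynomial-time algorithm, and I expect this to be the only real obstacle. Since $\mathcal{A}$ is given explicitly (in our applications it has polynomially many sets), the first approach is Las Vegas:
\begin{enumerate}
  \item sample $m$ points from $\mu$;
  \item check in polynomial time whether every $A\in\mathcal{A}$ is hit;
  \item if some set is missed, resample.
\end{enumerate}
Each round succeeds with probability at least $1/2$, so the expected number of rounds is constant.

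If a deterministic algorithm is required, I would instead cite the deterministic $\eps$-net constructions of Chazelle and Matoušek via the method of conditional probabilities. Alternatively, one can reuse the multiplicative-weights iterative reweighting of Brönnimann and Goodrich, seeded with the LP solution. Both routes produce a net of the same asymptotic size in time polynomial in $|U|+|\mathcal{A}|$, which gives the claimed bound $c\cdot d\cdot\tau^\ast\log\tau^\ast$.
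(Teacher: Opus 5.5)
Your proposal is correct. It is the standard argument behind this result, which the paper only cites and does not reprove: normalize an optimal LP solution into a probability measure under which every set has mass at least $1/\tau^\ast$, then take a $(1/\tau^\ast)$-net, which by the $\eps$-net theorem has size $O(d\,\tau^\ast\log\tau^\ast)$.

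Three small remarks. First, your patch of $\log\tau^\ast$ for $\tau^\ast<2$ is genuinely needed, since the literal bound is $0$ at $\tau^\ast=1$. The same $\max(\cdot,1)$ patch is needed for $d$ when $d=0$. Second, the deterministic net constructions you cite (Chazelle--Matou\v{s}ek, and Br\"onnimann--Goodrich via a net finder) give size $O(\tfrac{d}{\eps}\log\tfrac{d}{\eps})$ rather than your claimed ``same asymptotic size'', and they run in polynomial time only for fixed $d$. Third, the exact bound $c\,d\,\tau^\ast\log\tau^\ast$ therefore comes from your Las Vegas algorithm. This is harmless here, because the paper applies the theorem only with bounded $d$.
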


\subsection{The proof}

We are now ready to prove~\Cref{thm:comatching2coreset}. Fix $ \varepsilon > 0$.
As for $\varepsilon \geq 1$ the task of finding an $\varepsilon$-coreset for furthest neighbor
is trivial (just take any vertex), we assume $\varepsilon < 1$.
Let $(V,\dist)$ be a metric space, let $P \subseteq V(G)$, 
and let $d$ be the VC-dimension of the ball set system of $(V,\dist)$.
We assume $|P| > 1$, as otherwise we can just return $Q = P$.
Let
\[\Delta := \max_{u,v \in P} \dist(u,v) \]
be the diameter of $P$ in $(V,\dist)$ and let
\[ \delta := \frac{\varepsilon}{4} \cdot \Delta. \]
For every $v \in V$, pick $z_v \in P$ to be a vertex maximizing the distance
from $v$, breaking ties arbitrarily.

We first deal with vertices of $V$ that are very far from $P$.
Pick $p_0 \in P$ and let $V_\mathrm{far} = \{v \in V(G)~|~\dist(v, p_0) > \varepsilon^{-1} \Delta\}$. Note that for every $v \in V_\mathrm{far}$:
\[ \dist(v, z_v) - \dist(v, p_0) \leq \dist(z_v, p_0) \leq \Delta \leq \varepsilon \dist(v, p_0) \leq \varepsilon \dist(v, z_v). \]
Hence, to satisfy $V_\mathrm{far}$, it suffices to just add $p_0$ to the constructed coreset.

We partition $V \setminus V_\mathrm{far}$ into sets $V_i$ for nonnegative integers
$i$; a vertex $v$ belongs to $V_i$ if
\[ i \delta \leq \dist(v, z_v) < (i+1) \delta. \]
Note that, as $\Delta/2 \leq \dist(v,z_v) \leq \varepsilon^{-1}\Delta$, the set $V_i$ is nonempty
only for $ \lfloor 2 \varepsilon^{-1} \rfloor \leq i \leq \lfloor 4 \varepsilon^{-2} \rfloor$. 

Fix $i$ for which $V_i \neq \emptyset$. For $v \in V_i$ define
\[ S_i^v = \{u \in P \mid \dist(v,u) \geq (i-1) \delta \}. \]

Clearly, $z_v \in S_i^v$.
We consider set system $\mathcal{A}_i = \{S_i^v \mid v \in V_i\}$
of subsets of $P$.
First, we observe that it suffices to find a hitting set of $\mathcal{A}_i$.
\begin{claim}\label{cl:hs2coreset}
  Any hitting set of $\mathcal{A}_i$ is also an $\varepsilon$-coreset for furthest neighbor
  of $V_i$.
\end{claim}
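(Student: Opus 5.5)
The plan is to unfold the definitions and check the coreset inequality directly for each $v \in V_i$. Let $Q$ be a hitting set of $\mathcal{A}_i$ and fix $v \in V_i$. Since $S_i^v \in \mathcal{A}_i$, there is some $q \in Q \cap S_i^v$. By the definition of $S_i^v$, this $q$ satisfies $\dist(v,q) \geq (i-1)\delta$. On the other hand, $v \in V_i$ gives $\max_{p \in P}\dist(v,p) = \dist(v,z_v) < (i+1)\delta$. It therefore suffices to show
\[ (i-1)\delta \geq (1-\varepsilon)(i+1)\delta, \]
or at least the weaker inequality $(i-1)\delta \geq (1-\varepsilon)\dist(v,z_v)$.

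For this I will use a lower bound on $i$. We have $\dist(v,z_v) \geq \Delta/2$: if $u,w \in P$ realize the diameter $\Delta$, the triangle inequality gives $\max(\dist(v,u),\dist(v,w)) \geq \Delta/2$. Hence $(i+1)\delta > \Delta/2 = 2\delta/\varepsilon$, so $i+1 > 2/\varepsilon$. The desired inequality $(i-1) \geq (1-\varepsilon)(i+1)$ rearranges to $\varepsilon(i+1) \geq 2$, which follows from $i+1 > 2/\varepsilon$. Therefore
\[ \dist(v,q) \geq (i-1)\delta \geq (1-\varepsilon)(i+1)\delta > (1-\varepsilon)\dist(v,z_v) = (1-\varepsilon)\max_{p\in P}\dist(v,p), \]
which is exactly the coreset condition of \Cref{def:furthest} restricted to queries in $V_i$.

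The only delicate point is the arithmetic relating the window width $2\delta$ (two buckets, from $(i-1)\delta$ up to $(i+1)\delta$) to the lower bound $\dist(v,z_v) \geq \Delta/2$. The choice $\delta = \varepsilon\Delta/4$ is precisely what makes $\varepsilon(i+1) \geq 2$ hold. I would double-check the boundary case $i = \lfloor 2\varepsilon^{-1}\rfloor$, using the strict inequality $(i+1)\delta > \dist(v,z_v) \geq \Delta/2$ rather than the floor bound, so that no off-by-one issue arises.
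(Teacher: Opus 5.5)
Your proof is correct and follows essentially the same route as the paper: pick $q \in Q \cap S_i^v$, observe that the window between $(i-1)\delta$ and $\dist(v,z_v) < (i+1)\delta$ has width $2\delta$, and absorb it using $\dist(v,z_v) \geq \Delta/2$ together with $\delta = \varepsilon\Delta/4$. The paper writes the last step as the additive bound $\dist(v,z_v) - \dist(v,q) \leq 2\delta \leq \varepsilon\,\dist(v,z_v)$, whereas you write it as the bucket-index bound $\varepsilon(i+1) > 2$; these are the same computation.
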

\begin{proof}
  Let $X$ be a hitting set of $\mathcal{A}_i$.
  Let $v \in V_i$ and let $x \in X \cap S_i^v$, which exists as $X$ is a hitting set. 
  Then, by the definition of $S_i^v$, 
  \[ \dist(v, x) \geq (i-1) \delta. \]
  Hence,
  \[ \dist(v,z_v) - \dist(v, x) \leq 2\delta. \]
  On the other hand, $\dist(v,z_v) \geq \Delta/2$. Thus, 
  \[ \dist(v,z_v) - \dist(v,x) \leq 2\delta \leq 4 \cdot \frac{\delta}{\Delta} \cdot \dist(v,z_v) \leq \varepsilon \cdot \dist(v,z_v). \]
  Thus, indeed $X$ is an $\varepsilon$-coreset for furthest neighbor of $V_i$. 
\end{proof}

By~\Cref{thm:hs-round}, to bound the size of a hitting set of $\mathcal{A}_i$
it suffices to bound the optimum value of its LP relaxation;
let $\tau_i^\ast$ be this optimum value.
We connect $\tau_i^\ast$ to the bound on the size of a~comathing in $(V,\dist)$
via the following simple rounding.

\begin{claim}\label{cl:lp2comatching}
There exists an $(\varepsilon^2/4)$-comatching in $(V,\dist)$ of size
$\lfloor \tau_i^\ast/4 \rfloor$.
\end{claim}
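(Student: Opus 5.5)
The plan is to extract the comatching from an optimal solution $y$ of the dual (set packing) LP. Its value is $\sum_{v\in V_i} y_v = \tau_i^\ast$, where $y_v$ is the weight of $S_i^v$. The pairs will be $(z_{v_j}, v_j)$ for suitable $v_1,\dots,v_m \in V_i$.

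\textbf{What the pairs must satisfy.} Suppose the $v_j$ satisfy $z_{v_j}\notin S_i^{v_k}$ for all $j\neq k$, i.e.\ $\dist(z_{v_j},v_k) < (i-1)\delta$. We also have $\dist(z_{v_j},v_j)\ge i\delta$. Take $R$ slightly below $i\delta$, so that $\dist(z_{v_j},v_j) > R$. We then need $(1-\eps^2/4)R \ge (i-1)\delta$ up to the slack. Since $i\le 4\eps^{-2}$, we have $i\delta\cdot\eps^2/4\le\delta$, i.e.\ $(1-\eps^2/4)\,i\delta \ge (i-1)\delta$. The strict inequality $\dist(z_{v_j},v_k)<(i-1)\delta$ leaves room to lower $R$ by a tiny amount. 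So such a family is an $(\eps^2/4)$-comatching, with $p_j=z_{v_j}$ and $q_j=v_j$.

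\textbf{Conflict graph.} Define a conflict relation on $V_i$: $v$ and $v'$ conflict if $z_v\in S_i^{v'}$ or $z_{v'}\in S_i^{v}$. Every $v$ conflicts with itself, since $z_v\in S_i^v$. Count weighted conflicts:
\[ \sum_{v}\sum_{v'} y_v y_{v'}\,[z_v\in S_i^{v'}] = \sum_v y_v \sum_{v' : z_v\in S_i^{v'}} y_{v'} \le \sum_v y_v = \tau_i^\ast, \]
where the inequality uses the dual constraint at the point $u=z_v\in P$. Counting both orientations, the weighted degree sum of the undirected conflict graph is at most $2\tau_i^\ast$ against total vertex weight $\tau_i^\ast$. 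So the weighted average degree is at most $2$.

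\textbf{Turán step.} Apply a weighted Turán/Caro--Wei argument to get a conflict-free set. To make it combinatorial, take $y$ rational: the LP has a rational optimum, and we may scale by a common denominator $N$. Replace each $v$ by $Ny_v$ copies; copies of the same $v$ are pairwise adjacent because of the self-conflict. The resulting graph has $N\tau_i^\ast$ vertices and average degree at most about $2$, counting copy-cliques appropriately. Turán's theorem then gives an independent set of size at least $N\tau_i^\ast/(2+1+\text{slack})$ divided by $N$, which is at least $\lfloor\tau_i^\ast/4\rfloor$ distinct original vertices. Independence forces distinct copies to come from distinct $v$'s that pairwise do not conflict, which is exactly the condition in the first step. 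A more hands-on alternative is to pick greedily the $v$ of smallest weighted conflict degree and delete its conflict neighbourhood.

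\textbf{Main obstacle.} I expect the delicate part to be the bookkeeping in this Turán step with fractional weights and self-loops. The self-contribution $y_v^2$ and the double counting between the two orientations must both fit within the constant $4$. The choice of $R$ in the boundary case $i=\lfloor 4\eps^{-2}\rfloor$ is a minor secondary point, handled via strictness of the inequalities.
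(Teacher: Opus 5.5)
Your approach is sound and rests on the same key inequality as the paper. The difference is in how you extract the independent family, and your Tur\'an bookkeeping as written is off.

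\textbf{The paper's route.} It fixes $v$, samples $u$ with probability $y_u/\tau_i^\ast$, and uses the dual constraint at $z_v$ to get $\Pr(z_v\in S_i^u)\le 1/\tau_i^\ast$. This is exactly your weighted conflict count, before multiplying by $y_v$ and summing. It then runs a sampling-and-alteration argument. Draw $2K$ i.i.d.\ samples with $K=\lfloor\tau_i^\ast/4\rfloor$; the expected number of ordered threatening pairs is at most $(2K)^2/\tau_i^\ast\le K$; delete one element of each such pair. Repeated samples are removed automatically because every $v$ threatens itself, and no rationality or blow-up is needed.

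\textbf{Your route, corrected.} The Tur\'an/Caro--Wei argument actually gives the stronger bound $\tau_i^\ast/2$, but only with correct counting. Write $d_y(v)=\sum_{v'\in N[v]}y_{v'}$ for the weight of the closed conflict neighbourhood, restricted to the support of $y$. You correctly showed $\sum_v y_v d_y(v)\le 2\tau_i^\ast$.
\begin{itemize}
\item In the blow-up, a copy of $v$ has degree $N d_y(v)-1$. So the average degree is at most $2N-1$, not ``about $2$''.
\item Tur\'an then gives an independent set of at least $N\tau_i^\ast/(2N)=\tau_i^\ast/2$ vertices.
\item These are copies of that many distinct, pairwise non-conflicting originals, since copies of one $v$ form a clique. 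There is no division by $N$.
\end{itemize}
Equivalently, without blowing up, weighted Caro--Wei plus Cauchy--Schwarz gives
\[ \alpha\ \ge\ \sum_v \frac{y_v}{d_y(v)}\ \ge\ \frac{(\tau_i^\ast)^2}{\sum_v y_v d_y(v)}\ \ge\ \frac{\tau_i^\ast}{2}. \]
Your greedy variant also works if you always pick $v$ minimizing $d_y(v)$. The vertices removed in each step contribute at most $1$ to $\sum_u y_u/d_y(u)$, and the remaining terms only increase.

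\textbf{Trade-off and one point in your favour.} The paper's alteration argument is shorter and avoids weighted Tur\'an machinery. Your route gives a factor-$2$ better constant. Finally, your choice of $R$ slightly below $i\delta$ is more careful than the paper's $R=i\delta$. That choice only yields $\dist(v,z_v)\ge R$ rather than the strict $\dist(v,z_v)>R$ required by the definition.
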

\begin{proof}
Consider an optimum solution to the dual LP to \textsc{Hitting Set} on $\mathcal{A}_i$;
for clarity of notation, we use $(y_v)_{v \in V_i}$ for the dual variables
instead of $(y_{S_i^v})_{v \in V_i}$. 
Clearly, $\sum_{v \in V_i} y_v = \tau_i^\ast$.
Let $\mu$ be a probability distribution over $V_i$, where $v \in V_i$ is sampled 
with probability $y_v / \tau_i^\ast$.
For $u,v \in V_i$, we say that \emph{$u$ threatens $v$} if $z_v \in S_i^u$. 
Note that any $v \in V_i$ threatens itself. 
Let $v \in V_i$ be fixed and let $u$ be sampled from $V_i$ according to $\mu$.
Then,
\begin{equation}\label{eq:threaten}
 \mathrm{Prob}(v \text{~is threatened by}~u) \leq  \sum_{u: z_v\in S^u_i} \frac{y_u}{\tau^*} \leq \frac{1}{\tau_i^\ast}. 
\end{equation}

Indeed, \eqref{eq:threaten} follows directly from the dual LP constraint for the 
vertex $z_v$:
\[ \sum_{w \in V_i \mid z_v \in S_i^w} y_w \leq 1. \]

Let $K := \lfloor \tau_i^\ast / 4 \rfloor$ and let $v_1,v_2,\ldots,v_{2K}$ be vertices
sampled independently from $V_i$ according to the distribution $\mu$.
Then, by~\eqref{eq:threaten},
the expected number of pairs $(i,j)$, $1\leq i,j \leq 2K$, $i \neq j$ such that
$v_i$ threatens $v_j$ is at most 
 \[ 4K^2 \cdot \frac{1}{\tau_i^\ast} \leq K. \]
Therefore, there exists a choice of $v_1,v_2,\ldots,v_{2K}$ such that the number of such pairs
$(i,j)$ is at most $K$. 

Pick such a choice and, for every pair $(i,j)$, $1 \leq i,j \leq 2K$, $i \neq j$,
 such that $v_i$ threatens $v_j$, discard $v_i$. We are left with a set $W$
 of at least $K$ vertices of $V_i$ such that for every $u,v \in W$, $u \neq v$, 
 $u$ does \emph{not} threaten $v$. 

 We claim that $\{(v,z_v)~|~v \in W\}$ is the desired $(\varepsilon^2/4)$-comatching in $G$ for $R=i \delta$;
 it clearly has size at least $K$.
 Since $W \subseteq V_i$, we have for every $v \in W$
  \[ \dist(v,z_v) \geq i \delta. \]
On the other hand, for every $u,v \in W$, $u \neq v$, as $u$ does not threaten $v$,
we have $z_v \notin S_i^u$ so 
 \[ \dist(u, z_v) \leq (i-1)\delta. \]
As $i \leq \lfloor 4\varepsilon^{-2} \rfloor$, we have
\[ \frac{i-1}{i} \leq 1 - \frac{1}{\lfloor 4\varepsilon^{-2} \rfloor} \leq 
1 - \frac{1}{4\varepsilon^{-2}} = 
1 - \frac{\varepsilon^2}{4}. \]
\end{proof}

By~\Cref{cl:lp2comatching}, $\tau_i^\ast \leq 4L+3$, where $L$ is the maximum
size of an $(\varepsilon/4)$-comatching in $(V,\dist)$.
Every element of $\mathcal{A}_i$ is a complement of a ball, and hence 
$\mathcal{A}_i$ has VC-dimension at most $d$.
By~\Cref{thm:hs-round}, there exists a hitting set $X_i$ of $\mathcal{A}_i$
of size $\Oh(d L \log L)$ and we can find such a hitting set in polynomial time. \Cref{cl:hs2coreset} asserts that $X_i$ is an $\varepsilon$-coreset
for furthest neighbor of $V_i$. 
Hence, for the $\varepsilon$-coreset of the whole $V$, it suffices to take the union of $\{p_0\}$ and 
of $\Oh(\varepsilon^{-2})$ sets $X_i$ for which $V_i$ is nonempty. 
This finishes the proof of~\Cref{thm:comatching2coreset}.

%% file: comatching.tex
\section{Comatching Bound in Planar Metrics: Proof of~\Cref{thm:comatching-bound}}\label{sec:comatching-bound}

This section is devoted to the proof of~\Cref{thm:comatching-bound}, which bounds the size of $\varepsilon$-comatching. We begin by repeatedly reducing $\varepsilon$-comatching, each time paying a polynomial-in-$\eps$ cost to add new constraints on the topology, in order to isolate pairs $(p, q)$, where $p, q$ are, in a certain sense, close to each other. Then we pick some landmarks from the remaining pairs which, after one more step of reduction where we only keep pairs at the same distances from the landmarks, we will use to define a linear order. At this point we have established the ultimate regime. Here we will identify two symmetric cases on the web of intersections of shortest paths between the pairs. Finally, rearranging the path segments between those intersections, in each case we obtain a combinatorial contradiction on the sum of minimum and maximum lengths of shortest paths between pairs as per the comatching definition.

Fix $\varepsilon > 0$. 
We need some notation. A pair $(G,\mathcal{M})$ is an \emph{instance}
if $G$ is a plane edge-weighted graph and $\mathcal{M}$ is an $\varepsilon$-comatching in $G$. 
That is, we assume that $G$ comes with a fixed embedding in a~plane. The vertices $p$ and $q$ for $(p,q) \in \mathcal{M}$ are called \EMPH{terminals}.

We assume that the lengths of all simple paths are pairwise different.
This can be realized by assuming an arbitrarily chosen linear order $\preceq$ on edges of $G$ and breaking ties between paths of same total weight by taking the one with the set of edges lexicographically $\preceq$-smaller than the other.
With this assumption, the shortest path between any pair of terminals $p_1, q_2$ for any $(p_1, q_1), (p_2, q_2) \in \mathcal{M}$ is unique. We will denote this path by $\ppath{p_1 q_2}$ or $\ppath{q_2 p_1}$. Moreover, for any triple of vertices $u, p, q$, the shortest paths from $u$ to $p$ and from $u$ to $q$ can share only an initial segment: after branching out for the first time, they remain disjoint.

This allows us to define for $\mathcal{N} \subseteq \mathcal{M}$ an instance
$\instind{\mathcal{N}} = (\Gind{\mathcal{N}}, \mathcal{N})$,
where $\Gind{\mathcal{N}}$ is the subgraph of $G$ consisting only of paths $\Mpath{p_1}{q_2}$ for 
$(p_1,q_1), (p_2,q_2) \in \mathcal{N}$, $(p_1,q_1) \neq (p_2,q_2)$. 
For a pair $\sigma \in \mathcal{M}$, we use $\Gexcl{\sigma}$ and $\instexcl{\sigma}$ as shorthands for
 $\Gind{\mathcal{M} \setminus \{\sigma\}}$ and $\instind{\mathcal{M} \setminus \{\sigma\}}$.

 It will be convenient not to have any other terminals on a path $\Mpath{p_1}{q_2}$ except for the endpoints.
 To this end, we can perform the following modification: for every $(p,q) \in \mathcal{M}$, 
 create two fresh vertices $p'$ and $q'$, each of degree one connected by edges $pp'$ and $qq'$
 of infinitesimal small weight, and replace $(p,q)$ with $(p',q')$ in $\mathcal{M}$. 
 The resulting instance will be a $(\varepsilon')$-comatching for any $\varepsilon' < \varepsilon$
 (formally, for any $\varepsilon' < \varepsilon$, we can choose sufficiently small weight of
 the newly created edges so that we obtain an $(\varepsilon')$-comatching).

 Consider the following contraction operation: pick any vertex $u$ of degree 2 connected to vertices $x_1, x_2$ with edges of weight respectively $w_1$ and $w_2$; replace it with a direct edge of weight $w_1 + w_2$ between $x_1$ and $x_2$. Such operation preserves the distances between any pair of remaining vertices. In particular, if our instance is an $\varepsilon$-comatching and has all terminals of degree 1, contracting all vertices of degree $2$ in this way results in an instance again being an $\varepsilon$-comatching. This modification will allow us to bound the size of our graph polynomially in the size of $\mathcal{M}$, which will be needed in a future section.

 These observations allow us to introduce the following definition: an instance $\inst = (G,\mathcal{M})$ 
 is \EMPH{clean} if 
 \begin{itemize}
 \item the terminals are pairwise distinct and of degree $1$ each;
 \item $G$ contains no vertices of degree $2$;
 \item $G = \Gind{\mathcal{M}}$, that is, every edge of $G$ lies on a path $\Mpath{p_1}{q_2}$ for some distinct
 $(p_1,q_1),(p_2,q_2) \in \mathcal{M}$.
 \end{itemize}
 It suffices to prove the bound on the size of $\mathcal{M}$ of~\Cref{thm:comatching-bound} only for clean
 instances.

 Observe that two paths $\ppath{p_a q_b}$ and $\ppath{p_c q_d}$ with pairwise
 distinct endpoints may either (a) be completely disjoint, 
 (b) share a subpath, but do not cross (i.e., $\ppath{p_c q_d}$ leaves
 $\ppath{p_a q_b}$ twice on the same side); in this case we say
 that they \EMPH{touch};
 (c) share a subpath, and cross (i.e., $\ppath{p_c q_d}$ leaves
 $\ppath{p_a q_b}$ on the other sides); in this case we say that
 they \emph{intersect transversally}. More formally, if $\ppath{p_a q_b}$ and $\ppath{p_c q_d}$ intersect, let $\sigma$ be any closed loop containing $\ppath{p_a q_b}$ and not intersecting $\ppath{p_c q_d} \setminus \ppath{p_a q_b}$. We say that $\ppath{p_a q_b}$ and $\ppath{p_c q_d}$ intersect transversally if and only if exactly one of $p_c$ and $q_d$ is inside the interior of $\sigma$. We note as $p_c$ and $q_d$ are of degree one, they cannot lie on $\sigma$ and that suitable $\sigma$ always exists and that this definition does not depend on the choice of $\sigma$ as $p_a$ and $q_b$ are of degree one.

 Let $\inst = (G,\mathcal{M})$ be a clean instance. 
 Observe that for a pair $\sigma = (p,q) \in \mathcal{M}$, the terminals $p$ and $q$ do not belong to $\Gexcl{\sigma}$
 and their drawing in $G$ lie inside some faces of $\Gexcl{\sigma}$. 
 We say that a clean instance $\inst$ is \EMPH{local} if for every $\sigma \in \mathcal{M}$, the terminals
 of $\sigma$ lie in the same face of $\Gexcl{\sigma}$. 
 The first step of the proof of~\Cref{thm:comatching-bound} is a reduction that allows to focus only
 on local instances. This reduction is presented in~\Cref{ss:to-local}.

 \begin{lemma}\label{lem:to-local}
    For a real $\varepsilon > 0$, let $L_{\mathrm{local}}(\varepsilon)$ be the maximum size
    of a $\varepsilon$-comatching in a local instance. 
    Then, the maximum size of a $\varepsilon$-comatching in any instance is of
    $\Oh(L_{\mathrm{local}}(\varepsilon)^9 \varepsilon^{-8})$. 
 \end{lemma}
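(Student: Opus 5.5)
Starting from a clean instance $\inst=(G,\mathcal{M})$ with $|\mathcal{M}|=\ell$ huge, I will extract a large sub-comatching $\mathcal{N}\subseteq\mathcal{M}$ such that $\instind{\mathcal{N}}$ (after re-cleaning: suppressing degree-2 vertices and restricting to $\Gind{\mathcal{N}}$) is local. This directly gives $|\mathcal{N}|\le L_{\mathrm{local}}(\eps)$.

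Locality of $\mathcal{N}$ fails for $\sigma=(p,q)\in\mathcal{N}$ exactly when $\mathcal{N}\setminus\{\sigma\}$ \emph{separates} $\sigma$. That is, $p$ and $q$ lie in different faces of $\Gind{\mathcal{N}\setminus\{\sigma\}}$; since $p,q$ have degree one, this is equivalent to $p,q$ lying in different components of $G-V(\Gind{\mathcal{N}\setminus\{\sigma\}})$, up to their pendant edges.

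The plan rests on the separation lemma from the overview, which I will make quantitative. Let $I\subseteq\mathcal{M}$ with $|I|=m$, and let $\mathcal{Q}$ be the $m(m-1)$ paths $\Mpath{p_a}{q_b}$ with $a\neq b\in I$. First I bound the diameter of $G$ by $3R$, so each such path has length $\le R$. I place portals at spacing $\delta=\eps R/100$ along every path of $\mathcal{Q}$, giving a set $Z$ of $\Oh(m^2/\eps)$ portals. By \Cref{thm:prof-vc} with $h=5$, the set system $\{\mathcal{Z}_Z[v]\}$ has VC-dimension at most $4$. Rounding distances to multiples of $\delta$ leaves only $\Oh(1/\eps)$ relevant thresholds per portal, so by Sauer--Shelah the number of distinct approximate profiles to $Z$ is $\Oh((m^2/\eps\cdot 1/\eps)^4)=\Oh(m^8\eps^{-8})$.

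Next I run the three-terminal argument. Suppose three separated pairs $\sigma_1,\sigma_2,\sigma_3$ have $p$'s with equal profiles. The closed walk $p_1\to q_2\to p_3\to q_1$ must meet $V(\mathcal{Q})$ at some vertex $u$ of a path $\Mpath{p_\alpha}{q_\beta}$ with $\alpha\neq\beta$. The profile equality gives $|\dist(p_\alpha,u)-\dist(p_\beta,u)|\le4\delta$, and then the displayed chain of inequalities yields $R<(1-\eps)R+4\delta$, a contradiction. Hence $I$ separates at most $2\cdot\Oh(m^8\eps^{-8})$ pairs. The same bound holds for pairs outside $I$ separated by $I$, and for pairs inside $I$ separated by the rest of $I$, since the argument never uses $\sigma\notin I$ in any essential way.

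The charging step is where I expect the main difficulty. I set $L=L_{\mathrm{local}}(\eps)$ and aim to show that if $\ell\ge C L^9\eps^{-8}$ then a local sub-instance of size $L+1$ exists, contradicting the definition of $L$. My first attempt is random sampling. I choose $\mathcal{N}$ of size $m\approx 2(L+1)$ uniformly at random, or greedily, and count the pairs $\sigma\in\mathcal{N}$ separated by $\mathcal{N}\setminus\{\sigma\}$. Since that set has size $m-1$, it separates at most $f(m)=\Oh(m^8\eps^{-8})$ of the $\ell$ pairs, so a random $\sigma$ is separated with probability $\lesssim f(m)/\ell$. The expected number of bad elements is then $m f(m)/\ell\le m/2$ once $\ell\gtrsim m^9\eps^{-8}$, which is where the exponent $9$ comes from.

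Deleting the bad elements leaves a concern. Removing pairs shrinks $\Gind{\cdot}$, so it can only merge faces, and locality of the surviving pairs persists. Thus I keep at least $m/2\ge L+1$ pairs forming a local instance, once I check that re-cleaning preserves the comatching and the face structure. The delicate points are making the symmetric-in-$\sigma$ sampling rigorous, by conditioning on $\mathcal{N}\setminus\{\sigma\}$ as a uniform $(m-1)$-set, and verifying the profile count uniformly over path families.
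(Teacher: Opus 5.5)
Your proof is correct. Its separation half is the paper's \Cref{lem:sep-bound}: portals at spacing $\delta$ on the at most $m^2$ paths of $\Gind{A}$, rounded profiles of VC-dimension $4$, and the three-pair walk. So the two arguments differ only in the charging step.

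The paper maps each $(L_{\mathrm{local}}+1)$-subset $I$ to $I$ minus one non-local pair, and pigeonholes onto $L_{\mathrm{local}}$-subsets. Some $S$ then has at least $(\ell-L_{\mathrm{local}})/(L_{\mathrm{local}}+1)$ preimages, each contributing a distinct pair separated by $V(\Gind{S})$. Because it extracts only one bad pair per subset, it pays a factor $L_{\mathrm{local}}$, and that is where the exponent $9$ comes from.

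You instead sample $m=2(L_{\mathrm{local}}+1)$ pairs and add one observation: a pair whose terminals share a face of $\Gind{\mathcal{N}\setminus\{\sigma\}}$ keeps this property in every sub-instance containing it. Hence all bad pairs can be discarded simultaneously. This gives a better bound than you claim. Your own estimate $m f(m)/(\ell-m+1)\le m/2$ only requires $\ell\ge 2f(m)+m-1$, so your route yields $\Oh(L_{\mathrm{local}}(\eps)^8\eps^{-8})$ rather than merely $\Oh(L_{\mathrm{local}}(\eps)^9\eps^{-8})$. The step ``$\ell\gtrsim m^9\eps^{-8}$'' is an arithmetic slip in the harmless direction.

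The extra ingredient you need is only the easy monotonicity of locality under deletion. Re-cleaning $\instind{\mathcal{N}'}$ is also implicitly needed in the paper's argument, when it asserts that every $(L_{\mathrm{local}}+1)$-subset contains a non-local pair.

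Two minor inaccuracies do not affect the proof:
\begin{itemize}
\item Lying in different faces of $\Gind{\mathcal{N}\setminus\{\sigma\}}$ implies, but is not equivalent to, being separated by its vertex set in $G$. For instance, $p$ may hang off a vertex of that subgraph while sharing a face with $q$. Only the implication is used, so nothing breaks.
\item The remark about pairs inside $I$ being separated by the rest of $I$ is unnecessary and can be dropped.
\end{itemize}
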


\begin{figure}[h]
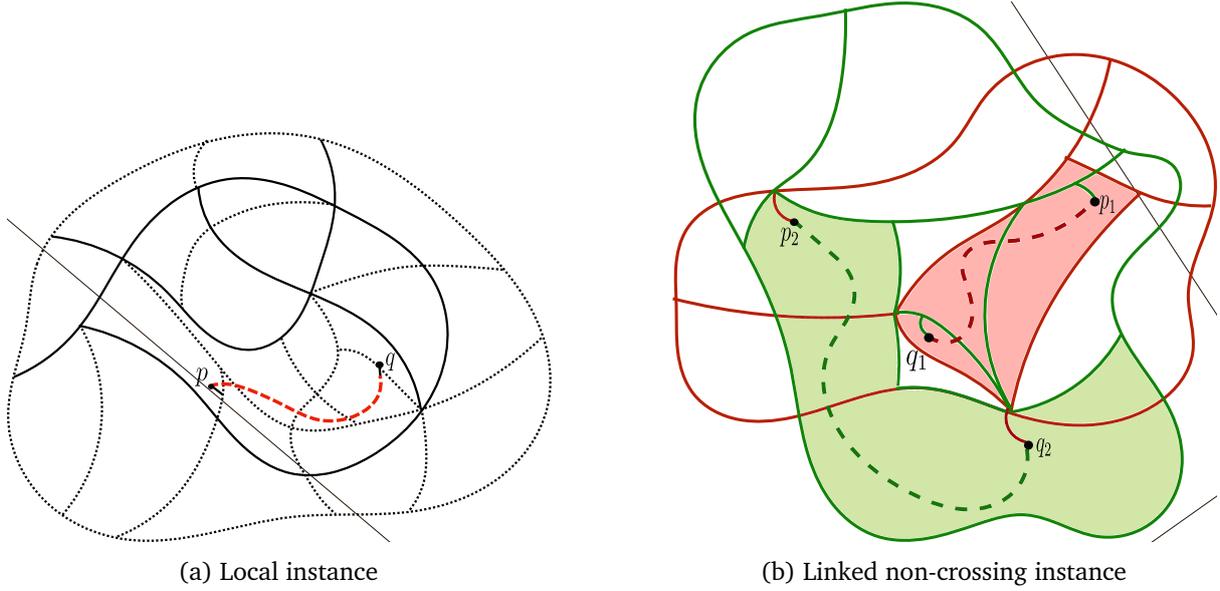

    \centering
    \label{fig:local_noncrossing}
    \begin{subfigure}{0.45\textwidth}
        \centering        \includegraphics[width=\linewidth]{figs/comatching_local_g64.pdf}
        \caption{Local instance}
    \end{subfigure}
    \hfill
    \begin{subfigure}{0.45\textwidth}
        \centering        \includegraphics[width=\linewidth]{figs/comatching_noncrossing_g65.pdf}
        \caption{Linked non-crossing instance}
    \end{subfigure}
    \caption{The local instance in (a) is depicted with respect to the pair $\sigma = \{p, q\}$. The graph $G^{-\sigma}$ is shown in red overlayed on $G$. The linked non-crossing instance is shown in (b). Note that the links $\gamma_1$ and $\gamma_2$ do not intersect each other as well as $G^{-\sigma_1}$ and $G^{-\sigma_2}$ respectively.}
\end{figure}

 Let $\inst = (G,\mathcal{M})$ be a local instance. 
 A curve $\gamma$ is in \emph{good position} with regards to $G$
 if it starts and ends in a terminal, besides endpoints intersects the drawing of $G$ only in interiors of edges,
 and for every edge $e \in E(G)$, $\gamma$ intersects the drawing of $e$ in at most one point, and always transversally. %
 For a pair $\sigma \in \mathcal{M}$, a \EMPH{link} is a curve $\gamma_\sigma$ in good position
 w.r.t. $G$, with two endpoints in the two terminals of $\sigma$ and that does not intersect any edge
 of $\Gexcl{\sigma}$ (i.e., $\gamma_\sigma$ is completely contained in the face of $\Gexcl{\sigma}$ that contains
 the terminals of $\sigma$). 

 A \EMPH{linked instance} is a pair $(\inst,\Gamma)$ where $\inst=(G,\mathcal{M})$ is a local instance
 and $\Gamma$ is a function that assigns a link $\Gamma(\sigma)$ to every $\sigma \in \mathcal{M}$. 
 A linked instance is \EMPH{noncrossing} if the links $\{\Gamma(\sigma)~|~\sigma \in \mathcal{M})$
 are pairwise disjoint (as subsets of the plane). 

 The second step of the proof of~\Cref{thm:comatching-bound} is a reduction from a local
 instance to a linked noncrossing instance. This reduction is presented in~\Cref{ss:to-noncrossing}.

 \begin{lemma}\label{lem:to-noncrossing}
    For a real $\varepsilon > 0$, let $L_{\asymp}(\varepsilon)$ be the maximum size
    of a $\varepsilon$-comatching in a linked noncrossing instance. 
    Then, the maximum size of a $\varepsilon$-comatching in a local instance is of 
    $\Oh(L_{\asymp}(\varepsilon)^2 \log^8 L_{\asymp}(\varepsilon) \cdot \varepsilon^{-8} )$. 
 \end{lemma}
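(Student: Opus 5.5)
Given a local instance $\inst=(G,\mathcal{M})$ with $|\mathcal{M}|=\ell$, I will extract a large sub-comatching $\mathcal{N}\subseteq\mathcal{M}$ together with pairwise disjoint links. Then $\instind{\mathcal{N}}$ is a linked noncrossing instance, so $|\mathcal{N}|\le L_{\asymp}(\varepsilon)$, and this yields the stated bound on $\ell$.

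\textbf{Step 1: links and the low-congestion tree.} For each $\sigma\in\mathcal{M}$, fix a link $\gamma_\sigma$, which exists by locality. Passing to a sub-instance keeps links valid, since $\Gind{\mathcal{N}\setminus\{\sigma\}}\subseteq\Gexcl{\sigma}$. Triangulate $G$ by adding weightless auxiliary edges to obtain $G'$. Edges of $G'$ that are not in $G$ carry no metric meaning; they are only used to build the decomposition. Fix an arbitrary pair $\sigma_0=(p_0,q_0)$. The trees are the paths $\Mpath{p_i}{q_j}$: every such path meets the connected ``star'' $\bigcup_j\Mpath{p_0}{q_j}\cup\bigcup_i\Mpath{p_i}{q_0}$, so one can take $T_1$ to be a spanning tree of that star. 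Auxiliary vertices can be avoided by adding no new vertices, or by treating each as its own small tree. The number of trees is $k=\Oh(\ell^2)$, and $|V(G)|=\mathrm{poly}(\ell)$ because the instance is clean.

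Apply \Cref{lem:log-cover-spanner} to get a spanning tree $H$ of $G'$ in which every root path splits into $\Oh(\log\ell\cdot\log\ell)=\Oh(\log^2\ell)$ subpaths of paths $\Mpath{p_i}{q_j}$. By \Cref{thm:tree-decomp}, $H$ gives a tree decomposition $(T,\beta)$ with maximum degree $3$. Each bag, and each adhesion, is covered by $m=\Oh(\log^2\ell)$ subpaths of comatching paths, each of length at most $(1-\varepsilon)R$.

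\textbf{Step 2: separating links via balanced cuts.} A link of $\sigma$ crosses only edges of $G$ that lie on paths involving $\sigma$'s own terminals. Say $\sigma$ is \emph{bad} for a set $\mathcal{Q}$ of paths if $\gamma_\sigma$ crosses an edge of a path in $\mathcal{Q}$. By the separation-lemma argument of \Cref{ss:over}, restricted to the subinstance obtained by bucketing pairs by approximate distance profiles to $\mathcal{Q}$, the pairs whose terminals are separated by $V(\mathcal{Q})$ number at most $\mathrm{poly}(m/\varepsilon)$ per bucket. I expect the actual statement to need a cruder route: first bucket all pairs by their profile to the bag paths, then argue that within a bucket no link can be forced through a bag. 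Pairs bad for a bag are then at most twice the number of profiles, which is $\Oh((m/\varepsilon)^{4})$ by \Cref{thm:prof-vc} and Sauer--Shelah, giving polylog$(\ell)\cdot\varepsilon^{-8}$-type factors.

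Put weight $\chi$ on a representative vertex of each pair (say $p_\sigma$) and apply \Cref{lem:tree-decomp-split} with parameter $k\approx\sqrt{\ell}$ or similar. In the first outcome, one bag carries $\ell/4k$ terminals. In the second, $k$ adhesions split the terminals into $k+1$ heavy parts. Discard all pairs bad for the chosen bags or adhesions, and all pairs whose two terminals land in different parts; the latter are separated, hence counted by the same profile bound. Keep one surviving pair per part. Their links lie in distinct regions cut out by the chosen adhesion cycles (adhesion paths plus the auxiliary triangle edge), so they are pairwise disjoint. This gives a noncrossing family of size roughly $k-\Oh(k\log^{8}\ell\,\varepsilon^{-8})$. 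Balancing gives $\ell\lesssim L_\asymp^2\log^8 L_\asymp\,\varepsilon^{-8}$; the square comes from needing $k$ parts each heavy enough, i.e.\ $\ell\approx k\cdot(\text{part weight})$, with the part weight also bounded via recursion.

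\textbf{Main obstacle.} The main difficulty is rigorously bounding how many links a single bag or adhesion can meet. A link crossing a path $\Mpath{p_i}{q_j}$ does not immediately give a separation: the link avoids only $\Gexcl{\sigma}$, and a bag path is a \emph{subpath} of some $\Mpath{p_i}{q_j}$, so it can be crossed only by links of $\sigma\in\{(p_i,\cdot),(\cdot,q_j)\}$. This actually suggests a direct count: each bag path is crossed by at most $2$ links. That would make the profile machinery needed only for the first outcome of \Cref{lem:tree-decomp-split}, where a heavy bag has many terminals whose links do not cross it. There I would bucket by profile to the bag and apply the separation lemma to the terminals on different sides. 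I would first try this direct ``at most $2m$ links per bag'' count, falling back on the profile bucketing if it fails.
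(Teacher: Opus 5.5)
Your overall architecture matches the paper's: triangulate, build a low-congestion spanning tree with \Cref{lem:log-cover-spanner}, take the Lipton--Tarjan decomposition, count the pairs spoiled by one adhesion, split in a balanced way, and keep one pair per part. Your remark that a subpath of $\Mpath{p_i}{q_j}$ can be crossed only by the links of $\sigma_i$ and $\sigma_j$ is exactly the paper's contamination bound.

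\textbf{The counting step.} The step that produces the square is not argued correctly. The bound ``$k-\Oh(k\log^{8}\ell\,\varepsilon^{-8})$'' on the surviving family is vacuous (it is negative), and no recursion on part weights is involved. The missing comparison is the following. After cutting $k$ adhesions, every part contains at least $\ell/(4k)$ terminals. The total number of pairs discarded over \emph{all} $k$ separators is at most $c\,k\log^8\ell\,\varepsilon^{-8}$. This total collects the separated pairs (via \Cref{lem:sep-bound} with $\alpha=\Oh(\log^2\ell)$, $\beta=2$), the contaminating pairs ($\Oh(\log^2\ell)$ per separator), and the $\Oh(1)$ terminals lying on each separator. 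Taking $k^2\approx\ell\,\varepsilon^{8}\log^{-8}\ell$ guarantees that every part still contains an unspoiled pair. Hence $k+1\le L_{\asymp}(\varepsilon)$, which rearranges to the claimed bound.

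\textbf{The heavy-bag outcome.} The first outcome of \Cref{lem:tree-decomp-split} needs no profile argument; your sketched treatment would not yield disjoint links anyway. It simply cannot occur. Terminals have degree $1$ in $G$ and $H$ uses only edges of $G$, so terminals are leaves of $H$. A bag consists of three root paths of $H$, so it contains at most three terminals, plus possibly the root.

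\textbf{Order of triangulation and links.} The links must be drawn \emph{after} triangulating, for instance as simple dual paths avoiding the edges of $G^{\neg\sigma}$. Then each link crosses every edge of the triangulation at most once. Links fixed beforehand may cross a virtual closing edge twice, leaving and re-entering a region, and then your ``distinct regions'' argument fails. With links chosen this way, a link meeting $\xi_S$ either crosses an edge of $S$ (contamination) or crosses only the closing edge. In the latter case its terminals are separated by $V(S)$ in $G$. So the separation lemma is genuinely needed in the adhesion outcome, contrary to your closing remark that profiles are needed only for a heavy bag.

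\textbf{The choice of $T_1$.} Your ``star'' $\bigcup_j\Mpath{p_0}{q_j}\cup\bigcup_i\Mpath{p_i}{q_0}$ need not be connected. Three pairs placed on a hexagon $p_0q_1p_2q_0p_1q_2$ already give two disjoint halves, and this instance is local. Even when the star is connected, an arbitrary spanning tree of it can have paths that alternate many times between the two halves. Its subpaths are then not covered by $\Oh(1)$ comatching paths. The paper avoids this by using $T_i=\bigcup_{j\ne i}\Mpath{p_i}{q_j}$. These trees pairwise share some $q_j$ when $L\ge 3$, and each path in them is a union of at most two comatching subpaths.
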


 The last and final step is a polynomial-in-$\varepsilon^{-1}$ bound on the maximum size of A
 linked noncrossing instance via a careful analysis of the topology of paths in such instance.
 This is presented in~\Cref{ss:noncrossing-bound}.

 \begin{lemma}\label{lem:noncrossing-bound}
    For a real $\varepsilon > 0$, the maximum size of a $\varepsilon$-comatching
    in a linked noncrossing instance is at most $\varepsilon^{-14}$.
 \end{lemma}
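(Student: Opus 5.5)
We prove Lemma~\ref{lem:noncrossing-bound} by showing that a linked noncrossing instance $(\inst,\Gamma)$ with $\inst=(G,\mathcal{M})$ and many pairs must look like the configuration of Figure~\ref{fig:over:final}. We then show that this configuration supports only polynomially many pairs.

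\textbf{Step 1: local disjointness facts.} First we collect facts in the spirit of Lemma~\ref{lem:over:ex}, using uniqueness of shortest paths.
\begin{itemize}
\item For distinct $\sigma_i=(p_i,q_i)$ and $\sigma_j=(p_j,q_j)$, the paths $\Mpath{p_i}{q_j}$ and $\Mpath{p_j}{q_i}$ are disjoint.
\item Paths sharing an endpoint, say $\Mpath{p_i}{q_j}$ and $\Mpath{p_i}{q_k}$, overlap in a common initial segment and are disjoint after they branch.
\end{itemize}
Combining these with the links, we work in the following setting. The link $\Gamma(\sigma_i)$ crosses only edges of paths incident to $p_i$ or $q_i$. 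Links of different pairs are disjoint. The closed curves formed by a link and the paths $\Mpath{p_i}{q_j},\Mpath{q_j}{p_k},\dots$ bound regions that every other path must respect.

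\textbf{Step 2: cyclic order of the pairs.} Contract each link to a point. The links are pairwise disjoint, and together with the union of paths from $p_i$ to the rest they yield a planar structure. We aim to show the pairs admit a cyclic (or linear) order such that:
\begin{itemize}
\item the union $T_p$ of the paths from a fixed $q_{j_0}$ to all $p_i$, and
\item the union $T_q$ of the paths from a fixed $p_{i_0}$ to all $q_i$,
\end{itemize}
form two "combs" attached to the same sequence of links.

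We pass to a subfamily of size polynomial in the original, using Erd\H{o}s--Szekeres, to make the orders induced by the two trees consistent. Crossing arguments then force the $p$-side paths to lie on one side of the link sequence and the $q$-side paths on the other.

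\textbf{Step 3: counting contradiction.} Take pairs $1,\dots,m$ in this order. Exploiting the disjointness of $\Mpath{p_i}{q_j}$ and $\Mpath{p_j}{q_i}$, we show that for $i<j<k$ the path $\Mpath{p_i}{q_k}$ must cross $\Mpath{p_j}{\cdot}$ or $\Mpath{\cdot}{q_j}$, giving an exchange vertex $u$. Rerouting at $u$ as in the lemma gives an inequality of the form
\[
\dist(p_j,q_j)+\dist(p_i,q_k)\le \dist(p_i,q_j)+\dist(p_j,q_k)+(\text{slack}),
\]
which by itself is not contradictory. Summing such exchanges telescopically along the order, however, makes the branch lengths along the combs decrease by at least $\eps R$ per step. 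Since all distances are at most $3R$, this gives $m=\Oh(1/\eps)$. Allowing the polynomial losses incurred in Step 2, the final bound is $\eps^{-14}$.

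\textbf{Main obstacle.} The hardest part is Step 2: the topological case analysis showing that the links and shortest paths are forced into the two-comb arrangement. My plan there is to examine triples of pairs, classify the possible embeddings of the six paths among them together with their three links, and rule out each case except the comb one via the disjointness facts. A Ramsey step over triples then extracts a large, uniformly behaving subfamily.
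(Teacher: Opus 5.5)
Your outline has the right shape: order the pairs, extract a homogeneous subfamily, then sum exchange inequalities. However, the step you defer is where the whole proof lives, and the tools you propose for it would not work.

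\textbf{Step~2 is missing the key idea, and Ramsey is too lossy.} The facts in your Step~1 are too weak by themselves to force a comb. Knowing $\ppath{p_iq_j}\cap\ppath{p_jq_i}=\emptyset$ says nothing about how hub paths such as $\ppath{q''p_i}$ meet the other paths $\ppath{p_jq_k}$.

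The missing idea is the paper's \emph{equidistant structure}. Fix two pairs $\sigma'=(p',q')$ and $\sigma''=(p'',q'')$. Keep only pairs $(p,q)$ with equal values of $\Udist(p,q')$, $\Udist(q,p')$, $\Udist(p,q'')$, $\Udist(q,p'')$. Among those, keep only pairs where $p$ has the same rounded distance profile to a $\delta$-net of $\ppath{p'q''}\cup\ppath{p''q'}$. This loses a factor $\Oh(\eps^{-12})$; the last bucketing uses the VC-dimension bound on profiles.

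Near-equal distances combined with the uncrossing inequality then forbid further intersections, e.g.\ $\ppath{q''p}\cap\ppath{q'p''}=\emptyset$ and $\ppath{p_1q_2}\cap\ppath{q''p_2}=\emptyset$. These forbidden intersections are what:
\begin{itemize}
\item make the paths $\ppath{p_iq_j}$ avoid the rupture curve $\gamma'$;
\item make the boxes $\gamma_i^\circ$ simple closed curves;
\item force the clockwise orders around $q''$ and $p''$ to coincide;
\item produce the Type~A/Type~B crossing dichotomy.
\end{itemize}

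Furthermore, a Ramsey step over triples cannot yield a polynomial bound. Without extra structure, Ramsey arguments guarantee homogeneous subfamilies of only logarithmic size already for pairs, and much less for triples. Your $\Oh(1/\eps)$ bound would then become a bound on $|\mathcal{M}|$ exponential in $1/\eps$.

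The paper avoids Ramsey. It proves that $<_A$ and $<_B$ are each transitive, so $<_A$ is a partial order whose incomparable pairs are exactly the $<_B$ pairs. Dilworth's theorem then gives a type-homogeneous set of size about $\sqrt{|\mathcal{N}|}$. This $\eps^{-2}$ loss, times the $\eps^{-12}$ above, is the source of $\eps^{-14}$; your proposal states that exponent without deriving it.

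\textbf{Step~3 is asserted rather than proved.} As you note, a single exchange is not contradictory. The claim that ``branch lengths decrease by $\eps R$ per step'' requires knowing exactly which paths cross, and in which order the crossing vertices lie along each path, so the pieces can be reassembled without reusing edges.

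Your displayed inequality is also not the one your described crossing gives. It is what uncrossing yields when $\ppath{p_iq_j}$ meets $\ppath{p_jq_k}$. A crossing of $\ppath{p_iq_k}$ itself puts $\dist(p_i,q_k)$ on the larger side and gives no decrease.

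In the paper, take a Type~A homogeneous set $I$. There are crossing vertices $v_{i,j}\in\ppath{q''p_j}\cap\ppath{p''q_i}$ for all $i<j$ in $I$. The common clockwise order makes these vertices monotone along both families of hub paths. For consecutive $i<j<k$, the $2|I|$ hub paths (each of length at most $(1-\eps)R$) are cut and reassembled edge-disjointly into:
\begin{itemize}
\item $|I|-2$ paths from $p_j$ to $q_j$, and
\item $|I|-2$ paths from $q''$ to $p''$,
\end{itemize}
each longer than $R$. Hence $2|I|(1-\eps)R\ge(2|I|-4)R$, giving $|I|\le 2/\eps$.

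This uses that both hubs are the terminals of the \emph{same} pair $\sigma''$. With hubs $q_{j_0}$ and $p_{i_0}$ taken from different pairs, as in your Step~2, the second family of paths would have no lower bound on its length.
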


 Combining the above lemmas allows us to prove~\Cref{thm:comatching-bound} restated below.
 \CoMatchingTheorem*
 \begin{proof}%
     Fix any instance $\mathcal{I} = (G,\mathcal{M})$. As argued above, we can assume that $\mathcal{I}$ is clean. Let $L$ denote the size of its maximum $\varepsilon$-comatching. By \Cref{lem:to-local}, $L = \Oh(L_{\mathrm{local}}(\varepsilon)^9 \varepsilon^{-8})$. By \Cref{lem:to-noncrossing}, $L = \Oh(L_{\asymp}(\varepsilon)^{18} \log^{72} L_{\asymp}(\varepsilon) \cdot \varepsilon^{-80})$. Finally, by \Cref{lem:noncrossing-bound}, $L = \Oh(\varepsilon^{-332} \log^{72} (\varepsilon^{-1}))$, which finishes the proof.
 \end{proof}

 \subsection{Distance profiles and separation toolbox}\label{ss:separation}

 In all the subsequent subsections, we implicitly define $\mathcal{I} = (G, \mathcal{M})$ to be a clean $\varepsilon$-comatching instance with at least three terminal pairs. We put $L = |\mathcal{M}|$ and enumerate $\mathcal{M} = \{(p_1, q_1), \dots (p_L, q_L)\}$. We will denote 
 \[ R := \min_{(p,q) \in \mathcal{M}} \dist(p,q) \]
 and set 
 \[ \delta := \frac{\varepsilon}{100} R. \]
Intuitively, $\delta$ will serve as one ``unit'' of distance, and differences of the order of a few
$\delta$s do not matter. 
In particular, for every distinct $(p_i,q_i),(p_j,q_j) \in \mathcal{M}$, we have
\[ \dist(p_i,q_j) \leq (1-\varepsilon) R = R - 100\delta. \] 
For two vertices $u,v \in V(G)$, by $\Udist(u,v)$ we denote the unique integer $i$
such that $i \cdot \delta \leq \dist_G(u,v) < (i+1) \cdot \delta$, that is,
$\Udist(u,v) = \lfloor \frac{\dist_G(u,v)}{\delta} \rfloor$.

\begin{claim}\label{diam-lnci}
    The diameter of $G$ is at most $3 R$.
\end{claim}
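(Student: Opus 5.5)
The plan is to bound the distance between any two vertices by routing through terminals and then using the comatching inequalities. Since the instance is clean, $G = \Gind{\mathcal{M}}$: every vertex $u$ lies on some path $\Mpath{p_a}{q_b}$ with $a \neq b$. Such a path has length $\dist(p_a,q_b) \leq (1-\varepsilon)R$. Hence $u$ is within distance at most $(1-\varepsilon)R/2$ of one of the endpoints $p_a$ or $q_b$, whichever is nearer along the path. So it suffices to bound distances between terminals.

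For two terminals, there are three cases.
\begin{itemize}
\item A $p$-terminal and a $q$-terminal with different indices, say $p_a$ and $q_b$ with $a \neq b$: then $\dist(p_a,q_b) \leq (1-\varepsilon)R$ directly.
\item Two $p$-terminals $p_a, p_c$: because $|\mathcal{M}| \geq 3$, we can pick an index $e \notin \{a,c\}$ and route through $q_e$. This gives $\dist(p_a,p_c) \leq 2(1-\varepsilon)R$, and symmetrically for two $q$-terminals.
\item A matched pair $p_a, q_a$: pick $b \neq a$ and route $p_a \to q_b \to p_c \to q_a$ with $c \notin \{a,b\}$. This gives at most $3(1-\varepsilon)R$. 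Alternatively, use $p_a \to q_b \to p_b \dots$, but the three-step route already suffices for the crude bound.
\end{itemize}

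Combining these naively gives $(1-\varepsilon)R/2 + 3(1-\varepsilon)R + (1-\varepsilon)R/2 = 4(1-\varepsilon)R$, which is too weak. So the closest-endpoint step must be done more carefully, using the freedom to choose which endpoint we move to.

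Take $u$ on $\Mpath{p_a}{q_b}$ and $v$ on $\Mpath{p_c}{q_d}$. From $u$, we can reach either $p_a$ or $q_b$, at distances $x$ and $(1-\varepsilon)R - x$ at most; similarly $v$ reaches $p_c$ at distance $y$ or $q_d$ at distance $\le (1-\varepsilon)R - y$. We choose the endpoint types so that the resulting terminal pair is a cross pair ($p$ with $q$), which costs $\le (1-\varepsilon)R$ unless the indices coincide. Summing the four options $(p_a,q_d)$, $(q_b,p_c)$, and so on, and averaging over the two cross choices, gives
\[ \min\{x + (1-\varepsilon)R - y,\; (1-\varepsilon)R - x + y\} \leq (1-\varepsilon)R. \]
If the chosen cross pair is non-matched, this yields $\dist(u,v) \leq 2(1-\varepsilon)R$.

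The only obstacle is when the selected cross pair happens to be a matched pair $(p_a,q_a)$. Here I would use the other cross option, or reroute via a third terminal: for example $\dist(p_a,q_a) \le \dist(p_a,q_b)+\dist(q_b,\cdot)$ is not directly short. Instead, note that if $c = b$, then $v$ lies on a path ending at $p_b$, and $\dist(p_b, u) \le \dist(p_b,q_b)$-like bounds fail. So instead I would use the same-type bound: move both $u$ and $v$ to $p$-endpoints (or both to $q$-endpoints), which costs at most $2(1-\varepsilon)R$ between terminals by the middle case, plus $\min$ of the movement costs. Careful case bookkeeping, choosing among the four endpoint combinations the cheapest non-matched option, should keep the total within $3R$.

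I expect this bookkeeping of the matched-index degenerate case to be the only step requiring care; everything else is direct triangle inequalities.
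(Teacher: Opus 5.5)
Your fallback of sending both vertices to endpoints of the same type is exactly the paper's proof, and it works in every case. The cross-pair analysis and the worry about matched indices are therefore unnecessary. Write $L_1=\dist(p_a,q_b)$ and $L_2=\dist(p_c,q_d)$. The two same-type movement costs, $x+y$ (to $p_a,p_c$) and $(L_1-x)+(L_2-y)$ (to $q_b,q_d$), sum to $L_1+L_2\le 2(1-\varepsilon)R$, so one of them is at most $(1-\varepsilon)R$. Joining the two same-type terminals through the opposite terminal of a third pair, which exists since $|\mathcal{M}|\ge 3$, costs at most $2(1-\varepsilon)R$. This gives $\dist(u,v)\le 3(1-\varepsilon)R$ with no case distinction. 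As written, you never state this averaging bound for the same-type option and leave the conclusion as ``should keep the total within $3R$'', so this one inequality is what you need to add to close the argument.
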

\begin{proof}
    Pick any two vertices $v_1, v_2 \in V(G)$. As our instance is clean, they are contained in some paths $\ppath{p_{i_1} q_{j_1}}$ and $\ppath{p_{i_2}, q_{j_2}}$ respectively. Since $\dist(p_{i_1}, q_{j_1}) + \dist(p_{i_2}, q_{j_2}) < 2R$, either $\dist(v_1, p_{i_1}) + \dist(v_2, p_{i_2}) < R$ or $\dist(v_1, q_{i_1}) + \dist(v_2, q_{i_2}) < R$. Assume w.l.o.g. the former.
    
    Pick any pair $(p_{i_3}, q_{i_3}) \in \mathcal{M}$ for some $i_3 \not\in \{i_1, i_2\}$.
    There exists a path of form $v_1 \to p_{i_1} \to q_{i_3} \to p_{i_2} \to v_2$ moving along respective paths between endpoints of $\mathcal{M}$. The total length of this path is at most $3R$, which finishes the proof of the lemma.
\end{proof}
As an immediate corollary, we obtain that $\Udist(u, v) \leq 3R \cdot \delta^{-1} = 300 \varepsilon^{-1}$ for every $u,v \in V(G)$.

For a set $Z \subseteq V(G)$, the \emph{rounded distance profile} of a vertex $v \in V(G)$
is the function 
$ \Uprof{Z}{v} : Z \to \mathbb{Z}$ defined as $\Uprof{Z}{v}(z) = \Udist(v,z)$. ~\Cref{thm:prof-vc} immediately implies the following.
\begin{lemma}\label{lem:Uprof-vc}
 Let $Z \subseteq V(G)$.
 For $v \in V(G)$, let 
 \[ \widehat{\mathcal{Z}}_Z[v] = \{(z,i) \in Z \times \{0,1,\ldots,\lfloor 300\varepsilon^{-1} \rfloor\}~|~\Uprof{Z}{v}(z) \leq i\}.\]
Then, the set system $\{\widehat{\mathcal{Z}}_Z[v]~|~v \in V(G)\}$ has VC-dimension at most $4$. 
In particular, by the Sauer-Shelah Lemma, its size is bounded by $\Oh(|Z|^4 \varepsilon^{-4})$. 
\end{lemma}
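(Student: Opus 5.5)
The plan is to derive Lemma~\ref{lem:Uprof-vc} directly from \Cref{thm:prof-vc}, applied with $h = 5$: planar graphs exclude $K_5$ as a minor, so the set system $\{\mathcal{Z}_Z[v]~|~v \in V(G)\}$ has VC-dimension at most $4$. The first step is to express each rounded set $\widehat{\mathcal{Z}}_Z[v]$ as a trace of $\mathcal{Z}_Z[v]$ on a fixed finite ground set. Let $I = \{0,1,\ldots,\lfloor 300\varepsilon^{-1}\rfloor\}$ and map each $(z,i) \in Z \times I$ to $(z, (i+1)\delta - \eta) \in Z \times \mathbb{R}$, where $\eta>0$ is infinitesimally small; to be safe, take $\eta$ smaller than every positive gap $(i+1)\delta - \dist(v,z)$ over the finitely many $v,z,i$. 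Then $\Udist(v,z) \le i$ holds iff $\dist(v,z) < (i+1)\delta$, which holds iff $\dist(v,z) \le (i+1)\delta - \eta$.

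Call this map $\phi$. It is injective, and it satisfies $\widehat{\mathcal{Z}}_Z[v] = \phi^{-1}(\mathcal{Z}_Z[v] \cap \phi(Z\times I))$ for every $v$. A set shattered by the rounded system therefore maps to a set of the same size shattered by $\{\mathcal{Z}_Z[v]\}$. This shows the VC-dimension of $\{\widehat{\mathcal{Z}}_Z[v]\}$ is at most $4$.

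For the counting claim, apply the Sauer--Shelah lemma to this set system. The universe is $Z \times I$, of size $|Z| \cdot (\lfloor 300\varepsilon^{-1}\rfloor + 1) = \Oh(|Z|\varepsilon^{-1})$. The number of distinct sets is then $\Oh((|Z|\varepsilon^{-1})^4) = \Oh(|Z|^4\varepsilon^{-4})$. By \Cref{diam-lnci}, all rounded distances lie in $I$, so the profile $\Uprof{Z}{v}$ is determined by $\widehat{\mathcal{Z}}_Z[v]$, and hence the bound also counts distinct rounded distance profiles.

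The only delicate point is the strict-versus-nonstrict inequality in the rounding, which the choice of $\eta$ handles. Everything else is routine.
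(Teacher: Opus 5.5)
Your proof is correct and takes the paper's approach: the paper only says the lemma follows immediately from \Cref{thm:prof-vc} for the $K_5$-minor-free graph $G$. Your injective trace map with shifted thresholds $(i+1)\delta-\eta$ spells out the translation from floor-based rounding to the non-strict real thresholds of $\mathcal{Z}_Z[v]$, and your Sauer--Shelah count over the ground set $Z \times \{0,\ldots,\lfloor 300\varepsilon^{-1}\rfloor\}$ matches what the paper uses.
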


 We need the following observations.

 \begin{lemma}\label{lem:sep-two-pairs}
    Let $Z \subseteq V(G)$ and let $(p_i,q_i),(p_j,q_j)
    \in \mathcal{M}$ be two distinct pairs such that $p_i$ and $p_j$ have the same rounded distance profile
    to $Z$. Then, $\Mpath{p_i}{q_j}$ does not contain a vertex that is within distance
    $\delta$ from $Z$. 
 \end{lemma}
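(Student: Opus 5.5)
Argue by contradiction. Suppose some vertex $u \in V(\Mpath{p_i}{q_j})$ and some $z \in Z$ satisfy $\dist(u,z) \leq \delta$; we derive $\dist(p_j,q_j) < R$, contradicting $\dist(p_j,q_j) \geq R$ (recall $R$ is the minimum over pairs of $\mathcal{M}$). This is the separation argument from the overview, with the intersection vertex replaced by a vertex that is merely near $Z$.

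First, compare the distances from $p_i$ and $p_j$ to $z$. Since $\Uprof{Z}{p_i}(z) = \Uprof{Z}{p_j}(z)$, both $\dist(p_i,z)$ and $\dist(p_j,z)$ lie in the same half-open interval $[k\delta,(k+1)\delta)$, so
\[ \dist(p_j,z) < \dist(p_i,z) + \delta. \]
Next, bound $\dist(p_i,z)$ using the shortest path. Because $u$ lies on the shortest path $\Mpath{p_i}{q_j}$, we have $\dist(p_i,u) + \dist(u,q_j) = \dist(p_i,q_j) \leq R - 100\delta$. By the triangle inequality, $\dist(p_i,z) \leq \dist(p_i,u) + \delta$.

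Now chain these through $z$ and $u$:
\begin{align*}
\dist(p_j,q_j) &\leq \dist(p_j,z) + \dist(z,u) + \dist(u,q_j) \\
&< \dist(p_i,z) + \delta + \delta + \dist(u,q_j) \\
&\leq \dist(p_i,u) + 3\delta + \dist(u,q_j) \\
&= \dist(p_i,q_j) + 3\delta \leq R - 97\delta < R.
\end{align*}
This contradicts $\dist(p_j,q_j) \geq R$.

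No step is a genuine obstacle. The only point to check is that equal rounded profiles give an additive error strictly less than $\delta$ per vertex of $Z$, which follows directly from the floor definition of $\Udist$. The slack of $100\delta$ absorbs the total error of $3\delta$ with plenty of room.
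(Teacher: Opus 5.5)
Your proof is correct and is essentially the paper's argument. The paper first bounds $|\dist(p_i,u)-\dist(p_j,u)| \le 3\delta$ and then applies the triangle inequality through $u$, whereas you chain directly through $z$ and $u$; these give the same $3\delta$ error against the $100\delta$ slack, contradicting $\dist(p_j,q_j) \ge R$.
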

 \begin{proof}
    Assume the contrary: let $u$ be a vertex on $\Mpath{p_i}{q_j}$ such that there exists $z \in Z$
    with $\dist(z,u) \leq \delta$. 
    As $\Uprof{Z}{p_i} = \Uprof{Z}{p_j}$, we have $|\dist(p_i,z) - \dist(p_j,z)| < \delta$. 
    Hence,
    \[ |\dist(p_i,u) - \dist(p_j,u)| \leq |\dist(p_i,z) - \dist(p_j,z)| + 2\dist(u,z) \leq 3\delta. \]
    As $u$ lies on $\Mpath{p_i}{q_j}$, which is a shortest path from $p_i$ to $q_j$, 
    we infer that $\dist(p_j,q_j) \leq \dist(p_j, u) + \dist(u, q_j) \leq \dist(p_i, u) + \dist(u, q_j) + 3 \delta = \dist(p_i,q_j) + 3\delta$, which is a contradiction
    to $\dist(p_j,q_j) \geq R$ and $\dist(p_i,q_j) \leq R - 100\delta$.
\end{proof}

 We will say that the pair $(p_i,q_i)$ is \emph{separated} by a set $X \subseteq V(G)$ if $p$ and $q$ are not in the same connected
 component of $G-X$. 

 \begin{lemma}\label{lem:sep-three-pairs}
    Let $X,Z \subseteq V(G)$ be such that every
    vertex of $X$ is within distance $\delta$ from some vertex of $Z$ 
    and let $(p_i,q_i),(p_j,q_j),(p_k,q_k) \in \mathcal{M}$ be three pairwise distinct pairs. 
    If all three vertices $p_i$, $p_j$, $p_k$ have the same rounded distance profile to $Z$,
    then neither of these three pairs is separated by $X$.
 \end{lemma}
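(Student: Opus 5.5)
Suppose, for contradiction, that one of the three pairs is separated by $X$; by symmetry of the roles of $i,j,k$ we may assume it is $(p_i,q_i)$. The plan is to build a walk from $p_i$ to $q_i$ out of shortest paths between terminals of \emph{different} pairs. Then show that none of these paths can meet $X$, so $p_i$ and $q_i$ lie in the same component of $G-X$.

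We use the walk obtained by concatenating $\Mpath{p_i}{q_j}$, $\Mpath{q_j}{p_k}$, and $\Mpath{p_k}{q_i}$. All three paths exist in $G$, since each joins a $p$-terminal and a $q$-terminal of distinct pairs. Because $X$ separates $p_i$ from $q_i$, this walk must contain a vertex of $X$. Hence at least one of the three paths, of the form $\Mpath{p_\alpha}{q_\beta}$ with $\alpha\neq\beta$ and $\alpha,\beta\in\{i,j,k\}$, contains a vertex $u\in X$.

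By assumption $u$ lies within distance $\delta$ of some $z\in Z$. The vertices $p_\alpha$ and $p_\beta$ both belong to $\{p_i,p_j,p_k\}$, so they have the same rounded distance profile to $Z$. Lemma~\ref{lem:sep-two-pairs}, applied to the distinct pairs $(p_\alpha,q_\alpha)$ and $(p_\beta,q_\beta)$, says that $\Mpath{p_\alpha}{q_\beta}$ contains no vertex within distance $\delta$ of $Z$. This contradicts the choice of $u$, so $(p_i,q_i)$ is not separated by $X$.

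The only point needing care is that both endpoints $p_\alpha$ and $p_\beta$ come from the triple sharing a profile. This is why we route through $q_j$ and $p_k$: the $p$-endpoints used are $p_i$ and $p_k$, both in the triple, and the pairs at each end of every path are distinct. For separation of $(p_j,q_j)$ or $(p_k,q_k)$ we use the same walk with the indices permuted. With this routing there is no real obstacle, since the argument is a direct application of Lemma~\ref{lem:sep-two-pairs}.
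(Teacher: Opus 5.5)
Your proposal is correct and follows essentially the same argument as the paper. The paper also takes, without loss of generality, the walk $\Mpath{p_i}{q_j}$, $\Mpath{q_j}{p_k}$, $\Mpath{p_k}{q_i}$, notes that it must meet $X$, and gets a contradiction from Lemma~\ref{lem:sep-two-pairs}.
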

 \begin{proof}
    Assume the contrary: w.l.o.g., $(p_i,q_i)$ is separated by $X$. 
    Consider a walk in $G$ being a concatenation of $\Mpath{p_i}{q_j}$, $\Mpath{q_j}{p_k}$, and $\Mpath{p_k}{q_i}$.
    Since $(p_i,q_i)$ is separated by $X$, one of this paths intersects $X$, i.e., contains a vertex
    within distance $\delta$ from $Z$. This is a contradiction with~\Cref{lem:sep-two-pairs}.
 \end{proof}

\begin{restatable}{claim}{sep-bound}\label{lem:sep-bound}
    Let $X \subseteq V(G)$ be such that $X$ is contained in the union of at most $\alpha$ paths,
    each of length at most $\beta R$ for some $\alpha,\beta > 0$. 
    Then, the number of pairs of $\mathcal{M}$ separated by $X$ 
    is $\Oh(\alpha^4 \varepsilon^{-4} (1+\beta\varepsilon^{-1})^4)$.
\end{restatable}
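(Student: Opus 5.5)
The idea is to place portals along the paths, bucket the pairs by the rounded distance profile of $p_i$ to the portals, and use \Cref{lem:sep-three-pairs} to show that each bucket contains at most two separated pairs.

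\textbf{Portals.} Let $Q_1,\ldots,Q_\alpha$ be the paths covering $X$, each of length at most $\beta R$. On each $Q_a$, choose portals by walking from one endpoint and marking a vertex whenever the distance along $Q_a$ from the last marked vertex would otherwise exceed $\delta$. Edges longer than $\delta$ need care here: mark both of their endpoints. Let $Z$ be the union of all portals. Every vertex of $Q_a$, and hence every vertex of $X$, is then within distance $\delta$ of some vertex of $Z$, measured along $Q_a$ and therefore also in $G$. There is a subtlety: a vertex in the middle of a long path segment could be at distance up to $\delta$ from the nearest portal. To be safe, I would place portals at spacing $\delta$, so every path vertex has a portal within distance $\delta$.

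\textbf{Number of portals.} Consecutive portals are placed at spacing roughly $\delta$ along a path of length at most $\beta R = 100\beta\varepsilon^{-1}\delta$. This gives $|Z| = \Oh(\alpha(1+\beta\varepsilon^{-1}))$. The only exception is edges of weight exceeding $\delta$, and each such edge adds $\Oh(1)$ portals while consuming more than $\delta$ of the path's length, so these contribute at most as many portals as the spacing count already allows.

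\textbf{Counting.} By \Cref{lem:Uprof-vc}, the number of distinct rounded distance profiles to $Z$ is $\Oh(|Z|^4\varepsilon^{-4}) = \Oh(\alpha^4\varepsilon^{-4}(1+\beta\varepsilon^{-1})^4)$. Group the pairs of $\mathcal{M}$ according to the rounded profile $\Uprof{Z}{p_i}$. Suppose some group contained three pairs, at least one of which is separated by $X$. Then \Cref{lem:sep-three-pairs}, applied with these $X$ and $Z$, gives a contradiction, since it says none of the three pairs is separated. Hence any group containing a separated pair has at most two pairs in total, so the number of separated pairs is at most twice the number of groups. This gives the claimed bound.

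\textbf{Main obstacle.} The only delicate step is the portal placement: ensuring that $X$ is $\delta$-close to $Z$ as \Cref{lem:sep-three-pairs} requires, while keeping $|Z|$ linear in $\alpha(1+\beta/\varepsilon)$. Placing portals at every $\delta$ of arc length, together with the endpoints of long edges, handles both requirements.
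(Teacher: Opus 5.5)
Your proposal is correct and matches the paper's proof. You cover $X$ by a set $Z$ of $\Oh(\alpha(1+\beta\varepsilon^{-1}))$ vertices such that every vertex of $X$ is within distance $\delta$ of $Z$, bucket the pairs by $\Uprof{Z}{p_i}$ via \Cref{lem:Uprof-vc}, and use \Cref{lem:sep-three-pairs} so that only buckets of at most two pairs can contain separated pairs; the paper merely asserts that $Z$ exists, and your placement is fine (marking the first vertex whose distance along the path from the last portal exceeds $\delta$ already handles long edges and gives fewer than $1+100\beta\varepsilon^{-1}$ portals per path).
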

 \begin{proof}
    Since $X$ is contained in the union of at most $\alpha$ paths, each of length at most $\beta R$,
    there exists a set $Z \subseteq V(G)$ of size $\Oh(\alpha (1+ \beta \varepsilon^{-1}))$
    such that every vertex of $X$ is within distance $\delta$ from some vertex of $Z$. 
    By~\Cref{lem:Uprof-vc}, there are
    $\Oh(|Z|^4 \varepsilon^{-4}) = \Oh(\alpha^4 \varepsilon^{-4}(1+\beta \varepsilon^{-1})^4)$
    distinct rounded distance profiles to $Z$. 
    Group pairs $(p_i,q_i) \in \mathcal{M}$ according to the rounded distance profile of $p_i$ to $Z$. 
    ~\Cref{lem:sep-three-pairs} implies that for group that contains at least three pairs,
    these pairs are not separated by $X$. Consequently, the number of pairs separated by $X$ 
    is $\Oh(\alpha^4 \varepsilon^{-4} (1+\beta \varepsilon^{-1})^4)$.
 \end{proof}

%% file: reductions_bounds.tex
\subsection{Reduction to local instances}\label{ss:to-local}

In this subsection, we provide a proof of~\Cref{lem:to-local}. 
Consider a function $\Lambda$ that maps any subset $I$ of ($L_{\mathrm{local}}(\varepsilon)+1$) pairs of points from $\mathcal{M}$ to a subset of $L_{\mathrm{local}}(\varepsilon)$ pairs of points.
The mapping is defined as follows: given a subset $I$ of size $L_{\mathrm{local}}(\varepsilon)+1$, we put $\Lambda(I) = I - \{(p_a, q_a)\}$, where $(p_a, q_a) \in \mathcal{M}$ is an arbitrarily fixed pair such that $p_a$ and $q_a$ lie on different faces of $G^{I - \{(p_a, q_a)\}}$. Since $L_{\mathrm{local}}(\varepsilon)$ is the maximum size of a local instance, such a pair always exists.

By pigeonhole principle, there exists a set $S \in \binom{\mathcal{M}}{L_{\mathrm{local}}(\varepsilon)}$ such that
\[|\Lambda^{-1}(S)| \geq \binom{L}{L_{\mathrm{local}}(\varepsilon)+1}/\binom{L}{L_{\mathrm{local}}(\varepsilon)} = \frac{L-L_{\mathrm{local}}(\varepsilon)}{L_{\mathrm{local}}(\varepsilon)+1}>\frac{L}{L_{\mathrm{local}}(\varepsilon)+1}-1\] 
Let us denote this quantity by $C = \frac{L}{L_{\mathrm{local}}(\varepsilon)+1}-1$. There exist at least $C$ indices $\{a_1, a_2, ..., a_C\}$ such that $\{p_{a_i}, q_{a_i}\}\notin S$ and $G^S$ separates $p_{a_i}$ and $q_{a_i}$.

Since $G^S$ is a union of not more than $L_{\mathrm{local}}(\varepsilon)^2$ paths, each of length at most $R$, we can apply~\Cref{lem:sep-bound} to get $C = \Oh(L_{\mathrm{local}}(\varepsilon)^8 \varepsilon^{-8})$. This in turn implies that $L = \Oh(L_{\mathrm{local}}(\varepsilon)^9 \varepsilon^{-8})$, which finishes the proof of the lemma.

\subsection{Reduction to linked noncrossing instances}\label{ss:to-noncrossing}

Assume that $\mathcal{I}$ is local. We would like $G$ to be a triangulation, hence we subdivide repeatedly all non-triangular faces by arbitrarily adding new edges of infinite weight inside them. We will call such edges \emph{virtual}. Such operation preserves all distances and shortest paths in $G$, however note that $\mathcal{I}$ is no longer clean after that operation.

Pick any pair $\sigma = (p_i, q_i)$ and look at the dual graph $G^*$. Let $F_p, F_q$ denote arbitrary faces of $G$ (hence vertices in $G^*$) incident to resp. $p_i$ and $q_i$. Since our instance is local, there exists a path from $F_p$ to $F_q$ in $G^*$ which does not cross any edge of $G^{\neg \sigma}$. Take $\gamma_\sigma$ to be the drawing of such path, at both ends extended within resp. $F_p, F_q$ to reach resp. $p_i$ and $q_i$. Doing this for each pair $\sigma$ gives us a linked instance. Moreover, in this instance, each link $\gamma_\sigma$ visits any face of $G$ at most once.

For the first step of the reduction to linked noncrossing instances we will utilize a classic tree decomposition method.

\subsubsection{Tree decomposition and separators}

We would like to take the tree decomposition of $G$ as described in~\Cref{thm:tree-decomp}. Its every adhesion induces a separator and the idea is to divide the plane using a number of such separators into a disjoint regions and argue that for every such region, there exists a link fully contained in it. Restricting our instance to one such link per region would result in it being non-crossing.

Assume we have some spanning tree $H$ of $G$ containing only non-virtual edges of $G$ (hence belonging to some $\ppath{p_i q_j}$). Consider a tree decomposition $(T, \beta)$ obtained from \Cref{thm:tree-decomp} for $H$. Let $S$ be a separator induced by an adhesion of the edge $e \in E(T)$ in said decomposition. As stated above, $S$ is formed by taking two upward paths of $H$ with endpoints connected via a closing edge of $G$ not contained in $H$.
For convenience, by $S$ we will refer to the subgraph of $G$ containing the vertices of the separator and with edges restricted only to those two upward paths (with the closing edge excluded).

By $\xi_S$ we denote the planar loop obtained by taking the drawing of both upward paths along with the closing edge. Consider two connected components into which $T$ splits after removal of $e$. All the vertices contained in the interior of $\xi_S$ belong to bags within only one of those components, and all the vertices contained in the exterior of $\xi_S$ belong to bags within the other.

We will say that a pair $\sigma \in \mathcal{M}$ is \emph{bad} with respect to $S$ if the link $\gamma_\sigma$ intersects $\xi_S$. Our goal is to argue that $H$ can be chosen in a way that results in a very small fraction of pairs being bad with respect to any such separator $S$.

We consider two cases of bad pairs. We will say that a pair $\sigma$ is \emph{separated} by $S$ if its endpoints are separated by $S$. We will say that $\sigma$ \emph{contaminates} $S$ if for some edge $e \in S$ it holds that $e \not\in G^{\neg \sigma}$. These cases are indeed exhaustive, which is shown in the following claim.

\begin{restatable}{claim}{types_of_bad}\label{clm:types_of_bad}
    Every bad pair $\sigma$ w.r.t. $S$ is either separated by $S$ or contaminates $S$.
\end{restatable}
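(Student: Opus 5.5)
We argue by contrapositive: suppose $\sigma=(p_i,q_i)$ is not separated by $S$ and does not contaminate $S$, and show that $\gamma_\sigma$ cannot meet $\xi_S$. The loop $\xi_S$ consists of the drawings of the two upward paths of $H$ (which form the subgraph $S$) together with the closing edge $e^\ast\notin E(H)$. By the construction of the link, $\gamma_\sigma$ is the drawing of a dual path: it passes through faces of the triangulated $G$ and crosses edges only transversally at interior points. Hence $\gamma_\sigma$ can meet $\xi_S$ only by crossing an edge of $S$ or by crossing $e^\ast$. It cannot pass through a vertex of $S$ other than possibly its own endpoints, and its endpoints are the degree-one terminals $p_i,q_i$.

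First, $\gamma_\sigma$ does not cross any edge of $S$. Every edge of $S$ lies in $H$, so it is non-virtual. Since $\sigma$ does not contaminate $S$, every edge of $S$ lies in $G^{\neg\sigma}$. A link never crosses an edge of $G^{\neg\sigma}$, so $\gamma_\sigma$ avoids all edges of $S$. The same argument handles the endpoints: if $p_i$ or $q_i$ were a vertex of $S$, then the unique edge at that terminal would lie on some upward path of $S$. That edge belongs to no path $\Mpath{p_a}{q_b}$ with both endpoints different from the terminal, so it is not in $G^{\neg\sigma}$. Then $\sigma$ would contaminate $S$, which we excluded.

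The remaining and main obstacle is the closing edge $e^\ast$, which $\gamma_\sigma$ may cross legitimately if $e^\ast$ is virtual or lies outside $G^{\neg\sigma}$. Here we use that $\sigma$ is not separated by $S$. Suppose $\gamma_\sigma$ crosses $\xi_S$ only along $e^\ast$. It crosses each edge at most once, transversally, so it crosses the Jordan curve $\xi_S$ exactly once. Its endpoints $p_i$ and $q_i$ are off $\xi_S$, so they lie on opposite sides of $\xi_S$. Any path in $G$ from $p_i$ to $q_i$ must leave the side containing $p_i$. In a plane graph it can do so only through a vertex on $\xi_S$, and the vertices on $\xi_S$ are exactly $V(S)$, since the endpoints of $e^\ast$ lie on the two upward paths. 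So $S$ separates $p_i$ from $q_i$, a contradiction.

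Therefore $\gamma_\sigma$ does not intersect $\xi_S$, and $\sigma$ is not bad. The delicate point is the parity count in the last step: we need that $\gamma_\sigma$ touches $\xi_S$ at no point other than its single crossing of $e^\ast$. This is exactly what the first step guarantees, together with the good-position property of links.
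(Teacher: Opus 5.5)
Your proof is correct and uses essentially the same parity argument as the paper. A link crosses the closing edge at most once, so if it avoids every edge of $S$ it crosses $\xi_S$ an odd number of times and $S$ must separate the terminals. You run this as a contrapositive rather than the paper's direct version, and you also check explicitly that a terminal cannot lie on $S$ unless $\sigma$ contaminates $S$, a detail the paper leaves implicit.
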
 
\begin{proof}
    Assume $\sigma$ is bad but not separated by $S$. This means that both endpoints of $\sigma$ lie on the same side of the loop $\xi_S$ on the plane, but the link $\gamma_\sigma$ still intersects $\xi_S$. By our construction, it must cross at least two different edges of $\xi_S$ and at most one of those can be the closing edge not contained in $S$. Thus, for some $e \in E(S)$, $\gamma_\sigma$ crosses $e$, hence $e \not\in G^{\neg \sigma}$.
\end{proof}

\subsubsection{The easily-coverable spanning tree}

To bound the number of both types of bad pairs, we present the construction of a particular spanning tree $H$, with a property that its every subpath can be covered by a small number of paths $\llbracket p_i q_j \rrbracket$.
More formally, for each $i \in [L]$, we define a tree $T_i$ as the union of paths $\cup_{j: j \neq i} \llbracket p_i q_j\rrbracket$. Such union is indeed a tree, given our guarantee that any two shortest paths remain disjoint after branching out for the first time.
Before we show the construction, we will need the following claim.

\begin{restatable}{claim}{instance_size_bound}\label{clm:instance_size_bound}
The number of vertices of $G$ is of $\Oh(L^4)$.
\end{restatable}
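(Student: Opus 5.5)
The plan is a charging argument: every non-terminal vertex of $G$ is a branching vertex, and each branching vertex is charged to an endpoint of the common part of two of the $L(L-1)$ paths $\Mpath{p_i}{q_j}$. Each pair of paths can receive only $\Oh(1)$ charges, which gives $\Oh(L^4)$.

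First, the setup. Triangulating with virtual edges adds no vertices, so I count vertices of the clean instance. There are $2L$ terminals, and I bound the remaining vertices separately. Let $\mathcal{P}$ be the family of paths $\Mpath{p_i}{q_j}$ with $i\neq j$, so $|\mathcal{P}| < L^2$.

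Next, I record the key structural fact: any two paths $P, P' \in \mathcal{P}$ intersect in a (possibly empty) subpath. This follows from the uniqueness of shortest paths under the tie-breaking rule. If $u,w \in V(P)\cap V(P')$, then the $u$--$w$ subpaths of $P$ and of $P'$ are both shortest $u$--$w$ paths, so they coincide. Hence $V(P)\cap V(P')$ is the vertex set of a subpath, which has at most two endpoints.

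Now take a non-terminal vertex $v$. Because the instance is clean, $v$ lies on some $P \in \mathcal{P}$. Since terminals are the only endpoints of paths in $\mathcal{P}$ and $v$ is not a terminal, $v$ is an interior vertex of $P$, and $P$ uses exactly two edges $e_1,e_2$ at $v$. Since $v$ has no degree $2$ (and degree $1$ belongs only to terminals), $v$ has a third edge $e_3$. By cleanness, $e_3$ lies on some $P' \in \mathcal{P}$, and $P' \neq P$. I then check that $v$ is an endpoint of the subpath $P\cap P'$. Since $v \in P'$ and $e_3 \in E(P')\setminus E(P)$, if $v$ were an interior vertex of the common subpath, both edges of $P'$ at $v$ would belong to $P$, which is a contradiction. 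I charge $v$ to the unordered pair $\{P,P'\}$. Each pair receives at most two charges, so there are at most $2\binom{|\mathcal{P}|}{2} = \Oh(L^4)$ non-terminal vertices. Adding the $2L$ terminals gives the claim.

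The only delicate point is the subpath-intersection fact. One must also make sure it survives the cleaning modifications: pendant terminals and contraction of degree-$2$ vertices do not destroy the uniqueness of shortest paths among the remaining vertices. This holds because contraction preserves the set of paths between the remaining vertices and their lengths.
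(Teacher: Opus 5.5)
Your proof is correct and is essentially the paper's argument. The paper likewise charges branching vertices to the at most two endpoints of the common subpath of two paths $\ppath{p_iq_j}$, except that it first separates off vertices of degree at least $3$ inside some tree $T_i$ (at most $L^2$ of them, by counting leaves) and charges the rest to pairs of paths with no common terminals. Your uniform charging over all pairs absorbs that first case, and it also covers branch points of two paths that share only a $q$-terminal, which the paper's case split (with the $T_i$ rooted only at the $p_i$) does not explicitly address.
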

\begin{proof}
    Let $G'$ be $G$ before it was triangulated. We clearly have that $|V(G)| = |V(G')|$. As the instance corresponding to $G'$ is clean, by the definition of it being clean, we have that $G'$ is exactly the union of all trees $T_i$. There are $2L$ vertices of degree $1$ (terminals) and the remaining vertices are of degree at least $3$ (again, because our instance is clean). Every vertex of degree at least $3$ either has degree at least $3$ in one of trees $T_i$, or lies at an intersection of two different shortest paths between terminals, with no common terminals. 
    
    In the first case, it is well known that in any tree, the number of vertices of degree at least $3$ is at most the number of its leaves. Since we have $L$ trees and each of those has at most $L$ leaves, this gives us $L^2$ such vertices total.

    In the second case, we have at most $L^2$ different paths that can intersect. By our unique path assumption, every intersection of any two paths is a single subpath, hence such pair can only contribute to at most $2$ vertices of degree $3$ in $G'$, i.e., both endpoints of said common subpath. This gives us at most $2L^4$ such vertices total.
\end{proof}

Assuming $L \geq 3$, the trees $T_i$ satisfy the conditions of~\Cref{lem:log-cover-spanner}: let $H$ be the spanning tree resulting from the lemma. 
By~\Cref{clm:instance_size_bound}, $\log |V(G)| = \Oh(\log L)$, 
so in fact any path in $H$ can be decomposed into $\Oh(\log^2 L)$ paths,
each contained in a single tree $T_i$.

We will argue that we can easily bound the number of bad pairs w.r.t. any separator induced by our tree decomposition. Note, that any separator $S$ induced by an adhesion in $H$ is now coverable by $\Oh(\log^2 L)$ shortest paths in $G$, each of them being a subpath of some $T_i$. This follows from the fact that all edges of such separator always form two vertex-to-root paths in $H$.

\begin{claim}\label{clm:noncrossing-pairs-sep}
    There are at most $\Oh(\log^8 L \cdot \varepsilon^{-8})$ pairs in $\mathcal{M}$ that are \emph{separated} by $S$. 
\end{claim}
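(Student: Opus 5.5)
The plan is to apply \Cref{lem:sep-bound} directly, taking $X = V(S)$. Doing so requires showing that $V(S)$ is covered by $\alpha = \Oh(\log^2 L)$ paths, each of length at most $\beta R$ for a constant $\beta$. With those parameters, \Cref{lem:sep-bound} bounds the number of pairs separated by $X$ by
\[ \Oh\bigl(\alpha^4 \varepsilon^{-4} (1+\beta\varepsilon^{-1})^4\bigr) = \Oh(\log^8 L \cdot \varepsilon^{-8}). \]

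First, recall how $S$ is built. By \Cref{thm:tree-decomp}, $S$ is the union of two upward paths of $H$, from the endpoints of the closing edge to the root $r$. We chose $H$ via \Cref{lem:log-cover-spanner}, applied to the trees $T_i = \bigcup_{j \neq i} \Mpath{p_i}{q_j}$. So each vertex-to-root path of $H$ splits into $\Oh(\log L \cdot \log |V(G)|)$ subpaths, each lying inside a single tree $T_i$. By \Cref{clm:instance_size_bound}, $\log|V(G)| = \Oh(\log L)$, so each of the two upward paths consists of $\Oh(\log^2 L)$ such pieces.

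Second, we bound the length of each piece. A piece is a path inside some tree $T_i$, which is a union of shortest paths from $p_i$ that share only initial segments. Hence any path in $T_i$ is the concatenation of at most two subpaths of paths $\Mpath{p_i}{q_j}$: it goes up toward $p_i$ along one branch and down along another. Each such subpath has length at most $\dist(p_i,q_j) \leq (1-\varepsilon)R \leq R$. So every piece can be replaced by at most two paths of length at most $R$. This gives $\alpha = \Oh(\log^2 L)$ and $\beta = 1$, and the factor $(1+\varepsilon^{-1})^4 = \Oh(\varepsilon^{-4})$ combines with $\varepsilon^{-4}$ to give $\varepsilon^{-8}$.

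One technicality needs checking: \Cref{lem:sep-bound} and the separation lemmas it rests on (\Cref{lem:sep-three-pairs}) are stated for the clean instance, but $G$ has since been triangulated with virtual edges of infinite weight. Separation in \Cref{lem:sep-three-pairs} is witnessed by the walk $\Mpath{p_i}{q_j}\Mpath{q_j}{p_k}\Mpath{p_k}{q_i}$, which uses only non-virtual edges. If $S$ separates $p_i$ from $q_i$ in the triangulated graph, it also separates them in the original graph, so that walk must still hit $V(S)$. Distances are unchanged by the triangulation, so the distance-profile arguments carry over verbatim. I expect the main care to go into this point, together with making sure the piece-length bound really uses only the tree structure of $T_i$; the rest is bookkeeping.
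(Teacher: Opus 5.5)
Your proposal is correct and follows the paper's proof. Both apply \Cref{lem:sep-bound} to $V(S)$, using the $\Oh(\log^2 L)$-piece decomposition of the two upward paths of $H$ that comes from \Cref{lem:log-cover-spanner} and \Cref{clm:instance_size_bound}. The differences are cosmetic: you split each piece at its top vertex to get $\beta = 1$ with twice as many paths, while the paper keeps each piece whole with $\beta = 2$, and your check that the virtual edges are harmless is one the paper leaves implicit.
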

\begin{proof}
    Every subpath of any $T_i$ is of length at most $2R$. Let $\alpha$ denote the smallest possible size of a~collection of subpaths of $T_i$'s covering whole $S$ ($\alpha = \Oh(\log^2 L)$ by our previous observation). The claim now follows by applying~\Cref{lem:sep-bound} with said $\alpha$ and $\beta = 2$.
\end{proof}

\begin{claim}\label{clm:noncrossing-pairs-cont}
There are at most $\Oh(\log^2 L)$ pairs in $\mathcal{M}$ that are \emph{contaminating} $S$.
\end{claim}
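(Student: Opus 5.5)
The plan is to charge each contaminating pair to one of the $\Oh(\log^2 L)$ covering paths of $S$. By the discussion preceding \Cref{clm:noncrossing-pairs-sep}, the edge set of $S$ consists of two vertex-to-root paths in $H$. Each of these decomposes into $\Oh(\log^2 L)$ subpaths, and every such subpath is contained in a single tree $T_i = \bigcup_{j \neq i} \Mpath{p_i}{q_j}$. So fix such a decomposition $S_1, \ldots, S_\alpha$ with $\alpha = \Oh(\log^2 L)$, where $S_t$ is a subpath of $T_{i_t}$. It then suffices to show that, for each $t$, only $\Oh(1)$ pairs $\sigma$ can have an edge $e \in E(S_t)$ with $e \notin \Gexcl{\sigma}$.

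The key step is the following observation. If $\sigma = (p_a,q_a)$ and $e \in E(T_i)$ for some $i \neq a$, then $e$ lies on some path $\Mpath{p_i}{q_j}$ with $j \neq i$. If moreover $j \neq a$, then this path belongs to $\Gexcl{\sigma}$, since both $(p_i,q_i)$ and $(p_j,q_j)$ are in $\mathcal{M}\setminus\{\sigma\}$. Hence $e \notin \Gexcl{\sigma}$ forces either $a = i$, or every path of $T_i$ through $e$ is $\Mpath{p_i}{q_a}$.

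I would handle the second alternative as follows. In the tree $T_i$ (a union of shortest paths from $p_i$), an edge $e$ lies only on the paths to those $q_j$ whose branch passes through $e$. If the only such $j$ is $a$, then $e$ lies on the private tail of $T_i$ leading to the leaf $q_a$. Along the subpath $S_t$ of $T_i$, I would split into two cases.

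In the first case, $S_t$ is contained in the $p_i$-to-$q_a$ branch. Then the "private" edges of $S_t$ (those lying on a unique root-to-leaf path of $T_i$) belong to a single leaf's branch $q_a$. This is because the private parts of different leaves are disjoint pendant segments, and a path in a tree can meet the pendant segments of at most two leaves, namely at its two ends.

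In the second case, $S_t$ is a path in a tree and passes through $p_i$ or turns around at a branching vertex. Any tree path consists of at most two monotone pieces toward $p_i$, and each monotone piece can enter the private segment of at most one leaf.

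In either case each $S_t$ is contaminated by at most $\Oh(1)$ pairs: the pair $(p_{i_t},q_{i_t})$ itself and at most two pairs $(p_a,q_a)$ whose $q_a$ is an endpoint-side leaf. Summing over $t$ gives $\Oh(\log^2 L)$.

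I expect the main obstacle to be making this tree-path argument precise. Specifically, I must verify that the edges of $T_i$ lying on exactly one path $\Mpath{p_i}{q_j}$ form the pendant segments adjacent to the leaves $q_j$, and that a subpath of $T_i$ meets at most two such segments. The latter holds because a path in a tree can contain interior edges of a pendant segment only if that segment contains one of the path's endpoints, or if the path ends at the leaf.

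I also need to handle the case where $p_i$ itself is an endpoint leaf of $T_i$. The edges near $p_i$ lie on all paths $\Mpath{p_i}{q_j}$ with $j \neq i$. Since $L \geq 3$ these are in $\Gexcl{\sigma}$ for every $\sigma \neq (p_i,q_i)$, so they cause no extra contamination.
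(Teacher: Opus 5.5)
Your proof is correct and follows the paper's approach. You cover $S$ by $\Oh(\log^2 L)$ subpaths of the trees $T_i$ and charge each contaminating pair to a piece, with $\Oh(1)$ pairs per piece. The paper skips your private-tail analysis: each subpath of $T_i$ lies in the union of at most two paths $\ppath{p_i q_j}$ and $\ppath{p_i q_{j'}}$, so by your own key observation a contaminating pair must own an endpoint of one of them.
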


\begin{proof}
    Let $\mathcal{T}$ denote the smallest subset of $(T_i)_{i=1}^L$ covering whole $S$.
    Each path in a tree $T_i$ consists of at most two subpaths of terminal to terminal paths (like $\ppath{p_i q_j}$)
    for some $j \neq i$; hence there is a set $\mathcal{P}$
    of $\Oh(\log^2 L)$ terminal to terminal paths that
    cover the whole $S$.
    
    For any contaminating pair $\sigma$, there is an edge $e_\sigma \in E(S)$ such that $e_\sigma \not\in G^{\neg \sigma}$.
    The edge $e_\sigma$ needs to be covered by $\mathcal{P}$ by a path
    with endpoint in $\sigma$.
    Since there are at most $2|\mathcal{P}| = \Oh(\log^2 L)$ 
    pairs $\sigma$ with endpoint in $\mathcal{P}$, the claim follows.
\end{proof}

The remaining pairs are \emph{good} with respect to $S$, that is, they are neither separated by $S$ nor contaminate $S$.

\subsubsection{Good pairs with noncrossing links}

We are now ready to prove the main lemma of the subsection.
Here is the main idea behind the proof: because the number of bad pairs is small w.r.t. any separator, if there are sufficiently many pairs in $\mathcal{M}$, then we can choose a balanced separator $S$ of $T$ such that there exists a pair which is good and lies inside $\xi_S$. Furthermore, by our definitions, the link of such pair must be contained entirely within $\xi_S$.

In the following proof, we want to find sufficiently many good pairs whose links are disjoint. And, as a direct consequence of the above observation, if we can find good pairs lying on different faces of some separators, then the desired condition will be satisfied. 

\begin{proof}[Proof of \Cref{lem:to-noncrossing}]
    Consider a spanning tree $H$ produced by \Cref{lem:log-cover-spanner} and consider a tree decomposition $(T, \beta)$ given by \Cref{thm:tree-decomp} for $H$.
    
    Define a weight function $\chi : V(G) \to \{0, 1\}$ which assigns $1$ to terminals of all pairs in $\mathcal{M}$ and assigns $0$ to all other vertices. The total weight over all vertices is then $2L$. Since we assign a positive weight only to leaf vertices, the weight of every bag is at most $3$. We will fix a parameter $\alpha$, roughly of order $L^{1/2}$, the value of which will be derived later.
    
    We use \Cref{lem:tree-decomp-split} to split $T$ in a balanced way w.r.t. $\chi$. That is, we find a set $F \subseteq E(T)$ of $\alpha$ edges such that the union of bags of each connected component of $T - F$ has weight at least $\frac{\chi(V(G))}{4\alpha} = \frac{L}{2\alpha}$. The other case, where one bag is of weight at least $\frac{1}{4\alpha}\chi(V(G))$, cannot hold if $L$ is large enough.
    
    For $f \in F$, put $S_f$ as the separator induced by the adhesion of $f$, as defined at the beginning of the subsection. Put $X = \bigcup_{f \in F} V(S_f)$. Fix any connected component $C$ of $T - F$. Look at the set $V_C$ of vertices defined as the union of bags of $C$ minus the vertices of $X$. We argue that $V_C$ contains both terminals of at least one good pair.
    
    By \Cref{clm:noncrossing-pairs-sep} and \ref{clm:noncrossing-pairs-cont}, there are at most $\alpha \cdot \Oh(\log^8 L \cdot \varepsilon^{-8})$ pairs which are not good w.r.t. to any $S_f$, hence twice as much terminal vertices belonging to any such pair. Moreover, every set $V(S_f)$ is a union of two paths on non-virtual edges, hence contains at most $2$ vertices of originally degree $1$, hence at most $2$ terminal vertices. Therefore, the total number of terminal vertices in $V_C$ which are either bad or in $X$ is at most $\alpha \cdot c \cdot \log^8 L \cdot \varepsilon^{-8}$, where $c$ is some universal constant. The number of all terminal vertices in $V_C$ is $\chi(V_C) \geq \frac{L}{2\alpha}$, hence if we set $\alpha$ such that
    $$
        4\cdot \alpha^2 \cdot c \leq L \log^{-8} L \cdot \varepsilon^8,
    $$
    then every $V_C - X$ will contain at least one terminal of a pair, say $\sigma_C$, which will be good w.r.t. all $S_f$.

    Now, consider any two components $C_1, C_2$ and their corresponding pairs $\sigma_{C_1}, \sigma_{C_2}$. There is some edge $f' \in F$ separating them in $T$, hence they lie on different sides of the loop $\xi_{f'}$. In particular, the links $\sigma_{C_1}, \sigma_{C_2}$ are separated by $\xi_{f'}$, and hence are nonintersecting.

    This way, we have picked $\alpha = \Theta(L^{1/2} \log^{-4} L \cdot \varepsilon^4)$ pairs with pairwise nonintersecting links, one per each component of $T - F$. Restricting our instance to these pairs gives us a noncrossing instance, hence $\alpha \leq L_{\asymp}(\varepsilon)$, and therefore $L = \Oh(L_{\asymp}(\varepsilon)^2 \log^8 L_{\asymp}(\varepsilon) \cdot \varepsilon^{-8})$, which finishes the proof.
 \end{proof}

\subsection{Bounding linked noncrossing instances} \label{ss:noncrossing-bound}

Let $\mathcal{I} = (G, \mathcal{M})$ be a linked non-crossing instance.
We will frequently use the following uncrossing argument.

\begin{claim}\label{clm:uncross}
    Let $\sigma_a,\sigma_b,\sigma_c,\sigma_d \in \mathcal{M}$ be
    not necessarily distinct.
    If $\ppath{p_a q_b} \cap \ppath{p_c q_d} \neq \emptyset$, 
    then 
    \[ \dist(p_a, q_b) + \dist(p_c, q_d) \geq \dist(p_a, q_d) + \dist(p_c, q_b).\]
\end{claim}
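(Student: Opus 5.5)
The plan is to reproduce the exchange argument used in the proof of \Cref{lem:over:ex}. We route both sums through a common vertex and then regroup the four resulting segments.

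Assume $\ppath{p_a q_b} \cap \ppath{p_c q_d} \neq \emptyset$, and pick any vertex $u$ in the intersection. Since $u$ lies on the shortest path $\ppath{p_a q_b}$, we get $\dist(p_a,q_b) = \dist(p_a,u) + \dist(u,q_b)$. Likewise, since $u$ lies on $\ppath{p_c q_d}$, we get $\dist(p_c,q_d) = \dist(p_c,u) + \dist(u,q_d)$.

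Adding these two equalities and regrouping gives
\[ \dist(p_a,q_b)+\dist(p_c,q_d) = \bigl(\dist(p_a,u)+\dist(u,q_d)\bigr) + \bigl(\dist(p_c,u)+\dist(u,q_b)\bigr). \]
By the triangle inequality, the first bracket is at least $\dist(p_a,q_d)$ and the second is at least $\dist(p_c,q_b)$, which is the claimed inequality.

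The only point needing care is the degenerate case where the pairs are not distinct, for instance $a = b$ or $c = d$. The path notation $\ppath{\cdot}$ was introduced only for distinct pairs, but the argument uses nothing beyond $u$ lying on \emph{some} shortest path between the named endpoints. So it goes through verbatim once $\ppath{p_a q_a}$ is read as a fixed shortest path. If $a = c$ or $b = d$, the two sides of the inequality coincide and there is nothing to prove.

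I do not expect any real obstacle here; the content is simply that two intersecting shortest paths can be rerouted at the intersection point.
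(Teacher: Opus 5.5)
Your proof is correct and is the same argument as the paper's. The paper picks $u$ in the intersection, splits both shortest paths at $u$, and recombines the pieces into a $p_a$--$q_d$ walk and a $p_c$--$q_b$ walk; your triangle-inequality phrasing is the distance form of exactly that, and your remarks on the degenerate index coincidences are accurate.
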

\begin{proof}
    Let $u$ be a vertex in the intersection of $\ppath{p_a q_b}$
    and $\ppath{p_c q_d}$. 
    Split both $\ppath{p_a q_b}$ and $\ppath{p_c q_d}$ at $u$
    and observe that the subpath of $\ppath{p_a q_b}$ from $p_a$ to $u$
    and the subpath of $\ppath{p_c q_d}$ from $u$ to $q_d$ form a walk
    from $p_a$ to $q_d$ while the remaining parts,
    i.e., the subpath of $\ppath{p_c q_d}$ from $p_c$ to $u$
    and the subpath of $\ppath{p_a q_b}$ from $u$ to $q_b$ form a
    walk from $p_c$ to $q_b$. The claim follows.
\end{proof}

The first immediate corollary is the following:

\begin{claim} \label{clm:disjoint-black-side}
    $\llbracket p_iq_j \rrbracket \cap  \llbracket p_jq_i \rrbracket= \emptyset, \forall \sigma_i, \sigma_j \in \mathcal{M}.$
\end{claim}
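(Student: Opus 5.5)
The claim follows from \Cref{clm:uncross} applied with $(a,b,c,d) = (i,j,j,i)$, i.e., with the two paths $\ppath{p_i q_j}$ and $\ppath{p_j q_i}$. The case $\sigma_i = \sigma_j$ is excluded, since the paths $\ppath{p_i q_j}$ are only defined for distinct pairs. So we may assume $\sigma_i \neq \sigma_j$ and argue by contradiction, supposing the two paths share a vertex.

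Under this assumption, \Cref{clm:uncross} yields
\[ \dist(p_i,q_j) + \dist(p_j,q_i) \geq \dist(p_i,q_i) + \dist(p_j,q_j). \]
We bound both sides using the comatching definition. On the left, both pairs are off-diagonal, so each term is at most $(1-\varepsilon)R$, and the sum is at most $2(1-\varepsilon)R$. On the right, $R$ was chosen as $\min_{(p,q)\in\mathcal{M}} \dist(p,q)$, so each diagonal term is at least $R$, and the sum is at least $2R$. Combining gives $2(1-\varepsilon)R \geq 2R$, which is impossible since $\varepsilon > 0$ and $R > 0$.

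There is essentially no obstacle here. The only thing to check is which threshold convention is in force. In this section $R$ is the minimum diagonal distance, so the diagonal bound is $\geq R$ rather than the $> R$ of \Cref{def:metric-ladder}. The off-diagonal bound $\leq (1-\varepsilon)R$ still holds for this $R$: the comatching definition gives it for some witnessing $R_0$, and every diagonal distance exceeds $R_0$, so $R \geq R_0$. Hence the strict gap coming from $\varepsilon$ alone is enough for the contradiction. This is the same computation as \Cref{lem:over:ex} in the overview.
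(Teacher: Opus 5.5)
Your proof is correct and is exactly the paper's argument: assuming the two paths meet, \Cref{clm:uncross} with $(a,b,c,d)=(i,j,j,i)$ gives $2(1-\varepsilon)R \geq \dist(p_i,q_j)+\dist(p_j,q_i) \geq \dist(p_i,q_i)+\dist(p_j,q_j) \geq 2R$, a contradiction. Your side remarks are accurate and only make explicit what the paper uses silently: that $\sigma_i \neq \sigma_j$ is implicit, and that the off-diagonal bound $\leq (1-\varepsilon)R$ still holds when $R$ is taken to be the minimum diagonal distance.
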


\begin{proof}

    Suppose, for contradiction, that $\llbracket p_iq_j \rrbracket$ intersects $\llbracket p_jq_i \rrbracket$. Then, 
    by~\Cref{clm:uncross},
    \[ 2(1-\varepsilon)R \geq \dist(p_i,q_j) + \dist(p_j, q_i) \geq \dist(p_i, q_i) + \dist(p_j, q_j) \geq 2R.\]
    This is a contradiction. 
\end{proof}

\subsubsection{Equidistant structure}
We now filter $\mathcal{M}$ to get the following substructure.
\begin{definition}
    An \emph{equidistant structure} in $\mathcal{I}$
    is a triple $(\sigma' = (p',q'), \sigma'' = (p'', q''), \mathcal{N})$
    where $\sigma',\sigma'' \in \mathcal{M}$, $\sigma' \neq \sigma''$,
    $\mathcal{N} \subseteq \mathcal{M} \setminus \{\sigma', \sigma''\}$
    and the following holds:
    \begin{enumerate}
    \item For every $\sigma=(p,q) \in \mathcal{N}$, 
    \[ \Udist(p, q') = \Udist(p'', q') \qquad \mathrm{and} \qquad \Udist(q, p') = \Udist(q'', p'). \]
    \item For every $\sigma_1=(p_1,q_1), \sigma_2 = (p_2,q_2) \in \mathcal{N}$, 
    \[ \Udist(p_1, q'') = \Udist(p_2, q'') \qquad \mathrm{and} \qquad \Udist(q_1, p'') = \Udist(q_2, p''). \]
    \item For every distinct $\sigma_1=(p_1,q_1), \sigma_2 = (p_2,q_2) \in \mathcal{N}$,
    the path $\ppath{p_1 q_2}$ does not intersect
    $\ppath{p' q''}$ nor $\ppath{p'' q'}$.
    \end{enumerate}
    The \emph{size} of the equidistant structure $(\sigma', \sigma'', \mathcal{N})$ is $|\mathcal{N}|$.
\end{definition}

\begin{lemma}\label{lem:find-equidistant}
    Assume $\mathcal{I}$ does not contain an equidistant structure
    of size larger than $L$. Then, $|\mathcal{M}| = \Oh(L \varepsilon^{-12}).$
\end{lemma}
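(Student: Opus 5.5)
The plan is to fix an arbitrary pair $\sigma' = (p',q')$ and a pair $\sigma''=(p'',q'')$, and then filter $\mathcal{M}\setminus\{\sigma',\sigma''\}$ in a few rounds. Each round loses only a factor polynomial in $\varepsilon^{-1}$ or a constant. After the filtering, the surviving set $\mathcal{N}$ will satisfy the three conditions of an equidistant structure. Hence $|\mathcal{N}|\le L$, and multiplying the losses gives $|\mathcal{M}| = \Oh(L\varepsilon^{-12})$.

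\textbf{Condition 1 (choice of $\sigma''$).} For every $\sigma=(p,q)\neq\sigma'$, record the pair of rounded distances $(\Udist(p,q'),\Udist(q,p'))$. By \Cref{diam-lnci}, each coordinate lies in $\{0,\dots,\lfloor 300\varepsilon^{-1}\rfloor\}$, so there are $\Oh(\varepsilon^{-2})$ classes. Pick a largest class $\mathcal{M}_1$, of size $\Omega(|\mathcal{M}|\varepsilon^2)$. Choose $\sigma''\in\mathcal{M}_1$ and keep $\mathcal{M}_1\setminus\{\sigma''\}$. Every remaining pair now agrees with $\sigma''$ in both quantities, which is exactly condition 1.

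\textbf{Condition 2.} Bucket the remaining pairs by $(\Udist(p,q''),\Udist(q,p''))$. There are again $\Oh(\varepsilon^{-2})$ classes, and a largest one $\mathcal{M}_2$ has size $\Omega(|\mathcal{M}|\varepsilon^{4})$. It satisfies condition 2. Condition 1 is inherited, and both conditions pass to subsets.

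\textbf{Condition 3.} We need the paths $\ppath{p_1 q_2}$ with $\sigma_1,\sigma_2\in\mathcal{N}$ distinct to avoid $X := V(\ppath{p'q''})\cup V(\ppath{p''q'})$. Here $X$ is covered by $2$ paths of length at most $R$. Choose $Z$ to be a $\delta$-net along these two paths, so $|Z|=\Oh(\varepsilon^{-1})$ and every vertex of $X$ is within $\delta$ of $Z$. By \Cref{lem:Uprof-vc}, the number of rounded distance profiles to $Z$ is $\Oh(|Z|^4\varepsilon^{-4})=\Oh(\varepsilon^{-8})$.

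Bucket the pairs of $\mathcal{M}_2$ by the rounded profile of $p$ to $Z$ and keep a largest bucket $\mathcal{N}$. It has size $\Omega(|\mathcal{M}|\varepsilon^{4}\cdot\varepsilon^{8})=\Omega(|\mathcal{M}|\varepsilon^{12})$. For distinct $\sigma_1,\sigma_2\in\mathcal{N}$, the terminals $p_1$ and $p_2$ have the same rounded profile to $Z$. \Cref{lem:sep-two-pairs} then says $\ppath{p_1q_2}$ contains no vertex within distance $\delta$ of $Z$, so in particular it misses $X$. This gives condition 3.

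The resulting triple $(\sigma',\sigma'',\mathcal{N})$ is an equidistant structure, so $|\mathcal{N}|\le L$, giving $|\mathcal{M}|=\Oh(L\varepsilon^{-12})$. The exponent matches: $\varepsilon^{-2}\cdot\varepsilon^{-2}\cdot\varepsilon^{-8}$.

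\textbf{Main obstacle.} The one real point is condition 3. It needs the observation that bucketing by profiles of the $p$-side alone suffices, because \Cref{lem:sep-two-pairs} only requires $p_1,p_2$ to share a profile. It also needs the order of choices: $\sigma''$ must be fixed before the condition-3 bucketing, since $X$ depends on $\sigma''$. Two minor checks remain. First, $\sigma'$ and $\sigma''$ must be excluded from $\mathcal{N}$, which is automatic. Second, the small cases where $|\mathcal{M}|$ is bounded by a constant are absorbed into the $\Oh(\cdot)$.
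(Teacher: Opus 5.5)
Your proposal is correct and follows the paper's proof essentially step for step. The paper also tags by distances to $\sigma'$, picks $\sigma''$ from the largest class, re-tags by distances to $\sigma''$, and finally buckets by the rounded profile of $p$ to a $\delta$-net $Z$ of $\ppath{p'q''}\cup\ppath{p''q'}$, invoking \Cref{lem:sep-two-pairs}; the losses multiply to the same $\varepsilon^{-2}\cdot\varepsilon^{-2}\cdot\varepsilon^{-8}$.
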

\begin{proof}
Pick arbitrary $\sigma' = (p',q') \in \mathcal{M}$.
Tag each $\sigma=(p,q) \in \mathcal{M} \setminus \{\sigma'\}$
with the following tuple: $(\Udist(p, q'), \Udist(q, p'))$. 
There are $\Oh(\varepsilon^{-2})$ possible tags.
Let $\mathcal{M}_1 \subseteq \mathcal{M} \setminus \{\sigma'\}$ 
be a subset of pairs with the same tag.

Pick arbitrary $\sigma'' = (p'', q'') \in \mathcal{M}_1$ 
and re-tag each pair $\sigma = (p,q) \in \mathcal{M}_1 \setminus \{\sigma''\}$ 
with the new tuple $(\Udist(p, q''), \Udist(q, p''))$.
Again, there are $\Oh(\varepsilon^{-2})$ possible tags
and 
let $\mathcal{M}_2 \subseteq \mathcal{M}_1 \setminus \{\sigma''\}$ 
be a subset of pairs with the same tag.

Let $Z$ be a set of $\Oh(\varepsilon^{-1})$ vertices such that
every vertex of $\ppath{p' q''} \cup \ppath{p'' q'}$ is within distance
$\delta$ from a vertex of $Z$. 
By~\Cref{lem:Uprof-vc}, there are $\Oh(\varepsilon^{-8})$ possible
rounded distance profiles to $Z$.
By~\Cref{lem:sep-two-pairs}, if $\sigma_1 = (p_1,q_1), \sigma_2=(p_2,q_2) \in \mathcal{M}_2$ are such that the rounded distance profiles
of $p_1$ and $p_2$ to $Z$ are equal, then $\ppath{p_1 q_2}$ does not
intersect $\ppath{p' q''}$ nor $\ppath{p'' q'}$. 
Consequently, if $\mathcal{N} \subseteq \mathcal{M}_2$ is a set
of pairs $(p,q)$ with the same rounded distance profile of $p$ to $Z$,
then $(\sigma', \sigma'', \mathcal{N})$ is an equidistant structure.
The lemma follows.
\end{proof}

Fix an equidistant structure $(\sigma' = (p',q'), \sigma''=(p'',q''), \mathcal{N})$ in $\mathcal{I}$. Our goal is to bound $L = |\mathcal{N}|$ by $\Oh(\varepsilon^{-2})$.

\subsubsection{Forbidden intersections}

We now make a  number of observations about disjointness of some paths in 
the equidistant structure $(\sigma', \sigma'', \mathcal{N})$. Note that $\ppath{p'q''} \cap \ppath{q'p''} = \emptyset$ based on \Cref{clm:disjoint-black-side}.

\begin{claim}\label{clm:direct1}
  For every $\sigma = (p,q) \in \mathcal{N}$, 
  we have $\ppath{q'' p} \cap \ppath{q' p''} = \emptyset$
  and $\ppath{p'' q} \cap \ppath{p' q''} = \emptyset$.
\end{claim}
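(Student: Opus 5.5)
The plan is to argue both statements by contradiction, combining the uncrossing argument of \Cref{clm:uncross} with the first condition of the equidistant structure. That condition ties the rounded distances of $p$ and $p''$ to $q'$, and of $q$ and $q''$ to $p'$. Throughout, note that $\sigma \in \mathcal{N} \subseteq \mathcal{M} \setminus \{\sigma',\sigma''\}$, so $\sigma \neq \sigma''$. Hence $\dist(p,q'') \leq R - 100\delta$ and $\dist(p'',q) \leq R - 100\delta$, while $\dist(p'',q'') \geq R$ by the choice of $R$.

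For the first statement, suppose $\ppath{q'' p} \cap \ppath{q' p''} \neq \emptyset$. I will apply \Cref{clm:uncross} with $(p_a,q_b,p_c,q_d) = (p, q'', p'', q')$. This gives
\[ \dist(p,q'') + \dist(p'',q') \geq \dist(p,q') + \dist(p'',q''). \]
The first condition of the equidistant structure gives $\Udist(p,q') = \Udist(p'',q')$. Since both distances then lie in the same interval $[i\delta,(i+1)\delta)$, we get $\dist(p,q') > \dist(p'',q') - \delta$. Substituting this yields
\[ \dist(p,q'') > \dist(p'',q'') - \delta \geq R - \delta, \]
which contradicts $\dist(p,q'') \leq R - 100\delta$.

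The second statement is symmetric, with the roles of $p$-terminals and $q$-terminals swapped. Suppose $\ppath{p'' q} \cap \ppath{p' q''} \neq \emptyset$. Apply \Cref{clm:uncross} with $(p_a,q_b,p_c,q_d) = (p'', q, p', q'')$ to obtain
\[ \dist(p'',q) + \dist(p',q'') \geq \dist(p'',q'') + \dist(p',q). \]
By $\Udist(q,p') = \Udist(q'',p')$ we have $\dist(p',q) > \dist(p',q'') - \delta$. Therefore
\[ \dist(p'',q) > \dist(p'',q'') - \delta \geq R - \delta, \]
again contradicting $\dist(p'',q) \leq R - 100\delta$.

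There is no real obstacle here. The only care needed is choosing the pairing of endpoints in \Cref{clm:uncross} so that the exchanged paths produce exactly the two distances controlled by the equidistant condition, namely $p,p''$ measured to $q'$ and $q,q''$ measured to $p'$.
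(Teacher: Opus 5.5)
Your proof is correct and is the paper's argument: it also applies \Cref{clm:uncross} to $\ppath{q'' p}$ and $\ppath{q' p''}$, obtaining $\dist(p,q'') + \dist(p'',q') \geq \dist(p'',q'') + \dist(p,q')$, and derives a contradiction with $\Udist(p,q') = \Udist(p'',q')$. The only difference is that you write out the second statement explicitly, where the paper just calls it symmetrical.
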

\begin{proof}
    By contradiction, assume that $\ppath{q'' p}$ and $\ppath{q' p''}$ intersect.
    By~\Cref{clm:uncross},
    \[ (1-\varepsilon)R + \dist(q', p'') \geq \dist(q'', p) + \dist(q', p'') \geq \dist(q'', p'') + \dist(q', p) \geq R + \dist(q', p). \]
    However, from the definition of an equidistant structure,
    $\Udist(q', p'') = \Udist(q', p)$, which implies
    $|\dist(q', p'') - \dist(q', p)| < \delta$, a contradiction.

    The proof of $\ppath{p'' q} \cap \ppath{p' q''} = \emptyset$
    is symmetrical.
\end{proof}

~\Cref{clm:direct1} motivates the following definition.
Let $\hat{\gamma}'$ be a minimal part of the link of $\sigma'$ that connects
$\ppath{p' q''}$ with $\ppath{q' p''}$
and let $\gamma'$ be a curve from $q''$ to $p''$ consisting
of the subpath of $\ppath{p' q''}$ from $q''$ to the endpoint of $\hat{\gamma}'$,
the curve $\hat{\gamma}'$, and the subpath of $\ppath{q' p''}$ from the endpoint
of $\hat{\gamma}'$ to $p''$. As for $(p, q) \in \mathcal{N}$ we have that $\ppath{q''p}$ intersects with $\ppath{q''p'}$ only on its prefix, does not intersect with $\hat{\gamma}'$ (as the link of $\sigma'$ does not intersect $G^{\neg \sigma'}$, which contains $\ppath{q''p}$) and does not intersect $\ppath{q'p''}$ (per \Cref{clm:direct1}), we conclude that the only intersection of $\ppath{q''p}$ with $\gamma'$ is the common prefix of $\ppath{q''p}$ and $\ppath{q''p'}$. Similarly, the only intersection of $\ppath{p''q}$ with $\gamma'$ is the common prefix of $\ppath{p''q}$ and $\ppath{p''q'}$.
Furthermore, the definition of an equidistant structure ensures that
for every distinct $\sigma_1 = (p_1,q_1), \sigma_2 = (p_2,q_2) \in \mathcal{N}$, 
the path $\ppath{p_1 q_2}$ is disjoint with $\gamma'$. 

Intuitively, in what follows, one should think of $\gamma'$ as a ``rupture''
in the plane that connects $q''$ and $p''$. We note that we defined $\gamma'$ this way, instead of possibly more natural connection consisting of $\ppath{q''p'}$, the link of $\sigma'$ and $\ppath{q'p''}$, because the latter one might have self-intersections, while $\gamma'$ cannot.
We will argue about relative positions of different
objects with respect to $\gamma'$.

Before we continue, we make one more observation.
\begin{claim}\label{clm:direct2}
  For every distinct $\sigma_1 = (p_1,q_1), \sigma_2 = (p_2,q_2) \in \mathcal{N}$, 
    the path $\ppath{p_1 q_2}$ does not intersect 
    $\ppath{q'' p_2}$ nor $\ppath{p'' q_1}$.
\end{claim}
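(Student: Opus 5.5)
Both statements follow the pattern of \Cref{clm:direct1}: assume the two shortest paths meet, uncross them with \Cref{clm:uncross}, and contradict the rounded-distance equalities of the equidistant structure. I treat $\ppath{p_1 q_2}$ versus $\ppath{q'' p_2}$; the case $\ppath{p'' q_1}$ is symmetric, swapping the roles of $p$ and $q$ and of $\sigma'$ and $\sigma''$.

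Suppose $\ppath{p_1 q_2}$ and $\ppath{p_2 q''}$ share a vertex. Apply \Cref{clm:uncross} with $(\sigma_a,\sigma_b,\sigma_c,\sigma_d)=(\sigma_1,\sigma_2,\sigma_2,\sigma'')$. This gives
\[ \dist(p_1,q_2)+\dist(p_2,q'') \geq \dist(p_1,q'')+\dist(p_2,q_2). \]
We bound the left-hand side from above using $\dist(p_1,q_2)\leq (1-\eps)R = R-100\delta$, which holds because $\sigma_1\neq\sigma_2$.

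We bound the right-hand side from below using two facts:
\begin{itemize}
\item $\dist(p_2,q_2)\geq R$, by the definition of $R$ as the minimum matched distance;
\item condition 2 of the equidistant structure gives $\Udist(p_1,q'')=\Udist(p_2,q'')$, and hence $|\dist(p_1,q'')-\dist(p_2,q'')|<\delta$.
\end{itemize}
Substituting these bounds yields
\[ R-100\delta+\dist(p_2,q'') \geq \dist(p_2,q'')-\delta+R, \]
that is, $-100\delta\geq-\delta$. This is false, which proves the first half of the claim.

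For the second half, suppose $\ppath{p_1 q_2}$ meets $\ppath{p'' q_1}$. Apply \Cref{clm:uncross} with $(\sigma_a,\sigma_b,\sigma_c,\sigma_d)=(\sigma_1,\sigma_2,\sigma'',\sigma_1)$. This gives
\[ \dist(p_1,q_2)+\dist(p'',q_1)\geq \dist(p_1,q_1)+\dist(p'',q_2). \]
Here we use $\dist(p_1,q_1)\geq R$ together with $\Udist(q_1,p'')=\Udist(q_2,p'')$, again from condition 2. The same arithmetic then gives the same contradiction.

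The only step needing care is choosing the right pairing in \Cref{clm:uncross}: the swapped endpoints must produce one matched pair $(p_i,q_i)$ and one distance that equidistance controls. The choices above achieve this. We only need distinctness $\sigma_1\neq\sigma_2$; $\sigma''\notin\mathcal{N}$ is not required.
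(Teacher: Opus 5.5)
Your proof is correct and follows the paper's argument: it uncrosses $\ppath{p_1 q_2}$ with $\ppath{q'' p_2}$ via \Cref{clm:uncross} using the same pairing, and contradicts $\Udist(p_1,q'')=\Udist(p_2,q'')$ from condition 2 of the equidistant structure. You also write out the case of $\ppath{p'' q_1}$ explicitly, which the paper leaves as symmetric.
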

\begin{proof}
    By contradiction, assume that $\ppath{p_1 q_2}$ and $\ppath{q'' p_2}$ intersect. By~\Cref{clm:uncross},
    \[ (1-\varepsilon)R + \dist(q'', p_2) \geq \dist(p_1, q_2) + \dist(q'', p_2) \geq \dist(q'', p_1) + \dist(p_2, q_2) \geq \dist(q'', p_1) + R. \]
    However, from the definition of an equidistant structure,
    $\Udist(q'', p_1) = \Udist(q'', p_2)$, which implies
    $|\dist(q'', p_1) - \dist(q'', p_2)| < \delta$, a contradiction.

    The proof for $\ppath{p'' q_1}$ is symmetrical.
\end{proof}

\subsubsection{Order around \texorpdfstring{$p''$}{p''} and \texorpdfstring{$q''$}{q''}}

At this stage, we now have a sufficient toolbox to reason about the order on the pairs of $\mathcal{N}$. 
Consider the shortest path tree with the root $q''$ and leaves
$\{p~|~(p,q) \in \mathcal{N}\} \cup \{p'\}$. Since every terminal
has only one incident edge of finite weight (i.e., not a virtual one),
there is a well-defined clockwise cyclic order of the leaves
$\{p~|~(p,q) \in \mathcal{N}\} \cup \{p'\}$. Let us transform this cyclic order into a~linear order by declaring $p'$ as its first element, and then delete $p'$ 
from this order, obtaining as a~result a~linear clockwise ordering of 
$\{p~|~(p,q) \in \mathcal{N}\}$; we denote the induced
ordering of $\mathcal{N}$ as $\pi_{q''}$.
We define the order $\pi_{q''}$ analogously.
The following claim unifies the two orders: 

\begin{claim} \label{clm:one-order}
    $\pi_{p''} = \pi_{q''}$. 
\end{claim}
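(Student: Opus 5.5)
The plan is to turn each pair $\sigma=(p,q)\in\mathcal{N}$ into an arc $A_\sigma$ from $q''$ to $p''$. The arc consists of the path $\ppath{q'' p}$, then the link $\Gamma(\sigma)$ from $p$ to $q$, then the path $\ppath{q p''}$. I will then show that these arcs behave like pairwise non-crossing chords of the topological disk $D$ obtained by cutting the plane along the rupture $\gamma'$. Inside such a disk, non-crossing chords between the two boundary points $q''$ and $p''$ are linearly ordered. That linear order can be read off from the rotation at $q''$, where it is $\pi_{q''}$, or from the rotation at $p''$, where it is $\pi_{p''}$. It remains to check that the conventions agree: both orders are clockwise, and they start at $p'$ and $q'$ respectively, which are the endpoints of the pieces of $\gamma'$ leaving $q''$ and $p''$. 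With matching conventions the two readings give the same order.

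\textbf{Step 1 (arcs avoid $\gamma'$).} From the discussion after \Cref{clm:direct1}, $\ppath{q''p}$ meets $\gamma'$ only in its common prefix with $\ppath{q''p'}$, and $\ppath{p''q}$ meets $\gamma'$ only in its common prefix with $\ppath{p''q'}$. The link $\Gamma(\sigma)$ avoids $\ppath{p' q''}$ and $\ppath{q' p''}$, since both lie in $\Gexcl{\sigma}$. It also avoids $\hat\gamma'$, because links are pairwise disjoint. Hence each $A_\sigma$ leaves $q''$ along $\gamma'$, then lies in the interior of $D$, and finally returns to $p''$ along $\gamma'$. Moreover, $p$ and $q$ are the only terminals of $A_\sigma$ off $\gamma'$.

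\textbf{Step 2 (pairwise non-crossing).} Fix distinct $\sigma_1,\sigma_2\in\mathcal{N}$. Together with $\gamma'$, the arc $A_{\sigma_2}$ bounds a closed curve $J_2$, which splits $D$ into two sides. I will show that $A_{\sigma_1}\setminus\gamma'$ lies entirely on one side of $J_2$.
\begin{itemize}
\item The link $\Gamma(\sigma_1)$ is disjoint from $\Gamma(\sigma_2)$ by noncrossing.
\item $\Gamma(\sigma_1)$ is also disjoint from $\ppath{q''p_2}$ and $\ppath{p''q_2}$, since both lie in $\Gexcl{\sigma_1}$.
\end{itemize}
Hence $p_1$ and $q_1$ lie on the same side of $J_2$. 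Next, $\ppath{q''p_1}$ and $\ppath{q''p_2}$ share only a prefix, because shortest paths from a common root branch only once. For the same reason, $\ppath{p''q_1}$ and $\ppath{p''q_2}$ share only a prefix. Both $\ppath{q''p_1}$ and $\ppath{p''q_1}$ avoid $\Gamma(\sigma_2)$.

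The remaining danger is a crossing between $\ppath{q''p_1}$ and $\ppath{p''q_2}$, or between $\ppath{p''q_1}$ and $\ppath{q''p_2}$. This is the main obstacle. I would try to exclude a transversal crossing by a parity argument in the spirit of \Cref{clm:direct2}. Suppose $\ppath{q''p_1}$ crosses $J_2$ transversally, necessarily through $\ppath{p''q_2}$. Then $p_1$ would lie on the side of $J_2$ opposite to the germ of $\ppath{q''p_1}$ at $q''$. One should then derive a crossing of $\ppath{p_1q_2}$ with $\ppath{q''p_2}$ or $\ppath{p''q_1}$, contradicting \Cref{clm:direct2}, or a crossing with $\gamma'$, contradicting the equidistant structure. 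A fallback is to combine \Cref{clm:uncross} on the intersection with the equal rounded distances $\Udist(p_1,q'')=\Udist(p_2,q'')$ and $\Udist(q_1,p'')=\Udist(q_2,p'')$. This would show that any shared vertex gives a shortcut of length below $R$ between some pair $(p_i,q_i)$. At worst the paths merely touch, and touching does not affect the side argument.

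\textbf{Step 3 (conclude).} Once the arcs pairwise do not cross, each $J_{\sigma}$ splits $D$, and all other arcs lie on one side of it. This gives a nested, linear order of the arcs. At $q''$ the first edges of the arcs appear in this order when rotating clockwise starting from $\gamma'$, i.e.\ from $p'$. At $p''$ they appear in this order when rotating starting from $q'$. Checking that the orientations match is a one-line picture: the rupture $\gamma'$ together with any arc bounds a disk that is traversed consistently at both ends. Hence $\pi_{p''}=\pi_{q''}$.
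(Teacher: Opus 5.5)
Your Step~2 fails, and it cannot be repaired. Whenever $|\mathcal{N}|\ge 2$, the assertion that $A_{\sigma_1}\setminus\gamma'$ stays on one side of $A_{\sigma_2}\cup\gamma'$ is false. The crossing you call the remaining danger is exactly what the true configuration looks like.

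Up to trimming the link, $A_{\sigma_i}$ is the curve $\gamma_i$ from which the paper builds the box $\gamma_i^\circ=\gamma'\cup\gamma_i$. You need that trimming anyway: $\Gamma(\sigma)$ may cross $\ppath{q''p}$ and $\ppath{qp''}$, so $A_\sigma$ need not be simple. Once the two orders agree, take $i<j$. The path $\ppath{q''p_j}$ branches off $\ppath{q''p_i}$ into one side of $\gamma_i^\circ$, while $\ppath{p''q_j}$ branches off $\ppath{p''q_i}$ into the other side. So $A_{\sigma_j}$ has to cross $A_{\sigma_i}$.

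\Cref{clm:type2cross} records exactly this: a transversal crossing of $\ppath{q''p_j}$ with $\ppath{p''q_i}$ in Type~A, and of $\ppath{p''q_j}$ with $\ppath{q''p_i}$ in Type~B. The counting in \Cref{clm:final-punch} is built on these crossing vertices $v_{i,j}$. So neither exclusion you sketch can succeed. For example, \Cref{clm:uncross} at a common vertex of $\ppath{q''p_1}$ and $\ppath{p''q_2}$ only gives $\dist(p_1,q'')+\dist(p'',q_2)\ge\dist(p_1,q_2)+\dist(p'',q'')$. The left side may be as large as $2(1-\eps)R$, while the right side is only forced to be at least $R$. The equidistance conditions do not change this.

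Step~3 also has the orientation backwards, which independently confirms the problem. If $A_1$ did not cross $A_2$, it would lie in one of the two regions bounded by $A_2\cup\gamma'$. That region stays on a fixed side of its boundary curve. Hence $A_1$ leaves $q''$ inside the sector swept clockwise from $\gamma'$ to $A_2$ exactly when it reaches $p''$ inside the sector swept counterclockwise from $\gamma'$ to $A_2$. So clockwise rotations starting at $\gamma'$ list non-crossing arcs in opposite orders at the two ends. Your plan would therefore prove that $\pi_{p''}$ is the reverse of $\pi_{q''}$.

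The missing idea is to use the cross paths between the two pairs instead of their links. The paper argues as follows.
\begin{enumerate}
\item Assume $\sigma_1$ precedes $\sigma_2$ in $\pi_{p''}$ but follows it in $\pi_{q''}$.
\item Close $\ppath{q''p_1}$, $\ppath{p_1q_2}$, $\ppath{q_2p''}$ with $\gamma'$ into a closed curve. It is non-self-intersecting by the equidistant structure, \Cref{clm:direct1}, \Cref{clm:disjoint-black-side}, and noncrossing with the link of $\sigma'$.
\item Use \Cref{clm:direct1,clm:direct2,clm:disjoint-black-side} and the rotations at $q''$ and $p''$ to put $p_2$ and $q_1$ on opposite sides of this curve.
\item Observe that $\ppath{q_1p_2}$ avoids every piece of the curve, a contradiction.
\end{enumerate}
The links of $\sigma_1$ and $\sigma_2$ are never needed.
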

\begin{proof}
    Assume, contrary to our claim that $\pi_{p''} \neq \pi_{q''}$. 
    Let $\sigma_1 = (p_1,q_1), \sigma_2=(p_2,q_2) \in \mathcal{N}$
    be such that $\sigma_1$ is before $\sigma_2$ in $\pi_{p''}$
    but $\sigma_1$ is after $\sigma_2$ in $\pi_{q''}$. 

    We consider a closed curve consisting of paths:
    \[ \ppath{q'' p_1}, \ppath{p_1 q_2}, \ppath{q_2 p''},\]
    
    and the curve $\gamma'$. 
    Since $\mathcal{I}$ is a linked noncrossing instance, 
    the link of $\sigma'$ does not intersect 
    the paths $\ppath{q'' p_1}, \ppath{p_1 q_2}, \ppath{q_2 p''}$.
    The properties of the equidistant structure ensure that $\ppath{p_1 q_2}$
    does not intersect $\ppath{p' q''} \cup \ppath{p'' q'}$. 
    We infer that
    the paths $\ppath{q'' p_1}, \ppath{p_1 q_2}, \ppath{q_2 p''}$, 
    and the curve $\gamma'$
    form a closed curve without self-intersections, 
    which we call $\gamma$.
    
    Observe that the path $\ppath{q'' p_2}$ does not intersect
    $\ppath{p_1 q_2}$ (\Cref{clm:direct2}),
    nor $\ppath{q_2 p''}$ (\Cref{clm:disjoint-black-side}),
    nor $\ppath{p'' q'}$ (\Cref{clm:direct1}),
    nor the link of $\sigma'$. Furthermore,
    it may share a prefix starting at $q''$ with $\ppath{q'' p'}$
    and $\ppath{q'' p_1}$, but then departs so that the clockwise
    order in the shortest path tree rooted at $q''$ is $p_2$, $p_1$, $p'$.

    Symmetrically, the path $\ppath{p'' q_1}$ does not intersect
    $\ppath{p_1 q_2}$ (\Cref{clm:direct2}),
    nor $\ppath{p_1 q''}$ (\Cref{clm:disjoint-black-side}),
    nor $\ppath{q'' p'}$ (\Cref{clm:direct1}),
    nor the link of $\sigma'$. Furthermore,
    it may share a prefix starting at $p''$ with $\ppath{p'' q'}$
    and $\ppath{p'' q_2}$, but then departs so that the clockwise
    order in the shortest path tree rooted at $p''$ is $q'$, $q_1$, $q_2$.

    This implies that $q_1$ and $p_2$ lie on different sides
    of the closed curve without self-intersections $\gamma$.

    Consider now the path $\ppath{q_1 p_2}$. 
    It does not intersect $\ppath{p_1 q_2}$ (\Cref{clm:disjoint-black-side}),
    nor $\ppath{p'' q_2}$ nor $\ppath{q'' p_1}$ (\Cref{clm:direct2}),
    nor $\ppath{p' q''}$ nor $\ppath{q' p''}$ (equidistant structure)
    nor the link of $\sigma'$ (linked noncrossing instance $\mathcal{I}$).
    Hence, it cannot intersect $\gamma$.
    This is a contradiction.
\end{proof}

~\Cref{clm:one-order} allows us to enumerate
$\mathcal{N}$ as $\sigma_1,\sigma_2,\ldots,\sigma_L$ according
to $\pi_{p''} = \pi_{q''}$. For brevity, let $\sigma_i = (p_i,q_i)$ for $1 \leq i \leq L$.

\subsubsection{A box, Type A and Type B pairs}
Let $1 \leq i \leq L$. Let $\hat{\gamma}_i$ be a minimal part
of the link of $\sigma_i$ that connects $\ppath{p_i q''}$ with $\ppath{q_i p''}$.
Let $\gamma_i$ be a curve consisting of a subpath of $\ppath{q'' p_i}$
from $q''$ to the endpoint of $\hat{\gamma}_i$, the curve $\hat{\gamma}_i$,
and a subpath of $\ppath{p'' q_i}$ from the endpoint of $\hat{\gamma}_i$
to $p''$. Finally, let $\gamma_i^\circ$ be the concatenation
of $\gamma'$ and $\gamma_i$.
By~\Cref{clm:direct1}, \Cref{clm:disjoint-black-side} and the assumption that our instance is a linked noncrossing one, we deduce that $\gamma_i^\circ$ is a closed curve
without self-intersections (only parts of $\ppath{q'' p'}$ and $\ppath{q'' p_i}$
may share a prefix and $\ppath{p'' q'}$ and $\ppath{p'' q_i}$ may share a prefix). We call $\gamma_i^\circ$ the \emph{box} of $i$. 

As $\gamma_i^\circ$ is a closed curve without self-intersections,
it splits the plane into two regions. 
Without loss of generality, 
we can fix the face immediately clockwise after the first edge
of $\ppath{q'' p'}$, looking from $q''$, as the outerface of the embedding;
this defines the \emph{inside} and \emph{outside} of $\gamma_i^\circ$.
(This is only so that the names ``inside'' and ``outside'' are natural;
 we could as well go with ``the same side as the said face'' and ``the other side
 as the said face''.)

Note that for every $j \neq i$, 
the path $\ppath{q'' p_j}$ needs to leave $\ppath{q'' p_i}$ before
the endpoint of $\hat{\gamma}_i$, because our instance is linked noncrossing. Note that the
path $\ppath{q'' p_j}$ leaves $\ppath{q'' p_i}$ towards the inside of $\gamma_i^\circ$ for $j > i$ and towards the outside for $j < i$.
Similarly, the path $\ppath{p'' q_j}$ leaves $\ppath{p'' q_i}$ 
to the inside of $\gamma_i^\circ$ for $j < i$ and to the outside for $j > i$.

\begin{figure}[ht]
    \centering
    \begin{subfigure}{0.35\textwidth}
        \centering
        \includegraphics[width=1\linewidth]{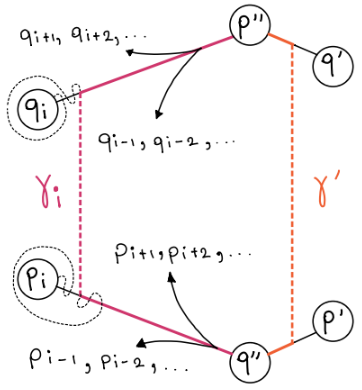}
        \caption{The box of $\sigma_i$}
        \label{fig:box}
    \end{subfigure}
    \begin{subfigure}{0.31\textwidth}
        \centering
        \includegraphics[width=1\linewidth]{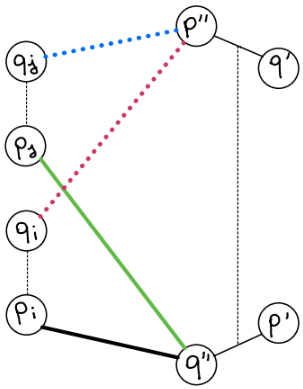}
        \caption{Type A Relationship}
        \label{fig:type_A}
    \end{subfigure}
    \begin{subfigure}{0.26\textwidth}
        \centering
        \includegraphics[width=1\linewidth]{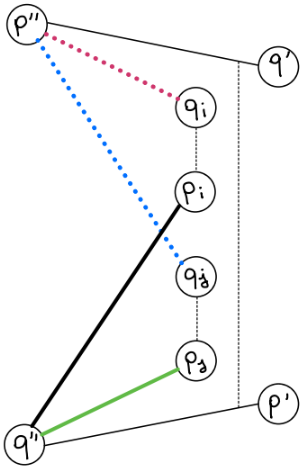}
        \caption{Type B Relationship}
        \label{fig:type_B}
    \end{subfigure}
    \caption{The dotted lines represent links. (a) $\gamma_i$ is the pink right-open trapeze shape, $\gamma^\prime$ is the orange left-open trapeze shape, and $\gamma_i^\circ$ is the concatenation of the two. $\gamma_i$ is drawn curly about $p_i, q_i$ to highlight the reason for taking the \textit{minimal} crossing component for the box, instead of the whole link. Shortest paths from $p^{\prime\prime}$ to later $q$s leave $\gamma_i^\circ$ to the outside, and shortest paths to earlier $q$s leave to the inside. Likewise, shortest paths from $q^{\prime\prime}$ to later $p$s, earlier $p$s leave $\gamma_i^\circ$ resp. to the inside, outside. (b) $\sigma_i, \sigma_j$ are two pairs with $i <_A j$. Note that $\ppath{q'' p_j}$ (green) intersects $\ppath{p'' q_i}$ (pink dotted) transversally, and $\ppath{p'' q_j}$ (blue dotted) does not intersect $\ppath{q'' p_i}$ (black). (c) Here $i <_B j$ instead. Symmetrically, $\ppath{p'' q_j}$ (blue) intersects $\ppath{q'' p_i}$ (black) transversally, and $\ppath{q'' p_j}$ (green) does not intersect $\ppath{p'' q_i}$ (pink).}
    \label{fig:box-typeA-typeB}
\end{figure}

We observe that the link of $\sigma_j$ cannot intersect the box $\gamma_i^\circ$ as it does not intersect links of $\sigma'$ and $\sigma_i$ (as our instance is linked noncrossing) and does not intersect $\ppath{p'q''}, \ppath{q''p_i}, \ppath{q_ip''}$ and $\ppath{p''q'}$ as all these paths are contained within $G^{\neg \sigma_j}$. This leads to the conclusion that there are two ways two pairs $\sigma_i$, $\sigma_j$
can be drawn for $i < j$, which we denote Type A and Type B;
see~\Cref{fig:box-typeA-typeB}.

\begin{enumerate}
    \item If the link of $\sigma_j$ lies \emph{outside}
    of the box $\gamma_i^\circ$, the relation $i < j$ is of Type A,
    which we denote $i <_A j$.

    \item If the link of $\sigma_j$ lies \emph{inside}
    of the box $\gamma_i^\circ$, the relation $i < j$ is of Type B,
    which we denote $i <_B j$.
\end{enumerate}

We make the following observation:
\begin{claim}\label{clm:type2cross}
Let $1 \leq i < j \leq L$.
If $i <_A j$, then $\ppath{q'' p_j}$ intersects $\ppath{p'' q_i}$ transversally,
while $\ppath{p'' q_j}$ does not intersect $\ppath{q'' p_i}$ transversally
(i.e., they touch or are disjoint).
Symmetrically, 
if $i <_B j$, then $\ppath{p'' q_j}$ intersects $\ppath{q'' p_i}$ transversally,
while $\ppath{q'' p_j}$ does not intersect $\ppath{p'' q_i}$ transversally. 
\end{claim}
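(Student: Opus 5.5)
The plan is a purely topological argument that combines three facts established just before the statement. First, the link of $\sigma_j$ does not meet the box $\gamma_i^\circ$. Second, $\ppath{q'' p_j}$ leaves $\ppath{q'' p_i}$ towards the inside of $\gamma_i^\circ$ for $j>i$. Third, $\ppath{p'' q_j}$ leaves $\ppath{p'' q_i}$ towards the outside for $j>i$. I treat the case $i <_A j$; the Type B case is symmetric, with the roles of $p''$ and $q''$ swapped and inside and outside exchanged.

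Assume $i <_A j$, so the link $\hat\gamma_j$, together with its endpoints on $\ppath{q''p_j}$ and $\ppath{p''q_j}$, lies outside $\gamma_i^\circ$. Follow $\ppath{q'' p_j}$ from $q''$. After branching off $\ppath{q''p_i}$ it is strictly inside $\gamma_i^\circ$, but it must reach the endpoint of $\hat\gamma_j$, which is outside. So it must cross $\gamma_i^\circ$. I then rule out every other part of the box as the crossing location:
\begin{itemize}
\item it cannot cross $\gamma'$, by the remark that $\ppath{q''p}$ meets $\gamma'$ only in its common prefix with $\ppath{q''p'}$;
\item it cannot cross $\hat\gamma_i$, since the link of $\sigma_i$ avoids $G^{\neg\sigma_i}$;
\item it cannot cross $\ppath{q''p_i}$ again, because two shortest paths from $q''$ share only a prefix.
\end{itemize}
Hence the crossing happens on the subpath of $\ppath{p''q_i}$ that forms part of the box. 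Because the path goes from one side of the simple closed curve to the other, the shared subpath is left on opposite sides, so the intersection with $\ppath{p''q_i}$ is transversal. Making this precise means checking it against the paper's definition via a loop $\sigma$ around $\ppath{p''q_i}$, using that the terminals $q''$ and $p_j$ are of degree one.

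For the second half, consider $\ppath{p''q_j}$. It branches off $\ppath{p''q_i}$ to the outside and ends at the link $\hat\gamma_j$, which is outside. Suppose it met $\ppath{q''p_i}$ transversally. Then, viewing the geometry relative to the box, it would enter the inside of $\gamma_i^\circ$. To get back out, it would need a second crossing of the box, and this is impossible:
\begin{itemize}
\item it cannot pass through $\gamma'$ or $\hat\gamma_i$, by the same reasons as above;
\item it cannot pass through $\ppath{p''q_i}$ beyond the shared prefix, by uniqueness of shortest paths from $p''$;
\item it cannot re-cross $\ppath{q''p_i}$, since two shortest paths intersect in a single subpath.
\end{itemize}
Therefore $\ppath{p''q_j}$ has only a touching intersection with $\ppath{q''p_i}$, or none. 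This parity argument on the simple closed curve is the core of the proof.

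The main obstacle is the bookkeeping between ``intersects transversally'' as the paper defines it, which refers to the whole paths $\ppath{p''q_i}$ and $\ppath{q''p_j}$ with their terminal endpoints, and the side-of-box crossing argument, which uses only the portions of the paths inside the box. In particular, I need to argue that the parts of $\ppath{p''q_i}$ beyond the endpoint of $\hat\gamma_i$, which lie outside the box curve, do not change the transversality verdict. For this I would choose the loop in the definition to hug $\ppath{p''q_i}$ closely and use that any shared subpath is a single contiguous segment.
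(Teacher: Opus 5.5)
Your proposal is essentially the paper's own proof. For $i<_A j$, the path $\ppath{q''p_j}$ leaves $\ppath{q''p_i}$ into the inside of $\gamma_i^\circ$, must reach the outside, and can only cross the box through its $\ppath{p''q_i}$ portion. Meanwhile $\ppath{p''q_j}$ leaves to the outside, must end outside, and has no part of the box through which to get back out after a transversal crossing. Type~B is symmetric.

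One step is left implicit, by you and equally by the paper. The second half also has to handle a transversal meeting of $\ppath{p''q_j}$ with the part of $\ppath{q''p_i}$ beyond the endpoint of $\hat\gamma_i$. That part is not on the box and may lie outside it, so such a meeting does not force $\ppath{p''q_j}$ into the box, and your inference ``it would enter the inside'' does not cover it. A sketch of one fix, which I have not checked in every configuration, is a mod-$2$ crossing count between the closed curves $\ppath{q''p_i}\cup\Gamma(\sigma_i)\cup\ppath{q_ip''}\cup\gamma'$ and $\ppath{q''p_j}\cup\Gamma(\sigma_j)\cup\ppath{q_jp''}\cup\gamma'$. Their common stretch along $\gamma'$ is left toward the inside and entered from the outside, so it contributes one crossing. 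The crossing from the first half contributes another. Every other pair of pieces is disjoint, so $\ppath{p''q_j}\cap\ppath{q''p_i}$ must contribute an even number, i.e.\ be non-transversal.
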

\begin{proof}
    Assume $i <_A j$, that is, the link of $\sigma_j$ is outside
    $\gamma_i^\circ$.
    The path $\ppath{q'' p_j}$ leaves $\ppath{q'' p_i}$ to the inside
    of $\gamma_i^\circ$, does not intersect transversally 
    $\ppath{q'' p'}$, does not intersect the link of $\sigma'$
    nor the link of $\sigma_i$, and does not intersect $\ppath{p'' q'}$
    (\Cref{clm:direct1}). 
    Hence, to reach $p_j$ outside of $\gamma_i^\circ$, it needs
    to intersect $\ppath{p'' q_i}$ transversally.
    Similarly, $\ppath{p'' q_j}$ leaves $\ppath{p'' q_i}$ to the
    outside of $\gamma_i^\circ$, does not intersect transversally
    $\ppath{p'' q'}$, and does not intersect the links of $\sigma'$ nor $\sigma_i$ nor $\ppath{q'' p'}$ (\Cref{clm:direct1}). 
    Hence, to reach $q_j$ outside $\gamma_i^\circ$,
    the path $\ppath{p'' q_j}$ cannot cross $\ppath{q'' p_i}$ transversally.

    The proof for $i <_B j$ is symmetrical.
\end{proof}

\subsubsection{Type A and Type B chains}

We now show that Types A and B do not mix well.
\begin{claim} \label{clm:transitivity}
    If $1 \leq i < j < k \leq L$, then:
    
    \begin{enumerate}
        \item $i <_{A} j <_{A} k \implies i <_{A} k$,
        \item $i <_{B} j <_{B} k \implies i <_{B} k$. 
    \end{enumerate}
    
\end{claim}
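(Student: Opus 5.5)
We argue by nesting of boxes and of the region outside a box. The two items are symmetric (swap the roles of $p''$ and $q''$, equivalently inside and outside), so we concentrate on item 1 and indicate the changes for item 2 at the end.

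First we record how the boxes nest, which we read off from the orders $\pi_{p''}=\pi_{q''}$. The box $\gamma_i^\circ$ consists of $\gamma'$ together with $\gamma_i$. The curve $\gamma_i$ runs along a prefix of $\ppath{q'' p_i}$, then along $\hat\gamma_i$, then along a suffix of $\ppath{p'' q_i}$. For $j>i$ we know from the paragraph before the Type A/B definition that $\ppath{q'' p_j}$ leaves $\ppath{q''p_i}$ to the inside of $\gamma_i^\circ$. We also know that $\ppath{p''q_j}$ leaves $\ppath{p''q_i}$ to the outside. The link of $\sigma_k$ never meets any box $\gamma_i^\circ$ for $k\neq i$, so it lies entirely inside or entirely outside it. Moreover, $\hat\gamma_k$ is a connected piece of the link of $\sigma_k$ joining $\ppath{q''p_k}$ and $\ppath{p''q_k}$. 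Hence "the link of $\sigma_k$ is outside $\gamma_i^\circ$" is equivalent to "the curve $\gamma_k$, minus its initial segments shared with $\gamma'$, meets the interior of $\gamma_i^\circ$ only where $\ppath{q''p_k}$ does".

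For item 1, assume $i<_A j$ and $j<_A k$. We must show that the link of $\sigma_k$ lies outside $\gamma_i^\circ$. Suppose instead that it lies inside. By Claim~\ref{clm:type2cross} applied to $i<_A j$, the path $\ppath{q''p_j}$ crosses $\ppath{p''q_i}$ transversally, so $\hat\gamma_j$ and $p_j$ lie outside $\gamma_i^\circ$. The link of $\sigma_k$ lies outside $\gamma_j^\circ$ by $j<_A k$, and inside $\gamma_i^\circ$ by assumption.

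Next we locate $q_k$ relative to the two boxes, which needs three facts. The path $\ppath{p''q_k}$ leaves $\ppath{p''q_i}$ to the outside of $\gamma_i^\circ$. It does not cross $\ppath{q''p_i}$ transversally (Claim~\ref{clm:type2cross} would force the opposite behaviour if $i<_B k$, and here we derive the contradiction directly). Finally, it avoids $\gamma'$ and the link of $\sigma_i$ (Claim~\ref{clm:direct1} and the noncrossing property). Together these say that to reach $q_k$ inside $\gamma_i^\circ$, the path $\ppath{p''q_k}$ must cross $\ppath{q''p_i}$ transversally. Since $\ppath{p''q_k}$ and $\ppath{q''p_i}$ are the paths $\ppath{q_k p''}$ and $\ppath{p_i q''}$ with $i\neq k$, we then apply the uncrossing Claim~\ref{clm:uncross} with the equidistant equalities $\Udist(p_i,q'')=\Udist(p_k,q'')$ and $\Udist(q_i,p'')=\Udist(q_k,p'')$. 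A crossing forces a shared vertex, and this alone gives no contradiction. So instead we use the third pair $j$.

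The path $\ppath{p''q_k}$ must pass from outside $\gamma_i^\circ$ to inside it. It must do so while staying outside $\gamma_j^\circ$ near its endpoint. It leaves $\ppath{p''q_j}$ to the outside of $\gamma_j^\circ$, consistent with $j<k$. The region inside $\gamma_i^\circ$ but outside $\gamma_j^\circ$ is bounded by pieces of $\ppath{q''p_i}$, $\ppath{p''q_i}$, $\ppath{q''p_j}$, $\ppath{p''q_j}$ and the two links. We check that $\ppath{p''q_k}$ cannot enter this region:
\begin{itemize}
\item it cannot cross $\ppath{p''q_i}$ or $\ppath{p''q_j}$ transversally, since shortest paths from a common root $p''$ only share prefixes;
\item it cannot cross the links, by the noncrossing property;
\item the boundary piece coming from $\ppath{q''p_j}$ that lies inside $\gamma_i^\circ$ sits on the inner side of the transversal crossing of $\ppath{q''p_j}$ with $\ppath{p''q_i}$, and $\ppath{p''q_k}$ also cannot cross it transversally without violating Claim~\ref{clm:type2cross} for $j<_A k$.
\end{itemize}
The only remaining entry is across $\ppath{q''p_i}$. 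That path is separated from this region by $\ppath{q''p_j}$, since $\ppath{q''p_j}$ branches off $\ppath{q''p_i}$ to the inside. This gives the contradiction, so the link of $\sigma_k$ is outside $\gamma_i^\circ$, i.e.\ $i<_A k$.

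Item 2 follows by the symmetric argument, exchanging $p''\leftrightarrow q''$, $p\leftrightarrow q$ and inside $\leftrightarrow$ outside.

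The main obstacle is the planar bookkeeping in the last step: verifying that the region inside $\gamma_i^\circ$ but outside $\gamma_j^\circ$ has exactly the claimed boundary pieces and that every possible entry point is excluded. This relies on the Jordan curve theorem together with Claims~\ref{clm:direct1}, \ref{clm:direct2}, \ref{clm:disjoint-black-side} and \ref{clm:type2cross}.
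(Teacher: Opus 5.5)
Your argument fails at its last step. The cause is deeper than bookkeeping: you are tracking the wrong path of $\sigma_k$.

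Let $D$ be the region inside $\gamma_i^\circ$ but outside $\gamma_j^\circ$; it contains the link of $\sigma_k$. The path $\ppath{q''p_j}$ leaves $\ppath{q''p_i}$ to the inside of $\gamma_i^\circ$, and $\ppath{q''p_i}$ leaves $\ppath{q''p_j}$ to the outside of $\gamma_j^\circ$. So box $j$ is the narrower one near $q''$. The part of box $i$ on the $\gamma'$-side of $\ppath{q''p_j}$ lies in box $j$, and $D$ is the pocket between $\ppath{q''p_j}$ and the continuation of $\gamma_i$. Concretely, $D$ is bounded by:
\begin{itemize}
\item the part of $\ppath{q''p_i}$ after the branch point;
\item $\hat\gamma_i$;
\item the part of $\ppath{p''q_i}$ up to its transversal crossing with $\ppath{q''p_j}$;
\item the arc of $\ppath{q''p_j}$ back to the branch point.
\end{itemize}
Neither $\ppath{p''q_j}$ nor $\hat\gamma_j$ bounds $D$; both lie outside box $i$.

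Hence your final claim, that $\ppath{q''p_i}$ is separated from $D$ by $\ppath{q''p_j}$, is false: $\ppath{q''p_i}$ is part of $\partial D$. Nothing prevents $\ppath{p''q_k}$ from leaving $\ppath{p''q_j}$ near $p''$ to the outside of both boxes and then entering $D$ by crossing $\ppath{q''p_i}$ transversally. That is exactly what \Cref{clm:type2cross} predicts under your hypothesis $i<_B k$. Your earlier sentence that $\ppath{p''q_k}$ does not cross $\ppath{q''p_i}$ transversally asserts the opposite of that claim. So the constraints you place on $\ppath{p''q_k}$ can all hold together, and no contradiction can be extracted from this path.

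The contradiction must come from $\ppath{q''p_k}$, as in the paper.
\begin{itemize}
\item Under $i<_B k$, this path starts inside box $i$ and never crosses $\ppath{p''q_i}$ transversally, so it stays in the closed box $i$.
\item Under $i<_A j$, the path $\ppath{p''q_j}$ stays in the closed outside of box $i$.
\item Yet $j<_A k$ forces $\ppath{q''p_k}$ to cross $\ppath{p''q_j}$ transversally, which is impossible.
\end{itemize}
In your own language: $p_k\in D$, but $\ppath{q''p_k}$ starts inside box $j$ and, being confined to box $i$, could reach $D$ only across the arc of $\ppath{q''p_j}$. A shortest path from the same root $q''$ cannot cross that arc transversally. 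For item 2 the roles swap, and it is $\ppath{p''q_k}$, not $\ppath{q''p_k}$, that carries the contradiction.
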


\begin{proof}
    We start with the first point.
    For contradiction, suppose that $i <_A j <_{A} k$, but \textbf{not} $i <_{A} k$. 

    The path $\ppath{q'' p_k}$ leaves $\ppath{q'' p_j}$ to the inside
    of $\gamma_j^\circ$ and leaves $\ppath{q'' p_i}$ to the inside of $\gamma_i^\circ$. 
    By~\Cref{clm:type2cross}:
    \begin{itemize}
        \item the path $\ppath{p''q_j}$ does not contain any point inside $\gamma_i^\circ$ (as $i <_A j$), 
        \item the path $\ppath{q'' p_k}$ does not contain any point
        outside $\gamma_i^\circ$ (as $i <_B k$),
        \item the path $\ppath{q'' p_k}$ needs to intersect 
         $\ppath{p'' q_j}$ transversally (as $j <_A k$). 
    \end{itemize}
    This is a contradiction.

    The second point is symmetric, but let us outline it for completeness.
    For contradiction, suppose that $i <_B j <_{B} k$, but \textbf{not} $i <_{B} k$.
    The path $\ppath{p'' q_k}$ leaves $\ppath{p'' q_j}$ to the outside
    of $\gamma_j^\circ$ and leaves $\ppath{p'' q_i}$ to the outside of $\gamma_i^\circ$. 
    By~\Cref{clm:type2cross}:
    \begin{itemize}
        \item the path $\ppath{q''p_j}$ does not contain any point outside $\gamma_i^\circ$ (as $i <_B j$), 
        \item the path $\ppath{p'' q_k}$ does not contain any point
        inside $\gamma_i^\circ$ (as $i <_A k$),
        \item the path $\ppath{p'' q_k}$ needs to intersect 
         $\ppath{q'' p_j}$ transversally (as $j <_B k$). 
    \end{itemize}
    This is again a contradiction.
\end{proof}

Let $\mathcal{S} = \{(i,i)~|~1 \leq i \leq L\} \cup \{(i, j) ~|~ 1 \leq i < j\leq L \wedge i <_A j\}$. (Type A is chosen arbitrarily here, Type B would work equally well.) 
~\Cref{clm:transitivity} implies the following.
\begin{claim}
    $\mathcal{S}$ is a partial order on $\{1,2,\ldots,L\}$. 
\end{claim}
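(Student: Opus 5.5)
We need to verify that $\mathcal{S}$ is reflexive, antisymmetric and transitive on $\{1,\ldots,L\}$.

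\emph{Reflexivity} holds by construction, since every diagonal pair $(i,i)$ is included in $\mathcal{S}$.

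\emph{Antisymmetry.} Every off-diagonal element $(i,j)\in\mathcal{S}$ satisfies $i<j$. Hence $(i,j)$ and $(j,i)$ cannot both lie in $\mathcal{S}$ unless $i=j$.

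\emph{Transitivity.} Take $(i,j),(j,k)\in\mathcal{S}$. If $i=j$ or $j=k$, then $(i,k)$ is one of the two given elements and there is nothing to show. Otherwise $i<j<k$, with $i<_A j$ and $j<_A k$. The first item of \Cref{clm:transitivity} gives $i<_A k$, so $(i,k)\in\mathcal{S}$ as required.

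The only subtle point is that the relations $<_A$ and $<_B$ are well defined and exhaustive for every pair $i<j$, and \Cref{clm:transitivity} needs exactly this. They are well defined because the link of $\sigma_j$ does not meet the box $\gamma_i^\circ$ and is connected. It therefore lies entirely inside or entirely outside $\gamma_i^\circ$, which gives a clean dichotomy. With \Cref{clm:transitivity} available, this consequence is essentially immediate, and no real obstacle arises.
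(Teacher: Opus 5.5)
Your proof is correct and follows the paper's proof: reflexivity holds by construction, antisymmetry holds because every $(i,j)\in\mathcal{S}$ has $i\le j$, and transitivity follows from the first item of \Cref{clm:transitivity}. Your explicit treatment of the degenerate cases $i=j$ or $j=k$, and your remark that the Type A/Type B dichotomy is exhaustive, are details the paper leaves implicit.
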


\begin{proof}

    Clearly $\leq_{A}$ is a homogeneous relation. Then what remains is to verify the three axioms: reflexivity, anti-symmetry, and transitivity.

    \begin{itemize}
        \item \textbf{Reflexivity:} Immediate
        as we explicitly added $\{(i,i)~|~1 \leq i \leq L\}$ to $\mathcal{S}$.

        \item \textbf{Anti-symmetry:} 
        Immediate as $(i,j) \in \mathcal{S}$ implies $i \leq j$.
        
        \item \textbf{Transitivity:} This axiom is equivalent to~\Cref{clm:transitivity}. (In this case specifically \textit{$1$.}, and \textit{$2$.} for the analogous type B argument.) 
    \end{itemize}
\end{proof}

The remainder of the argument of this subsection relies on Dilworth’s theorem, a fundamental result in the theory of partially ordered sets, presented below.
A set $I \subseteq \{1,\ldots,L\}$ is \emph{type-homogeneous} if
either for every $i,j \in I$ with $i<j$, we have $i <_A j$,
    or for every $i,j \in I$ with $i <j$, we have $i <_B j$. 

\begin{theorem} [Dilworth's Theorem]
     In every finite partially ordered set, the size of a maximal antichain equals the size of a minimum chain cover.
\end{theorem}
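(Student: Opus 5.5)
We prove the two inequalities separately. Let $(X,\le)$ be the finite poset, with $n=|X|$.

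The easy direction is that a maximum antichain is no larger than a minimum chain cover. Any two elements of a chain are comparable, so an antichain meets each chain of a cover in at most one element. Hence every antichain has size at most the number of chains in any chain cover.

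For the reverse inequality we reduce to bipartite matching and use König's theorem (maximum matching equals minimum vertex cover in bipartite graphs).

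\emph{The auxiliary graph.} Build a bipartite graph $B$ with sides $X^{-}=\{x^{-} : x\in X\}$ and $X^{+}=\{x^{+} : x\in X\}$. Put an edge $x^{-}y^{+}$ whenever $x<y$ strictly.

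\emph{From matchings to chain covers.} Given a matching $M$ in $B$, let $x\to y$ mean that $x^{-}y^{+}\in M$. Because $M$ is a matching, every element has at most one successor and at most one predecessor under $\to$. Since $x\to y$ implies $x<y$, there are no cycles. So $\to$ decomposes $X$ into vertex-disjoint directed paths, exactly $n-|M|$ of them, since each matched edge merges two paths. By transitivity of $<$, each path $x_1\to x_2\to\dots\to x_t$ is a chain. Therefore the minimum size of a chain cover is at most $n-\nu(B)$, where $\nu(B)$ denotes the maximum matching size.

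\emph{From a vertex cover to an antichain.} By König's theorem there is a vertex cover $C$ of $B$ with $|C|=\nu(B)$. Let $A$ be the set of elements $x\in X$ such that neither $x^{-}$ nor $x^{+}$ lies in $C$. Each vertex of $C$ rules out at most one element, so
\[ |A|\ge n-|C| = n-\nu(B). \]
Suppose $x,y\in A$ with $x<y$. Then the edge $x^{-}y^{+}$ has no endpoint in $C$, contradicting that $C$ is a vertex cover. So $A$ is an antichain.

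Combining the two steps, the maximum antichain has size at least $n-\nu(B)$, which is at least the minimum chain-cover size. Together with the easy direction this gives equality.

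The step I expect to need the most care is the matching-to-chain-cover correspondence: checking that the paths really partition $X$, and that their number is exactly $n-|M|$. Everything else is a direct application of König's theorem.

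If one prefers to avoid König's theorem, the fallback is Perles' induction on $|X|$. Let $w$ be the width, i.e. the maximum antichain size.
\begin{enumerate}
\item If some maximum antichain $A$ is neither the set of all maximal elements nor the set of all minimal elements, split $X$ into $\{x : x\le a \text{ for some } a\in A\}$ and $\{x : x\ge a \text{ for some } a\in A\}$. Both parts are proper subsets. Apply induction to each to get $w$ chains, and glue the chains at the elements of $A$.
\item Otherwise, remove a comparable pair consisting of a minimal element below a maximal element. This drops the width by exactly one, and we recurse.
\end{enumerate}
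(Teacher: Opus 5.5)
The paper does not prove this statement. Dilworth's theorem is quoted as a classical result and used only once, in \Cref{clm:dilworth}, so there is no proof of the paper's to compare against.

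Your argument is Fulkerson's reduction to K\H{o}nig's theorem, and it is correct and complete as written.
\begin{itemize}
\item The matching $M$ gives a digraph with in- and out-degree at most one. It is acyclic because $\to$ refines the strict order, so it splits into exactly $n-|M|$ directed paths, each a chain by transitivity.
\item Every vertex of the cover $C$ is a copy of a single element, so at least $n-\nu(B)$ elements are uncovered. These form an antichain, since any comparability $x<y$ among them would leave the edge $x^-y^+$ uncovered.
\end{itemize}

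You were right to read ``maximal antichain'' in the statement as ``maximum antichain''. Taken literally the statement is false: in the poset with $x<y$ and $x<z$, the antichain $\{x\}$ is inclusion-maximal, yet two chains are needed.

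Your Perles fallback is a valid outline. In case~2 the width drops because every maximum antichain of $X$ is either the set of all minimal elements (which contains the removed minimal element) or the set of all maximal elements (which contains the removed maximal element). In case~1 the gluing needs one further observation: each $a\in A$ is the top of its chain in the lower part and the bottom of its chain in the upper part.

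\Cref{clm:dilworth} only uses the nontrivial inequality, in the form: if no chain has $K$ elements, then some antichain has at least $\lceil L/(K-1)\rceil$ elements. This also follows from the elementary dual argument of Mirsky. Label each element by the number of elements in a longest chain ending at it. If $x<y$ then $y$ has a strictly larger label than $x$, so each label class is an antichain, and there are at most $K-1$ classes. Your proof gives the full theorem, whereas the paper's application could avoid both Dilworth's and K\H{o}nig's theorems.
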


\begin{claim}\label{clm:dilworth}
    If $|\mathcal{N}| = L \geq 1 + (K-1)^2$ for an integer $K$,
    then there exists a set $I \subseteq \{1,\ldots,L\}$ of size $K$
    that is type-homogeneous.
\end{claim}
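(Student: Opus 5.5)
We apply Dilworth's Theorem to the partial order $\mathcal{S}$ on $\{1,\ldots,L\}$, where $i$ and $j$ with $i<j$ are comparable exactly when $i <_A j$. Every pair $i<j$ falls into exactly one of Type A or Type B: the link of $\sigma_j$ does not meet the box $\gamma_i^\circ$, so it lies either inside or outside. Hence two elements $i<j$ are incomparable in $\mathcal{S}$ precisely when $i <_B j$.

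We then split into two cases according to the size of a maximum antichain.

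\textbf{Case 1: some antichain $I$ has size at least $K$.} Any two elements $i<j$ of $I$ are incomparable, so $i <_B j$. Hence any $K$-element subset of $I$ is type-homogeneous (of Type B). Transitivity of $<_B$ from \Cref{clm:transitivity} is not even needed here; it only shows that the incomparability relation behaves consistently.

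\textbf{Case 2: every antichain has size at most $K-1$.} By Dilworth's Theorem, $\{1,\ldots,L\}$ is covered by at most $K-1$ chains. By pigeonhole, since $L \geq 1+(K-1)^2$, some chain $C$ has at least
\[
\lceil L/(K-1) \rceil \geq K
\]
elements (for $K\ge 2$; the case $K=1$ is trivial since $L\ge 1$). In a chain, every two elements $i<j$ are comparable, meaning $i <_A j$. Any $K$ elements of $C$ therefore form a type-homogeneous set of Type A.

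The only point needing care is the dichotomy itself: that every pair $i<j$ is either $<_A$ or $<_B$, and not both. This follows from the observation preceding the definition of the types. Since the link of $\sigma_j$ is a connected curve disjoint from $\gamma_i^\circ$, it lies entirely on one side of the box. Everything else is the routine Erdős–Szekeres-style application of Dilworth's Theorem, with the partial-order property supplied by the preceding claim.
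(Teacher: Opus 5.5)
Your proof is correct and follows the paper's argument: apply Dilworth's theorem to the poset $\mathcal{S}$, so that a long chain yields a Type~A set and a large antichain yields a Type~B set. The only difference is that you split on the maximum antichain size, whereas the paper splits on the maximum chain length, and this is immaterial; your handling of $K=1$ and of the A/B dichotomy is also sound.
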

\begin{proof}
    Consider the partially ordered set $(\{1,\ldots,L\}, \mathcal{S})$.
    If it contains a chain of length at least $K$, then any $K$ elements
    of this chain form a set $I$ such that for every $i,j \in I$ with
    $i < j$ we have $i <_A j$. 
    Otherwise, any minimum chain cover needs to contain at least
    $\lceil L / (K-1) \rceil \geq K$ chains.
    By Dilworth's theorem, there exists an antichain of size at least $K$.
    Then, any $K$ elements of this antichain form a set $I$ such that
    for every $i,j \in I$ with $i < j$, we have $i <_B j$.
\end{proof}

\subsubsection{Final contradiction}

\begin{figure}[ht]
    \centering
    \begin{subfigure}{0.4\textwidth}
        \centering
        \includegraphics[width=1\linewidth]{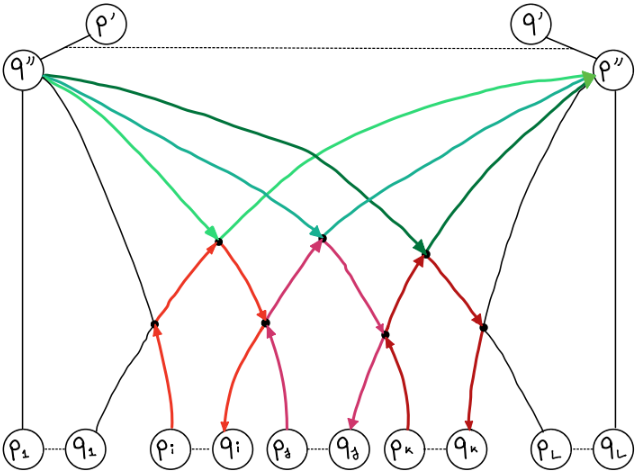}
        \caption{Intersections of Paths $\ppath{q'' p_i}$ and $\ppath{p'' q_i}$}
        \label{fig:final-picture}
    \end{subfigure}
    \begin{subfigure}{0.4\textwidth}
        \centering
        \includegraphics[width=1\linewidth]{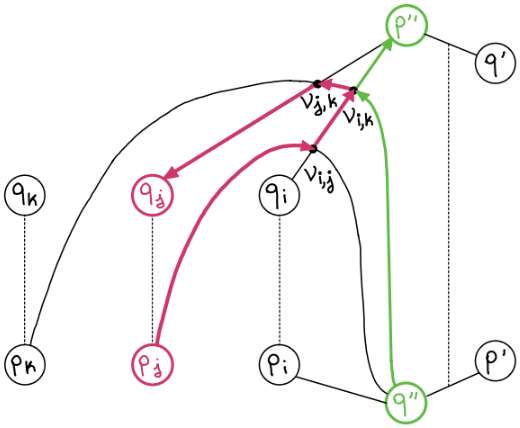}
        \caption{Key Construction for~\Cref{clm:final-punch}}
        \label{fig:final-construction}
    \end{subfigure}
    \caption{$i <_A j <_A k$ are three consecutive indices of $I$. (a) The paths $\ppath{q'' p_i}$, $\ppath{p'' q_i}$ form a web like structure which can mostly be covered by long paths (of length at least R), shown here as greens and reds. The arrows are explained by: (b) The long paths $P_j$ (pink) connecting $p_j$ and $q_j$, and $Q_j$ (green) connecting $p''$ and $q''$, are constructed in~\Cref{clm:final-punch} following the direction of the arrows. It's clear from the picture that the intersections $v_{i, j}, v_{i, k}$, and $v_{j, k}$ can be equal for consecutive indices without breaking either $P_j$ or $Q_j$.}   
    \label{fig:final-contradiction-drawn}
\end{figure}

The time has come to show the final contradiction, thereby completing the proof of~\Cref{lem:noncrossing-bound}. 
In the following claim we bound the size of a type-homogeneous set.
The main intuition is as follows:~\Cref{clm:type2cross}   implies
that the intersections of paths $\ppath{q'' p_i}$ and $\ppath{p'' q_i}$
look as in~\Cref{fig:final-contradiction-drawn} (for Type A;
the case of Type B is symmetric). 
We show how to cut $2|I|$ paths of length at most $(1-\varepsilon)R$
each and arrange them into $2|I|-4$ paths that connect terminals
of the same pair; thus they are of length at least $R$. This
implies $|I| = \Oh(\varepsilon^{-1})$, as desired.

\begin{claim}\label{clm:final-punch}
Let $I \subseteq \{1,\ldots,L\}$ be a type-homogeneous set.
Then, $|I| = \Oh(\varepsilon^{-1})$.
\end{claim}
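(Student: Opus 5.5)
By the symmetry between Type A and Type B (swap the roles of $p''$ and $q''$), we may assume $i <_A j$ for all $i<j$ in $I$. Enumerate $I = \{i_1 < i_2 < \dots < i_m\}$. The goal is to exhibit $2m$ paths of the form $\ppath{q'' p_{i_a}}$ and $\ppath{p'' q_{i_a}}$, cut them at suitable intersection vertices, and reassemble the pieces into walks. Each walk should connect either $p_{i_a}$ with $q_{i_a}$, so its length is at least $R$, or $p''$ with $q''$, so its length is at least $R$ as well. At most a constant number of pieces should be lost.

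\textbf{Step 1: locate the crossing vertices.} For consecutive indices $j = i_a$, $k = i_{a+1}$, \Cref{clm:type2cross} says that $\ppath{q'' p_k}$ crosses $\ppath{p'' q_j}$ transversally. Fix such an intersection vertex $v_{j,k}$. The box structure pins down where these vertices sit: $\ppath{q''p_k}$ leaves $\ppath{q''p_j}$ to the inside of $\gamma_j^\circ$ and must exit through $\ppath{p'' q_j}$. I would use the boxes $\gamma_j^\circ$ to show that along $\ppath{p'' q_j}$ the vertex $v_{i_{a-1},j}$ (where the previous path $\ppath{q''p_j}$ enters) comes no later than $v_{j,k}$. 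This yields the monotone ``web'' of the figure: along $\ppath{q'' p_k}$ we first reach $v_{j,k}$ and only later the vertices related to $k$'s successor, with the analogous ordering along each $\ppath{p'' q_j}$.

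\textbf{Step 2: reassemble the pieces.} Split each $\ppath{q''p_k}$ at $v_{j,k}$ and each $\ppath{p''q_j}$ at $v_{j,k}$ and at $v_{i,j}$. Then form two families of walks:
\[ P_j:\ p_j \to (\text{along } \ppath{p_j q''}) \to v_{i,j} \to (\text{along } \ppath{p''q_j}) \to q_j, \]
\[ Q_j:\ q'' \to v_{j,k} \to (\text{along } \ppath{p''q_j}) \to p''. \]
Here $P_j$ is a $p_j$–$q_j$ walk, so it has length greater than $R$. The walk $Q_j$ joins $q''$ and $p''$, so it has length at least $\dist(p'',q'') > R$. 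The key point is that, for each middle index, these walks use every segment of every path at most once in total. So for all indices except the first and last one or two,
\[ \sum_{a} \bigl(|P_{i_a}| + |Q_{i_a}|\bigr) \le \sum_a \bigl(\dist(q'',p_{i_a}) + \dist(p'',q_{i_a})\bigr) \le 2m(1-\varepsilon)R, \]
while the left-hand side is at least $(2m-4)R$. This gives $4R \ge 2m\varepsilon R$, that is, $m \le 2/\varepsilon$. (This is the same uncrossing computation as in \Cref{clm:uncross}, applied to a whole chain.)

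\textbf{Main obstacle.} The hard part is Step 1: establishing the order of $v_{i,j}$ and $v_{j,k}$ along $\ppath{p''q_j}$. This order is exactly what makes the segments used by $P_j$ and $Q_j$ disjoint, so that no length is double counted; coincident vertices are harmless, as the figure notes. I would prove it by contradiction using the box $\gamma_j^\circ$. If the order were reversed, then $\ppath{q''p_k}$ would have to re-enter the box, which would cross $\ppath{q''p_j}$ transversally or cross $\gamma'$. Both are excluded by \Cref{clm:type2cross}, \Cref{clm:direct1}, and the fact that shortest paths from $q''$ form a tree.
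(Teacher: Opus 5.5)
Your overall plan is the paper's: cut the $2|I|$ paths $\ppath{q''p_a}$, $\ppath{p''q_a}$ and reassemble the pieces into $2|I|-4$ walks of length at least $R$, giving the same final inequality. The gap is that the rearrangement in Step~2 does not exist. The vertex $v_{i,j}$ lies on $\ppath{q''p_j}\cap\ppath{p''q_i}$, not on $\ppath{p''q_j}$. In fact $\ppath{q''p_j}$ and $\ppath{p''q_j}$ are disjoint, by \Cref{clm:disjoint-black-side} applied to $\sigma_j$ and $\sigma''$. So your $P_j$, which switches from $\ppath{p_jq''}$ to $\ppath{p''q_j}$ at $v_{i,j}$, is not a walk. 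The same index slip makes the Step~1 ordering meaningless, since it compares $v_{i_{a-1},j}$ with $v_{j,k}$ along $\ppath{p''q_j}$, a path $v_{i_{a-1},j}$ does not lie on. Any $p_j$--$q_j$ walk built from these pieces has to detour through other paths of the family. Splitting only at consecutive crossings gives no such detour. At $v_{j,k}$, the paths $\ppath{q''p_k}$ and $\ppath{p''q_j}$ recombine only into a $q''$--$p''$ walk and a short $p_k$--$q_j$ walk. That is just \Cref{clm:uncross} and yields nothing.

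The missing idea is to use type-homogeneity for the non-consecutive pair too. For three consecutive indices $i<j<k$ of $I$ you also have $i<_A k$, so by \Cref{clm:type2cross} the path $\ppath{q''p_k}$ crosses $\ppath{p''q_i}$ at some vertex $v_{i,k}$. The paper then defines two walks:
\begin{itemize}
\item $P_j$ goes from $p_j$ along $\ppath{p_jq''}$ to $v_{i,j}$, then along $\ppath{q_ip''}$ to $v_{i,k}$, then along $\ppath{q''p_k}$ to $v_{j,k}$, then along $\ppath{p''q_j}$ to $q_j$;
\item $Q_j$ goes from $q''$ along $\ppath{q''p_k}$ to $v_{i,k}$, then along $\ppath{p''q_i}$ to $p''$.
\end{itemize}
The orderings actually needed are along two other paths. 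Along $\ppath{p''q_i}$, starting from $q_i$, you meet $v_{i,j}$ before $v_{i,k}$. Along $\ppath{q''p_k}$, starting from $q''$, you meet $v_{i,k}$ before $v_{j,k}$. The paper reads these off from the clockwise order of $p_i,p_j,p_k$ around $q''$ and of $q_i,q_j,q_k$ around $p''$ (\Cref{clm:one-order}). With them, $\ppath{q''p_k}$ splits exactly into the pieces used by $Q_j$, $P_j$ and $P_k$. Likewise $\ppath{p''q_i}$ splits exactly into the pieces used by $P_i$, $P_j$ and $Q_j$. This justifies that no edge is used twice, and then $2|I|(1-\varepsilon)R\ge(2|I|-4)R$ gives $|I|\le 2\varepsilon^{-1}$.
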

\begin{proof}
    We consider the case of Type A, that is, for every $i, j \in I$ with
    $i < j$ it holds that $i <_A j$. the case of Type B is symmetric
    (with swapping the roles of terminals $p$ with terminals $q$).

    By~\Cref{clm:type2cross}, for every $i,j \in I$ with $i < j$,
    the paths $\ppath{q'' p_j}$ and $\ppath{p'' q_i}$ intersect transversally.
    Let $v_{i,j}$ be any vertex of this intersection.

    Pick arbitrary $i, j, k \in I$ with $i < j < k$. 
    Recall that,
    by~\Cref{clm:type2cross}, $\ppath{q'' p_j}$ and $\ppath{q'' p_k}$
    both intersect $\ppath{p'' q_i}$ transversally, 
    and also $\ppath{q'' p_k}$ intersects $\ppath{p'' q_j}$ transversally.
    
    As the clockwise order of the leaves of the shortest path tree rooted
    at $q''$ is $p_i$, $p_j$, $p_k$ (and then the curve $\gamma'$),
    on the path $\ppath{p'' q_i}$ 
    the intersection with $\ppath{q'' p_j}$ lies
    closer to $q_i$ than the intersection with $\ppath{q'' p_k}$
    (these intersections may be the same single vertex). 
    Consequently, on $\ppath{p'' q_i}$
    the vertices $v_{i,j}$ for $j > i$, $j \in I$
    lie in the increasing order of the index $j$
    if we traverse the path from $p_i$ to $q''$ (the vertices $v_{i,j}$ may be equal for consecutive indices $j \in I$). 

    Similarly, as the clockwise order of the leaves of the shortest
    path tree rooted at $p''$ is $q_i$, $q_j$, $q_k$ (and then the curve $\gamma'$), on the path $\ppath{q'' p_k}$
    the intersection with $\ppath{p'' q_j}$ lies closer to $p_k$ than
    the intersection with $\ppath{p'' q_i}$
    (these intersections may be the same single vertex). 
    Consequently, on $\ppath{q'' p_k}$ 
    the vertices $v_{i,k}$ for $i < k$, $i \in I$
    lie in the increasing order of the index $i$
    if we traverse the path from $q''$ to $p_k$ (the vertices $v_{i,k}$ may be equal for consecutive indices $i \in I$). 

    We now perform the rearrangement.
    Let $i < j < k$ be three consecutive indices of $I$. 
    We construct a path $P_j$ connecting $p_j$ and $q_j$ as follows:
    start in $p_j$, continue along $\ppath{p_j q''}$ to $v_{i,j}$
    then along $\ppath{q_i p''}$ in the direction of $p''$ to $v_{i,k}$, then 
    along $\ppath{q'' p_k}$ in the direction of $p_k$ to $v_{j,k}$, 
    and finally along $\ppath{p'' q_j}$ to $q_j$.
    We construct a path $Q_j$ connecting $p''$ and $q''$ as follows:
    start in $q''$, continue along $\ppath{q'' p_k}$ to $v_{i,k}$,
    and then along $\ppath{p'' q_i}$ to $p''$.
    There are $|I|-2$ paths $P_j$ and $|I|-2$ paths $Q_j$, and each
    path $P_j$ or $Q_j$ is of length at least $R$.
    
    In the rearrangement we only used edges
    of paths $\ppath{p'' q_i}$ and $\ppath{q'' p_i}$ for $i \in I$
    and we do not use the same edge twice. Each of these paths
    is of length at most $(1-\varepsilon)R$. Hence,
    \[ 2|I| \cdot (1-\varepsilon)R \geq (2|I|-4) \cdot R.\]
    This implies $|I| \leq 2\varepsilon^{-1}$, as desired.
\end{proof}
This finishes the proof of~\Cref{lem:noncrossing-bound}.

%% file: kcentre.tex
\section{Coreset for $k$-Center}\label{sec:kcenter}

In this section we build on the ideas developed so far to show a coreset for \textsc{$k$-Center}. We first show a reduction from $(k, \eps)$-Comatching to $k$-Center $\eps$-Coreset, by mimicking the proof of~\Cref{thm:comatching2coreset} shown in ~\Cref{sec:comatchingtocoreset}. However, $(k, \eps)$-Comatching is still not a sufficiently structured object to analyze directly. To address this, we apply a Ramsey-like argument which extracts either a comatching or double ladder from a large $(k, \eps)$-Comatching. Since we already analyzed the size of comatching, we dedicate~\Cref{sec:doubleladder} to bound the size of double ladder. 

We now recall the notion of $k$-Center $\eps$-coreset, stated here in a slightly more convenient form. This definition is equivalent to~\Cref{def:coreset-kcenter}, since $(1-\eps)^{-1} \approx 1 + \eps$, for small $\eps>0$.
\begin{definition}[$k$-Center $\varepsilon$-coreset]
    A $k$-Center $\varepsilon$-coreset in a metric space $(V, \dist)$ is a subset $C$ of $V$ such that for every set $X$ of at most $k$ potential centers, it holds that
    \[\max_{v\in V}\dist(v, X) \leq (1+\varepsilon) \max_{v\in C}\dist(v, X)\]
\end{definition}

\subsection{$k$-Comatching to $k$-center \texorpdfstring{$\varepsilon$}{e}-coreset}

In this section, we prove \Cref{thm:kcenter-comatching} restated below.

\KCenterComatching*

We start by mimicking the proof of~\Cref{thm:comatching2coreset}
to link the size of a coreset with an appropriately adapted notion of 
a comatching. 
\begin{definition}[$(k, \varepsilon)$-comatching]
    For a real $\varepsilon > 0$, a $(k, \varepsilon)$-comatching in a metric space $(V, \dist)$ is a set $\mathcal{M}\subseteq V \times V^k$ such that there exists a  real $R>0$ such that:
    \begin{enumerate}
        \item for every $(p,X)\in \mathcal{M}$, we have
    $\dist(p, X)> R$, and
        \item for every distinct $(p_1, X_1)$, $(p_2, X_2)\in \mathcal{M}$, we have $\dist(p_1, X_2)\leq (1-\varepsilon)R$ and $\dist(p_2, X_1)\leq (1 - \varepsilon)R$.
    \end{enumerate}
\end{definition}

\begin{proof}[Proof of~\Cref{thm:kcenter-comatching}]
This proof follows very closely the proof of~\Cref{thm:comatching2coreset}.
As we are now studying coresets for $k$-center, some individual vertices become
sets of at most vertices; it will be more convenient to think of them as 
ordered tuples of $k$ vertices, i.e., elements of $V^k$.

Fix $0 < \varepsilon < 1$ and positive integers $k$, $d$.
Let $(V,\dist)$ be a metric space 
such that the VC-dimension of the ball set system of $(V,\dist)$
is at most $d$.
Let $P \subseteq V$.
Without loss of generality, assume $|P| > k + 1$, as otherwise we can just take
the entire $P$ as a coreset.

For every tuple $\alpha$ of vertices of $V$, 
pick $z_\alpha \in P$ 
to be a vertex maximizing the distance
from $\alpha$, breaking ties arbitrarily. 

We start by taking care of the analog of the set $V_\mathrm{far}$ from
the proof of~\Cref{thm:comatching2coreset}, that is, tuples
of $k$ vertices that are terrible $k$-centers.
Construct a tuple $\alpha^0 = (v_1^0,v_2^0,\ldots,v_{k+1}^0)$ as follows.
Let $v_1^0 \in P$ be arbitrary. For $i=2,3,\ldots,k+1$, let
$v_i^0 = z_{(v_1^0, \ldots, v_{i-1}^0)}$. 
Let $\Delta = \dist(\alpha^0, z_{\alpha^0})$. Note that $\Delta > 0$ as $|P| > k+1$.

By the construction of $\alpha^0$, 
for every $1 \leq a < b \leq k+1$, $\dist(v_a^0, v_b^0) \geq \Delta$.
Consequently, every ball of radius less than $\Delta/2$ can contain
at most one vertex of $\alpha^0$, so for every $\alpha \in V^k$ we have $\dist(\alpha, z_\alpha) \geq \Delta/2$. 
On the other hand, for every $v \in P$ there exists $i \in [k+1]$ with
$\dist(v, v_i^0) \leq \Delta$, so for every $\alpha \in V^k$
we have 
\[ \max_{v \in P} \dist(v, \alpha) \leq \Delta + \max_{i \in [k+1]} \dist(v_{i}^0, \alpha). \]
Let 
\[ \mathcal{V} = \{\alpha \in V^k~|~\dist(\alpha, z_\alpha) \leq \varepsilon^{-1}\Delta \}. \]
We infer that for every $\alpha \in V^k \setminus \mathcal{V}$, 
\[ \max_{i \in [k+1]} \dist(v_{i}^0, \alpha) \geq \max_{v \in P} \dist(v,\alpha) - \Delta \geq (1-\varepsilon) \max_{v \in P} \dist(v,\alpha). \]
Hence, $\{v_i^0~|~i \in [k+1]\}$ is a $k$-center $\varepsilon$-coreset
of size $k+1$ for tuples not in $\mathcal{V}$.

In the remainder of the proof we focus on tuples of $\mathcal{V}$.
Let
\[ \delta := \frac{\varepsilon}{4} \cdot \Delta. \]

We partition $\mathcal{V}$ into sets $\mathcal{V}_i$ for nonnegative integers
$i$; a tuple $\alpha$ belongs to $\mathcal{V}_i$ if
\[ i \delta \leq \dist(\alpha, z_\alpha) < (i+1) \delta. \]
Note that, as $\Delta/2 \leq \dist(\alpha,z_\alpha) \leq \varepsilon^{-1} \Delta$
for $\alpha \in \mathcal{V}$, the set $\mathcal{V}_i$ is nonempty
only for $ \lfloor 2 \varepsilon^{-1} \rfloor \leq i \leq \lfloor 4 \varepsilon^{-2} \rfloor$. 

Fix $i$ for which $\mathcal{V}_i \neq \emptyset$.
For $\alpha \in \mathcal{V}_i$ define
\[ S_i^\alpha = \{u \in P \mid \dist(\alpha,u) \geq (i-1) \delta \}. \]

Clearly, $z_\alpha \in S_i^\alpha$.
We consider set system $\mathcal{A}_i = \{S_i^\alpha \mid \alpha \in \mathcal{V}_i\}$
of subsets of $V$.
First, we observe that it suffices to find a hitting set of $\mathcal{A}_i$.
\begin{claim}\label{cl:kcenter-hs2coreset}
  Any hitting set of $\mathcal{A}_i$ is also a $k$-center $\varepsilon$-coreset for furthest neighbor of $\mathcal{V}_i$.
\end{claim}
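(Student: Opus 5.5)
The plan is to mirror the proof of \Cref{cl:hs2coreset} almost verbatim, with the single vertex $v$ replaced by the tuple $\alpha \in \mathcal{V}_i$ and $\dist(v,\cdot)$ replaced by $\dist(\alpha,\cdot)$, the distance to the nearest element of $\alpha$. Let $X$ be a hitting set of $\mathcal{A}_i$ and fix $\alpha \in \mathcal{V}_i$. Since $X$ hits $S_i^\alpha$, we can pick $x \in X \cap S_i^\alpha$. By the definition of $S_i^\alpha$ we have $\dist(\alpha,x) \geq (i-1)\delta$, and by membership of $\alpha$ in $\mathcal{V}_i$ we have $\dist(\alpha,z_\alpha) < (i+1)\delta$. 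Subtracting gives
\[ \dist(\alpha,z_\alpha) - \dist(\alpha,x) < 2\delta. \]

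Next we turn this additive bound into a multiplicative one. For this we use the lower bound $\dist(\alpha,z_\alpha) \geq \Delta/2$, which was established above from the pairwise separation of the points $v_1^0,\ldots,v_{k+1}^0$: a ball of radius below $\Delta/2$ contains at most one of them, so some $v^0_a$ is at distance at least $\Delta/2$ from $\alpha$. Since $\delta = \frac{\varepsilon}{4}\Delta$, we get
\[ 2\delta = \tfrac{\varepsilon}{2}\Delta \leq \varepsilon \cdot \dist(\alpha,z_\alpha). \]
Combining the two displays yields $\dist(\alpha,x) \geq (1-\varepsilon)\dist(\alpha,z_\alpha) = (1-\varepsilon)\max_{p\in P}\dist(p,\alpha)$. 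This is exactly the requirement of \Cref{def:coreset-kcenter} for the query set $\alpha$. Because $\alpha \in \mathcal{V}_i$ was arbitrary, $X$ is a $k$-center $\varepsilon$-coreset for all queries in $\mathcal{V}_i$.

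No step is really an obstacle. The only point to check is that the lower bound $\dist(\alpha,z_\alpha)\ge\Delta/2$ holds for every tuple, including tuples whose points lie outside $P$. This is fine, since the pigeonhole argument over the $k+1$ pairwise far-apart points $v^0_a$ uses only the triangle inequality.

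Finally, there is a discrepancy between the equivalent form of the definition in this section, $\max_{v\in V}\dist(v,\alpha) \le (1+\varepsilon)\max_{v\in C}\dist(v,\alpha)$, and the $(1-\varepsilon)$ form. It is handled by running the argument in the $(1-\varepsilon)$ form of \Cref{def:coreset-kcenter}, restricted to $P$, which is the notion the theorem concerns. If the $(1+\varepsilon)$ form is wanted instead, one adjusts $\varepsilon$ by a constant factor, for instance by using $\varepsilon/2$ in the choice of $\delta$.
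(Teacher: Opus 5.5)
Your proof is correct and follows the paper's argument essentially verbatim. You pick $x \in X \cap S_i^\alpha$, use the bucket width to get $\dist(\alpha,z_\alpha)-\dist(\alpha,x)\le 2\delta$, and then turn this into the multiplicative bound via $\dist(\alpha,z_\alpha)\ge\Delta/2$ (by pigeonhole over the $k+1$ pairwise $\Delta$-separated points $v^0_a$) and $\delta=\frac{\varepsilon}{4}\Delta$; like the paper's proof, you work in the $(1-\varepsilon)$ form.
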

\begin{proof}
The proof is identical to the proof of~\Cref{cl:hs2coreset}.
We copy-paste it below for completeness.

  Let $X$ be a hitting set of $\mathcal{A}_i$.
  Let $\alpha \in \mathcal{V}_i$ and let $x \in X \cap S_i^\alpha$, which exists as $X$ is a hitting set. 
  Then, by the definition of $S_i^\alpha$, 
  \[ \dist(\alpha, x) \geq (i-1) \delta. \]
  Hence,
  \[ \dist(\alpha,z_\alpha) - \dist(\alpha, x) \leq 2\delta. \]
  On the other hand, $\dist(\alpha,z_\alpha) \geq \Delta/2$. Thus, 
  \[ \dist(\alpha,z_\alpha) - \dist(\alpha,x) \leq 2\delta \leq 4 \cdot \frac{\delta}{\Delta} \cdot \dist(\alpha,z_\alpha) \leq \varepsilon \cdot \dist(\alpha,z_\alpha). \]
  Thus, indeed $X$ is a $k$-center $\varepsilon$-coreset for furthest neighbor of $\mathcal{V}_i$. 
\end{proof}

To use~\Cref{thm:hs-round}, we need to bound the VC-dimension
of $\mathcal{A}_i$.
\begin{claim} \label{cl:vc-dim-ai}
    The VC-dimension of $\mathcal{A}_i$ is $\Oh(d k \log k)$. 
\end{claim}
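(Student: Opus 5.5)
We need the VC-dimension of $\mathcal{A}_i=\{S_i^\alpha\}$, where $S_i^\alpha=\{u\in P : \dist(\alpha,u)\ge (i-1)\delta\}$ and $\alpha=(a_1,\dots,a_k)$. The plan is to write each $S_i^\alpha$ as an intersection of $k$ ball complements, translate that into a union of $k$ balls by complementation, and then apply \Cref{lm:vc-facts}.

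First, since $\dist(\alpha,u)=\min_{t\in[k]}\dist(a_t,u)$, the condition $\dist(\alpha,u)\ge (i-1)\delta$ holds exactly when $\dist(a_t,u)\ge (i-1)\delta$ for every $t$. Hence
\[ S_i^\alpha = \bigcap_{t=1}^k \bigl(P\setminus B^{\circ}(a_t,(i-1)\delta)\bigr), \]
where $B^\circ(v,r)=\{u:\dist(u,v)<r\}$ denotes the open ball. Taking complements within $P$, the set $P\setminus S_i^\alpha$ is the union of the $k$ open balls $B^\circ(a_t,(i-1)\delta)$ restricted to $P$.

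Next, we use two standard facts. A family and the family of complements of its members (within the fixed ground set $P$) shatter exactly the same sets, so they have equal VC-dimension. The same holds for a family of unions and the corresponding family of intersections of complements. It therefore suffices to show that the family $\mathcal{C}$ of complements of open balls, restricted to $P$, has VC-dimension at most $d$. Then $\mathcal{A}_i\subseteq \mathcal{C}^k_{\sqcap}$ (allowing repeated entries covers tuples with repeated points), and \Cref{lm:vc-facts} gives VC-dimension $\Oh(dk\log k)$ for $\mathcal{C}^k_\sqcap$, hence for $\mathcal{A}_i$ as well, since a subfamily cannot have larger VC-dimension.

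The main obstacle is that $\mathcal{B}_V$ consists of closed balls, while here we meet open balls. To handle this, fix a finite set $Y\subseteq P$ shattered by open balls. Because $Y$ is finite, each open ball $B^\circ(v,r)$ traces on $Y$ the same set as the closed ball $B(v,r')$, where $r'$ is the largest distance $\dist(v,y)<r$ for $y\in Y$. If no such $y$ exists, the trace is empty, which a closed ball of radius smaller than every $\dist(v,y)$ also realizes; if $v\in Y$ itself the radius issue is handled the same way, and a point $y$ with $\dist(v,y)=0$ only arises when $y=v$. So open and closed balls trace the same subsets of $Y$, the shattered sets coincide, and the VC-dimension of open balls is at most $d$. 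Restricting to the ground set $P$ and passing to complements does not change this, which completes the plan.
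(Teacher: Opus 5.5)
Your proposal is correct and follows the paper's argument: write each $S_i^\alpha$ as an intersection of $k$ ball complements, use that complementation preserves VC-dimension, and apply \Cref{lm:vc-facts}. You also check that open balls trace the same subsets of a finite set as closed balls; the paper glosses over this (it simply calls these sets complements of balls), and your treatment of it is sound.
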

\begin{proof} Observe that every set in $\mathcal{A}_i$ is an intersection of $k$ complements of balls in $(V,\dist)$.  Since taking complements of all sets in a set system does not change its VC-dimension, the claim follows from \Cref{lm:vc-facts}.
\end{proof}

Consequently,
by~\Cref{thm:hs-round}, to bound the size of a hitting set of $\mathcal{A}_i$
it suffices to bound the optimum value of its LP relaxation;
let $\tau_i^\ast$ be this optimum value.
We connect $\tau_i^\ast$ to the bound on the size of a $(k,\varepsilon)$-comatching in $(V,\dist)$
via the following simple rounding.
\begin{claim}\label{cl:kcenter-lp2comatching}
There exists a $(k, \varepsilon^2/4)$-comatching in $(V,\dist)$ of size
$\lfloor \tau_i^\ast/4 \rfloor$.
\end{claim}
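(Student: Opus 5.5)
The plan is to copy the proof of \Cref{cl:lp2comatching}, replacing single query vertices by tuples $\alpha \in \mathcal{V}_i$. Take an optimal solution $(y_\alpha)_{\alpha \in \mathcal{V}_i}$ of the dual (set packing) LP for \textsc{Hitting Set} on $\mathcal{A}_i$, so that $\sum_\alpha y_\alpha = \tau_i^\ast$. Let $\mu$ be the distribution on $\mathcal{V}_i$ that samples $\alpha$ with probability $y_\alpha/\tau_i^\ast$. We say that $\beta$ \emph{threatens} $\alpha$ if $z_\alpha \in S_i^\beta$. The dual constraint for the element $z_\alpha \in P$ is $\sum_{\beta : z_\alpha \in S_i^\beta} y_\beta \le 1$. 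It gives, for fixed $\alpha$ and $\beta \sim \mu$, that $\alpha$ is threatened by $\beta$ with probability at most $1/\tau_i^\ast$.

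Set $K = \lfloor \tau_i^\ast/4\rfloor$ and sample $\alpha_1,\dots,\alpha_{2K}$ independently from $\mu$. The expected number of ordered pairs $(a,b)$ with $a\neq b$ and $\alpha_a$ threatening $\alpha_b$ is at most $4K^2/\tau_i^\ast \le K$. Fix an outcome with at most $K$ such pairs, and for each such pair discard $\alpha_a$. This leaves a set $W$ of at least $K$ tuples with no mutual threats. If two sampled tuples coincide, each threatens the other, since $z_\alpha \in S_i^\alpha$. Such duplicates are therefore removed by the discarding step, so the surviving tuples are distinct.

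Output $\mathcal{M} = \{(z_\alpha, \alpha) : \alpha \in W\}$ with $R = i\delta$. Note that a $k$-tuple viewed as a set has at most $k$ points, which is the form required by $\mathcal{M} \subseteq V \times V^k$. For the first condition, $\alpha \in \mathcal{V}_i$ gives $\dist(z_\alpha,\alpha) \ge i\delta$. The definition demands a strict inequality, so I would take $R$ slightly below $i\delta$. This causes no loss, because the other side satisfies $\dist \le (i-1)\delta$, which leaves slack.

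For the second condition, take distinct $\alpha,\beta \in W$. Since $\beta$ does not threaten $\alpha$, we have $z_\alpha \notin S_i^\beta$, that is, $\dist(\beta, z_\alpha) < (i-1)\delta$. Symmetrically, $\alpha$ does not threaten $\beta$, so $\dist(\alpha, z_\beta) < (i-1)\delta$. Both directions of the symmetric requirement therefore hold, which is exactly why the discarding step had to remove threats in both directions. Finally, $i \le \lfloor 4\varepsilon^{-2}\rfloor$ gives $(i-1)/i \le 1-\varepsilon^2/4$, so both cross distances are at most $(1-\varepsilon^2/4)R$ after the tiny adjustment of $R$.

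The one point to check is whether the $z_\alpha$ for different $\alpha \in W$ must be distinct. They are. If $z_\alpha = z_\beta$, then $z_\alpha \in S_i^\beta$ because $z_\beta \in S_i^\beta$, so $\beta$ threatens $\alpha$, which is excluded for distinct $\alpha,\beta \in W$. Hence the pairs in $\mathcal{M}$ are genuinely distinct, and $|\mathcal{M}| = |W| \ge K$. I expect no real obstacle: the argument is a verbatim transfer of \Cref{cl:lp2comatching}, and only the bookkeeping for tuples versus sets and for the strict inequality needs care.
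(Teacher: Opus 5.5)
Your proposal is correct and follows essentially the same argument as the paper, which transfers the proof of \Cref{cl:lp2comatching} to tuples verbatim: take the dual LP optimum, sample $2K$ tuples from $\mu$, discard threatening pairs, and output $\{(z_\alpha,\alpha) : \alpha\in W\}$ with $R \approx i\delta$. Your extra bookkeeping is also correct and patches details the paper glosses over: duplicate samples are discarded, the $z_\alpha$ are distinct, and $R$ is shifted slightly below $i\delta$ using the strict bound $\dist(\beta,z_\alpha)<(i-1)\delta$ so that $\dist(z_\alpha,\alpha)>R$ holds.
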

\begin{proof}
The proof is identical to the proof of~\Cref{cl:lp2comatching}.
We copy-paste it below for completeness.

Consider an optimum solution to the dual LP to \textsc{Hitting Set} on $\mathcal{A}_i$;
for clarity of notation, we use $(y_\alpha)_{\alpha \in \mathcal{V}_i}$ for the dual variables
instead of $(y_{S_i^\alpha})_{\alpha \in \mathcal{V}_i}$. 
Clearly, $\sum_{\alpha \in \mathcal{V}_i} y_\alpha = \tau_i^\ast$.
Let $\mu$ be a probability distribution over $\mathcal{V}_i$, where $\alpha \in \mathcal{V}_i$ is sampled 
with probability $y_\alpha / \tau_i^\ast$.

For $\beta,\alpha \in \mathcal{V}_i$, we say that \emph{$\beta$ threatens $\alpha$} if $z_\alpha \in S_i^\beta$. 
Note that any $\alpha \in \mathcal{V}_i$ threatens itself. 
Let $\alpha \in \mathcal{V}_i$ be fixed and let $\beta$ be sampled from $\mathcal{V}_i$ according to $\mu$.
Then,

\begin{equation}\label{eq:kcenter-threaten}
 \mathrm{Prob}(\alpha \text{~is threatened by}~\beta) \leq  \sum_{\beta: z_\alpha\in S_\beta} \frac{y_\beta}{\tau_i^*} \leq \frac{1}{\tau_i^\ast}. 
\end{equation}

Indeed, \eqref{eq:kcenter-threaten} follows directly from the dual LP constraint for the 
vertex $z_\alpha$:
\[ \sum_{\beta \in \mathcal{V}_i \mid z_\alpha \in S_i^\beta} y_\beta \leq 1. \]

Let $K := \lfloor \tau_i^\ast / 4 \rfloor$ and let $\alpha_1,\alpha_2,\ldots,\alpha_{2K}$ be vertices
sampled independently from $\mathcal{V}_i$ according to the distribution $\mu$.
Then, by~\eqref{eq:kcenter-threaten},
the expected number of pairs $(i,j)$, $1\leq i,j \leq 2K$, $i \neq j$ such that
$\alpha_i$ threatens $\alpha_j$ is at most 
 \[ 4K^2 \cdot \frac{1}{\tau_i^\ast} \leq K. \]
Therefore, there exists a choice of $\alpha_1,\alpha_2,\ldots,\alpha_{2K}$ such that the number of such pairs
$(i,j)$ is at most $K$. 
Pick such a choice and, for every pair $(i,j)$, $1 \leq i,j \leq 2K$, $i \neq j$,
 such that $\alpha_i$ threatens $\alpha_j$, discard $\alpha_i$. We are left with a set $W$
 of at least $K$ tuples of $\mathcal{V}_i$ such that for every $\beta,\alpha \in W$, $\beta \neq \alpha$, 
 $\beta$ does \emph{not} threaten $\alpha$. 

 We claim that $\{(z_\alpha, \alpha)~|~\alpha \in W\}$ is the desired $(k,\varepsilon^2/4)$-comatching in $G$;
 it clearly has size at least $K$.
 Since $W \subseteq \mathcal{V}_i$, we have for every $\alpha \in W$
  \[ \dist(\alpha,z_\alpha) \geq i \delta. \]
On the other hand, for every $\beta,\alpha \in W$, $\beta \neq \alpha$, as $\beta$ does not threaten $\alpha$,
we have $z_\alpha \notin S_i^\beta$ so 
 \[ \dist(z_\alpha, \beta) \leq (i-1)\delta. \]
As $i \leq \lfloor 4\varepsilon^{-2} \rfloor$, we have
\[ \frac{i-1}{i} \leq 1 - \frac{1}{\lfloor 4\varepsilon^{-2} \rfloor} \leq 
1 - \frac{1}{4\varepsilon^{-2}} = 
1 - \frac{\varepsilon^2}{4}. \]
\end{proof}

By~\Cref{cl:kcenter-lp2comatching}, $\tau_i^\ast \leq 4L+3$, where $L$ is the maximum
size of an $(\varepsilon^2/4)$-comatching in $(V,\dist)$.
By~\Cref{thm:hs-round} and \Cref{cl:vc-dim-ai}, there exists a hitting set $X_i$ of $\mathcal{A}_i$
of size $\Oh(d k \log k \cdot L \log L)$.
\Cref{cl:kcenter-hs2coreset} asserts that $X_i$ is a $k$-center $\varepsilon$-coreset
for furthest neighbor of $\mathcal{V}_i$. 
Hence, for the $k$-center $\varepsilon$-coreset of the whole $V^k$, it suffices to take the $\{v_1^0, \ldots, v_{k+1}^0\}$ and the union 
of $\Oh(\varepsilon^{-2})$ sets $X_i$ for which $\mathcal{V}_i$ is nonempty. 
This finishes the proof of~\Cref{thm:kcenter-comatching}. 
\end{proof}

\begin{remark}\label{rm:nktime} The running time in the construction of \Cref{thm:kcenter-comatching} is $n^{O(k)}$ since we need to solve a \textsc{Hitting Set} LP with $n^{O(k)}$ constraints, one for each $k$-tuple. We can reduce $n$ to $k^{O(\poly(1/\eps))}$ by applying the coreset  for $k$-center of in planar metrics by  Bourneuf and Pilipczuk~\cite{BP25}; this coreset can be constructed in $n^{O(1)}$ time. Now we run our algorithm on the planar metrics induced by the coreset of Bourneuf and Pilipczuk. The running time is now $(2^{O(\poly(k/\eps))} + n^{O(1)})$. 
\end{remark}

\subsection{Reduction from $k$-comatching to double ladder}

In this section, we reduce bounding the size of  $k$-comatchings to bounding the size of double ladders as stated in \Cref{thm:comatching-to-ladder}.

\KComatchingToDoubleLadder*

First, we recall the definition of a double ladder. 

\DoubleLadderDef*

To prove~\Cref{thm:comatching-to-ladder},
we will use the polynomial Ramsey bounds for graphs of bounded VC-dimension
of Nguyen, Scott, and Seymour~\cite{NSS24}.
A \emph{VC-dimension} of an (undirected, simple) graph $G$ is the VC-dimension
of the set system of open neighborhoods, that is, $\{N_G(v)~|~v \in V(G)\}$. 

\begin{theorem}[Theorem 1.2 of \cite{NSS24}]\label{thm:NSS}
    For every $d\geq 1$, there exists $\delta>0$ such that every $n$-vertex graph $G$ of VC-dimension at most $d$ contains either a clique or an independent set of size at least $n^\delta$.
\end{theorem}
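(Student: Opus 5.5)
The statement above is a cited result (Theorem 1.2 of~\cite{NSS24}), and the paper uses it as a black box. If I had to reprove it, I would follow the now-standard route from ``polynomial Rödl'' to ``polynomial pure pairs'' to Erdős--Hajnal. Throughout, the only property of $G$ I would use is the bounded VC-dimension $d$ of the neighborhood set system. Through Haussler's packing lemma this says: for any $\eta>0$, the vertices of $G$ can be grouped into $\Oh(\eta^{-d})$ classes so that within each class all neighborhoods have pairwise symmetric difference at most $\eta n$.

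\textbf{Step 1 (polynomial Rödl).} I would show that for every $\eta\in(0,1/2)$, every $n$-vertex graph of VC-dimension at most $d$ has a set $S$ with $|S|\geq \eta^{c_d} n$ that is $\eta$-sparse or $\eta$-dense, i.e., has edge density below $\eta$ or above $1-\eta$. The argument would go as follows.
\begin{itemize}
\item Apply the packing lemma to obtain a class $U$ of size $\Omega(\eta^{d} n)$ whose vertices have almost identical neighborhoods.
\item Fix $u\in U$. Because all neighborhoods in $U$ nearly agree with $N(u)$, the bipartite pair $(U, N(u))$ is nearly complete and the pair $(U, V\setminus N(u))$ is nearly empty.
\item Iterate this inside $U$ (the Fox--Pach--Suk ultra-strong regularity argument). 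The number of rounds depends only on $d$, and after these rounds one obtains the sparse or dense set $S$ of polynomial size.
\end{itemize}

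\textbf{Step 2 (pure pairs with polynomial loss).} I would show that in an $\eta$-sparse graph of VC-dimension $d$ on $m$ vertices there is an anticomplete pair $(A,B)$ with $|A|,|B|\geq \eta^{c'_d} m$. Applied to the complement, the same statement gives complete pairs in $\eta$-dense graphs, since the complement also has bounded VC-dimension. The plan is as follows.
\begin{itemize}
\item Pass to a subset in which every degree is at most $2\eta m$.
\item Use the packing lemma to find a large set $A$ whose neighborhoods all lie close to a single set $N$ of size $\Oh(\eta m)$.
\item Take $B$ to be the vertices outside $N$ that have few neighbors in $A$.
\item Clean up by deleting the few vertices of $B$ that have many neighbors in $A$. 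Keeping this deletion under control with only polynomial loss in $\eta$ is where the real work lies.
\end{itemize}

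\textbf{Step 3 (from pure pairs to cliques or independent sets).} Combining Steps 1 and 2 with a fixed small $\eta$, every induced subgraph on $m$ vertices contains a pure pair $(A,B)$ with $|A|,|B|\geq c\, m$ for some constant $c=c(d)>0$. I would then argue by induction on $n$ with the potential $\alpha(G)^{a}\,\omega(G)^{a}\geq n$ for a suitable $a=a(d)$.
\begin{itemize}
\item If the pair is complete, then $\omega(G)\geq\omega(A)+\omega(B)$ and $\alpha(G)\geq\max(\alpha(A),\alpha(B))$.
\item If the pair is anticomplete, the roles of $\alpha$ and $\omega$ swap.
\end{itemize}
In either case the induction hypothesis applied to $A$ and $B$ closes provided $a$ is chosen large in terms of $\log(1/c)$. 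The potential then yields $\max(\alpha(G),\omega(G))\geq n^{1/(2a)}$, so $\delta=1/(2a)$ works.

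The main obstacle is Step 2. The Fox--Pach--Suk method naturally loses factors of the form $2^{-\mathrm{poly}(1/\eta)}$. That loss is harmless if $\eta$ stays fixed, but it breaks once the recursion is forced to shrink $\eta$, which happens in the genuinely sparse regime. There I would first try the Nguyen--Scott--Seymour device: exploit bounded VC-dimension once more to show that a sparse bounded-VC graph either contains a linear-size anticomplete pair directly or contains a large sparse subgraph with smaller maximum degree. Iterating this ``degree-halving'' only $\Oh(\log(1/\eta))$ times keeps all losses polynomial in $\eta$.
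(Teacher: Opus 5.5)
The paper does not prove this statement; it imports it from \cite{NSS24}. So your outline has to stand on its own, and it breaks at Steps~2--3. Step~3 assumes that, for a fixed small $\eta$, every $m$-vertex induced subgraph contains a pure pair $(A,B)$ with $|A|,|B|\geq c\,m$ and $c=c(d)>0$. That is the \emph{strong} Erd\H{o}s--Hajnal property, and graphs of bounded VC-dimension do not have it.

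First, a $C_4$-free graph has VC-dimension at most $2$. Shattering $\{x,y,z\}$ requires two distinct vertices whose neighbourhoods both contain $\{x,y\}$, and together with $x,y$ they form a $4$-cycle.

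Now take the Erd\H{o}s--R\'enyi polarity graph of $PG(2,q)$ with loops deleted. It has $n=q^2+q+1$ vertices, is $C_4$-free, and all its degrees lie in $\{q,q+1\}$. Its looped adjacency matrix $M$ satisfies $M^2=J+qI$, so every nontrivial eigenvalue has absolute value $\sqrt{q}$. The expander mixing lemma then gives, for disjoint $A,B$,
\[ e(A,B)\;\geq\;\frac{q+1}{n}\,|A|\,|B|-\sqrt{q}\,\sqrt{|A|\,|B|}\;>\;0 \qquad\text{whenever } |A|,|B|\geq 2n^{3/4}. \]
On the other hand, a complete pair cannot have a side larger than $q+1$. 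Hence, for any fixed $c>0$ and all large $n$, these graphs have no pure pair with both sides of size $cn$.

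They are also $\eta$-sparse for every fixed $\eta$ once $q$ is large. So the same example refutes Step~2 as stated, for every $d\geq 2$, every fixed $\eta$ and every $c'_d$. The induction in Step~3 is the standard, correct deduction of Erd\H{o}s--Hajnal from linear pure pairs, but its hypothesis is false here. The degree-halving repair cannot rescue it, because in this example no procedure can output a pure pair with both sides of linear size.

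The missing idea is that linear pure pairs should not be the target at all. Your Step~1, if it held for \emph{every} $\eta\in(0,1/2)$ with loss $\eta^{c_d}$ (the polynomial R\"odl property), would finish the proof by itself:
\begin{enumerate}
\item Take $\eta=n^{-1/(c_d+1)}$ and obtain, say, an $\eta$-sparse set $S$ with $|S|\geq n^{1/(c_d+1)}$.
\item Tur\'an's theorem gives $\alpha(G[S])\geq |S|/(1+\eta|S|)\geq \frac12 n^{1/(c_d+1)}$.
\item The $\eta$-dense case gives a clique of the same size, symmetrically.
\end{enumerate}
So Step~1 is not a preliminary reduction but the entire difficulty. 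The Fox--Pach--Suk iteration you sketch does not deliver it: your last paragraph concedes losses of order $2^{-\mathrm{poly}(1/\eta)}$, which rule out the choice $\eta=n^{-\Theta(1)}$.

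The argument of \cite{NSS24} combines ultra-strong regularity with iterative sparsification, and in outline it is organised around exactly this point. Ultra-strong regularity supplies a restricted set for one fixed $\eta_0=\eta_0(d)$. Sparsification steps then push the sparsity parameter down to an arbitrary $\eta$ while the accumulated losses stay polynomial in $\eta$, and no linear pure pairs are needed. Your sparse-regime idea, using bounded VC-dimension again to pass to a much sparser large induced subgraph, is the right kind of tool. It should feed this R\"odl-type statement directly, not the pure-pair induction of Step~3.
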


\begin{proof}[Proof of~\Cref{thm:comatching-to-ladder}.]
Fix positive integers $k,d$ and a real $\varepsilon > 0$.
Let $(V,\dist)$ be a metric space whose ball set system has
VC-dimension at most $d$ and let $\mathcal{M}$ be a $(k,\varepsilon)$-comatching
in $(V,\dist)$ of size $L > 1$.
Enumerate $\mathcal{M}$ as 
 \[\mathcal{M} = \{(p_1, X_1), (p_2, X_2), \ldots, (p_L, X_L)\}. \]
Furthermore, for each $j \in [L]$, enumerate
$X_j$ as $(q_j^1, q_j^2, \ldots, q_j^k)$.
Let $R$ be the distance from the definition of the $(k,\varepsilon)$-comatching
$\mathcal{M}$. 

For every $i \in [k]$, we define four graphs 
$H_i^\rightarrow$, $H_i^\leftarrow$,
$\bar{H}_i^\rightarrow$, $\bar{H}_i^\leftarrow$
as follows.
The vertex set of all four graphs is $[L]$.
For $1 \leq a < b \leq L$, we have $ab \in E(H_i^\rightarrow)$ if
and only if $\dist(p_a, q_b^i) \leq (1-\varepsilon)R$
and $ab \in E(\bar{H}_i^\rightarrow)$ if
and only if $\dist(p_a, q_b^i) \leq (1-\varepsilon/2)R$.
Similarly,
for $1 \leq a < b \leq L$, we have $ab \in E(H_i^\leftarrow)$
if and only if $\dist(p_b, q_a^i) \leq (1-\varepsilon)R$
and $ab \in E(\bar{H}_i^\leftarrow)$
if and only if $\dist(p_b, q_a^i) \leq (1-\varepsilon/2)R$
(Note the difference of $\varepsilon$ vs $\varepsilon/2$ 
between $H$ and $\bar{H}$.)

\begin{claim}\label{cl:Hi-vcdim}
    For every $i \in [k]$, the VC-dimension
    of $H_i^\leftarrow$, $H_i^\rightarrow$,
    $\bar{H}_i^\leftarrow$, $\bar{H}_i^\rightarrow$
    is $\Oh(d)$. 
\end{claim}
\begin{proof}
    Observe that every neighborhood in any of the four aforementioned graphs
    can be constructed by means of basic set operations
    from a constant number of sets
    being either balls in $(V,\dist)$, which has VC-dimension $d$, or sets of consecutive indices in $[L]$, and the family of such sets has VC-dimension
    $\Oh(1)$. The claim follows from bounds on VC-dimension
    under basic set operations, as shown in~\cite{DBLP:journals/jacm/BlumerEHW89}.
\end{proof}

Let $\delta = \delta(d)$ be the exponent of~\Cref{thm:NSS}
for the VC-dimension bound of~\Cref{cl:Hi-vcdim}.
Repeated application of~\Cref{thm:NSS} to every
$H_i^\leftarrow$, $H_i^\rightarrow$,
$\bar{H}_i^\leftarrow$, $\bar{H}_i^\rightarrow$,
yields a set $I \subseteq [L]$ 
of size at least $L^{\delta^{4k}}$ such that for every $i \in [k]$,
each of the graphs
$H_i^\leftarrow[I]$, $H_i^\rightarrow[I]$,
$\bar{H}_i^\leftarrow[I]$, $\bar{H}_i^\rightarrow[I]$,
is a clique or an independent set.

In the remainder of the proof, we will show that $(V,\dist)$ admits
either an $(\varepsilon/2)$-comatching or an $(\varepsilon/2)$-double ladder
of size $|I| \geq L^{\delta^{4k}}$. This will conclude the proof. 
Note that as $L > 1$, we have $|I| > 1$.

Consider first the case where there exists $i \in [k]$ such that
both $\bar{H}_i^\leftarrow[I]$ and $\bar{H}_i^\rightarrow[I]$ 
are cliques. 
Then, $\{(p_j, q_j^i)~|~j \in I\}$ is an $(\varepsilon/2)$-comatching
of size $|I| \geq L^{\delta^{4k}}$ and we are done. 
Assume then that no such $i$ exists.

By the definition of a comatching, for every $a,b \in I$, $a < b$,
we have $\dist(p_a, X_b) \leq (1-\varepsilon)R$ and 
$\dist(p_b, X_a) \leq (1-\varepsilon)R$. 
As $|I| > 1$, this implies that there exists $i^\rightarrow \in [k]$
such that $H_{i^\rightarrow}^\rightarrow[I]$ is a clique
and there exists $i^\leftarrow \in [k]$ such that 
$H_{i^\leftarrow}^\leftarrow[I]$ is a clique.
Clearly, $\bar{H}_{i^\rightarrow}^\rightarrow[I]$
and $\bar{H}_{i^\leftarrow}^\leftarrow[I]$ are also cliques.
By the case excluded in the previous paragraph, 
$i^\leftarrow \neq i^\rightarrow$ and, furthermore, 
both $\bar{H}_{i^\leftarrow}^\rightarrow[I]$
and $\bar{H}_{i^\rightarrow}^\leftarrow[I]$ are independent sets. 
In other words, for every $a,b \in I$, $a < b$, we have
\begin{align*}
    \dist(p_b, q_a^{i^\leftarrow}) & \leq (1-\varepsilon)R, \\
    \dist(p_b, q_a^{i^\rightarrow}) & > (1-\varepsilon/2)R, \\
    \dist(p_a, q_b^{i^\rightarrow}) & \leq (1-\varepsilon)R, \\
    \dist(p_a, q_b^{i^\leftarrow}) & > (1-\varepsilon/2)R. \\
\end{align*}
Recall that, from the definition of a $(k,\varepsilon)$-comatching,
for every $a \in I$ we have 
\[ \dist(p_a, q_a^{i^\leftarrow}) > R \quad \mathrm{and} \quad 
    \dist(p_a, q_a^{i^\rightarrow}) > R. \]
This implies that the sequence
\[ \left((p_a, q_a^{i^\leftarrow}, q_a^{i^\rightarrow})~|~ a \in I\right)\]
forms an $(\varepsilon/2)$-double ladder (for the distance $(1-\varepsilon/2)R$)
of size $|I| \geq L^{\delta^{4k}}$ in $(V,\dist)$.

This finishes the proof of~\Cref{thm:comatching-to-ladder}. \hfill$\square$
\end{proof}

Note that degree of the polynomial $f_{k,d}$ of~\Cref{thm:comatching-to-ladder}
is $\delta^{-4k}$, where $\delta$ is the degree
of~\Cref{thm:NSS} for VC-dimension $\Oh(d)$. 
A close inspection of~\cite{NSS24}
reveals that the dependency of $\delta^{-1}$ on the VC-dimension
is exponential; hence, the degree of $f_{k,d}$ depends exponentially
on $k$ and $d$.

\section{The Size of Double Ladders}\label{sec:doubleladder}

This section is devoted to the proof of~\Cref{thm:doubleladder-bound}, which establishes a bound on the size of $\eps$-double ladder. The structure of this section closely parallels that of~\Cref{sec:comatching-bound}, with most arguments applying on the set of triplets $\mathcal{D = }\{\dltri{i}\}_{i=1}^L$ in place of the pairs $\mathcal{M} = \{(p_i, q_i)\}_{i=1}^L$. The analysis diverges slightly during the final contradiction involving some topological case analysis, but the overall strategy remains the same. Consequently, the resulting bound differs from the $\eps$-comatching bound by a $\poly(\eps^{-1})$ factor.

Fix $\varepsilon > 0$. We will use a notation and definitions analogous to that introduced for $\varepsilon$-comatchings in~\Cref{sec:comatching-bound}. A pair $(G, \DL)$ is an \EMPH{instance} if $G$ is a plane edge-weighted graph and $\DL$ is an $\varepsilon$-double ladder. Again, we assume that $G$ comes with a fixed embedding in a plane. The vertices $p, \dlpt{b}, \dlpt{t}$ for any triple $\dltri{} \in \DL$ are called \EMPH{terminals}.

Similarly to the proof for comatchings, we assume that the lengths of all paths are pairwise different, and that the shortest paths between any pairs of terminals $(p_1, \dlpt{t}_2)$ and $(p_1, \dlpt{b}_2)$ are unique.
Again, we assume that our instance is \EMPH{clean}, that is, that all the terminals are of degree $1$, that $G$ contains no vertices of degree $2$, and that every edge lies on some shortest path between either $(p_i, \dlpt{t}_j)$ or $(p_j, \dlpt{b}_i)$ for some triples $\dltri{i}, \dltri{j} \in \DL$ with $i < j$.

We use the same notation: $\ppath{p_1 \dlpt{t}_2}, \ppath{p_1 \dlpt{b}_2}$ to denote the shortest paths between terminals; $G^{\DL_0}$ for $\DL_0 \subseteq \DL$ to denote the subgraph of $G$ consisting only on shortest paths between terminals of $\DL_0$; and $G^{\neg \sigma}$ as a shorthand for $G^{\DL - \{\sigma\}}$ for any $\sigma \in \DL$.

We say that an instance $(G, \DL)$ is \EMPH{local} if for every $\sigma \in \DL$, the terminals of $\sigma$ lie in the same face of $G^{\neg \sigma}$. For a triple $\sigma = \dltri{} \in \DL$, a \EMPH{link} is a curve in good position w.r.t. G, connecting $p$ to either $\dlpt{b}$ or $\dlpt{t}$, which does not intersect any edge of $G^{\neg \sigma}$. A \EMPH{linked instance} is an instance $(G, \DL)$ with two links between resp. $p$ and $\dlpt{t}$, and $p$ and $\dlpt{b}$, fixed for every $\dltri{} \in \DL$. A linked instance is \EMPH{noncrossing} if all links are pairwise disjoint.

The proof of~\Cref{thm:doubleladder-bound} will, similarly to comathing case, consist of the three following steps.

\begin{lemma}\label{lem:DL-to-local}
    For a real $\varepsilon > 0$, let $\LDL_{\mathrm{local}}(\varepsilon)$ be the maximum size
    of a $\varepsilon$-double ladder in a local instance. 
    Then, the maximum size of a $\varepsilon$-double ladder in any instance is of
    $\Oh(\LDL_{\mathrm{local}}(\varepsilon)^9 \varepsilon^{-8})$. 
\end{lemma}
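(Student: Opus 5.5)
We follow the proof of \Cref{lem:to-local} verbatim, replacing pairs by triples. First we need a double-ladder analogue of the separation toolbox of \Cref{ss:separation}. We take $R$ to be the double-ladder threshold and $\delta = \frac{\varepsilon}{100}R$. The diameter bound (analogue of \Cref{diam-lnci}) is proved the same way: chain paths $\ppath{p_a \dlpt{t}_b}$ and $\ppath{p_c \dlpt{b}_d}$ through a third triple. Its exact constant is irrelevant, since we only need $\Udist = \Oh(\varepsilon^{-1})$. \Cref{lem:Uprof-vc} then holds unchanged.

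The key adaptation is the analogue of \Cref{lem:sep-two-pairs}. Suppose $p_i$ and $p_j$ have the same rounded profile to $Z$ and $i<j$. Then $\ppath{p_i \dlpt{t}_j}$ avoids the $\delta$-neighborhood of $Z$: otherwise $\dist(p_j,\dlpt{t}_j) \le \dist(p_i,\dlpt{t}_j)+3\delta \le (1-\varepsilon)R+3\delta < R$, contradicting $\dist(p_j,\dlpt{t}_j)>R$. Symmetrically, $\ppath{p_j \dlpt{b}_i}$ avoids it, because $\dist(p_i,\dlpt{b}_i)>R$.

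For the analogue of \Cref{lem:sep-three-pairs}, take three triples $i<j<k$ with equal profiles of $p_i,p_j,p_k$ to $Z$, and let $X$ lie in the $\delta$-neighborhood of $Z$. We connect the terminals of each triple by walks that use only paths allowed by the previous paragraph. For the middle triple $j$, the walk $\ppath{p_j \dlpt{t}_k}$, $\ppath{\dlpt{t}_k p_i}$, $\ppath{p_i \dlpt{t}_j}$ joins $p_j$ to $\dlpt{t}_j$, and the walk $\ppath{p_j \dlpt{b}_i}$, $\ppath{\dlpt{b}_i p_k}$, $\ppath{p_k \dlpt{b}_j}$ joins $p_j$ to $\dlpt{b}_j$. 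For the endpoint triples only one of the two directions is available. Both terminals of triple $k$ can still be reached through $p_j$ and $p_i$ using $t$-paths to $\dlpt{t}_k$ and $b$-paths from $\dlpt{b}_{i}$, $\dlpt{b}_j$, but that route reaches $\dlpt{b}_k$ only via $p_{\ell}$ with $\ell>k$. To avoid this case analysis, we use five equal-profile triples instead of three: any triple that is not the first or last among them is fully connected, as shown above, and hence is not separated by $X$. So in every profile class, at most two triples are separated. Here ``separated'' means that $p$ is disconnected from $\dlpt{t}$ or from $\dlpt{b}$ in $G-X$; this is exactly the notion relevant to locality, since the terminals of $\sigma$ lie in one face of $G^{\neg\sigma}$ iff no such separation occurs. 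This yields the analogue of \Cref{lem:sep-bound} with the same $\Oh(\alpha^4\varepsilon^{-4}(1+\beta\varepsilon^{-1})^4)$ bound.

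Next we run the charging argument of \Cref{ss:to-local}. For every subset $I$ of $\LDL_{\mathrm{local}}(\varepsilon)+1$ triples, the sub-double-ladder $I$ (with the ordering inherited from $\DL$, so it is again a double ladder) is not local. Hence we can fix $\Lambda(I)=I\setminus\{\sigma\}$ for some $\sigma$ whose terminals are not all in one face of $G^{I\setminus\{\sigma\}}$. Pigeonhole gives a set $S$ of size $\LDL_{\mathrm{local}}(\varepsilon)$ with preimage of size $C > L/(\LDL_{\mathrm{local}}(\varepsilon)+1)-1$, and every triple in the preimage is separated by $V(G^S)$. The graph $G^S$ is a union of at most $\LDL_{\mathrm{local}}(\varepsilon)^2$ shortest paths of length at most $R$. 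The separation bound with $\alpha=\LDL_{\mathrm{local}}(\varepsilon)^2$ and $\beta=1$ gives $C=\Oh(\LDL_{\mathrm{local}}(\varepsilon)^8\varepsilon^{-8})$, and therefore $L=\Oh(\LDL_{\mathrm{local}}(\varepsilon)^9\varepsilon^{-8})$.

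The main obstacle is the three-triple separation lemma, because the double ladder lacks the symmetry of a comatching. The direction of each available short path depends on the index order, so the boundary indices need care. Taking five equal-profile triples and discarding the extremes resolves this, at only a constant-factor cost.
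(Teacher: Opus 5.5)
Your proposal is correct and follows the paper's route. The paper also reruns the charging argument of \Cref{lem:to-local} with a triple version of the separation bound (\Cref{lem:sep-bound-DL}). Its core is exactly your two walks: for equal-profile $p_i,p_j,p_k$ with $i<j<k$, they connect $p_j$ to both $\dlpt{t}_j$ and $\dlpt{b}_j$, so only the first and last triple of each profile class can be separated.

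Two small remarks. The detour through five triples is unnecessary, since those walks already handle any three. Also, of your claimed equivalence between locality and non-separation, only the direction ``terminals in different faces of $G^S$ implies separated by $V(G^S)$'' holds in general; fortunately, that is the only direction you use.
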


\begin{restatable}{lemma}{DLtoNoncrossing}\label{lem:DL-to-noncrossing}
    For a real $\varepsilon > 0$, let $\LDL_{\asymp}(\varepsilon)$ be the maximum size
    of a $\varepsilon$-double ladder in a linked noncrossing instance. 
    Then, the maximum size of a $\varepsilon$-double ladder in a local instance is of 
    $\Oh(\LDL_{\asymp}(\varepsilon)^2 \log^8 \LDL_{\asymp}(\varepsilon) \cdot \varepsilon^{-8} )$. 
 \end{restatable}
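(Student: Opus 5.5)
The plan is to mirror the proof of \Cref{lem:to-noncrossing} almost verbatim, replacing pairs by triples and the single link per pair by the two links per triple. First, triangulate the local instance by adding virtual edges of infinite weight. For every triple $\sigma = \dltri{} \in \DL$, use locality to draw in the dual graph two curves avoiding $G^{\neg\sigma}$: one from $p$ to $\dlpt{t}$ and one from $p$ to $\dlpt{b}$. Since $p$, $\dlpt{t}$, $\dlpt{b}$ lie in one face of $G^{\neg\sigma}$, both links exist and each visits every face at most once.

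Next, define $T_i$ as the union of all shortest paths in the clean instance starting at $p_i$, together with those ending at $\dlpt{t}_i$ or $\dlpt{b}_i$. To keep these as trees, I would rather take three families: trees rooted at $p_i$ (paths to $\dlpt{t}_j$ for $j>i$ and to $\dlpt{b}_j$ for $j<i$), trees rooted at $\dlpt{t}_j$, and trees rooted at $\dlpt{b}_j$. Each is a tree by uniqueness of shortest paths. The conditions of \Cref{lem:log-cover-spanner} hold after choosing $T_1$ to meet all others; any two such trees share a terminal or cross via paths through a common terminal, and I would fix $T_1$ as one containing a path meeting every other tree, adjusting if necessary by taking the union of two trees. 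The analogue of \Cref{clm:instance_size_bound} gives $|V(G)| = \Oh(L^4)$, since there are $\Oh(L^2)$ terminal-to-terminal paths with pairwise intersections being subpaths. Hence every root path of the spanning tree $H$ is covered by $\Oh(\log^2 L)$ subpaths of these trees, each of length at most $2(1-\eps)R$.

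Then consider the tree decomposition of \Cref{thm:tree-decomp} for $H$. For an adhesion separator $S$, a triple is bad if either of its links meets $\xi_S$. The analogue of \Cref{clm:types_of_bad} shows a bad triple either has two of its terminals separated by $S$, or contaminates $S$. Contaminating triples number $\Oh(\log^2 L)$, by the same covering argument as \Cref{clm:noncrossing-pairs-cont}, since each covering path has at most two terminal endpoints. For separated triples I need a separation lemma for double ladders, which I expect to be the main obstacle. The plan is to prove analogues of \Cref{lem:sep-two-pairs} and \Cref{lem:sep-three-pairs}:

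1. Within a group of triples whose $p$'s share a rounded profile to $Z$, no path $\ppath{p_a\dlpt{t}_b}$ with $a<b$ comes within $\delta$ of $Z$. Otherwise, since the profiles of $p_a$ and $p_b$ agree, $\dist(p_b,\dlpt{t}_b) \le \dist(p_a,\dlpt{t}_b)+3\delta < R$. The symmetric statement holds for $\ppath{p_b\dlpt{b}_a}$.
2. For three triples $a<b<c$ in such a group, $p_b$ connects to $\dlpt{t}_b$ via $p_b \to \dlpt{t}_c \to p_a \to \dlpt{t}_b$ (valid since $b<c$ and $a<b$). It likewise connects to $\dlpt{b}_b$ via $p_b \to \dlpt{b}_a \to p_c \to \dlpt{b}_b$, and $\dlpt{t}_b$ to $\dlpt{b}_b$ through $p_b$.
3. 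Thus only the first and last triple of each group can be separated, which gives the $\Oh(\alpha^4\eps^{-4}(1+\beta\eps^{-1})^4)$ bound via \Cref{lem:Uprof-vc}. The edge cases of the first and last triple cost a constant factor.

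With both counts in hand, run the balancing argument: apply \Cref{lem:tree-decomp-split} with $\chi$ on terminals and $\alpha=\Theta(L^{1/2}\log^{-4}L\,\eps^4)$ cut edges. Each component then retains a triple good for all chosen separators, with all three of its terminals inside. This needs a slight recount, since bad terminals per triple are now three and adhesion paths contain at most two terminals. Links of triples in different components are separated by some $\xi_{S_f}$, so they are pairwise disjoint. Restricting to these triples yields a linked noncrossing instance. A subsequence of a double ladder is again a double ladder, because the index order is preserved. This gives $\alpha \le \LDL_{\asymp}(\eps)$, hence the claimed bound $L=\Oh(\LDL_{\asymp}(\eps)^2\log^8\LDL_{\asymp}(\eps)\,\eps^{-8})$.
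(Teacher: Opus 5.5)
Your plan follows the paper's proof: links drawn in the dual, a spanning tree from \Cref{lem:log-cover-spanner}, the decomposition of \Cref{thm:tree-decomp}, and bad triples split into separated and contaminating ones. The three-triple separation argument also matches; your paths $p_b\to\dlpt{t}_c\to p_a\to\dlpt{t}_b$ and $p_b\to\dlpt{b}_a\to p_c\to\dlpt{b}_b$ are what \Cref{lem:sep-three-pairs-DL} uses. So does the balancing via \Cref{lem:tree-decomp-split}.

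The one step that does not go through as written is the hypothesis of \Cref{lem:log-cover-spanner} that a single tree of the family meets every other tree.
\begin{itemize}
\item Your justification that ``any two such trees share a terminal'' is false. The tree rooted at $p_1$ (paths $\ppath{p_1\dlpt{t}_j}$, $j>1$) and the tree rooted at $p_L$ (paths $\ppath{\dlpt{b}_j p_L}$, $j<L$) are vertex-disjoint by \Cref{clm:kcenter-ribs}. An intersection would give $2(1-\eps)R\ge \dist(p_1,\dlpt{b}_c)+\dist(p_L,\dlpt{t}_b)>2R$.
\item Your extra $\dlpt{t}$- and $\dlpt{b}$-rooted families make this worse. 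For example, the tree rooted at $\dlpt{t}_2$ is just $\ppath{p_1\dlpt{t}_2}$. By \Cref{clm:kcenter-touch-topdown} it is disjoint from $\ppath{p_2\dlpt{b}_1}$, so nothing forces it to meet the tree rooted at $p_2$.
\item ``Taking the union of two trees'' is not a legitimate patch. The union of two intersecting trees may contain cycles, while the routing lemma needs parent pointers inside each tree.
\end{itemize}

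The repair is what the paper does. First, discard the $\dlpt{t}$- and $\dlpt{b}$-rooted trees. They are redundant: by cleanness every edge lies on some $\ppath{p_i\dlpt{t}_j}$ or $\ppath{p_j\dlpt{b}_i}$ with $i<j$, and that path is already contained in the tree $T_i$ rooted at $p_i$ (respectively $T_j$ rooted at $p_j$). Second, use $T_2$ as the distinguished first tree. For $L\ge 3$ it contains $\dlpt{t}_L\in V(T_1)$, and it contains $\dlpt{b}_1\in V(T_i)$ for every $i\ge 3$. With this choice the rest of your argument goes through as in the paper.
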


 \begin{restatable}{lemma}{DLtoNoncrossingBound}\label{lem:DL-noncrossing-bound}
    For a real $\varepsilon > 0$, the maximum size of a $\varepsilon$-double ladder
    in a linked noncrossing instance is at most $\Oh(\varepsilon^{-4})$.
 \end{restatable}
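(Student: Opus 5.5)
The plan mirrors the proof of \Cref{lem:noncrossing-bound}: first extract an equidistant-type substructure, then use the links and the embedding to put a linear order on the triples, and finally derive a contradiction by rearranging path segments and comparing lengths.

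\textbf{Step 1: equidistant structure.} Fix an $\eps$-double ladder $\DL=(p_i,\dlpt{t}_i,\dlpt{b}_i)_{i=1}^L$ in a linked noncrossing instance. The analogue of \Cref{clm:uncross} holds verbatim for any two shortest paths between terminals. Suppose $\ppath{p_i\dlpt{t}_j}$ and $\ppath{p_j\dlpt{t}_{i'}}$ intersect, with index relations chosen so that after uncrossing both new paths join terminals at distance $>R$. Then we obtain the contradiction
\[ 2(1-\eps)R \geq \dist(p_i,\dlpt{t}_j)+\dist(p_j,\dlpt{t}_{i'}) \geq \dist(p_i,\dlpt{t}_{i'})+\dist(p_j,\dlpt{t}_j) > 2R. \]
For instance, take $i<j$ and $i'\le i$: then $\dist(p_i,\dlpt{t}_{i'})>R$ and $\dist(p_j,\dlpt{t}_j)>R$. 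The same disjointness statements hold for the bottom paths, and mixed top/bottom pairs are handled similarly. As in \Cref{lem:find-equidistant}, I would fix one or two reference triples and bucket the remaining triples by rounded distances to the reference terminals, at a cost of $\Oh(\eps^{-2})$ per reference. Since a constant-size set $Z$ covers the reference paths, \Cref{lem:sep-two-pairs} lets me additionally guarantee that the remaining paths avoid the reference paths. Because the bound is only $\Oh(\eps^{-4})$ and not $\eps^{-14}$, I expect a single reference triple $\sigma'$ to suffice, costing $\Oh(\eps^{-2})$, with the main counting contributing the other $\Oh(\eps^{-2})$.

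\textbf{Step 2: a rupture and a linear order.} Using the two links of $\sigma'$ together with its shortest paths, I would form a closed simple curve $\gamma'$, the analogue of the rupture, that no path among the remaining triples crosses. Around $p'$ (or around suitable reference terminals), the clockwise order of the shortest path trees then gives a linear order on the triples. I would check, as in \Cref{clm:one-order}, that the orders seen from the top side and from the bottom side coincide. The ladder structure itself supplies an order (the index order), so the key point is to show that the planar order agrees with the index order, up to reversal, on a large subset. If it does not agree outright, I would instead apply Erd\H{o}s--Szekeres and lose a square root.

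\textbf{Step 3: the final contradiction (the main obstacle).} For $i<j<k$ consecutive in the ordered set, the top paths $\ppath{p_i\dlpt{t}_j}$ and the bottom paths $\ppath{p_j\dlpt{b}_i}$ are all short, of length $\le(1-\eps)R$. Topologically, the links of $\sigma_j$, which separate its terminals from the others, force these paths to cross. At each crossing point I would reroute, as in \Cref{clm:final-punch}, to build for each middle index $j$ a walk from $p_j$ to $\dlpt{t}_j$, or to $\dlpt{t}_{i}$ with $i\le j$, which is long ($>R$) by the first condition of \Cref{def:double-ladder}. Each edge should be used at most once. Summing lengths would give
\[ c\,m(1-\eps)R \geq (c\,m-\Oh(1))R \]
for $m$ the number of triples, so $m=\Oh(\eps^{-1})$ and, after the Step 2 losses, $\Oh(\eps^{-2})$. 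The hard part is the case analysis: showing that the embedding forces crossings in a consistent, chain-like pattern (an analogue of the type A / type B dichotomy with \Cref{clm:transitivity} and Dilworth), since the asymmetry between top and bottom breaks the argument of \Cref{clm:type2cross}. I would first try to prove that, after the equidistant filtering, only one crossing pattern is possible, so that no Dilworth step is needed. Multiplying by the $\Oh(\eps^{-2})$ bucketing factor then yields the claimed $\Oh(\eps^{-4})$.
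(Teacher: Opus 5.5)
Your cost accounting ($\Oh(\eps^{-2})$ for tagging, a square for Erd\H{o}s--Szekeres, $\Oh(\eps^{-1})$ for the rearrangement) matches the paper. However, the argument has genuine gaps, starting with Step 1, whose disjointness claims are false as stated. Take two top paths $\ppath{p_a\dlpt{t}_b}$ and $\ppath{p_c\dlpt{t}_d}$ with $a<b$, $c<d$. The swapped pairs $(p_a,\dlpt{t}_d)$ and $(p_c,\dlpt{t}_b)$ are both far only if $d\le a$ and $b\le c$, which is impossible. The same holds for two bottom paths. Your example also uses $\ppath{p_j\dlpt{t}_{i'}}$ with $i'<j$, which is not a short path, so the first inequality fails.

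Uncrossing gives only the top--bottom disjointness of \Cref{clm:kcenter-touch-topdown}. Bottom--bottom crossings genuinely occur; they are exactly the paper's Type B case. Your set $Z$ is also not constant-size: $\delta$-covering a path of length $\Theta(R)$ needs $\Theta(\eps^{-1})$ points. Profile bucketing then costs $\Theta(\eps^{-8})$, which is incompatible with $\Oh(\eps^{-4})$.

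The paper does no profile bucketing here. It tags only by $(\Udist(p_1,\dlpt{t}_i),\Udist(\dlpt{b}_1,p_i))$ and takes the largest index of the bucket as a free second reference. It then obtains the needed top--top and bottom--bottom disjointness with the reference paths from equidistance (\Cref{clm:kcenter-TT-or-BB}): $\ppath{p_1\dlpt{t}_a}\cap\ppath{p_b\dlpt{t}_L}=\emptyset$ and $\ppath{\dlpt{b}_1p_a}\cap\ppath{\dlpt{b}_bp_L}=\emptyset$ for $a\le b$.

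More importantly, Step 3 is where the proof lives, and you leave it open. The missing mechanism is not an analogue of the box, Type A/B transitivity and Dilworth. The paper anchors everything at the index-extremal terminals $p_1$ and $p_L$. For each $1<a<L$, the rib $\gamma_a$ runs along $\ppath{p_1\dlpt{t}_a}$, through the links of $\sigma_a$, and along $\ppath{\dlpt{b}_ap_L}$. Ribs share only a prefix or a suffix, which makes the cyclic orders at the two ends consistent and enables the Erd\H{o}s--Szekeres step. Your single-triple closed rupture is not specified and is not what is used.

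Next, for $b_1<a<b_2$ in an ordered set, the loop $\gamma_{b_1}\cup\gamma_{b_2}$ separates $\dlpt{b}_1$ from $p_a$. The disjointness facts above then force $\ppath{\dlpt{b}_1p_a}$ to cross either $\ppath{p_1\dlpt{t}_{b_2}}$ or $\ppath{p_L\dlpt{b}_{b_1}}$. This yields a per-index Type T/Type B split, and each class is bounded separately; neither pattern needs to be ruled out.

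Finally, rerouting only into walks that are long by the ladder condition does not close the Type B count. There the natural walks $Q_j$ go from $\dlpt{b}_1$ to $p_L$, a close pair. The inequality works only because equidistance gives $|\dist(\dlpt{b}_1,p_L)-\dist(\dlpt{b}_1,p_i)|\le\delta$ for the consumed paths. So equidistance must enter the final length count quantitatively, not just as a source of disjointness.
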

 Putting these three lemmas together immediately gives the proof of~\Cref{thm:doubleladder-bound}.

\subsection{Distance profiles and separation toolbox}
We start by introducing the notation for distance-based arguments, analogous to the one used in the case of comatching. We additionally restate and prove some separation lemmas adapted to our current setting. 
For the remainder of the proof, $(G, \mathcal{D})$ is a clean $\varepsilon$-double ladder instance with at least three terminal triplets. We put $L = |\DL|$, enumerate $\DL = \{\dltri{1}, \dots, \dltri{L}\}$ and put

$$
    R := \min_{\dltri{} \in \DL} \min(\dist(p, \dlpt{t}), \dist(p, \dlpt{b})),
    \qquad
    \delta := \frac{\varepsilon}{100}R.
$$

Recall that $\delta$ serves as one ``unit'' of distance, and differences of the order of a few
$\delta$s do not matter. 
\begin{claim}\label{diam-lnci-DL}
    If $L \geq 3$, then the diameter of $G$ is at most $8(1-\varepsilon) R$.
\end{claim}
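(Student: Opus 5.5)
The plan is to mimic \Cref{diam-lnci}: route every vertex to a terminal of type $p$ along the shortest path it lies on, and then connect any two $p$-terminals through a short chain of shortest paths between terminals, each of length at most $(1-\varepsilon)R$. The only input is that for $i<j$ we have $\dist(p_i,\dlpt{t}_j)\le(1-\varepsilon)R$ and $\dist(p_j,\dlpt{b}_i)\le(1-\varepsilon)R$ (the double ladder is witnessed at distance $(1-\varepsilon)R$ for the close pairs).

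\emph{Step 1 (reaching a $p$-terminal).} Take any $v\in V(G)$. Since the instance is clean, $v$ lies on a path $\ppath{p_i\dlpt{t}_j}$ or $\ppath{p_j\dlpt{b}_i}$ with $i<j$. Each of these paths has one endpoint of the form $p_a$ and length at most $(1-\varepsilon)R$. Hence there is some $a\in[L]$ with $\dist(v,p_a)\le(1-\varepsilon)R$.

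\emph{Step 2 (a hub among the $p$-terminals).} Here we use $L\ge 3$ and take $p_2$ as a hub; I claim every $p_a$ satisfies $\dist(p_a,p_2)\le 2(1-\varepsilon)R$.
\begin{itemize}
\item If $a\ge 3$, go $p_a\to\dlpt{b}_1\to p_2$. This uses $1<a$ and $1<2$.
\item If $a=1$, go $p_1\to\dlpt{t}_3\to p_2$. This uses $1<3$ and $2<3$.
\item If $a=2$ the bound is trivial.
\end{itemize}
Each hop is a close pair of the double ladder, so it has length at most $(1-\varepsilon)R$.

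\emph{Step 3 (combining).} For $v_1,v_2\in V(G)$, pick $p_{a_1},p_{a_2}$ as in Step 1. The triangle inequality through $p_{a_1},p_2,p_{a_2}$ gives
\[\dist(v_1,v_2)\le (1-\varepsilon)R+2(1-\varepsilon)R+2(1-\varepsilon)R+(1-\varepsilon)R=6(1-\varepsilon)R\le 8(1-\varepsilon)R.\]
This proves the claim, with slack to spare.

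There is no real obstacle here. The only point needing care is Step 2: terminals such as $\dlpt{t}_1$ or $\dlpt{b}_L$ have no close $p$-neighbours at all, which is exactly why the routing in Step 1 goes to a $p$-endpoint rather than an arbitrary endpoint. Likewise, the case $a=1$ cannot use $\dlpt{b}_1$ and must use a top terminal instead, which is where $L\ge 3$ enters.
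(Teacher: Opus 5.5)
Your proof is correct and uses the same idea as the paper: connect any two vertices by chaining shortest paths between close terminal pairs through hub $p$-terminals, with $L\ge 3$ needed to bridge the top and bottom families. The paper instead sends each vertex through the far endpoint of its path to $p_1$ or $p_L$ (cost $2(1-\eps)R$) and joins $p_1$ to $p_L$ by the four-hop chain $p_1,\dlpt{t}_3,p_2,\dlpt{b}_1,p_L$, which gives $8(1-\eps)R$; your single hub $p_2$, reached from the near $p$-endpoint, yields the sharper bound $6(1-\eps)R$.
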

\begin{proof}
    Recall that the instance is clean, that is, 
    every vertex lies on a path 
    $\ppath{p_{i_1} \dlpt{t}_{i_2}}$
    or $\ppath{\dlpt{b}_{i_1} p_{i_2}}$ for some $i_1 < i_2$.
    
    Fix $1 \leq i_1 < i_2 \leq L$. 
    Every vertex on $\ppath{p_{i_1} \dlpt{t}_{i_2}}$
    is within distance at most $2(1-\varepsilon)R$ from $p_1$, as
    the lengths of $\ppath{p_{i_1} \dlpt{t}_{i_2}}$
    and $\ppath{p_1\dlpt{t}_{i_2}}$ are at most $(1-\varepsilon)R$ each.
    Similarly, 
    every vertex on $\ppath{\dlpt{b}_{i_1} p_{i_2}}$
    is within distance at most $2(1-\varepsilon)R$ from $p_L$, as
    $\ppath{\dlpt{b}_{i_1} p_{i_2}}$
    and $\ppath{\dlpt{b}_{i_1} p_L}$ are at most $(1-\varepsilon)R$ each.
    Finally, the distance between $p_1$ and $p_L$ is at most $4(1-\varepsilon)R$, as they are connected by paths
    $\ppath{p_1 \dlpt{t}_3}$, $\ppath{\dlpt{t}_3 p_2}$, $\ppath{p_2 \dlpt{b}_1}$, $\ppath{\dlpt{b}_1 p_L}$, each of length
    at most $(1-\varepsilon)R$.
\end{proof}

As an immediate corollary, we obtain that $\Udist(u, v) \leq 8R \cdot \delta^{-1} = 800 \varepsilon^{-1}$ for every $u,v \in V(G)$.

Analogously to the comatching case, for a set $Z \subseteq V(G)$, the \emph{rounded distance profile} of a vertex $v \in V(G)$
is the function 
$ \Uprof{Z}{v} : Z \to \mathbb{Z}$ defined as $\Uprof{Z}{v}(z) = \Udist(v,z)$. 
Again,~\Cref{thm:prof-vc} immediately implies the following.
\begin{lemma}\label{lem:Uprof-vc-DL}
 Let $Z \subseteq V(G)$.
 For $v \in V(G)$, let 
 \[ \widehat{\mathcal{Z}}_Z[v] = \{(z,i) \in Z \times \{0,1,\ldots,\lfloor 800\varepsilon^{-1} \rfloor\}~|~\Uprof{Z}{v}(z) \leq i\}.\]
Then, the set system $\{\widehat{\mathcal{Z}}_Z[v]~|~v \in V(G)\}$ has VC-dimension at most $4$. 
In particular, by the Sauer-Shelah Lemma, its size is bounded by $\Oh(|Z|^4 \varepsilon^{-4})$. 
\end{lemma}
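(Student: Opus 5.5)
This is the double-ladder analogue of \Cref{lem:Uprof-vc}, and I would prove it the same way: reduce to \Cref{thm:prof-vc} and then apply the Sauer--Shelah lemma.

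\textbf{VC-dimension.} The graph $G$ is planar, so it excludes $K_5$ as a minor. By \Cref{thm:prof-vc} with $h=5$, the set system $\{\mathcal{Z}_Z[v]~|~v \in V(G)\}$ over the universe $Z \times \mathbb{R}$ has VC-dimension at most $4$.

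The key observation is that rounding by $\delta$ commutes with thresholding at integers. For an integer $i$ we have $\lfloor \dist(v,z)/\delta \rfloor \leq i$ if and only if $\dist(v,z) < (i+1)\delta$. Choose a real $x_i$ slightly below $(i+1)\delta$. The graph is finite, so only finitely many distances occur, and $x_i$ can be taken strictly between $(i+1)\delta$ and the largest realized distance below it. With this choice,
\[ \Udist(v,z) \leq i \iff \dist(v,z) \leq x_i \quad \text{for all } v \in V(G),\ z \in Z. \]
Define the injection $\phi:(z,i) \mapsto (z,x_i)$ from $Z \times \{0,\ldots,\lfloor 800\varepsilon^{-1}\rfloor\}$ into $Z \times \mathbb{R}$. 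Then
\[ \widehat{\mathcal{Z}}_Z[v] = \phi^{-1}(\mathcal{Z}_Z[v]). \]
So the new system is the trace of the original system on the image of $\phi$, pulled back along an injection. Any set shattered by the new system maps under $\phi$ to a set shattered by the old one. Hence the VC-dimension is at most $4$.

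\textbf{Size bound.} The universe of the new system has size $|Z|\cdot(\lfloor 800\varepsilon^{-1}\rfloor+1) = \Oh(|Z|\varepsilon^{-1})$. Note that this range suffices: by \Cref{diam-lnci-DL}, every value $\Udist(u,v)$ is at most $800\varepsilon^{-1}$. The Sauer--Shelah lemma then bounds the number of distinct sets by $\Oh((|Z|\varepsilon^{-1})^4) = \Oh(|Z|^4\varepsilon^{-4})$.

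\textbf{Main obstacle.} There is no real obstacle. The only point needing care is the strict-versus-nonstrict inequality in the rounding, and choosing the thresholds $x_i$ as above (using finiteness of $G$) handles it. The \emph{in particular} part is usually applied to count rounded profiles, so I would also note why it does: the set $\widehat{\mathcal{Z}}_Z[v]$ determines the function $\Uprof{Z}{v}$ on $Z$, because $\Uprof{Z}{v}(z)=\min\{i~|~(z,i)\in\widehat{\mathcal{Z}}_Z[v]\}$. Hence the number of distinct rounded profiles is also $\Oh(|Z|^4\varepsilon^{-4})$.
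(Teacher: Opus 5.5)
Your proposal is correct and follows the paper's route: the paper simply says the lemma follows immediately from \Cref{thm:prof-vc} (with $h=5$, since planar graphs are $K_5$-minor-free), together with the Sauer--Shelah lemma on a universe of size $\Oh(|Z|\varepsilon^{-1})$. Your choice of thresholds $x_i$ just below $(i+1)\delta$ supplies the one detail the paper leaves implicit, namely converting the strict inequality hidden in the floor into the non-strict form used by \Cref{thm:prof-vc}.
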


The following observations are restated, but require some different arguments to prove in our new setting.

\begin{lemma}\label{lem:general-sep-cont}
    Let $Z \subseteq V(G)$ and $\dltri{i}, \dltri{j}$ be two distinct triplets of  $\DL$ such that $i < j$. Then neither $\Mpath{p_i}{\dlpt{t}_j}$, nor $\Mpath{p_j}{\dlpt{b}_i}$ contains a vertex that is within distance $\delta$ from $Z$. 
\end{lemma}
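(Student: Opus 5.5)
The statement as displayed omits a hypothesis. Without one it cannot hold, since $Z$ may contain a vertex of $\Mpath{p_i}{\dlpt{t}_j}$. I will prove it under the hypothesis that mirrors \Cref{lem:sep-two-pairs}: $p_i$ and $p_j$ have the same rounded distance profile to $Z$, i.e.\ $\Uprof{Z}{p_i} = \Uprof{Z}{p_j}$. The plan is to copy the proof of \Cref{lem:sep-two-pairs}, using the two double-ladder inequalities in place of the comatching ones. For the top path we use the ``far'' pair $(p_j,\dlpt{t}_j)$; for the bottom path we use $(p_i,\dlpt{b}_i)$.

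\emph{Thresholds.} Let $R_0$ be the threshold witnessing that $\DL$ is an $\varepsilon$-double ladder. Every diagonal distance $\dist(p_a,\dlpt{t}_a)$, $\dist(p_a,\dlpt{b}_a)$ exceeds $R_0$, so $R \geq R_0$. Hence for $i<j$ we get $\dist(p_i,\dlpt{t}_j) \leq (1-\varepsilon)R$ and $\dist(p_j,\dlpt{b}_i) \leq (1-\varepsilon)R$, that is, both are at most $R - 100\delta$. The diagonal inequalities give $\dist(p_j,\dlpt{t}_j) \geq R$ and $\dist(p_i,\dlpt{b}_i) \geq R$.

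\emph{Profile step.} Suppose $u$ is a vertex on one of the two paths and $z \in Z$ satisfies $\dist(u,z) \leq \delta$. Equal rounded profiles give $|\dist(p_i,z) - \dist(p_j,z)| < \delta$. Two applications of the triangle inequality through $z$ then give $|\dist(p_i,u) - \dist(p_j,u)| \leq 3\delta$.

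\emph{Top path.} Let $u \in \Mpath{p_i}{\dlpt{t}_j}$. Since $u$ lies on a shortest path,
\[ \dist(p_j,\dlpt{t}_j) \leq \dist(p_j,u) + \dist(u,\dlpt{t}_j) \leq \dist(p_i,u) + 3\delta + \dist(u,\dlpt{t}_j) = \dist(p_i,\dlpt{t}_j) + 3\delta \leq R - 97\delta < R. \]
This contradicts $\dist(p_j,\dlpt{t}_j) \geq R$.

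\emph{Bottom path.} Let $u \in \Mpath{p_j}{\dlpt{b}_i}$. By the same argument,
\[ \dist(p_i,\dlpt{b}_i) \leq \dist(p_i,u) + \dist(u,\dlpt{b}_i) \leq \dist(p_j,\dlpt{b}_i) + 3\delta < R. \]
This contradicts $\dist(p_i,\dlpt{b}_i) \geq R$.

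No step is a genuine obstacle. The only points needing care are which diagonal pair to use for each path ($j$ for the top path, $i$ for the bottom path), and the check that $R$, defined as a minimum of diagonal distances, still dominates the ladder threshold $R_0$.
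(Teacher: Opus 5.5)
Your proof is correct and is essentially the paper's argument. The paper's proof also relies on the hypothesis $\Uprof{Z}{p_i} = \Uprof{Z}{p_j}$, which you rightly flagged as missing from the statement, and it derives the same contradiction $\dist(p_j,\dlpt{t}_j) \leq \dist(p_i,\dlpt{t}_j) + 3\delta < R$ for the top path, while leaving the bottom path implicit; your symmetric treatment of the bottom path via the pair $(p_i,\dlpt{b}_i)$ is the intended completion.
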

\begin{proof}
    For contradiction, let $u$ be a vertex on $\Mpath{p_i}{\dlpt{t}_j}$ such that there exists $z \in Z$
    with $\dist(z,u) \leq \delta$. 
    As $\Uprof{Z}{p_i} = \Uprof{Z}{p_j}$, we have $|\dist(p_i,z) - \dist(p_j,z)| < \delta$. 
    Hence,
    \[ |\dist(p_i,u) - \dist(p_j,u)| \leq |\dist(p_i,z) - \dist(p_j,z)| + 2\dist(u,z) \leq 3\delta. \]
    We infer that $\dist(p_j,\dlpt{t}_j) \leq \dist(p_i, \dlpt{t}_j) + 3\delta$, which is a contradiction to $\dist(p_j, \dlpt{t}_j) \geq R$ and $\dist(p_i, \dlpt{t}_j) \leq R - 100\delta$.
\end{proof}
 \begin{lemma} \label{lem:sep-three-pairs-DL}
    Let $X, Z \subseteq V(G)$ be such that every vertex of $X$ is within distance $\delta$ from some vertex of $Z$, and $\dltri{i}, \dltri{j}, \dltri{k} \in \DL$ be three distinct triplets with $i < j < k$. If $p_i, p_j, p_k$ all have the same rounded distance profile to $Z$, then 
    $\dltri{j}$ is not separated by $X$.
 \end{lemma}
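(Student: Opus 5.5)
We need to show that the triple $\dltri{j}$ is not separated by $X$. That is, both $\dlpt{t}_j$ and $\dlpt{b}_j$ lie in the same connected component of $G-X$ as $p_j$.

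The plan is to mimic the proof of \Cref{lem:sep-three-pairs}: exhibit walks in $G$ that connect these terminals and use only paths guaranteed to avoid the $\delta$-neighbourhood of $Z$. Since $X$ lies inside that neighbourhood, the walks avoid $X$.

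The available paths are exactly those covered by \Cref{lem:general-sep-cont}, read with the same-profile hypothesis it implicitly uses. For $a<b$ with $p_a$ and $p_b$ having equal rounded profiles to $Z$, the paths $\Mpath{p_a}{\dlpt{t}_b}$ and $\Mpath{p_b}{\dlpt{b}_a}$ avoid the $\delta$-neighbourhood of $Z$. Recall the argument: a vertex $u$ on $\Mpath{p_b}{\dlpt{b}_a}$ close to $Z$ gives $\dist(p_a,\dlpt{b}_a)\le \dist(p_b,\dlpt{b}_a)+3\delta < R$, which contradicts $\dist(p_a,\dlpt{b}_a)>R$. This is symmetric to the displayed case.

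The walks for $\dlpt{t}_j$ and $\dlpt{b}_j$ are as follows.

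\emph{Connecting $p_j$ to $\dlpt{t}_j$.} We need a path ending at $\dlpt{t}_j$ whose other endpoint has a smaller index, so we use $i<j$. The walk is $p_j \to \dlpt{b}_i \to p_k \to \dlpt{t}_j$. Its three segments are:
\begin{itemize}
\item $\Mpath{p_j}{\dlpt{b}_i}$, valid since $i<j$;
\item $\Mpath{\dlpt{b}_i}{p_k}$, valid since $i<k$;
\item $\Mpath{p_i}{\dlpt{t}_j}$, valid since $i<j$.
\end{itemize}
The third segment does not fit directly onto the second, which ends at $p_k$. The cleaner route is therefore $p_j \to \dlpt{b}_i \to p_k \to \dlpt{b}_i$, but that is a detour going nowhere new. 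Instead I will use $p_j \to \dlpt{b}_i \to \ldots$ to reach $p_i$. Reaching $p_i$ is possible only through paths $\Mpath{p_i}{\dlpt{t}_b}$ for $b>i$, for instance $\Mpath{p_i}{\dlpt{t}_k}$, and $\dlpt{t}_k$ is joined to $p_j$ by $\Mpath{p_j}{\dlpt{t}_k}$ since $j<k$. So the walk is
\[
p_j \to \dlpt{t}_k \to p_i \to \dlpt{t}_j,
\]
using $\Mpath{p_j}{\dlpt{t}_k}$, $\Mpath{\dlpt{t}_k}{p_i}$ and $\Mpath{p_i}{\dlpt{t}_j}$. All three are of the form $\Mpath{p_a}{\dlpt{t}_b}$ with $a<b$.

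\emph{Connecting $p_j$ to $\dlpt{b}_j$.} Symmetrically, the walk is
\[
p_j \to \dlpt{b}_i \to p_k \to \dlpt{b}_j,
\]
using $\Mpath{p_j}{\dlpt{b}_i}$, $\Mpath{p_k}{\dlpt{b}_i}$ and $\Mpath{p_k}{\dlpt{b}_j}$. All three are of the form $\Mpath{p_b}{\dlpt{b}_a}$ with $a<b$.

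Each segment avoids $X$, because the profiles of $p_i,p_j,p_k$ to $Z$ coincide. Hence $p_j$, $\dlpt{t}_j$ and $\dlpt{b}_j$ all lie in one component of $G-X$, and $\dlpt{t}_j$, $\dlpt{b}_j$ are in particular in the same component. Whichever notion of separation is intended for a triple, neither link can be separated.

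The only delicate point is choosing the walks so that every index inequality points the right way. That is precisely why the statement concerns the middle triple $j$ and not $i$ or $k$. Once the walks are fixed, the distance estimate is the routine $3\delta$ argument of \Cref{lem:general-sep-cont}.
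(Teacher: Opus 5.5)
Your proof is correct and is essentially the paper's argument: the paper uses the same six paths $\ppath{p_i \dlpt{t}_j}$, $\ppath{p_i \dlpt{t}_k}$, $\ppath{p_j \dlpt{t}_k}$, $\ppath{p_j \dlpt{b}_i}$, $\ppath{p_k \dlpt{b}_i}$, $\ppath{p_k \dlpt{b}_j}$. Like you, it relies on \Cref{lem:general-sep-cont}, with the equal-profile hypothesis that lemma uses implicitly, to place $p_i,\dlpt{b}_i,p_j,\dlpt{t}_j,\dlpt{b}_j,p_k,\dlpt{t}_k$ in one component of $G-X$. In a clean write-up, drop the abandoned first attempt at the $p_j$--$\dlpt{t}_j$ walk and state the two final walks directly.
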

 \begin{proof}
    $p_i$ cannot be separated by $X$ from either of $\dlpt{t}_j, \dlpt{t}_k$, because otherwise $\Mpath{p_i}{\dlpt{t}_j}$ (likewise $\Mpath{p_i}{\dlpt{t}_k}$) intersects $X$, which is a contradiction with~\Cref{lem:general-sep-cont}.
    By the same reasoning, $p_j$ is not separated by $X$ from $\dlpt{b}_i, \dlpt{t}_k$ and $p_k$ is not separated by $X$ from $\dlpt{b}_i, \dlpt{b}_j$. 
    Cumulatively, $p_i, \dlpt{b}_i, p_j, \dlpt{t}_j, \dlpt{b}_j, p_k, \dlpt{t}_k$ are in one connected component of $G-X$, henceforth
    denoted by $C$.
    Therefore, only $\dlpt{t}_i$ and/or $\dlpt{b}_k$ can be separated by $X$ from $C$. 
 \end{proof}
\begin{claim}\label{lem:sep-bound-DL}
    Let $X \subseteq V(G)$ be such that $X$ is contained in the union of at most $\alpha$ paths, each of length at most $\beta R$ for some $\alpha,\beta > 0$. Then, the number of triplets of $\DL$ separated by $X$ is $\Oh(\alpha^4 \varepsilon^{-4} (1+\beta\varepsilon^{-1})^4)$.
\end{claim}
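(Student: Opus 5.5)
We follow the proof of \Cref{lem:sep-bound} almost verbatim, replacing \Cref{lem:sep-three-pairs} with \Cref{lem:sep-three-pairs-DL}. Since $X$ is covered by at most $\alpha$ paths of length at most $\beta R$ each, we place net points along every path at spacing $\delta$. This gives a set $Z \subseteq V(G)$ of size $\Oh(\alpha(1+\beta\varepsilon^{-1}))$ such that every vertex of $X$ is within distance $\delta$ of some vertex of $Z$. By \Cref{lem:Uprof-vc-DL}, there are only $\Oh(|Z|^4\varepsilon^{-4}) = \Oh(\alpha^4\varepsilon^{-4}(1+\beta\varepsilon^{-1})^4)$ distinct rounded distance profiles to $Z$. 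We then group the triplets of $\DL$ according to the rounded distance profile of their $p$-vertex to $Z$.

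Fix one group $\mathcal{G}$ and order its triplets by index. The key point is that \Cref{lem:sep-three-pairs-DL} only protects the \emph{middle} triplet of three, so in contrast to the comatching case we cannot discard whole groups of size at least three. Instead, for every triplet of $\mathcal{G}$ that is neither the first nor the last in the order, we choose an earlier and a later triplet of $\mathcal{G}$. The three $p$-vertices share a profile, so \Cref{lem:sep-three-pairs-DL} shows that the middle triplet is not separated by $X$. It follows that each group contributes at most two separated triplets, namely its minimum and its maximum. Hence the number of separated triplets is at most twice the number of profiles, which gives the bound.

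We should note one subtlety. \Cref{lem:general-sep-cont} as stated omits the hypothesis that $p_i$ and $p_j$ share a rounded profile to $Z$, but its proof uses this hypothesis, and in our application it holds within each group. We also need to be precise about what ``separated'' means for a triplet. We take it to mean that some pair among $p,\dlpt{t},\dlpt{b}$ lies in different components of $G-X$; this is exactly what the proof of \Cref{lem:sep-three-pairs-DL} rules out for the middle index, since all of $p_j,\dlpt{t}_j,\dlpt{b}_j$ lie in the single component $C$.

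The only step that needs care is the net construction when $\beta\varepsilon^{-1}$ is small, which is why the factor is $1+\beta\varepsilon^{-1}$ rather than $\beta\varepsilon^{-1}$. Beyond that, the main thing to verify is that the first-and-last exception really costs only a factor of two. This holds because the boundary elements of a group are determined by the group alone, independently of $X$.
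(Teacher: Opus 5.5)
Your proposal is correct and follows essentially the same argument as the paper. The paper also builds a $\delta$-net $Z$ of size $\Oh(\alpha(1+\beta\varepsilon^{-1}))$, groups the triplets by the rounded profile of their $p$-vertex, and applies the three-triplet lemma so that only the first and last triplet of each group can be separated. Your observation that the two-triplet lemma's statement omits the equal-profile hypothesis that its proof actually uses is accurate, and your reading of ``separated'' matches the paper's.
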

 \begin{proof}
    Since $X$ is contained in the union of at most $\alpha$ paths, each of length at most $\beta R$,
    there exists a set $Z \subseteq V(G)$ of size $\Oh(\alpha (1+ \beta \varepsilon^{-1}))$ such that every vertex of $X$ is within distance $\delta$ from some vertex of $Z$. By~\Cref{lem:Uprof-vc}, there are $\Oh(|Z|^4 \varepsilon^{-4}) = \Oh(\alpha^4 \varepsilon^{-4}(1+\beta \varepsilon^{-1})^4)$ distinct rounded distance profiles to $Z$. Group triplets $\dltri{i} \in \DL$ according to the rounded distance profile of $p_i$ to $Z$. 
    For any group that has at least three triplets, say $s$ triplets, order the them with respect to their indices, and relabel them as $\cup_{1 \leq i \leq s} \dltri{i}$. For $k \in s, s-1, \dots, 3$ we consider the set $\{\dltri{k-2}, \dltri{k-1}, \dltri{k}\}$. At step $k$,~\Cref{lem:sep-three-pairs-DL} implies that $\dltri{k-1}$ cannot be separated by $X$.Thus, in the group, only $\dltri{1}$ and $\dltri{s}$ can be separated by $X$. Consequently, the number of pairs separated by $X$ is $\Oh(\alpha^4 \varepsilon^{-4} (1+\beta \varepsilon^{-1})^4)$.
 \end{proof}

\subsection{Reduction to local double ladder}

In this subsection, we provide a proof of~\Cref{lem:DL-to-local}.
The proof is essentially the same as the proof of~\Cref{lem:to-local}, with the only minor technical difference that we look at triples, instead of pairs of terminals, hence we need to substitute the paths separation lemma with its generalization to triplets, i.e.,~\Cref{lem:sep-bound-DL}.
The rest of the proof remains exactly the same. We repeat it for completeness.

Consider a function $\Lambda$ that maps any subset $I$ of ($\LDL_{\mathrm{local}}(\varepsilon)+1$) triples of points from $\DL$ to a subset of $\LDL_{\mathrm{local}}(\varepsilon)$ triples of points.
The mapping is defined as follows: given a subset $I$ of size $\LDL_{\mathrm{local}}(\varepsilon)+1$, we put $\Lambda(I) = I - \{\sigma_a\}$, where $\sigma_a \in \DL$ is an arbitrarily fixed triple such that $\sigma_a$ is separated by the vertices of $G^{I - \{\sigma_a\}}$. Since $\LDL_{\mathrm{local}}(\varepsilon)$ is the maximum size of a local instance, such a pair always exists.

By pigeonhole principle, there exists a set $S \in \binom{\DL}{\LDL_{\mathrm{local}}(\varepsilon)}$ such that
\[|\Lambda^{-1}(S)| \geq \binom{\LDL}{\LDL_{\mathrm{local}}(\varepsilon)+1}/\binom{\LDL}{\LDL_{\mathrm{local}}(\varepsilon)} = \frac{\LDL-\LDL_{\mathrm{local}}(\varepsilon)}{\LDL_{\mathrm{local}}(\varepsilon)+1}>\frac{\LDL}{\LDL_{\mathrm{local}}(\varepsilon)+1}-1\] 
Let us denote this quantity by $C = \frac{\LDL}{\LDL_{\mathrm{local}}(\varepsilon)+1}-1$. There exist at least $C$ indices $\{a_1, a_2, ..., a_C\}$ such that $\sigma_a \not\in S$ and $G^S$ separates $\sigma_a$ for each $i \in [C]$. 

Since $G^S$ is a union of not more than $\LDL_{\mathrm{local}}(\varepsilon)^2$ paths, each of length at most $R$, we can apply~\Cref{lem:sep-bound-DL} to get $C = \Oh(\LDL_{\mathrm{local}}(\varepsilon)^8 \varepsilon^{-8})$. This in turn implies that $\LDL = \Oh(\LDL_{\mathrm{local}}(\varepsilon)^9 \varepsilon^{-8})$, which finishes the proof of the lemma.

\subsection{Reduction to the linked noncrossing double ladder}

In this subsection, we prove \Cref{lem:DL-to-noncrossing}.
Many of the following arguments are borrowed directly or adapted from the proof of~\Cref{lem:to-noncrossing}; in light of this, we will often simply restate claims or lemmas, omitting their proofs, and keep discussions brief, until the point where we deviate from the prior. 

Let $(G, \DL)$ be a local double ladder instance. Fix a triangulation $G'$ of $G$ obtained by repeatedly subdividing non-triangular faces of $G$ by arbitrarily adding virtual edges. 

To make our instance linked, we iteratively assign a link to every triplet via the following process. Pick any unassigned $\sigma = \dltri{}$, and let $F_p, F_\dlpt{t}$ and $F_\dlpt{b}$ denote arbitrary faces of $G$ incident resp. to $p, \dlpt{t}, \dlpt{b}$. Trace a path from $F_\dlpt{t}$ to $F_p$ in $G^*$ which does not cross any edge of $G^{\neg \sigma}$ (existence guaranteed by locality), and extend the endpoints of the drawing within $F_\dlpt{t}, F_p$ to reach resp. $\dlpt{t}$ and $p$. The curve $\gamma_{p, \dlpt{t}}$ traced this way is a link for $\sigma$. We trace $\gamma_{p, \dlpt{b}}$ between $p$ and $\dlpt{b}$ exactly the same way. Both paths share an endpoint at $p$, hence $\gamma_{p, \dlpt{t}}, \gamma_{p, \dlpt{b}}$ can be drawn such that their intersection is just the point $p$.

\subsubsection{Tree decomposition and separators}

For each $i \in [L]$, define tree $T_i$ to be the union of union of paths $\left( \cup_{j: j < i} \ppath{p_i\dlpt{b}_j} \right) \cup \left( \cup_{j: j > i} \ppath{p_i\dlpt{t}_j} \right)$.
For $L \geq 3$ the 
trees $T_i$ satisfy the assumptions of \Cref{lem:log-cover-spanner}
if we use $T_2$ in the role of the first tree:
the tree $T_2$ intersects $T_1$ at $t_L$ and every tree $T_i$ for $i\geq 3$
at $b_1$.
Let $H$ be the result of applying \Cref{lem:log-cover-spanner}.
Applying \Cref{thm:tree-decomp} to triangulation $G'$, we obtain $(T, \beta)$, a tree decomposition of $G$. Recall that with the decomposition obtained this way, its every adhesion is a separator formed by a union of two upward paths of $H$, whose endpoints are connected via a closing edge of $E(G') - E(H)$. For the sake of convenience, we use $S$ to refer to the subgraph of $G$ containing the vertices of such separator and with edges restricted only to those two paths. By $\xi_S$ we denote the planar loop obtained by taking the drawing of both paths plus the closing edge.

We will say that a triplet $\sigma = \dltri{} \in \DL$ is \emph{bad} with respect to $S$ if either of the links $\gamma_{p, \dlpt{t}}, \gamma_{p, \dlpt{b}}$ intersects $\xi_S$. There are only two ways for a triplet to be bad: either it is \emph{separated} by $S$, here meaning that any pair of its terminals is separated by $S$, or it contaminates $S$ ($\exists e \in S$ s.t. $e \notin G^{\neg\sigma}$). 
These two cases are indeed exhaustive: this follows from the analogue of
\Cref{clm:types_of_bad}, and its proof in the double ladder setting remains the same, the only difference being that instead of a single link $\gamma_\sigma$ per pair, we look at the union of two links $\gamma_{p, \dlpt{b}}, \gamma_{p, \dlpt{t}}$.

As before, our aim is to bound the number of bad pairs w.r.t to a separator, so that we can take a number of separators, pick a good pair from each component, and in the end obtain a non-crossing instance of reasonable size.

\subsubsection{Bounding bad triplets}

We continue to mirror the proof of~\Cref{lem:to-noncrossing} and now seek to bound the number of each type of bad triplets.
Since our instance is clean, we can argue that $G$ has $O(L^4)$ vertices. The argument follows the proof of \Cref{clm:instance_size_bound} line by line, the only difference being that trees $T_i$ now have $2L$ leaves, instead of $L$, and that there are $3L$ vertices of degree $1$, instead of $2L$. The bound asymptotically remains the same.

We reuse the \Cref{clm:noncrossing-pairs-cont} to bound the number of triples contaminating $S$. We restate the claim here for completeness. Its proof remains exactly the same, as we retain the property that every separator $S$ is covered by a family of $\Oh(\log^2 L)$ terminal to terminal paths.

\begin{claim}\label{clm:DL-noncrossing-triples-cont}
    There are at most $\Oh(\log^2 L)$ triples in $\DL$ that are \emph{contaminating} $S$.
\end{claim}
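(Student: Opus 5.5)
The plan is to follow the proof of \Cref{clm:noncrossing-pairs-cont} verbatim, adapted to triplets.

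First, fix the separator $S$ induced by an adhesion of $(T,\beta)$. By construction, its edge set consists of two upward paths of $H$. By \Cref{lem:log-cover-spanner}, combined with the $\Oh(L^4)$ bound on $|V(G)|$ (so that $\log |V(G)| = \Oh(\log L)$), each upward path can be partitioned into $\Oh(\log^2 L)$ subpaths, and each subpath is contained in a single tree $T_i$. Let $\mathcal{T}$ be the resulting collection of $\Oh(\log^2 L)$ tree-subpaths covering $E(S)$.

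Next, convert $\mathcal{T}$ into terminal-to-terminal paths. Each $T_i$ is a union of shortest paths emanating from $p_i$, namely $\ppath{p_i \dlpt{b}_j}$ for $j<i$ and $\ppath{p_i \dlpt{t}_j}$ for $j>i$. Any two of these branch once and then stay disjoint, so a path inside $T_i$ climbs toward $p_i$ along one branch and descends along another. Hence it is the union of at most two subpaths of such terminal-to-terminal paths. This yields a family $\mathcal{P}$ of $\Oh(\log^2 L)$ paths, each of the form $\ppath{p_i \dlpt{t}_j}$ with $i<j$ or $\ppath{p_j\dlpt{b}_i}$ with $i<j$, whose union covers $E(S)$.

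Finally, charge contaminating triplets. If $\sigma$ contaminates $S$, there is an edge $e_\sigma \in E(S)$ with $e_\sigma \notin G^{\neg\sigma}$. The edge $e_\sigma$ is covered by some path of $\mathcal{P}$. Every path of $\mathcal{P}$ whose endpoints both belong to triplets other than $\sigma$ lies in $G^{\neg\sigma}$, so the covering path must have an endpoint that is a terminal of $\sigma$. Each path of $\mathcal{P}$ has two endpoints, and each endpoint belongs to exactly one triplet, because terminals are pairwise distinct in a clean instance. So at most $2|\mathcal{P}| = \Oh(\log^2 L)$ triplets can contaminate $S$.

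The only step that needs care is the second one: checking that a subpath of $T_i$ really splits into at most two terminal-to-terminal subpaths. This relies on the unique-shortest-path convention, under which paths from $p_i$ share only a common prefix. That convention carries over unchanged to the double ladder setting, so I expect no real obstacle.
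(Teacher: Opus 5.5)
Your proposal is correct and matches the paper's argument. Like the paper, you reuse the proof of \Cref{clm:noncrossing-pairs-cont}: you cover $S$ by $\Oh(\log^2 L)$ subpaths of the trees $T_i$ via \Cref{lem:log-cover-spanner}, split each subpath into at most two terminal-to-terminal subpaths, and charge each contaminating triplet to an endpoint of a covering path, whose two endpoints belong to distinct triplets.
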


The bound on the number of separated triplets is trivial given our previous separation lemmas.

 \begin{claim} \label{clm:DL-noncrossing-triples-sep}
     There are $\Oh(\log^8 L \cdot \varepsilon^{-8})$ triples in $\DL$ that are separated by $S$.
 \end{claim}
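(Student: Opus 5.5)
The plan is to argue exactly as in \Cref{clm:noncrossing-pairs-sep}, with \Cref{lem:sep-bound-DL} substituted for \Cref{lem:sep-bound}. The separator $S$ consists of two upward paths of the spanning tree $H$ obtained from \Cref{lem:log-cover-spanner}. Since $G$ has $\Oh(L^4)$ vertices, each root path of $H$ splits into $\Oh(\log L \cdot \log L^4)=\Oh(\log^2 L)$ subpaths, each contained in a single tree $T_i$. Consequently $V(S)$ is covered by $\alpha=\Oh(\log^2 L)$ paths, each a subpath of some $T_i$.

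Next we bound the lengths of these paths. A path inside $T_i$ consists of at most two segments, and each segment is a piece of a terminal-to-terminal path $\ppath{p_i \dlpt{t}_j}$ or $\ppath{p_i\dlpt{b}_j}$. Every such terminal-to-terminal path has length at most $(1-\varepsilon)R$. Hence each covering subpath has length at most $2R$, so we may take $\beta=2$.

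Applying \Cref{lem:sep-bound-DL} with $X=V(S)$, $\alpha=\Oh(\log^2 L)$ and $\beta=2$, the number of separated triplets is at most
\[
\Oh\bigl(\alpha^4\varepsilon^{-4}(1+2\varepsilon^{-1})^4\bigr)=\Oh(\log^8 L\cdot\varepsilon^{-8}).
\]

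The one point that needs care is the meaning of ``separated'': here a triplet counts as separated if \emph{some} pair of its terminals is separated by $S$. So we must check that \Cref{lem:sep-bound-DL}, via \Cref{lem:sep-three-pairs-DL}, covers this notion. In \Cref{lem:sep-three-pairs-DL}, within a group of triplets sharing a profile, every middle triplet $j$ (in index order) has all three terminals $p_j,\dlpt{t}_j,\dlpt{b}_j$ in one component of $G-X$. Thus only the first and last triplet of each group can have some pair of terminals separated. This gives at most two triplets per profile class, which is what the count above assumes.

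We should also confirm that the profile classes are counted correctly. The set $Z$ is a $\delta$-net of the covering paths with $|Z|=\Oh(\alpha\varepsilon^{-1})$. The profile VC bound of \Cref{lem:Uprof-vc-DL} then gives $\Oh(|Z|^4\varepsilon^{-4})$ classes. No further obstacle is expected.
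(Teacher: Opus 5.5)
Your proposal is correct and follows the paper's proof. You cover $S$ by $\alpha=\Oh(\log^2 L)$ subpaths of the trees $T_i$ using \Cref{lem:log-cover-spanner}, note that each has length at most $2R$, and apply \Cref{lem:sep-bound-DL} with $\beta=2$. Your additional check is also correct and fills in a detail the paper leaves implicit: in each profile class, the middle triplets have all three terminals in a single component of $G-X$, so the ``some pair of terminals is separated'' notion is covered.
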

 \begin{proof}
     Every subpath of $T_i$ is of length at most $2R$. By \Cref{lem:log-cover-spanner}, $S$ can be covered by $\alpha = \Oh(\log^2L)$ subpaths of some $T_i$s. Thus the claim holds as a direct result of~\Cref{lem:sep-bound-DL} with $\alpha$ and $\beta = 2$.
 \end{proof}

\subsubsection{Good triplets with noncrossing links}

We are finally ready to prove \Cref{lem:DL-to-noncrossing}. Again, the intuition here remains the same as in the comatching case and the proof follows the one of \Cref{lem:to-noncrossing} very closely. We retell it here for the sake of completeness, although in a more concise way. Refer to the proof of \Cref{lem:to-noncrossing} for a more detailed explanation.

\DLtoNoncrossing*
\begin{proof}[Proof of \Cref{lem:DL-to-noncrossing}]
    We define the weight function $\chi: V(G) \to \{0, 1\}$ which assigns $1$ to the terminals of all triples of $\DL$, and $0$ otherwise. We fix parameter $\alpha$, roughly of order $L^{1/2}$.

    We use \Cref{lem:tree-decomp-split} to obtain a set $F \subseteq E(T)$ which splits $T$ in a balanced way w.r.t. $\chi$.
    Fix any connected component $C$ of $T - F$ and look at the set $V_C$ of vertices defined as the union of bags of $C$ minus all separators $S_f$ corresponding to adhesions of all $f \in F$.
    
    By~\Cref{clm:DL-noncrossing-triples-cont}, and~\Cref{clm:DL-noncrossing-triples-sep}, there are at most $\alpha \cdot \Oh(\log^8 L \cdot \varepsilon^{-8})$ triples which are not good with respect to some $S_f$. Moreover, by the structure of our separators, at most $\Oh(\alpha)$ terminals belong to some $S_f$. Hence, the total number of terminal vertices in the union of bags of $C$ which are either bad or in some $S_f$ is at most $\alpha \cdot c \cdot \log^8 L \cdot \varepsilon^{-8}$, where $c$ is some universal constant.

    The number of all terminal vertices in such union is at least $L/2\alpha$, due to \Cref{lem:tree-decomp-split}. Putting $\alpha$ such that $4\alpha^2 \cdot c \leq L \log^{-8} L \cdot \varepsilon^{8}$, we get that every such component $C$ contains a triple $\sigma_C$ good w.r.t. all separators $S_f$.

    The links of such triples are pairwise-nonintersecting, hence restricting our instance to those only gives us a noncrossing instance of size $\alpha$, which finishes the proof.
\end{proof}

\subsection{Bounding linked noncrossing double ladder}

In this section we develop the following bound. 

\DLtoNoncrossingBound*

Let $\mathcal{I} = (G,\DL)$ be a linked non-crossing double ladder instance,
where $\DL = (\sigma_i~|~1 \leq i \leq L)$ and $\sigma_i = \dltri{i}$ for
$i \in [L]$. 

We have the following initial observation.
\begin{claim}\label{clm:kcenter-touch-topdown}
Let $a, b, c, d \in [L]$ with $a < b \leq d$ and $a \leq c < d$. 
Then, 
$\ppath{p_a\dlpt{t}_b}$ and $\ppath{\dlpt{b}_c p_d}$ do not intersect.
\end{claim}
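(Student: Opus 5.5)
Assume, for contradiction, that $\ppath{p_a\dlpt{t}_b}$ and $\ppath{\dlpt{b}_c p_d}$ share a vertex $u$. We apply the uncrossing argument of \Cref{clm:uncross}, which only uses that both paths are shortest paths through a common vertex. Splitting at $u$ yields a walk from $p_a$ to $\dlpt{b}_c$ and a walk from $p_d$ to $\dlpt{t}_b$, so that
\[ \dist(p_a,\dlpt{t}_b) + \dist(\dlpt{b}_c,p_d) \geq \dist(p_a,\dlpt{b}_c) + \dist(p_d,\dlpt{t}_b). \]

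Next we bound both sides using \Cref{def:double-ladder}. On the left, $a<b$ gives $\dist(p_a,\dlpt{t}_b)\leq(1-\varepsilon)R$, and $c<d$ gives $\dist(p_d,\dlpt{b}_c)\leq(1-\varepsilon)R$. Hence the left-hand side is at most $2(1-\varepsilon)R$. On the right, condition 1 of the definition with $i=a\le j=c$ gives $\dist(p_a,\dlpt{b}_c)>R$. Likewise, condition 1 with $i=b\le j=d$ gives $\dist(p_d,\dlpt{t}_b)>R$. Hence the right-hand side exceeds $2R$.

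Combining the two bounds gives $2(1-\varepsilon)R>2R$, a contradiction. One should check that the $R$ in use is the witnessing threshold of the double ladder. The instance-level $R$ defined as a minimum over $\dist(p,\dlpt{t}),\dist(p,\dlpt{b})$ satisfies these inequalities after normalising, exactly as in the comatching section. The degenerate cases $a=c$ or $b=d$ are covered by the non-strict inequalities in condition 1, so no separate treatment is needed. There is no real obstacle; the only point needing care is to verify that the index hypotheses $a<b\le d$ and $a\le c<d$ are precisely what is needed to invoke each of the four inequalities.
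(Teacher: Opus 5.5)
Your proof is correct and is exactly the paper's argument: uncross at a common vertex (as in \Cref{clm:uncross}), bound the left-hand side by $2(1-\varepsilon)R$ using $a<b$ and $c<d$, and bound the right-hand side strictly above $2R$ using $a\le c$ and $b\le d$. One correction to your aside on the instance-level $R$: a minimum over the diagonal distances $\dist(p_i,\dlpt{t}_i)$ and $\dist(p_i,\dlpt{b}_i)$ need not lower-bound off-diagonal far distances such as $\dist(p_a,\dlpt{b}_c)$ with $a<c$, so that remark is not right as stated; it is also unnecessary, because running the contradiction with the witnessing $R$ of \Cref{def:double-ladder}, as you do in the main argument, is all the claim needs.
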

\begin{proof}
    If these two paths intersect, then
    \[ 2(1-\varepsilon)R \geq \dist(p_a, \dlpt{t}_b) + \dist(\dlpt{b}_c, p_d) \geq \dist(p_a, \dlpt{b}_c) + \dist(p_d, \dlpt{t}_b) \geq 2R. \]
    This is a contradiction.
\end{proof}
\subsubsection{Equidistant instance}

\begin{figure}[ht]
    \centering
    \begin{subfigure}{0.32\textwidth}
        \centering
        \includegraphics[width=1\linewidth]{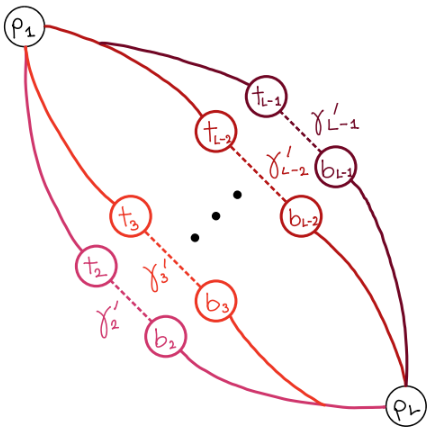}
        \caption{Rib Cage}
        \label{fig:rib_cage}
    \end{subfigure}
    \begin{subfigure}{0.32\textwidth}
        \centering
        \includegraphics[width=1\linewidth]{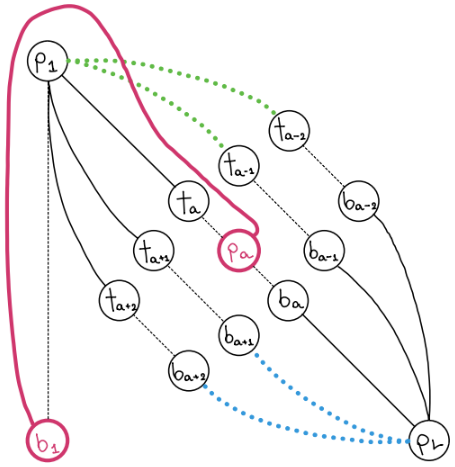}
        \caption{Type T}
        \label{fig:type_T}
    \end{subfigure}
    \begin{subfigure}{0.32\textwidth}
        \centering
        \includegraphics[width=1\linewidth]{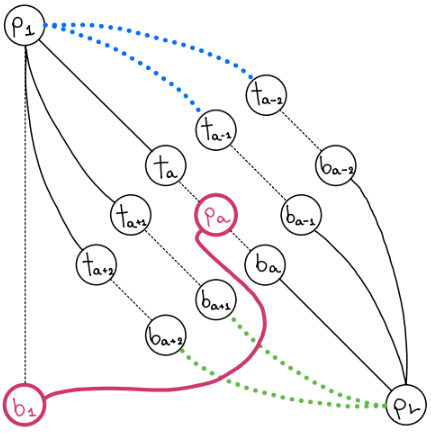}
        \caption{Type B}
        \label{fig:type_B_DL}
    \end{subfigure}
    \caption{The dashed lines represent links. For visual clarity, we assume that the links do not intersect any paths ($\gamma_a' = \Gamma(\sigma_a)$). (a) Each pair $\sigma_a$ forms a rib as the concatenation of a subpath of $\ppath{p_1 \dlpt{t}_a}$ from $p_1$ to the endpoint of $\gamma_a'$, the curve $\gamma_a'$, and the subpath of $\ppath{\dlpt{b}_a p_L}$ from the second endpoint of $\gamma_a'$ to $p_L$. Ribs may share a prefix or suffix, in this picture ribs $\gamma_{L-1}$ and $\gamma_{L-2}$ share a prefix, and $\gamma_1, \gamma_2$ share a suffix. (b) $\sigma_a$ is a pair where Type T holds, that is $\ppath{\dlpt{b}_1 p_a}$ (thick pink line) intersects transversally every path $\ppath{p_1 \dlpt{t}_b}$ for $a < b < L$ (green dotted lines), in the decreasing order of indices $b$, and does not intersect transversally any path $\ppath{p_L \dlpt{b}_b}$ for $1 < b < a$ (blue dotted lines). (here we have only drawn 5 pairs other than $\sigma_1$, $\sigma_L$). (c) Here $\sigma_a$ is instead a pair where Type B holds, Meaning, $\ppath{\dlpt{b}_1 p_a}$ (thick pink line) intersects transversally every path $\ppath{p_1 \dlpt{t}_b}$ for $a < b < L$ (green dotted lines) and does not intersect transversally any path $\ppath{p_L \dlpt{b}_b}$ for $1 < b < a$ (blue dotted lines).}
    \label{fig:ribs-typeT-typeB}
\end{figure}

We now filter $\DL$ to get the following substructure.
\begin{definition}
    A double ladder $\DL' = (\sigma_i')_{i \in [L']}$ 
    where $\sigma_i' = \dltri{i'}$ is called \emph{equidistant} 
    if for every $1 < i < L'$ we have the following:
    
    \[ \Udist(p_1', \dlpt{t}_i') = \Udist(p_1', \dlpt{t}_{L'}') \qquad \mathrm{and} \qquad \Udist(\dlpt{b}_1', p_i') = \Udist(\dlpt{b}_1', p_{L'}'). \]

    \bigskip %

\end{definition}

Note that for every $I \subseteq [L]$, the double ladder $\DL$ restricted
only to triplets $\sigma_i$ for $i \in I$ is a double ladder again (with the same $R$). 
We call this the double ladder induced by $I$ and denote by $\DL[I]$.

\begin{lemma}\label{lem:kcenter-find-equidistant}
    Let $L'$ be the maximum size of a set $I \subseteq [L]$ such that
    $\DL[I]$ is equidistant. 
    $|\DL| = \Oh(L' \varepsilon^{-2}).$
\end{lemma}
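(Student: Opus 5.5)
The plan is a pigeonhole argument on rounded distances, in the spirit of the proof of \Cref{lem:find-equidistant}, but with only one tagging step. The anchors are the first and the last triplet of $\DL$. Keep $\sigma_1$ and $\sigma_L$ fixed. Tag every middle index $1 < i < L$ with the pair
\[ \tau(i) = \bigl(\Udist(p_1, \dlpt{t}_i),\ \Udist(\dlpt{b}_1, p_i)\bigr). \]

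First, we bound the number of tags. By \Cref{diam-lnci-DL} (valid since we may assume $L \geq 3$), every rounded distance is at most $800\varepsilon^{-1}$. In fact, the double ladder constraints already force each coordinate into a window of width $\Oh(\varepsilon^{-1})$:
\begin{itemize}
\item for $1<i$ we have $\dist(p_1,\dlpt{t}_i)\le(1-\varepsilon)R$;
\item for $i>1$ we have $\dist(p_i,\dlpt{b}_1)\le(1-\varepsilon)R$.
\end{itemize}
Hence both coordinates of $\tau(i)$ lie in $\{0,\dots,\lfloor 100\varepsilon^{-1}\rfloor\}$, and there are $\Oh(\varepsilon^{-2})$ possible tags. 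By pigeonhole, some tag class $J \subseteq \{2,\dots,L-1\}$ has size at least $(L-2)/\Oh(\varepsilon^{-2})$.

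Next, we build the equidistant sub-ladder. The key point is that the definition of equidistance only compares middle elements with the last element, measured from the first element's $p$ and $\dlpt{b}$. So I would take
\[ I = \{1\} \cup J \cup \{j^\ast\}, \]
where $j^\ast$ is the largest index in $J$, used as the last element. Then $\DL[I]$ is a double ladder with the same $R$; this was noted right before the lemma. Reindexing $I$ increasingly, its first element is $\sigma_1$ and its last element is $\sigma_{j^\ast}$. Every other element $i \in J\setminus\{j^\ast\}$ has the same tag as $j^\ast$, that is,
\[ \Udist(p_1,\dlpt{t}_i)=\Udist(p_1,\dlpt{t}_{j^\ast}) \quad\text{and}\quad \Udist(\dlpt{b}_1,p_i)=\Udist(\dlpt{b}_1,p_{j^\ast}). \]
This is exactly the equidistance condition. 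Since $j^\ast \in J$ already, $I = \{1\}\cup J$ and $|I| = |J|+1$. Therefore
\[ L' \geq |J| \geq \Omega(\varepsilon^2 (L-2)), \]
which gives $L = \Oh(L'\varepsilon^{-2})$. The degenerate cases $L \le 3$ or $J=\emptyset$ are absorbed in the $\Oh$.

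The only potential obstacle is making sure the definition is exactly as read: equidistance requires agreement of middle elements only with the last element, and it is measured from the first element's $p'_1$ and $\dlpt{b}'_1$. With that reading, one pigeonhole step suffices, and there is no need for the second-anchor retagging or the profile-based filtering used in \Cref{lem:find-equidistant}. If some stronger disjointness condition turned out to be needed later, it would be added there via \Cref{lem:general-sep-cont}, but it is not needed for this lemma.
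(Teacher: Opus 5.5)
Your proposal is correct and is essentially the paper's proof. Both use a single pigeonhole step on the tag $(\Udist(p_1,\dlpt{t}_i),\Udist(\dlpt{b}_1,p_i))$ and then take $\{1\}$ together with a largest tag class, whose maximum element serves as the new last triplet. The only differences are cosmetic: you tag $1<i<L$ rather than $1<i\le L$, and you bound the range of tags via the ladder constraints rather than the diameter bound.
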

\begin{proof}
Tag each $1 < i \leq |\DL|$
with the following tuple: $(\Udist(p_1, \dlpt{t}_i), \Udist(\dlpt{b}_1, p_i))$. 
There are $\Oh(\varepsilon^{-2})$ possible tags.
Let $I_1 \subseteq \{2,3,\ldots,L\}$
be a nonempty subset of pairs with the same tag.
Let $r$ be the maximum element of $I_1$. 
Observe now that $I := \{1,r\} \cup I_1$ induces an equidistant structure.
The lemma follows.
\end{proof}

\Cref{lem:kcenter-find-equidistant} allows us to restrict to the case
when $\DL$ is equidistant (and aim for an $\Oh(\varepsilon^{-2})$ bound). 
Hence, in what follows we assume that $\DL$ is equidistant.

\subsubsection{Forbidden intersections}

A special case of~\Cref{clm:kcenter-touch-topdown} is the following:
\begin{claim}\label{clm:kcenter-ribs}
For every $1 < a \leq L$ and $1 \leq b < L$ we have $\ppath{p_1 \dlpt{t}_a} \cap \ppath{\dlpt{b}_b p_L} = \emptyset$.
\end{claim}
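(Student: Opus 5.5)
This is the specialization of \Cref{clm:kcenter-touch-topdown} to the choice of indices $(a,b,c,d) \mapsto (1,a,b,L)$. So the plan is only to check that the index hypotheses of that claim hold, and then invoke it.

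Concretely, in the notation of \Cref{clm:kcenter-touch-topdown} I set its first index to $1$, its second index to $a$, its third index to $b$, and its fourth index to $L$. The claim then requires two conditions.
\begin{itemize}
\item The first is $1 < a \leq L$, which is exactly our assumption on $a$.
\item The second is $1 \leq b < L$, which is exactly our assumption on $b$.
\end{itemize}
With both satisfied, \Cref{clm:kcenter-touch-topdown} gives that $\ppath{p_1 \dlpt{t}_a}$ and $\ppath{\dlpt{b}_b p_L}$ do not intersect.

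To be safe about the boundary cases (e.g.\ $a = L$ or $b = 1$), I will also re-derive the uncrossing inequality directly. Suppose the two paths share a vertex $u$. By \Cref{clm:uncross} (exchange at $u$), we get
\[ 2(1-\varepsilon)R \;\geq\; \dist(p_1,\dlpt{t}_a) + \dist(\dlpt{b}_b,p_L) \;\geq\; \dist(p_1,\dlpt{b}_b) + \dist(p_L,\dlpt{t}_a). \]
Each of the two lengths on the left is at most $(1-\varepsilon)R$ by item 2 of \Cref{def:double-ladder}, using $1<a$ and $b<L$. Each of the two terms on the right exceeds $R$ by item 1 of \Cref{def:double-ladder}: the first because $1 \leq b$, the second because $a \leq L$. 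This gives $2(1-\varepsilon)R > 2R$, which is a contradiction, so the paths are disjoint.

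There is no real obstacle here. The only step needing care is matching the four inequalities to the strict and non-strict index conditions in \Cref{def:double-ladder}, which the check above does.
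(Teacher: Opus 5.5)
Your proposal is correct and matches the paper exactly: the paper states \Cref{clm:kcenter-ribs} as the special case of \Cref{clm:kcenter-touch-topdown} with indices $(1,a,b,L)$, and your index check confirms this is valid. Your direct re-derivation is just that claim's uncrossing proof written out for these indices; it is correct, including the boundary cases $a=L$ and $b=1$.
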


\Cref{clm:kcenter-ribs} motivates the following definition.
For $1 < a < L$, let $\gamma_a'$ be a minimal part of the link of $\Gamma(\sigma_a)$
that connects $\ppath{p_1 \dlpt{t}_a}$ with $\ppath{\dlpt{b}_a p_L}$,
and let $\gamma_a$ be a concatenation of a subpath of $\ppath{p_1 \dlpt{t}_a}$ from $p_1$ to the endpoint
of $\gamma_a'$, the curve $\gamma_a'$, and the subpath of $\ppath{\dlpt{b}_a p_L}$ from 
the second endpoint of $\gamma_a'$ to $p_L$. 
Then, two curves $\gamma_a$ and $\gamma_b$ for $1 < a  < b < L$ can share only
a prefix starting at $p_1$ or a suffix ending at $p_L$, but otherwise are disjoint
and form a ``rib cage'' structure; see~\Cref{fig:rib_cage}.

Before we proceed further with cleaning the rib cage, we infer from the fact that $\DL$ is equidistant
that some more pairs of paths are disjoint.

\begin{claim}\label{clm:kcenter-TT-or-BB}
For every $1 < a \leq b < L$ we have 
$\ppath{p_1 \dlpt{t}_a} \cap \ppath{p_b \dlpt{t}_L} = \emptyset$
and
$\ppath{\dlpt{b}_1 p_a} \cap \ppath{\dlpt{b}_b p_L} = \emptyset$
\end{claim}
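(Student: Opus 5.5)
We argue by contradiction using the uncrossing inequality of \Cref{clm:uncross} (which holds for any two shortest paths, not only comatching ones), combined with the equidistance of $\DL$. Take the first statement, and suppose $\ppath{p_1 \dlpt{t}_a}$ and $\ppath{p_b \dlpt{t}_L}$ share a vertex $u$. Exchanging the halves at $u$ yields walks from $p_1$ to $\dlpt{t}_L$ and from $p_b$ to $\dlpt{t}_a$. Hence
\[ \dist(p_1,\dlpt{t}_a) + \dist(p_b,\dlpt{t}_L) \geq \dist(p_1,\dlpt{t}_L) + \dist(p_b,\dlpt{t}_a). \]

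We now bound each of the four terms. Since $b<L$, the double ladder definition gives $\dist(p_b,\dlpt{t}_L) \leq (1-\eps)R = R-100\delta$. Since $a \leq b$, the first condition of the double ladder (with $i=a\leq j=b$) gives $\dist(p_b,\dlpt{t}_a) > R$. Since $1 < a < L$, equidistance gives $\Udist(p_1,\dlpt{t}_a) = \Udist(p_1,\dlpt{t}_L)$, so $\dist(p_1,\dlpt{t}_a) < \dist(p_1,\dlpt{t}_L) + \delta$. Substituting these into the inequality yields
\[ \dist(p_1,\dlpt{t}_L) + \delta + R - 100\delta > \dist(p_1,\dlpt{t}_L) + R, \]
i.e.\ $-99\delta > 0$, a contradiction.

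The second statement is symmetric. Suppose $\ppath{\dlpt{b}_1 p_a}$ and $\ppath{\dlpt{b}_b p_L}$ meet; uncrossing gives
\[ \dist(\dlpt{b}_1,p_a) + \dist(\dlpt{b}_b,p_L) \geq \dist(\dlpt{b}_1,p_L) + \dist(\dlpt{b}_b,p_a). \]
Here $\dist(\dlpt{b}_b,p_L) \leq (1-\eps)R$ because $b<L$, and $\dist(p_a,\dlpt{b}_b) > R$ because $a\leq b$. Equidistance gives $\Udist(\dlpt{b}_1,p_a)=\Udist(\dlpt{b}_1,p_L)$ since $1<a<L$, so $\dist(\dlpt{b}_1,p_a) < \dist(\dlpt{b}_1,p_L)+\delta$. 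The same contradiction follows.

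The only point requiring care is the range of indices. Equidistance is stated only for $1<i<L'$, and here $1<a\leq b<L$, so the hypothesis applies to $a$. The degenerate case $a=b$ needs no special treatment: the pair $(p_a,\dlpt{t}_a)$ is at distance $>R$ by the $i\le j$ condition with $i=j$, which is exactly the bound used above. I expect no further obstacle.
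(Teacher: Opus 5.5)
Your proof is correct and is the same argument as the paper's: uncross the two paths via \Cref{clm:uncross}, bound $\dist(p_b,\dlpt{t}_L)\le(1-\eps)R$ and $\dist(p_b,\dlpt{t}_a)>R$ using the double ladder conditions, and contradict the equidistance $\Udist(p_1,\dlpt{t}_a)=\Udist(p_1,\dlpt{t}_L)$, with the bottom case symmetric. One small point in your favour: you implicitly take $R$ to be the witnessing radius of the double ladder (with $\delta=\eps R/100$), which is what makes the off-diagonal bound $\dist(p_b,\dlpt{t}_a)>R$ for $a<b$ valid; the section's $R$, defined as a minimum over the diagonal pairs only, does not by itself guarantee that bound.
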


\begin{proof}
We prove the first claim; the proof of the second one is analogous.

    By contradiction, assume that $\ppath{p_1 \dlpt{t}_a}$ and $\ppath{p_b \dlpt{t}_L}$ intersect.
    By~\Cref{clm:uncross},
    \[ (1-\varepsilon)R + \dist(p_1, \dlpt{t}_a) \geq \dist(p_b, \dlpt{t}_L) + \dist(p_1, \dlpt{t}_a) \geq \dist(p_b, \dlpt{t}_a) + \dist(p_1, \dlpt{t}_L) \geq R + \dist(p_1, \dlpt{t}_L). \]
    However, from the definition of an equidistant structure,
    $\Udist(p_1, \dlpt{t}_a) = \Udist(p_1, \dlpt{t}_L)$, which implies
    $|\dist(p_1, \dlpt{t}_a) - \dist(p_1, \dlpt{t}_L)| \leq \delta$, a contradiction.
\end{proof}

\subsubsection{Ordering ribs}

Recall that the double ladder $\DL$ is assumed to be equidistant. 
Recall also that all paths $\ppath{p_1 \dlpt{t}_a}$ for $1 < a \leq L$ form a shortest path tree
rooted at $p_1$, with all terminals $\dlpt{t}_a$ having degree $1$ in $G$, and hence this tree
imposes a cyclic (say, clockwise) order on the indices $\{2,3,\ldots,L\}$. 
An analogous statement holds for all paths $\ppath{\dlpt{b}_1 p_a}$ for $1 < a \leq L$. 
The fact that the ribs $\gamma_a$ do not cross transversally and may only share a prefix or a suffix imply that these orders, restricted to $\{2,3,\ldots,L-1\}$, are reversions of each
other. 

The classic Erd\H{o}s-Szekeres theorem allows us to find a large monotonous subsequence in this order.

\begin{theorem}[Erd\H{o}s-Szekeres~\cite{ES35}]
    For $r, s\in \mathbb{Z}_{>0}$, any sequence of distinct real numbers with length at least $(r-1)(s-1)+1$ contains either a monotonically increasing subsequence of length $r$ or a monotonically decreasing subsequence of length $s$. 
\end{theorem}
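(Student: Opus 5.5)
We prove the Erd\H{o}s--Szekeres theorem with the classical labelling (pigeonhole) argument. Let $x_1, x_2, \ldots, x_n$ be a sequence of distinct reals with $n \geq (r-1)(s-1)+1$. Suppose, for contradiction, that it contains neither a monotonically increasing subsequence of length $r$ nor a monotonically decreasing subsequence of length $s$.

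To every index $i \in [n]$ we attach a label $(a_i, b_i)$:
\begin{itemize}
\item $a_i$ is the length of a longest increasing subsequence ending at $x_i$;
\item $b_i$ is the length of a longest decreasing subsequence ending at $x_i$.
\end{itemize}
The one-element subsequence $(x_i)$ is both increasing and decreasing, so $a_i, b_i \geq 1$. By the contradiction assumption, $a_i \leq r-1$ and $b_i \leq s-1$. Hence every label lies in $[r-1] \times [s-1]$, a set of $(r-1)(s-1)$ pairs.

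The key step is that the labels are pairwise distinct. Take $i < j$. As the entries are distinct, either $x_i < x_j$ or $x_i > x_j$.
\begin{itemize}
\item If $x_i < x_j$, append $x_j$ to a longest increasing subsequence ending at $x_i$. This gives an increasing subsequence ending at $x_j$ of length $a_i + 1$, so $a_j \geq a_i + 1$.
\item If $x_i > x_j$, append $x_j$ to a longest decreasing subsequence ending at $x_i$. This gives $b_j \geq b_i + 1$.
\end{itemize}
In either case $(a_i, b_i) \neq (a_j, b_j)$.

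So we have $n \geq (r-1)(s-1)+1$ distinct labels drawn from a set of only $(r-1)(s-1)$ pairs, contradicting the pigeonhole principle. This proves the theorem.

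The only point needing care is that distinctness of the entries is what makes the case split $x_i < x_j$ or $x_i > x_j$ exhaustive. In our application the entries are the positions of the indices $\{2,\ldots,L-1\}$ in the clockwise order around $p_1$, which are automatically distinct.
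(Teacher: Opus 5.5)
Your proof is correct. It is the standard labelling argument (due to Seidenberg), and every step checks out: distinctness makes the split $x_i<x_j$ or $x_i>x_j$ exhaustive, and all labels lie in $[r-1]\times[s-1]$. The degenerate cases $r=1$ or $s=1$ are also covered, since the label set is then empty while $n\ge 1$.

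The paper does not prove this theorem; it uses it as a black box from \cite{ES35} in Claim~\ref{cl:get-ordered}, with $r=s$. So there is no alternative argument in the paper to compare yours against.

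One small point about your closing remark on the application. The order around $p_1$ is cyclic, so it must first be cut into a linear sequence before your statement applies. The paper does this implicitly, and it lies outside the statement you were asked to prove.
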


We say that a subset $I \subseteq \{2,\ldots,L-1\}$ of size $|I| \geq 3$ is \emph{ordered} 
if the indices of $I$ appear either in increasing or in decreasing order in the clockwise 
order of paths $\ppath{p_1 \dlpt{t}_a}$ around $p_1$ (equivalently, paths $\ppath{\dlpt{b}_1 p_a}$ around $\dlpt{b}_1$)
and, furthermore, if $i$ and $j$ are the minimum and maximum elements of $I$, respectively,
then the terminals $\dlpt{t}_L$ and $\dlpt{b}_1$ lie in the area enclosed by the concatenation of $\gamma_i$ and $\gamma_j$ that does not contain $\gamma_a$ for $a \in I \setminus \{i,j\}$. 

\begin{claim}\label{cl:get-ordered}
There exists an ordered set $I \subseteq \{2,\ldots,L-1\}$ of size $\Omega(\sqrt{L})$. 
\end{claim}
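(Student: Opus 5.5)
The plan is to combine Erd\H{o}s--Szekeres with a pigeonhole over the cyclic position of the two special terminals $\dlpt{t}_L$ and $\dlpt{b}_1$ relative to the rib cage. First, fix the clockwise cyclic order of the leaves $\dlpt{t}_a$, $1<a\le L$, of the shortest path tree rooted at $p_1$. Cut this cyclic order at $\dlpt{t}_L$ to get a linear order of $\{2,\ldots,L-1\}$; denote by $\pi(a)$ the position of $a$ in it. As noted before the claim, the ribs $\gamma_a$ are pairwise noncrossing and share only prefixes at $p_1$ and suffixes at $p_L$. Hence the same linear order, reversed, is induced around $\dlpt{b}_1$ by the paths $\ppath{\dlpt{b}_1 p_a}$, once that cyclic order is cut at a consistent place. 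So it is enough to work with the single sequence $(\pi(2),\ldots,\pi(L-1))$ of distinct integers.

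Next, apply the Erd\H{o}s--Szekeres theorem with $r=s=\lceil\sqrt{L-2}\,\rceil$. This gives a subset $I_0\subseteq\{2,\ldots,L-1\}$ of size $\Omega(\sqrt L)$ on which $\pi$ is monotone, so the indices of $I_0$ appear in increasing or decreasing order around $p_1$. This handles the first condition of being ordered.

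The second condition concerns where $\dlpt{t}_L$ and $\dlpt{b}_1$ lie relative to the region bounded by $\gamma_i\cup\gamma_j$, where $i=\min I$ and $j=\max I$. I would handle it in three steps.

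\begin{enumerate}
\item Locate $\dlpt{t}_L$. Since we cut the cyclic order at $\dlpt{t}_L$, the leaf $\dlpt{t}_L$ lies cyclically outside the interval between the $\pi$-extreme elements of $I_0$. Because $\pi$ is monotone on $I_0$, those extreme elements are exactly $\min I_0$ and $\max I_0$. The ribs $\gamma_a$ for $a$ strictly between them lie in the "inner" sector around $p_1$, so $\dlpt{t}_L$ lies in the complementary region of $\gamma_{\min}\cup\gamma_{\max}$, as required.
\item Locate $\dlpt{b}_1$. The path $\ppath{\dlpt{b}_1 p_L}$ does not meet any $\ppath{p_1\dlpt{t}_a}$ by \Cref{clm:kcenter-ribs}, and it does not meet the links by noncrossingness. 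By \Cref{clm:kcenter-TT-or-BB} it can touch $\ppath{\dlpt{b}_a p_L}$ only along a common suffix at $p_L$. Its position is therefore determined by where it branches off in the tree of paths ending at $p_L$. That branching point is some gap in the order of $I_0$, not necessarily the outer gap.
\item Fix a bad gap by trimming. If $\dlpt{b}_1$ falls in an inner gap, it splits $I_0$ into two contiguous blocks. Keep the larger block, or the two end pieces, so that $\dlpt{b}_1$ lies in the outer region again. This loses at most a factor of $2$, and monotonicity is inherited by subsets.
\end{enumerate}

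The step I expect to be the main obstacle is making precise the statement that the region enclosed by $\gamma_i\cup\gamma_j$ and not containing the middle ribs is exactly the complement of the sector between consecutive ribs. This requires care with the shared prefixes and suffixes, and with the fact that $\dlpt{t}_L,\dlpt{b}_1$ are leaves attached at possibly shared prefixes. I would argue it through the embedding of the shortest path tree at $p_1$ together with the tree of paths towards $p_L$, using \Cref{clm:kcenter-ribs} to ensure the closed curve $\gamma_i\cup\gamma_j$ is simple. After trimming, we obtain an ordered set of size $\Omega(\sqrt L)$ with $|I|\ge3$ once $L$ is large.
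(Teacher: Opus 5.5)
Your argument is essentially the paper's: Erd\H{o}s--Szekeres gives a monotone $I_0$ of size about $\sqrt{L}$, and you then keep the largest contiguous piece of $I_0$ between the cut points given by $\dlpt{t}_L$, by where $\ppath{\dlpt{b}_1 p_L}$ branches off at $p_L$, and by the wrap-around. Cutting the cyclic order at $\dlpt{t}_L$ before applying Erd\H{o}s--Szekeres merges two of the paper's three cut points, so you lose a factor $2$ instead of $3$. Three small slips: in the trimming step only the single larger block works, since keeping both pieces leaves $\dlpt{b}_1$ in an inner gap; $\ppath{\dlpt{b}_1 p_L}$ avoids the links because a link of $\sigma_a$ avoids $G^{\neg\sigma_a}$, not because of noncrossingness; and it shares only a suffix with each $\ppath{\dlpt{b}_a p_L}$ by uniqueness of shortest paths, not by \Cref{clm:kcenter-TT-or-BB}.
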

\begin{proof}
    By the Erd\H{o}s-Szekeres theorem, there exists a subset $I_0 \subseteq \{2,\ldots,L-1\}$
    of size at least $\sqrt{L-2}$ such that the indices of $I_0$ lie monotonously in the cyclic order of paths $\ppath{p_1 T_a}$ around $p_1$ and in the reversed cyclic order of paths $\ppath{T_1 p_a}$ around $T_1$.
    This cyclic order is split into three parts by (1) the position of $\ppath{p_1 T_L}$ in 
    the order around $p_1$, (2) the position of $\ppath{B_1 p_L}$ in the order around $p_L$,
    and (3) the place between the maximum and the minimum element of $I_0$. By choosing
    the largest of these three parts we get the desired set $I$ of size at least
    $\frac{1}{3}\sqrt{L-2}$. 
\end{proof}

\Cref{cl:get-ordered} allows us to focus on ordered sets. Somewhat abusing the notation,
in what follows we will assume that the entire $\{2,\ldots,L-1\}$ is ordered
(and aim for an $\Oh(\varepsilon^{-1})$ bound on $L$).

\subsubsection{Constructing the final contradiction}

\begin{figure}[ht]
    \centering
    \begin{subfigure}{0.45\textwidth}
        \centering
        \includegraphics[width=1\linewidth]{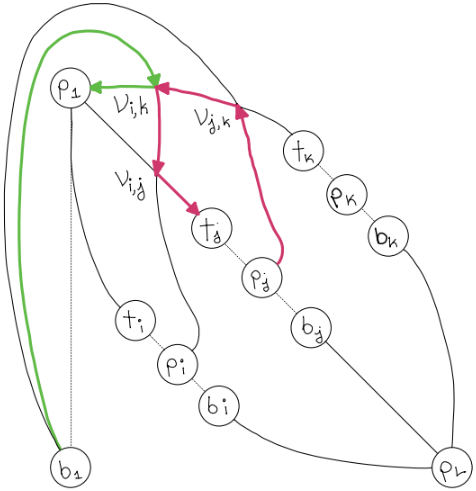}
        \caption{Type T Key Construction}
        \label{fig:typeT-cont}
    \end{subfigure}
    \begin{subfigure}{0.45\textwidth}
        \centering
        \includegraphics[width=1\linewidth]{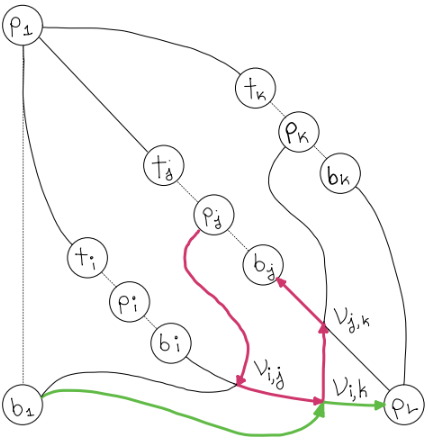}
        \caption{Type B Key Construction}
        \label{fig:DltypeB-cont}
    \end{subfigure}
    \caption{Rearrangements which lead to a bounds on $|I_T|$, $|I_B|$ similarly to the end of the comatching analysis. (a) $i < j < k$ are three consecutive indices of $I_T$. We construct paths $P_j$ (thick pink lines) and $Q_j$ (thick green lines) in~\Cref{cl:bound-typeT} following the direction of the arrows in the picture from $p_j$, $p_1$ to $\dlpt{t}_j$, $\dlpt{b}_1$ respectively, then combining the traversed path segments. Clearly the picture will not be much different if consecutive intersections $v_{a, b}$ are equal. (b) Here $i < j < k$ are three consecutive indices of $I_B$ instead. Symmetrically to the Type T case, for~\Cref{cl:bound-typeB} we construct paths $P_j$ (pink) and $Q_j$ (green) following the direction of the arrows in the picture from $p_j$, $\dlpt{b}_1$ to $\dlpt{b}_j$, $p_L$ respectively, and concatenating each visited path segment.}   
    \label{fig:DL-final-contradiction-drawn}
\end{figure}

The final contradiction will be obtained by looking at how the paths 
$\ppath{\dlpt{b}_1 p_a}$ for $1 < a < L$ behave. 

Fix $1 < a < L$. Observe the following.
\begin{itemize}
    \item For every $1 < b  \leq a$, $\ppath{\dlpt{b}_1 p_a}$ does not intersect
    $\ppath{p_1 \dlpt{t}_b}$ due to~\Cref{clm:kcenter-touch-topdown}.
    \item For every $a \leq b < L$, $\ppath{\dlpt{b}_1 p_a}$ does not intersect
    $\ppath{p_L \dlpt{b}_b}$ due to~\Cref{clm:kcenter-TT-or-BB}.
\end{itemize}
Because the instance is ordered, for every $b_1,b_2$ with $1 < b_1 < a < b_2 < L$, 
the concatenation of $\gamma_{b_1}$ and $\gamma_{b_2}$ is a closed loop that separates
$\dlpt{b}_1$ from $p_a$. Hence, $\ppath{\dlpt{b}_1 p_a}$ intersects either $\ppath{p_L \dlpt{b}_{b_1}}$
or $\ppath{p_1 \dlpt{t}_{b_2}}$. 
Furthermore, because of the ordering, one of the following options hold:
\begin{description}
    \item[Type T:] 
    $\ppath{\dlpt{b}_1 p_a}$ intersects transversally every path $\ppath{p_1 \dlpt{t}_b}$ for
    $a < b < L$, in the decreasing order of indices $b$ if one goes from $\dlpt{b}_1$ to $p_a$, 
    and does not intersect transversally any path $\ppath{p_L \dlpt{b}_b}$ for $1 < b < a$.
    \item[Type B:]
    $\ppath{\dlpt{b}_1 p_a}$ intersects transversally every path $\ppath{p_L \dlpt{b}_b}$ for
    $1 < b < a$, in the increasing order of indices $b$ if one goes from $\dlpt{b}_1$ to $p_a$,
    and does not intersect transversally any path $\ppath{p_1 \dlpt{t}_b}$ for $a < b < L$.
\end{description}
Let $I_T$ and $I_B$ be the set of indices of $\{2,\ldots,L-1\}$ for which option
$T$ or $B$ holds, respectively.
We now separately analyze $I_T$ and $I_B$; the analyses are similar, but not exactly the same.

\paragraph{Analysis of Type T.}
For two indices $a,b \in I_T$, $a < b$, let $v_{a,b}$ be any vertex of the intersection
of $\ppath{\dlpt{b}_1 p_a}$ and $\ppath{p_1 \dlpt{t}_b}$. 
We have already established that for fixed $a \in I_T$, the vertices $v_{a,b}$
for $b \in I_T$, $b > a$, lie on $\ppath{\dlpt{b}_1 p_a}$ in the decreasing order, if one
goes from $\dlpt{b}_1$ to $p_a$ (two consecutive vertices $v_{a,b}$ may be equal). 
We observe that the same holds along $\ppath{p_1 \dlpt{t}_b}$.

\begin{claim}\label{cl:order-typeT}
For every $b \in I_T$, the vertices $v_{a,b}$ for $a \in I_T$, $a < b$, are ordered
in the increasing order on $\ppath{p_1 \dlpt{t}_b}$ if one traverses this path from $p_1$ to $\dlpt{t}_b$
(two consecutive vertices $v_{a,b}$ may be equal). 
\end{claim}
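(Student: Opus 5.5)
I would mirror the argument used in the comatching case to show that the intersections $v_{i,j}$ appear monotonically along $\ppath{q'' p_k}$. The tool there was the clockwise order of leaves in a shortest path tree, and here the analogous tree is the one rooted at $\dlpt{b}_1$ formed by the paths $\ppath{\dlpt{b}_1 p_a}$.

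Fix $b \in I_T$ and two indices $a < a'$ in $I_T$ with $a' < b$. The goal is to show that on $\ppath{p_1 \dlpt{t}_b}$, traversed from $p_1$, the vertex $v_{a,b}$ comes no later than $v_{a',b}$. Both $\ppath{\dlpt{b}_1 p_a}$ and $\ppath{\dlpt{b}_1 p_{a'}}$ start at $\dlpt{b}_1$, share a common prefix, and are disjoint after they branch. Because $a,a' \in I_T$ and $b>a'>a$, each of them crosses $\ppath{p_1 \dlpt{t}_b}$ transversally.

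Next I would use that the set is ordered together with the rib cage. The ribs $\gamma_c$ for $c$ between $a$ and $b$ are arranged monotonically, and the branches toward $p_a$ and $p_{a'}$ leave the tree rooted at $\dlpt{b}_1$ in the prescribed cyclic order. Consider the closed curve formed by the prefix of $\ppath{\dlpt{b}_1 p_{a'}}$ up to $v_{a',b}$, then the part of $\ppath{p_1 \dlpt{t}_b}$ from $v_{a',b}$ back toward $p_1$, and finally a closing piece through $p_1$ along the rib structure $\gamma_{b_1}$ with $b_1<a$. (If $a=\min I$, I would use instead the region bounded by the extremal ribs, which the definition of an ordered set provides.) The claim then amounts to the statement that $\ppath{\dlpt{b}_1 p_a}$ cannot reach $p_a$ if it first meets $\ppath{p_1 \dlpt{t}_b}$ beyond $v_{a',b}$. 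The reason is that $p_a$ lies on the side dictated by the cyclic order: $p_a$ comes before $p_{a'}$ in the order around $\dlpt{b}_1$, matching the order of the ribs. To escape, the path would have to cross $\ppath{\dlpt{b}_1 p_{a'}}$, which contradicts the shortest-path-tree property, or cross $\ppath{p_1 \dlpt{t}_b}$ a second time, which contradicts uniqueness of shortest paths, since two shortest paths intersect in a single subpath. It could also try to cross a path $\ppath{p_L \dlpt{b}_c}$ with $c<a$, but that is ruled out for paths of Type T, or a link, which is ruled out by the noncrossing assumption.

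Equality of consecutive intersection vertices is allowed, exactly as in \Cref{clm:final-punch}, so the argument only needs to show that the order is never reversed. I expect the main obstacle to be pinning down the correct closed curve and the side on which $p_a$ lies: this depends on the orientation conventions of the ordered set, namely whether the clockwise order around $p_1$ is increasing or decreasing. I would handle it by proving one orientation and noting that the other follows by reflecting the plane.
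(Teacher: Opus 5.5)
\paragraph{There is a genuine gap, at exactly the step you flagged as the main obstacle.}

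\emph{The curve does not close.} The rib $\gamma_{b_1}$ runs from $p_1$ to $p_L$, not back to $\dlpt{b}_1$. You therefore need an extra piece such as $\ppath{p_L \dlpt{b}_1}$. The case where no rib $\gamma_{b_1}$ with $1<b_1<a$ exists is also only waved at.

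\emph{Once closed, the location of $p_a$ gives no contradiction.} At the transversal crossing $v_{a',b}$, your curve uses one arm of $\ppath{\dlpt{b}_1 p_{a'}}$ and one arm of $\ppath{p_1 \dlpt{t}_b}$. These two arms are adjacent in the rotation at the crossing. Hence the other two arms, towards $p_{a'}$ and towards $\dlpt{t}_b$, leave into the \emph{same} side. The tail towards $p_{a'}$ never meets the curve again. In the rib-cage picture nothing on the curve separates $p_a$ from $p_{a'}$, since both sit between $\gamma_{b_1}$ and $\gamma_b$. So $p_a$ lies on the side where $v_{a,b}$ would be under the contrary assumption, and ``$p_a$ lies on the other side'' is false. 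The only possible contradiction is that $\ppath{\dlpt{b}_1 p_a}$ departs into the outer side from the stem it shares with $\ppath{\dlpt{b}_1 p_{a'}}$ and $\ppath{\dlpt{b}_1 p_L}$. That is a statement about the rotation of $p_a, p_{a'}, p_L$ in the shortest-path tree rooted at $\dlpt{b}_1$.

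\emph{That rotation is not available.} The ribs consist of $\ppath{p_1 \dlpt{t}_c}$ and $\ppath{\dlpt{b}_c p_L}$. Their non-crossing, the only justification offered for the ordering, relates the rotation at $p_1$ to the rotation at $p_L$. It says nothing about how the paths $\ppath{\dlpt{b}_1 p_c}$ fan out at $\dlpt{b}_1$. For type-T indices that fan-out order is essentially what the claim asserts, so using it is circular. Even granting a monotone fan-out, you would still need where $\ppath{\dlpt{b}_1 p_L}$ sits in it and which direction corresponds to which side. Reflecting the plane fixes orientation conventions but not this.

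\paragraph{The missing idea: use a loop that avoids $\dlpt{b}_1$.} Then the sides of both endpoints of the offending path can be read off from the rib order, which is what the paper does. Assume $a_1<a_2<b$ with $v_{a_1,b}$ farther from $p_1$ than $v_{a_2,b}$. Concatenate:
\begin{itemize}
\item $\gamma_b$ from $p_L$ up to $v_{a_2,b}$;
\item $\ppath{\dlpt{b}_1 p_{a_2}}$ from $v_{a_2,b}$ to $p_{a_2}$;
\item the link of $\sigma_{a_2}$ from $p_{a_2}$ to $\dlpt{b}_{a_2}$;
\item $\ppath{\dlpt{b}_{a_2} p_L}$.
\end{itemize}
This loop lies between the ribs $\gamma_{a_2}$ and $\gamma_b$. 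So by the ordering, $p_{a_1}$ and $\dlpt{b}_1$ are on the same side of it.

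However, $\ppath{\dlpt{b}_1 p_{a_1}}$ meets the loop only at its transversal crossing with $\ppath{p_1 \dlpt{t}_b}$ near $v_{a_1,b}$:
\begin{itemize}
\item it avoids $\ppath{\dlpt{b}_b p_L}$ and $\ppath{\dlpt{b}_{a_2} p_L}$ by \Cref{clm:kcenter-TT-or-BB};
\item it avoids the links of $\sigma_b$ and $\sigma_{a_2}$, because it lies in $G^{\neg\sigma_b}$ and in $G^{\neg\sigma_{a_2}}$;
\item it avoids the tail of $\ppath{\dlpt{b}_1 p_{a_2}}$ by the shortest-path-tree property.
\end{itemize}
A path crossing the loop exactly once cannot have both endpoints on the same side, which is a parity contradiction.
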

\begin{proof}
    Assume the contrary: for some $a_1 < a_2 < b$, $a_1,a_2 \in I_T$, the vertex
    $v_{a_1,b}$ is closer to $\dlpt{t}_b$ on $\ppath{p_1 \dlpt{t}_b}$ than $v_{a_2,b}$.
    Then, the path $\ppath{\dlpt{b}_1 p_{a_1}}$ crosses at $v_{a_1,b}$ the closed loop $\gamma$
    without self-intersections consisting of 
    the concatenation of (i) part of $\gamma_{b}$ from $p_L$ to $v_{a_2,b}$,
    (ii) subpath of $\ppath{\dlpt{b}_1 p_{a_2}}$ from $v_{a_2,b}$ to $p_{a_2}$, 
    (iii) $\Gamma(\sigma_{a_2})$ from $p_{a_2}$ to $\dlpt{b}_{a_2}$; (iv)
    $\ppath{\dlpt{b}_{a_2} p_L}$. 
    Because of the ordering, $\gamma$ separates vertices $p_a$ for $L > a > a_2$ from
    vertices $p_a$ for $1 < a < a_2$, and $a_1$ is on the other side. This is a contradiction.
\end{proof}

This allows us to obtain a bound on $|I_T|$ similarly as in the end of the comatching 
analysis. (See also~\Cref{fig:typeT-cont}.)

\begin{claim} \label{cl:bound-typeT}
    $|I_T|=\Oh(\varepsilon^{-1})$.
\end{claim}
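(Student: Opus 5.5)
We mimic the rearrangement argument of \Cref{clm:final-punch}. Enumerate $I_T = \{a_1 < a_2 < \dots < a_m\}$. For consecutive triples $i<j<k$ of $I_T$ we have the intersection vertices $v_{i,j}, v_{i,k}, v_{j,k}$, where $v_{a,b}$ lies on both $\ppath{\dlpt{b}_1 p_a}$ and $\ppath{p_1 \dlpt{t}_b}$. The ordering along $\ppath{\dlpt{b}_1 p_a}$ is already established: going from $\dlpt{b}_1$ to $p_a$, the vertices $v_{a,b}$ appear in decreasing order of $b$. \Cref{cl:order-typeT} gives the ordering along $\ppath{p_1\dlpt{t}_b}$: going from $p_1$ to $\dlpt{t}_b$, the vertices $v_{a,b}$ appear in increasing order of $a$. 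Together these say that the paths $\ppath{\dlpt{b}_1 p_a}$ and $\ppath{p_1 \dlpt{t}_b}$, for $a,b\in I_T$, form a grid-like web, as in the comatching case.

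For each middle index $j$ (with predecessor $i$ and successor $k$ in $I_T$) we build two walks.
\begin{itemize}
\item A walk $P_j$ from $p_j$ to $\dlpt{t}_j$: start at $p_j$ and follow $\ppath{p_j \dlpt{b}_1}$ back to $v_{j,k}$. Switch to $\ppath{p_1\dlpt{t}_k}$ and move toward $p_1$ to $v_{i,k}$. Switch to $\ppath{\dlpt{b}_1 p_i}$ and move toward $p_i$ to $v_{i,j}$. Finish along $\ppath{p_1 \dlpt{t}_j}$ to $\dlpt{t}_j$.
\item A walk $Q_j$ from $p_1$ to $\dlpt{b}_1$: follow $\ppath{p_1 \dlpt{t}_k}$ from $p_1$ to $v_{i,k}$, then follow $\ppath{\dlpt{b}_1 p_i}$ back to $\dlpt{b}_1$.
\end{itemize}
The exact choice of which consecutive indices to splice should be read off \Cref{fig:typeT-cont}. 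The requirement is that, over all $j$, every edge of every path $\ppath{\dlpt{b}_1 p_a}$ and $\ppath{p_1\dlpt{t}_b}$ with $a,b\in I_T$ is used at most once. The monotone orderings of the intersection vertices on both families of paths are what make the segments used by different $j$ disjoint, exactly as in \Cref{clm:final-punch}.

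By the double-ladder definition, $\dist(p_j,\dlpt{t}_j) > R$ and $\dist(p_1,\dlpt{b}_1) > R$, so each $P_j$ and each $Q_j$ has length at least $R$. We use only the $2|I_T|$ paths $\ppath{\dlpt{b}_1 p_a}$ and $\ppath{p_1\dlpt{t}_b}$. Each has length at most $(1-\eps)R$, since $1<a$ and $b<L$ with $1<b$ and $a<L$ are ladder pairs. This gives
\[ 2|I_T|(1-\eps)R \geq (2|I_T|-4)R, \]
hence $|I_T|\le 2\eps^{-1}$.

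The main obstacle is checking the edge-disjointness of the spliced walks. In particular, each $Q_j$ reuses the prefix of $\ppath{p_1\dlpt{t}_k}$, and that prefix must not also be consumed by $P_{k'}$ for the next middle index. If the $Q_j$ as defined overlap, we would instead route $Q_j$ through $v_{j,k}$, pairing it with the segment of $\ppath{p_1\dlpt{t}_k}$ before $v_{j,k}$ and the segment of $\ppath{\dlpt{b}_1 p_j}$ after $v_{j,k}$. We would then take $P_j$ to use the complementary segments. In that version, each crossing $v_{j,\mathrm{next}(j)}$ splits two paths into four pieces that recombine into one $p$–$\dlpt{t}$ walk and one $p_1$–$\dlpt{b}_1$ walk, which is essentially the uncrossing of \Cref{clm:uncross} applied at adjacent crossings. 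Coincident vertices $v_{a,b}$ do not cause trouble, since the walks may still share vertices without sharing edges.
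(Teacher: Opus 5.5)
Your construction of $P_j$ and $Q_j$, and the final count $2|I_T|(1-\varepsilon)R \geq (2|I_T|-4)R$, are exactly the paper's proof. The overlap you worry about does not occur, because on each $\ppath{\dlpt{b}_1 p_a}$ (resp.\ $\ppath{p_1 \dlpt{t}_b}$) the only pieces ever used are the prefix up to the second-to-last crossing (by a single $Q$), the segment between the second-to-last and last crossings (by a single $P$), and the tail after the last crossing (by $P_a$, resp.\ $P_b$), and these partition the path --- so no fallback is needed, which is fortunate, since uncrossing at $v_{j,k}$ alone only produces a $p_j$--$\dlpt{t}_k$ walk, for which there is no lower bound.
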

\begin{proof}
    Let $i < j < k$ be three consecutive indices of $I_T$. 
    We construct a path $P_j$ connecting $p_j$ and $\dlpt{t}_j$ as follows:
    start in $p_j$, continue along $\ppath{p_j \dlpt{b}_1}$ to $v_{j,k}$
    then along $\ppath{p_1 \dlpt{t}_k}$ in the direction of $p_1$ to $v_{i,k}$, then 
    along $\ppath{\dlpt{b}_1 p_i}$ in the direction of $p_i$ to $v_{i,j}$, 
    and finally along $\ppath{p_1 \dlpt{t}_j}$ to $\dlpt{t}_j$.
    We construct a path $Q_j$ connecting $p_1$ and $\dlpt{b}_1$ as follows:
    start in $\dlpt{b}_1$, continue along $\ppath{\dlpt{b}_1 p_i}$ to $v_{i,k}$,
    and then along $\ppath{p_1 \dlpt{t}_k}$ to $p_1$.
    There are $|I_T|-2$ paths $P_j$ and $|I_T|-2$ paths $Q_j$, and each
    path $P_j$ or $Q_j$ is of length at least $R$.
    
    In the rearrangement we only used edges
    of paths $\ppath{p_1 \dlpt{t}_i}$ and $\ppath{\dlpt{b}_1 p_i}$ for $i \in I_T$
    and we do not use the same edge twice. Each of these paths
    is of length at most $(1-\varepsilon)R$. Hence,
    \[ 2|I_T| \cdot (1-\varepsilon)R \geq (2|I_T|-4) \cdot R.\]
    This implies $|I_T| \leq 2\varepsilon^{-1}$, as desired.
\end{proof}

\paragraph{Analysis of Type B.}
For two indices $a,b \in I_B$, $b < a$, let $v_{a,b}$ be any vertex of the intersection
of $\ppath{\dlpt{b}_1 p_a}$ and $\ppath{p_L \dlpt{b}_b}$. 
We have already established that for fixed $a \in I_B$, the vertices $v_{a,b}$
for $b \in I_B$, $b < a$, lie on $\ppath{\dlpt{b}_1 p_a}$ in the increasing order, if one
goes from $\dlpt{b}_1$ to $p_a$ (two consecutive vertices $v_{a,b}$ may be equal). 
We observe that the same holds along $\ppath{p_L \dlpt{b}_b}$.

\begin{claim}\label{cl:order-typeB}
For every $b \in I_B$, the vertices $v_{a,b}$ for $a \in I_B$, $a > b$, are ordered
in the decreasing order on $\ppath{p_L \dlpt{b}_b}$ if one traverses this path from $p_L$ to $\dlpt{b}_b$
(two consecutive vertices $v_{a,b}$ may be equal). 
\end{claim}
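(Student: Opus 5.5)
We argue by contradiction, mirroring the proof of \Cref{cl:order-typeT} with the roles of top and bottom exchanged. Suppose that for some $a_1 < a_2$ in $I_B$ with $b < a_1$, the vertex $v_{a_1,b}$ lies strictly closer to $p_L$ on $\ppath{p_L \dlpt{b}_b}$ than $v_{a_2,b}$. The claim requires $v_{a_2,b}$ to come no later than $v_{a_1,b}$ when traversing from $p_L$ to $\dlpt{b}_b$, so this is exactly the failure of the claim.

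We build a closed loop $\gamma$ without self-intersections from four pieces:
\begin{enumerate}
\item the part of the rib $\gamma_b$ from $p_1$ through $\ppath{p_1\dlpt{t}_b}$ and the link $\gamma_b'$ onto $\ppath{\dlpt{b}_b p_L}$, up to $v_{a_1,b}$;
\item the subpath of $\ppath{\dlpt{b}_1 p_{a_1}}$ from $v_{a_1,b}$ to $p_{a_1}$;
\item the link $\Gamma(\sigma_{a_1})$ from $p_{a_1}$ to $\dlpt{t}_{a_1}$;
\item the path $\ppath{\dlpt{t}_{a_1} p_1}$.
\end{enumerate}
Here $a_1$ plays the role that $a_2$ played in the Type T proof, because Type B crossings accumulate toward smaller indices.

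To see that $\gamma$ is simple, we use the disjointness facts already available:
\begin{itemize}
\item \Cref{clm:kcenter-touch-topdown}: $\ppath{\dlpt{b}_1 p_{a_1}}$ misses $\ppath{p_1\dlpt{t}_{c}}$ for $c\le a_1$, in particular $\ppath{p_1\dlpt{t}_{a_1}}$ and $\ppath{p_1\dlpt{t}_b}$.
\item Since $a_1\in I_B$, $\ppath{\dlpt{b}_1 p_{a_1}}$ meets $\ppath{p_L\dlpt{b}_b}$ only around $v_{a_1,b}$.
\item The links avoid $G^{\neg\sigma}$ and are pairwise disjoint, which is the linked noncrossing property.
\item Ribs share only prefixes at $p_1$ and suffixes at $p_L$.
\end{itemize}

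Next we locate the relevant terminals relative to $\gamma$. Because the instance is ordered, $\gamma$ separates the region containing the ribs $\gamma_c$ with $b<c<a_1$ (and the points $p_c$, $\dlpt{t}_c$ for such $c$) from the ribs with $c>a_1$. The path $\ppath{\dlpt{b}_1 p_{a_2}}$ must cross $\ppath{p_L\dlpt{b}_b}$ transversally at $v_{a_2,b}$, by the Type B property of $a_2$. Since $v_{a_2,b}$ lies beyond $v_{a_1,b}$ on $\ppath{p_L \dlpt{b}_b}$, that is, on the portion of $\ppath{p_L \dlpt{b}_b}$ outside $\gamma$, we infer that $p_{a_2}$ and $\dlpt{b}_1$ lie on sides of $\gamma$ forced by the ordering. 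We will derive a contradiction by showing that $\ppath{\dlpt{b}_1 p_{a_2}}$ must cross $\gamma$ somewhere it is not allowed to:
\begin{itemize}
\item it cannot cross piece (iv) or the $\ppath{p_1\dlpt{t}_b}$ part of piece (i) (by \Cref{clm:kcenter-touch-topdown}, since $b<a_1<a_2$);
\item it cannot cross the links, which are disjoint from $G^{\neg\sigma}$;
\item it cannot cross piece (ii), since two shortest paths from $\dlpt{b}_1$ share only a prefix and then separate;
\item on the $\ppath{\dlpt{b}_b p_L}$ part of piece (i), which runs only up to $v_{a_1,b}$, it has no crossing, because its crossing with that path is at $v_{a_2,b}$, further along.
\end{itemize}
Alternatively, if $p_{a_2}$ ends up on the same side as $\dlpt{b}_1$, we can use $p_{a_1}$ and the ordering of the paths around $\dlpt{b}_1$ to get a contradiction with the rib order.

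The main obstacle is the side-determination step: pinning down, purely from the ordered property and the fact that the paths around $\dlpt{b}_1$ are ordered reversely to those around $p_1$, which side of $\gamma$ contains $\dlpt{b}_1$ and which contains $p_{a_2}$. The degenerate case where $v_{a_1,b}$ lies on a shared prefix or suffix with other ribs also needs care. We handle it by choosing $v_{a,b}$ as the first transversal crossing point and arguing that shared segments can be perturbed infinitesimally without changing sides.
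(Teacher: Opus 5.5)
\textbf{There is a genuine gap.} Your loop $\gamma$ is exactly the one used in the paper. Your checks are also fine that $\ppath{\dlpt{b}_1 p_{a_2}}$ cannot meet pieces (ii)--(iv), the $\ppath{p_1\dlpt{t}_b}$ part of piece (i), or the links.

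The error is in where you place $v_{a_2,b}$. Piece (i) contains the segment of $\ppath{\dlpt{b}_b p_L}$ from the endpoint $x$ of $\gamma_b'$ to $v_{a_1,b}$. These are precisely the points of the rib $\gamma_b$ that are \emph{farther} from $p_L$ than $v_{a_1,b}$. The part of the rib not on $\gamma$ is the segment between $v_{a_1,b}$ and $p_L$. Under your contradiction hypothesis, $v_{a_2,b}$ is farther from $p_L$ than $v_{a_1,b}$ (and the Type~B crossing is with the rib portion, since it comes from the loops $\gamma_{b_1}\cup\gamma_{b_2}$). So $v_{a_2,b}$ lies \emph{on} piece (i), and your last bullet (``no crossing on the $\ppath{\dlpt{b}_b p_L}$ part of piece (i)'') is false.

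Worse, the contradiction you aim for cannot materialize. Resolve the side question you flag as the main obstacle: the ordering puts $\dlpt{b}_1$ (in the gap between the extreme ribs) and $p_{a_2}$ (beyond rib $a_1$) on the \emph{same} side of $\gamma$. That side is outside the region bounded by parts of $\gamma_b$ and $\gamma_{a_1}$ that contains the ribs $\gamma_c$, $b<c<a_1$. A path avoiding $\gamma$ between two points on the same side yields nothing.

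The missing step is to exploit the crossing at $v_{a_2,b}$ instead of denying it, as the paper does. The path $\ppath{\dlpt{b}_1 p_{a_2}}$ crosses $\gamma$ transversally at $v_{a_2,b}$ on piece (i). It meets $\ppath{\dlpt{b}_b p_L}$ in a single subpath, by uniqueness of shortest paths, so it cannot cross back over the rib $\gamma_b$. By your other bullets, it cannot meet the remaining pieces of $\gamma$. Hence it crosses $\gamma$ exactly once, which forces $\dlpt{b}_1$ and $p_{a_2}$ onto opposite sides of $\gamma$. This contradicts the ordering, which places both outside.

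Your fallback (``use $p_{a_1}$ and the ordering around $\dlpt{b}_1$'') is not substantiated and is not needed once the crossing is used correctly.
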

\begin{proof}
    Assume the contrary: for some $b < a_1 < a_2$, $a_1,a_2 \in I_B$, the vertex
    $v_{a_1,b}$ is closer to $p_L$ on $\ppath{p_L \dlpt{b}_b}$ than $v_{a_2,b}$.
    Then, the path $\ppath{\dlpt{b}_1 p_{a_2}}$ crosses at $v_{a_2,b}$ the closed loop $\gamma$
    without self-intersections consisting of 
    the concatenation of (i) part of $\gamma_{b}$ from $p_1$ to $v_{a_1,b}$,
    (ii) subpath of $\ppath{\dlpt{b}_1 p_{a_1}}$ from $v_{a_1,b}$ to $p_{a_1}$, 
    (iii) $\Gamma(\sigma_{a_1})$ from $p_{a_1}$ to $\dlpt{t}_{a_1}$; (iv)
    $\ppath{p_1 \dlpt{t}_{a_1}}$. 
    Because of the ordering, $\gamma$ separates vertices $p_a$ for $L > a > a_1$ from
    vertices $p_a$ for $1 < a < a_1$, and $a_2$ is on the other side. This is a contradiction.
\end{proof}

This allows us to obtain a bound on $|I_B|$ similarly as in the end of the comatching 
analysis. (See also~\Cref{fig:DltypeB-cont}.)

\begin{claim} \label{cl:bound-typeB}
    $|I_B|=\Oh(\varepsilon^{-1})$.
\end{claim}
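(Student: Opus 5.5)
The plan mirrors the proof of \Cref{cl:bound-typeT}, with the roles of the two path families swapped: the paths $\ppath{p_L \dlpt{b}_b}$ take the place of the paths $\ppath{p_1 \dlpt{t}_b}$. For $a,b \in I_B$ with $b < a$, the vertex $v_{a,b}$ lies on both $\ppath{\dlpt{b}_1 p_a}$ and $\ppath{p_L \dlpt{b}_b}$. We use two orderings of these vertices:
\begin{itemize}
\item along $\ppath{\dlpt{b}_1 p_a}$ (from $\dlpt{b}_1$), they appear in increasing order of $b$; this is the definition of Type B;
\item along $\ppath{p_L \dlpt{b}_b}$ (from $p_L$), they appear in decreasing order of $a$; this is \Cref{cl:order-typeB}.
\end{itemize}
Consecutive vertices may coincide in both orderings.

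Fix three consecutive indices $i<j<k$ of $I_B$. We build two walks.
\begin{itemize}
\item The walk $P_j$ runs from $p_j$ to $\dlpt{b}_j$. Start at $p_j$ and go along $\ppath{p_j \dlpt{b}_1}$ back to $v_{j,i}$. Then follow $\ppath{p_L \dlpt{b}_i}$ towards $p_L$ up to $v_{k,i}$; by \Cref{cl:order-typeB}, $v_{k,i}$ lies closer to $p_L$ than $v_{j,i}$. Then follow $\ppath{\dlpt{b}_1 p_k}$ towards $p_k$ up to $v_{k,j}$; by the Type B order on $\ppath{\dlpt{b}_1 p_k}$, $v_{k,i}$ comes before $v_{k,j}$. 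Finish along $\ppath{p_L \dlpt{b}_j}$ from $v_{k,j}$ to $\dlpt{b}_j$.
\item The walk $Q_j$ runs from $\dlpt{b}_1$ to $p_L$. Take $\ppath{\dlpt{b}_1 p_k}$ up to $v_{k,i}$, then $\ppath{\dlpt{b}_i p_L}$ from $v_{k,i}$ to $p_L$.
\end{itemize}

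Both walks connect pairs at distance exceeding $R$. For $P_j$ this is the double-ladder condition $\dist(p_j,\dlpt{b}_j) > R$. For $Q_j$ it is $\dist(p_L, \dlpt{b}_1) > R$, which holds because $1 \le L$. Hence each of $P_j$ and $Q_j$ has length at least $R$.

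The next step is to check that, as $j$ ranges over the middle indices of $I_B$, no edge is used twice. The segments used are:
\begin{itemize}
\item on $\ppath{\dlpt{b}_1 p_a}$: the piece from $v_{a,a^-}$ to $p_a$ (in $P_a$), the piece from $v_{a,a^{--}}$ to $v_{a,a^-}$ (in $P_{a^-}$), and the prefix up to $v_{a,a^{--}}$ (in $Q_{a^-}$);
\item on $\ppath{p_L \dlpt{b}_b}$: the analogous three consecutive pieces, cut at $v_{b^{++},b}$ and $v_{b^+,b}$.
\end{itemize}
Here $a^-$ and $a^{--}$ denote the predecessors of $a$ in $I_B$, and $b^+$ and $b^{++}$ the successors of $b$. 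By the two orderings, these pieces are consecutive and interior-disjoint. This bookkeeping is the main point to verify carefully. Each piece must be traversed in the correct direction and the cut points must be monotone; the two orderings are exactly what guarantee this.

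It then follows that the $2(|I_B|-2)$ walks use edges from $2|I_B|$ paths, each of length at most $(1-\varepsilon)R$, without repetition. Therefore $2|I_B|(1-\varepsilon)R \ge (2|I_B|-4)R$, which gives $|I_B| \le 2\varepsilon^{-1}$.

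The main obstacle is confirming that $Q_j$ really ends at a pair with distance greater than $R$; the check above settles this. A second point is that the piece of $\ppath{p_L\dlpt{b}_i}$ used by $P_j$ and the one used by $Q_j$ do not overlap. This holds because they lie on opposite sides of $v_{k,i}$.
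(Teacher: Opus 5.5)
There is a genuine gap, and it is exactly at the step you flagged as the main obstacle. The claim $\dist(p_L,\dlpt{b}_1) > R$ is false.

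In the definition of an $\eps$-double ladder, the far condition $\dist(p_i,\dlpt{b}_j) > R$ requires the index of the $p$-terminal to be at most that of the $\dlpt{b}$-terminal. Taking $i=1$, $j=L$ gives $\dist(p_1,\dlpt{b}_L) > R$, not $\dist(p_L,\dlpt{b}_1) > R$. The pair $(p_L,\dlpt{b}_1)$ instead falls under the second condition (with $1<L$), so $\dist(p_L,\dlpt{b}_1) \le (1-\eps)R$. Indeed, $\ppath{\dlpt{b}_1 p_L}$ is one of the short paths of the shortest-path tree rooted at $\dlpt{b}_1$ used throughout this part of the argument.

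Consequently, your walk $Q_j$ is only guaranteed to have length at least $\dist(\dlpt{b}_1,p_L)$, which may be well below $R$. The inequality $2|I_B|(1-\eps)R \ge (2|I_B|-4)R$ therefore does not follow. This is precisely where Type B is not symmetric to Type T. In Type T the auxiliary walk joins $p_1$ and $\dlpt{b}_1$, which is a far pair. Here it joins $\dlpt{b}_1$ and $p_L$, the common endpoints of the two families of short paths, which form a close pair.

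The missing ingredient is the equidistance of the instance, which your argument never uses. Equidistance gives $\Udist(\dlpt{b}_1,p_a) = \Udist(\dlpt{b}_1,p_L)$ for $1<a<L$, hence $\dist(\dlpt{b}_1,p_a) \le \dist(\dlpt{b}_1,p_L)+\delta$. Keep your walks $P_j$, $Q_j$ and your bookkeeping of the cut points, which is correct. Comparing total lengths then gives
\[ |I_B|\bigl(\dist(\dlpt{b}_1,p_L)+\delta\bigr) + |I_B|(1-\eps)R \;\ge\; (|I_B|-2)R + (|I_B|-2)\dist(\dlpt{b}_1,p_L). \]
This rearranges to $|I_B|(\eps R-\delta) \le 2R + 2\dist(\dlpt{b}_1,p_L) \le 4R$. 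With $\delta = \eps R/100$ this yields $|I_B| = \Oh(\eps^{-1})$.

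In words, the length spent on the walks $Q_j$ is recovered from the paths $\ppath{\dlpt{b}_1 p_a}$ only because all of them have essentially the same length as $\ppath{\dlpt{b}_1 p_L}$. Without equidistance, the same count bounds $|I_B|$ only when $\dist(\dlpt{b}_1,p_L) > (1-2\eps)R$.
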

\begin{proof}
    Let $i < j < k$ be three consecutive indices of $I_B$. 
    We construct a path $P_j$ connecting $p_j$ and $\dlpt{b}_j$ as follows:
    start in $p_j$, continue along $\ppath{p_j \dlpt{b}_1}$ to $v_{i,j}$
    then along $\ppath{p_L \dlpt{b}_i}$ in the direction of $p_L$ to $v_{i,k}$, then 
    along $\ppath{\dlpt{b}_1 p_k}$ in the direction of $p_k$ to $v_{j,k}$, 
    and finally along $\ppath{p_L \dlpt{b}_j}$ to $\dlpt{b}_j$.
    We construct a path $Q_j$ connecting $\dlpt{b}_1$ and $p_L$ as follows:
    start in $\dlpt{b}_1$, continue along $\ppath{\dlpt{b}_1 p_i}$ to $v_{i,k}$,
    and then along $\ppath{p_L \dlpt{b}_k}$ to $p_L$.
    There are $|I_B|-2$ paths $P_j$ and $|I_B|-2$ paths $Q_j$.
    Each path $P_j$ is of length at least $R$, while
    each path $Q_j$ is of length at least $\dist(\dlpt{b}_1,p_L)$. 
    Furthermore, as the instance is equidistant, we
    have $|\dist(\dlpt{b}_1,p_L) - \dist(\dlpt{b}_1, p_j)| \leq \delta$ for every $j \in I_B$. 
    
    In the rearrangement we only used edges
    of paths $\ppath{p_L \dlpt{b}_i}$ and $\ppath{\dlpt{b}_1 p_i}$ for $i \in I_B$
    and we do not use the same edge twice. 
    Hence,
    \[ \sum_{i \in I_B} \big( \dist(\dlpt{b}_1, p_i) + \dist(p_L, \dlpt{b}_i) \big) \geq (|I_B|-2)R + (|I_B|-2)\dist(\dlpt{b}_1, p_L). \]
    Each of the distances in the left hand side above is at most $(1-\varepsilon)R$, and the length of the paths $\ppath{\dlpt{b}_1 p_i}$ do not differ
    from $\dist(\dlpt{b}_1,p_L)$ by more than $\delta$.
    Hence,
    \[ (|I_B|+2)(1-\varepsilon)R + (|I_B|-2)\delta \geq (|I_B|-2)R. \]
    This implies $|I_B| = \Oh(\varepsilon^{-1})$, as desired.
\end{proof}

This concludes the proof of~\Cref{lem:DL-noncrossing-bound}.

%% file: OtherApps.tex
\section{Polygon with holes}\label{subsec:other}

Here we show a polynomial coreset bound for furthest neighbors of points in a polygon with holes. This follows from \Cref{lm:poly-holes} below and \Cref{thm:1center}. 

\begin{lemma}\label{lm:poly-holes} Let $f(\eps)$ be the maximum size of $\eps$-coresets for furthest neighbors in planar metrics for any $\eps\in (0,1)$. Then we can construct an $\eps$-coreset of size $f(\eps/8)$ for furthest neighbors of any point set in a polygon with holes.
\end{lemma}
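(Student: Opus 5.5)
We reduce the geodesic metric of a polygon with holes $\mathcal{P}$ to a planar graph metric by discretization, losing only a constant factor in $\eps$, and then apply the planar coreset (of size $f(\eps/8)$) to the discretized point set.

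Let $S \subseteq \mathcal{P}$ be the input point set and let $\dist_{\mathcal{P}}$ denote the geodesic (shortest obstacle-avoiding path) distance. If $|S|\le 1$ there is nothing to do, so let $\Delta$ be the geodesic diameter of $S$. Every query point $v \in \mathcal{P}$ has furthest distance at least $\Delta/2$ to $S$. Hence an additive error of $\eps\Delta/c$ in all distances is a multiplicative error of at most $2\eps/c$ in the furthest distance.

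\textbf{Step 1: build the graph.} Take the visibility graph on the vertices of $\mathcal{P}$ (outer boundary and holes) together with the points of $S$, with Euclidean edge weights. Shortest paths in this graph realize $\dist_{\mathcal{P}}$ exactly between these points. The graph is not planar, since visibility edges cross. We therefore insert a vertex at every crossing point of two edges and split the edges there. Distances do not change: a crossing point lies on both straight segments, so subdivision only adds vertices on existing segments, and any path through the graph is still a polygonal path inside $\mathcal{P}$, so no distance shrinks below the geodesic one. The result is a plane graph $G$ with $\dist_G = \dist_{\mathcal{P}}$ on all of its vertices, and $S \subseteq V(G)$.

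\textbf{Step 2: handle queries.} A query $v$ is an arbitrary point of $\mathcal{P}$, not necessarily a vertex of $G$. Two options:
\begin{itemize}
\item Add $v$ to the graph. A coreset that works for one graph need not work for another, so this is only useful if the planar coreset is built for a single graph that is good for all queries simultaneously. That requires some canonical finite structure.
\item Cover $\mathcal{P}$ by a finite $\eta$-net $N$ with $\eta=\eps\Delta/16$, in the geodesic metric (this is possible since $\mathcal{P}$ is compact). Add every net point to $G$, connected by its visibility segments, and planarize again as in Step 1. For a query $v$, pick a net point $u$ with $\dist_{\mathcal{P}}(u,v)\le\eta$. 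Then the distance from $v$ to any point of $S$ differs from the distance from $u$ by at most $\eta$.
\end{itemize}
We take the second option. Apply the planar result to $(V(G),\dist_G)$ and $S$ with parameter $\eps/8$ to obtain $Q \subseteq S$ with $|Q|\le f(\eps/8)$. For a query $v$ with net point $u$, pick $q\in Q$ with
$$\dist(u,q)\ge(1-\eps/8)\max_{s\in S}\dist(u,s).$$
Moving from $u$ back to $v$ loses at most $2\eta\le \tfrac{\eps}{8}\Delta\le\tfrac{\eps}{4}\max_{s\in S}\dist(v,s)$, so the total loss is below $\eps$. The constants need adjusting so that everything sums to $\eps$ with $\eps/8$ in the planar call.

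\textbf{Main obstacle.} Step 1 is the delicate point: the planarization must not create shortcuts and must keep exact distances between net points, which holds because every graph path is a physical path inside $\mathcal{P}$. The resulting graph is huge but finite, and this does not matter since the planar bound $f(\eps/8)$ is independent of graph size. If polynomial construction time were also required, one would instead keep only query points arising from a bounded arrangement, but the lemma only asserts existence of a coreset of size $f(\eps/8)$.
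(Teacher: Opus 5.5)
Your argument is correct and is essentially the paper's route: net the queries at geodesic scale $\Theta(\eps\Delta)$, realize the distances between the net and the point set in a planar graph, apply the planar coreset with parameter $\eps/8$, and transfer back by the triangle inequality using that every furthest distance is at least $\Delta/2$. The paper nets only the $\Delta/\eps$-neighborhood of the point set and cites~\cite{BKKLLPT25} for the graph, whereas you net the whole compact polygon and planarize the visibility graph yourself. Two small corrections. First, the planarized graph satisfies $\dist_G=\dist_{\mathcal{P}}$ only between original vertices (polygon corners, $S$, net points), not at the inserted crossing vertices; this is harmless, since distances between net points and $S$ are all you use. Second, no adjustment of constants is needed, because your chain of inequalities already yields the factor $1-3\eps/8\geq 1-\eps$.
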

\begin{proof} Let $\mathcal{P}$ be a polygon with holes  on the plane, and $P$ be a set of points in $\mathcal{P}$. Let $\dist_{\mathcal{P}}$ be the geodesic distance function between points in $\mathcal{P}$.  Let $\Delta$ be the diameter of $P$.  Let $P^+$ be the set of points that are within distance $\Delta/\eps$ from $P$.  Observe that for any point $q \in V\setminus P^+$, any point $p\in P$ is an $(1-\eps)$-approximate furthest neighbor of $q$. Thus, as long as the coreset is non-empty, points in $V\setminus P^+$ are taken care of. 

Herein, we focus on points in $P^+$.  Let $Q$ be an $(\eps\Delta/8 )$-net of $P^+$. That is, every point $x\in P^+$ has a point $y\in Q$ such that $\dist_{\mathcal{P}}(x,y)\leq \eps\Delta/8$ and every two points $x,y \in Q$ has $\dist_{\mathcal{P}}(x,y) > \eps \Delta/8$.  We say that a subset $C\subseteq P$ is a furthest $(\eps/8)$-coreset for $P$ \EMPH{restricted to $Q$} if for every $q\in Q$, there exists a point $p\in C$ such that:
\begin{equation*}
     \dist_{\mathcal{P}}(p,q) \geq (1-\eps) \dist_{\mathcal{P}}(p^*,q)
\end{equation*}
where $p^*$ is the furthest neighbor of $q$ in $p$. That is, $C$ only preserves the furthest neighbors of points in $Q$. The following claim shows that it suffices to restrict to $Q$; the proof will be delayed until later.

\begin{claim}\label{clm:coreset-extended} Let $C$ be a furthest neighbor $(\eps/8)$-coreset of $P$ restricted to $Q$. Then $C$ is a furthest neighbor $\eps$-coreset for $P$ restricted to $P^+$. 
\end{claim}

Let $G = (V,E,w)$ be an edge-weighted planar graph such that (i)  the vertex set satisfies $Q\cup P\subseteq V\subseteq \mathcal{P}$, (ii) $\dist_G(x,y) = \dist_{\mathcal{P}}(x,y)$ for all $x,y\in P\cup Q$, and (iii) for every edge $E = (u,v)$, $w(u,v) = \dist_{\mathcal{P}}(u,v)$. Such a graph $G$ exists~\cite{BKKLLPT25}. (The basic idea to construct $G$ is to take the geodesics connecting all pairs of points in $Q$, and place a vertex at the intersection of any two geodesics; the graph is planar by the planarity of the polygon.)
 Next we construct a furthest neighbor $(\eps/8)$-coreset $C\subseteq P$ of size $f(\eps/8)$  for the shortest path metrics of $G$. Since $G$ preserves pairwise distances between points in $P\cup Q$,  $C$ is an  $\eps$-coreset for $P$ restricted to $Q$.  Thus, by \Cref{clm:coreset-extended}, $C$ is a  furthest neighbor $\eps$-coreset for $P$ restricted to $P^+$ and hence the whole domain $\mathcal{P}$.  
\end{proof}

Now we show \Cref{clm:coreset-extended}, which is simply an application of the triangle inequality.

\begin{proof}[Proof of \Cref{clm:coreset-extended}] Let $x$ be a point in $P^+$ and $y$ be a point in $Q$ such that  $\dist_{\mathcal{P}}(x,y)\leq \eps\Delta/8$; $y$ exists by the definition of $Q$. Let $x^*$ and $y^*$ be the furthest neighbors of $x$ and $y$ in $P$. Since the diameter of $P$ is at most $\Delta$, by triangle inequality:
\begin{equation*}
    \dist_{\mathcal{P}}(x,x^*), \dist_{\mathcal{P}}(y,y^*)\geq \Delta/2
\end{equation*}

Since $C$ is an $(\eps/8)$-coreset restricted to $Q$, there exists a point $\hat{y}\in C$ such that:
\begin{equation*}
    \dist_{\mathcal{P}}(y,\hat{y}) \geq (1-\eps/8) \dist_{\mathcal{P}}(y,y^*)
\end{equation*}

Our goal is to show that $\hat{y}$ is a $(1-\eps)$-approximate furthest neighbor of $x$, as follows:
\begin{equation}
    \begin{split}
         \dist_{\mathcal{P}}(x,\hat{y})  &\geq \dist_{\mathcal{P}}(y,\hat{y}) - \dist_{\mathcal{P}}(x,y)\qquad \text{(triangle inequality)}\\
         &\geq (1-\eps/8)\dist_{\mathcal{P}}(y,y^*) - \dist_{\mathcal{P}}(x,y)\qquad \text{(by definition of $\hat{y}$)}\\
         &\geq (1-\eps/8)\dist_{\mathcal{P}}(y,x^*) - \dist_{\mathcal{P}}(x,y)\qquad \text{(since $y^*$ is furthest neighbor of $y$)}\\
         &\geq (1-\eps/8)(\dist_{\mathcal{P}}(x,x^*) - \dist_{\mathcal{P}}(x,y)) - \dist_{\mathcal{P}}(x,y)\qquad \text{(triangle inequality)}\\
         &\geq (1-\eps/8)\dist_{\mathcal{P}}(x,x^*) -  2\dist_{\mathcal{P}}(x,y)\\
         &\geq (1-\eps/8)\dist_{\mathcal{P}}(x,x^*) - \eps\Delta/4  \qquad \text{(since $\dist_{\mathcal{P}}(x,y)\leq \eps\Delta/8$)}\\
          &\geq (1-\eps/8)\dist_{\mathcal{P}}(x,x^*) - (\eps/2)\dist_{\mathcal{P}}(x,x^*)   \qquad \text{(since $ \dist_{\mathcal{P}}(x,x^*)\geq \Delta/2$)}\\
          &\geq (1-\eps)\dist_{\mathcal{P}}(x,x^*)~,
     \end{split}
\end{equation}
as desired.
\end{proof}

%% file: genus.tex
\section{Bounded genus graphs}\label{sec:genus}

The upper bounds for metrics induced by graphs embeddable in surfaces of bounded Euler genus will follow from a reduction to planar metrics via the use of \emph{cut-graphs}. We will reuse below some of the formalism defined for a similar argument in~\cite{DBLP:conf/icalp/KlukPPS25}. 

For a surface $\Sigma$, a graph $H$ embedded in $\Sigma$ is its \EMPH{simple cut-graph} if it has a single face homeomorphic to an open disc. In particular, the surface $\Sigma \cut H$ obtained from $\Sigma$ by cutting along the embedding of the edges of $H$ is a disk.

Let $G$ be a graph embedded in $\Sigma$ and let $H$ be its subgraph that is a simple cut-graph of $\Sigma$.
We define $G \cut H$ as the graph embedded in $\Sigma \cut H$ obtained from $G$ as follows.
First, let $\sigma$ be the (unique) facial walk of $H$. Each edge $e$ of $H$ is contained exactly twice in $\sigma$ and each vertex $v$ of $H$ is contained in $\sigma$ as many times as the degree of $v$ in $H$.
To obtain $G \cut H$, we replace $H$ with a simple cycle $C_\sigma$ whose vertex set is the set of copies of vertices of $H$ and whose edge set is the set of copies of edges of $H$ defined the obvious way. Notice that $\sigma$ also prescribes for every edge $uv$ of $G$ between a vertex $u\in V(G)\setminus V(H)$ and a vertex $v\in V(H)$, to which copy of $v$ in $G \cut H$ the vertex $u$ should be adjacent to in $G \cut H$.

The key component of our reduction is the following well-known result; see, e.g.~\cite{BorradaileDT14,CabelloCL12algo,EricksonW05}.
\begin{lemma}\label{lem:cutgraph}
	For every integer $k \geq 1$ and for every edge-weighted connected graph $G$ embedded on a surface $\Sigma$ of Euler genus at most $g$ and every vertex $u \in V(G)$, there is a subgraph $H$ of $G$ with the following properties:
	\begin{itemize}
		\item $H$ is a simple cut-graph of $\Sigma$, and
		\item $V(H)$ is the union of the vertex sets of $\Oh(g)$ shortest paths in $G$ that have $u$ as a common endpoint.
	\end{itemize}
    In particular, if $G$ has diameter $\Delta$, then the set of copies of the vertices of $H$ in $G \cut H$ can be covered with $\Oh(g)$ paths of length at most $\Delta$ in $G \cut H$.
\end{lemma}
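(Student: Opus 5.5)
This is the classical construction, and I would follow the standard greedy tree-cotree route (Eppstein; Erickson--Whittlesey). First reduce to the case where $\Sigma$ is the surface in which $G$ is cellularly embedded. If $G$ is not cellularly embedded, I would add edges of infinite weight (as in the triangulation step of \Cref{ss:to-noncrossing}) until every face is an open disc. This does not change any distances or shortest paths, and the resulting cut-graph will only use finite-weight edges on its paths, apart from the extra edges mentioned below.

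Fix $u$ and let $T$ be a shortest-path tree of $G$ rooted at $u$. Consider the dual graph $G^*$ and pick a spanning tree $C^*$ of $G^*$ that avoids the duals of the edges of $T$. This is possible because $G^* - T^*$ is connected: $T$ is a tree, so it contains no cycle whose dual would be a cut. Euler's formula, $|V|-|E|+|F| = 2-g$, shows that the set $X = E(G)\setminus (E(T)\cup E(C)^*)$ has exactly $g$ edges, where $E(C)^*$ denotes the primal edges dual to $C^*$. Take $H_0 = T \cup X$. Cutting $\Sigma$ along $H_0$ gives a disc: the faces glued along the dual tree $C^*$ form a single disc whose boundary is the facial walk of $T\cup X$. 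Now prune $H_0$ by repeatedly deleting degree-$1$ vertices not equal to $u$. The result $H$ is still a cut-graph with a single disc face, and it is contained in $\bigcup_{xy\in X}(T[u,x]\cup T[u,y]\cup\{xy\})$. Hence $V(H)$ is the union of the vertex sets of the $2g = \Oh(g)$ shortest paths $T[u,x]$ and $T[u,y]$, all of which have $u$ as a common endpoint. If the definition requires $H$ to be a simple graph (a cut-graph with one face), the pruning keeps it connected, and the case $g=0$ is handled by letting $H$ be the single vertex $u$ or a single edge.

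For the ``in particular'' clause, each copy of a vertex of $H$ in $G\cut H$ lies on the cycle $C_\sigma$. The facial walk $\sigma$ traverses every edge of $H$ exactly twice, so each tree path $T[u,x]$ appears in $\sigma$ as at most a constant number of contiguous stretches. More precisely, the walk decomposes into $\Oh(g)$ maximal segments, each a copy of a subpath of some $T[u,x]$. The boundaries between segments sit at the $\Oh(g)$ branch vertices of $H$ and at the endpoints of the edges of $X$, and $H$ has $\Oh(g)$ vertices of degree at least $3$ by Euler characteristic. Each such segment is a path in $G\cut H$ whose length equals the length of a subpath of a shortest path in $G$, so it is at most $\Delta$. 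These $\Oh(g)$ segments cover all copies.

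The main obstacle is the bookkeeping in this last step. One has to argue that $\sigma$ splits into $\Oh(g)$ segments, each isometric to a subpath of a shortest path. I would handle it by noting that $H$, after suppressing degree-$2$ vertices, has $\Oh(g)$ edges. Indeed, a one-face graph on a surface of Euler genus $g$ has $|E|-|V| = g-1$, and the minimum degree is at least $3$ after suppression, apart from $u$. Each suppressed edge of $H$ is a subpath of a single $T[u,x]$ or is an edge of $X$, and it contributes two copies to $\sigma$.
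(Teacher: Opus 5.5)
Your proposal is correct: it is the standard tree--cotree construction (a shortest-path tree rooted at $u$, a dual spanning tree avoiding its edges, the $g$ leftover edges, then pruning leaves), and the paper does not prove the lemma itself but cites exactly this standard argument (Erickson--Whittlesey and related work). Two pieces of bookkeeping need tightening. In the covering step, keep the endpoints of the edges of $X$ and the root $u$ as unsuppressed vertices, so that each piece of $\sigma$ is either a tree subpath or a single $X$-edge, and cover the copies lying only on $X$-edge pieces by trivial one-vertex paths. For the non-cellular case, pass to the cellular surface of Euler genus at most $g$ rather than adding infinite-weight edges, since the latter may put non-$G$ edges into $H$.
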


The upper bound proofs for $\eps$-comatchings and $\eps$-double ladders follow a very similar structure; hence, we will define the following notion of an $\eps$-double semi-ladder to capture both structures.

\begin{definition}[$\varepsilon$-double semi-ladder]
Let $\DL = \{\dltri{1}, \dltri{2},\ldots, \dltri{\ell}\}\subseteq V^3$  be a sequence of $\ell$ \emph{ordered triplets} of points in a metric $(V,\dist)$, and $\eps\in (0,1)$ be a parameter.  We say that $\DL$ is an \EMPH{$\eps$-double semi-ladder} if there exists $R > 0$ such that:
    \begin{enumerate}
        \item for every $1 \leq i < j \leq \ell$,
        \[ \dist(p_i, \dlpt{t}_j) \leq (1-\varepsilon)R \quad \mathrm{and} \quad \dist(p_j, \dlpt{b}_i) \leq (1-\varepsilon)R.\]
        \item for every $1 \leq i \leq \ell$,
        \[ \dist(p_i, \dlpt{t}_i) > R \quad \mathrm{and} \quad \dist(p_i, \dlpt{b}_i) > R;\]
    \end{enumerate}
An $\varepsilon$-double semi-ladder instance is a pair $(G, \DL)$ where $G$ is a graph and $\DL$ is an $\eps$-double semi-ladder in the graph metric induced by $G$.
\end{definition}
Note that this notion generalizes both $\eps$-comatchings and $\eps$-double ladders: the former can be obtained by putting $\dlpt{b}_i = \dlpt{t}_i = q_i$ and the latter is a special case where we also require the distances $\dist(p_j, \dlpt{t}_i), \dist(p_i, \dlpt{b}_j)$ for $1 \leq i < j \leq \ell$ to be large.
The reduction itself is realized by the following lemma.

\begin{lemma}\label{lem:genus}
    Fix any integer $g \geq 0$ and $\eps \in (0, 1)$. Let $(G, \DL = \{\dltri{i}\}_{i \in [L]})$ be an $\eps$-double semi-ladder instance such that $G$ is embeddable in a surface of Euler genus at most $g$. Then, there exists a set of indices $j_1, \dots, j_{L'} \in [L]$, a planar graph $G'$ and a map $\eta : V(G) \to V(G')$ such that:
    \begin{itemize}
        \item $\dist_{G'}(\eta(p_{j_x}), \eta(\dlpt{t}_{j_y})) \geq \dist_G(p_{j_x}, \dlpt{t}_{j_y})$ and $\dist_{G'}(\eta(p_{j_x}), \eta(\dlpt{b}_{j_y})) \geq \dist_G(p_{j_x}, \dlpt{b}_{j_y})$ for $x, y \in [L']$,
        \item $\dist_{G'}(\eta(p_{j_x}), \eta(\dlpt{t}_{j_y})) = \dist_G(p_{j_x}, \dlpt{t}_{j_y})$ for $1 \leq x < y \leq L'$,
        \item $\dist_{G'}(\eta(p_{j_y}), \eta(\dlpt{b}_{j_x})) = \dist_G(p_{j_y}, \dlpt{b}_{j_x})$ for $1 \leq x < y \leq L'$,
        \item $L = \Oh(g^4 \eps^{-8}) \cdot L'$.
    \end{itemize}
\end{lemma}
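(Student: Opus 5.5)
We plan to cut the surface open along a cut-graph made of few shortest paths, and then keep only those triplets whose relevant shortest paths avoid the cut. First we make the instance clean in the same way as in \Cref{sec:comatching-bound}: every edge lies on some path $\Mpath{p_i}{\dlpt{t}_j}$ or $\Mpath{p_j}{\dlpt{b}_i}$ with $i<j$, and $G$ is connected. We may assume $L \geq 3$. Then, exactly as in \Cref{diam-lnci-DL}, the diameter of $G$ is $\Oh(R)$, where $R$ is the threshold of the double semi-ladder. That argument used only the ``close'' distances $\dist(p_i,\dlpt{t}_j),\dist(p_j,\dlpt{b}_i)\leq(1-\eps)R$ for $i<j$, which the semi-ladder definition still provides. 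Set $\delta=\frac{\eps}{100}R$.

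Next we apply \Cref{lem:cutgraph} to $G$ with an arbitrary root $u$. This gives a simple cut-graph $H$ whose vertex set is covered by $\Oh(g)$ shortest paths, each of length $\Oh(R)$. We place a $\delta$-net $Z$ on these paths, so that every vertex of $V(H)$ is within $\delta$ of $Z$ and $|Z|=\Oh(g\eps^{-1})$. By \Cref{thm:prof-vc}, applied with $h=\Oh(\sqrt g)$ (graphs of Euler genus $g$ exclude $K_h$ for such $h$), the rounded-profile set system to $Z$ has VC-dimension $\Oh(\sqrt g)$. A cleaner route is to use the planar bound after cutting; if that is delicate, I will instead take the Sauer–Shelah bound with dimension $4$ in the cut planar graph, which gives the target count. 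Either way, the number of distinct rounded distance profiles of the points $p_i$ to $Z$ is $\Oh(|Z|^4\eps^{-4})=\Oh(g^4\eps^{-8})$. Pigeonholing on the profile of $p_i$ yields indices $j_1<\dots<j_{L'}$ with $L=\Oh(g^4\eps^{-8})L'$ such that all $p_{j_x}$ share one profile.

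The argument of \Cref{lem:sep-two-pairs} and \Cref{lem:general-sep-cont} now shows that for $x<y$, neither $\Mpath{p_{j_x}}{\dlpt{t}_{j_y}}$ nor $\Mpath{p_{j_y}}{\dlpt{b}_{j_x}}$ contains a vertex within distance $\delta$ of $Z$. The reason is as before: a vertex $u$ on such a path close to $Z$ would give $\dist(p_{j_y},\dlpt{t}_{j_y})\leq \dist(p_{j_x},\dlpt{t}_{j_y})+3\delta<R$, a contradiction. The same holds symmetrically for the bottom points. In particular these paths avoid $V(H)$ entirely.

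We then set $G'=G\cut H$, which is planar because $\Sigma\cut H$ is a disc. We let $\eta$ map each vertex outside $H$ to itself and each vertex of $H$ to an arbitrary one of its copies. Every walk in $G'$ projects to a walk in $G$ of the same length, so $\dist_{G'}(\eta(a),\eta(b))\geq\dist_G(a,b)$, which is the first item. For the second and third items, the relevant shortest paths avoid $H$, and so do their endpoints: the terminals have degree one and are not on $H$ once we put them off the cut, and in any case the endpoints of those paths are far from $Z$. Hence each such path survives unchanged in $G'$, which gives equality of distances. The main obstacle I expect is the careful treatment of the cutting operation. One must check that a path disjoint from $V(H)$ stays intact in $G\cut H$, and one must handle terminals that might themselves lie on $H$; the latter is ruled out since terminals lie on the protected paths. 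One also has to get the VC/profile count to the stated $g^4\eps^{-8}$ exponent.
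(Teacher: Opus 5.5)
Your overall scheme matches the paper's (cut along a cut-graph, pigeonhole on rounded profiles to a net of the cut, use that cutting only increases distances and that $H$-avoiding paths survive intact). There is, however, a genuine gap in the profile-counting step, which is where the actual work of the lemma lies. Your two routes are not interchangeable, and ``either way'' is false.

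\textbf{First route.} If you measure profiles in $G$, with $Z$ a $\delta$-net of $V(H)$ taken in $G$, then the separation argument of \Cref{lem:sep-two-pairs} is valid. But the only VC bound available is the $K_h$-minor one with $h=\Oh(\sqrt g)$. Sauer--Shelah then gives $(g/\eps)^{\Oh(\sqrt g)}$ classes, not $\Oh(|Z|^4\eps^{-4})$. This violates the fourth item and would destroy the polynomial dependence on $g$ in \Cref{thm:genus}.

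\textbf{Second route.} The dimension-$4$ bound holds only for distances in the planar graph $G\cut H$. To use it you must (i) take $Z$ as a $\delta$-net, in $G\cut H$, of the set $X=\bigcup_{v\in V(H)}\mu(v)$ of all copies of cut vertices, and (ii) pigeonhole on the profiles of $\eta(p_i)$ to $Z$ in $G\cut H$. For (i), a net of $V(H)$ in $G$ does not suffice, since copies of one vertex can be far apart after cutting. You need the second part of \Cref{lem:cutgraph}: the copies are covered by $\Oh(g)$ short paths in $G\cut H$. But now your separation step no longer applies. You run it ``as in \Cref{lem:sep-two-pairs}'' in $G$, which requires $p_{j_x},p_{j_y}$ to have equal rounded profiles in $G$, the metric in which $\Mpath{p_{j_x}}{\mathfrak{t}_{j_y}}$ is a shortest path. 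You only have equality in $G\cut H$.

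\textbf{Missing idea.} What is missing is the bridge between the two metrics. Suppose the $G$-shortest path $\Mpath{p_{j_x}}{\mathfrak{t}_{j_y}}$ meets $V(H)$, and let $u$ be its first vertex in $V(H)$. The prefix before $u$ avoids $H$, so it survives in $G\cut H$ and ends at a specific copy $u^*\in\mu(u)$, with $\dist_{G\cut H}(\eta(p_{j_x}),u^*)=\dist_G(p_{j_x},u)$. Take $z\in Z$ with $\dist_{G\cut H}(u^*,z)\le\delta$. Using the equal profiles in $G\cut H$ (passing through $z$), and that cutting only increases distances, you get
\[ \dist_G(p_{j_y},u)\le \dist_{G\cut H}(\eta(p_{j_y}),u^*)\le \dist_{G\cut H}(\eta(p_{j_x}),u^*)+3\delta=\dist_G(p_{j_x},u)+3\delta. \]
Hence $\dist_G(p_{j_y},\mathfrak{t}_{j_y})\le \dist_G(p_{j_x},\mathfrak{t}_{j_y})+3\delta<R$, a contradiction. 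So the whole path, endpoints included, avoids $V(H)$ and survives in $G\cut H$, which gives the equalities. The $\mathfrak{b}$ case is symmetric, starting from $p_{j_y}$. With this replacement the rest of your plan goes through.

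\textbf{Minor point.} Do not import the pendant-terminal modification of \Cref{sec:comatching-bound}. It perturbs distances and would break the exact equalities in items two and three. Restricting $G$ to the union of the relevant shortest paths already gives connectivity and the $\Oh(R)$ diameter bound.
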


\begin{proof}
	Let
	$$
		R := \min_{i \in [L]} \min(\dist(p_i, \dlpt{b}_i), \dist(p_i, \dlpt{t}_i))
	$$
	and set $\delta := \varepsilon R / 100$. W.l.o.g. we can assume that every edge of $G$ lies on either a path $\ppath{p_i \dlpt{t}_j}$ or $\ppath{p_j, \dlpt{b}_i}$ for some $1 \leq i < j \leq L$, and so by the same argument as in \Cref{diam-lnci-DL}, that the diameter of $G$ is at most $8R$.

	Let $u \in V(G)$ be an arbitrary vertex of $V(G)$. Let $H$ be a subgraph of $G$ obtained by applying \Cref{lem:cutgraph} to $G$ with an endpoint $u$. For $v \in V(G)$, let $\mu(v)$ denote all vertices of $G \cut H$ into which $v$ is split after performing the cut. In particular, $\mu(v) = \{v\}$ if $v \not\in V(H)$. For any $u, v \in V(G)$ and $u' \in \mu(u), v' \in \mu(v)$, we have $\dist_{G \cut H}(u', v') \geq \dist_G(u, v)$, that is, the cutting procedure can only increase the distances between the vertices. Moreover, if $\dist_{G \cut H}(u', v') > \dist_G(u, v)$, then the shortest path from $u$ to $v$ in $G$ must cross $V(H)$. From each $\mu(v)$ we select one vertex arbitrarily and call it $\eta(v)$.
	
	Let $\DL'$ be a projection of $\DL$ onto $G \cut H$, that is, $\DL' = \{(\eta(p_i), \eta(\dlpt{b}_i), \eta(\dlpt{t}_i)) : i \in [L]\}$. Note that all pairs $(\eta(p_i), \eta(\dlpt{b}_i))$ and $(\eta(p_i), \eta(\dlpt{t}_i))$ are still at distance at least $R$, however, some pairs $(\eta(p_i), \eta(\dlpt{t}_j))$ or $(\eta(p_j), \eta(\dlpt{b}_i))$ close in $\DL$ possibly also ended up far apart after cutting.

	Let $X = \bigcup_{v \in V(H)} \mu(v)$. Since $V(H)$ was contained in a union of $\Oh(g)$ shortest paths in $G$, which are of length at most $8R$, the same holds for $X$ in $G \cut H$. In particular, there is a set $Z$ of size $\Oh(g \varepsilon^{-1})$ such that every $x \in X$ is at distance at most $\delta$ in $G \cut H$ from some $z \in Z$. Group triples $\dltri{} \in \DL'$ by the \emph{rounded distance profile} of $p$ to $Z$ in $G \cut H$. Since $G \cut H$ is planar, by \Cref{lem:Uprof-vc-DL} we get that the number of such groups is of $\Oh(g^4 \varepsilon^{-8})$. In particular, there is a subset $\DL_0 \subseteq \DL$ such that vertices $p_i$ over all $\dltri{i} \in \DL_0$ share the rounded distance profile to $X$, and such that $L = \Oh(|\DL_0| \cdot g^4 \varepsilon^{-8})$.

    Let $j_1, \dots, j_{L'}$ be the indices of triples $(\eta(p_{j_i}), \eta(\dlpt{b}_{j_i}), \eta(\dlpt{t}_{j_i}))$ in $\DL_0$. We want to argue, that indices $(j_i)_{i \in [L']}$, graph $G \cut H$ and the map $\eta$ satisfy the lemma statement.
    The condition $\dist_{G \cut H}(\eta(v), \eta(u)) \geq \dist_G(v, u)$ follows immediately for all $v, u \in V(G)$ and $L = \Oh(g^4 \eps^{-8} L')$ follows from the way we chose $\DL_0$. Assume by contradiction that we have two indices $j_x < j_y$ such that $\dist_{G \cut H}(\eta(p_{j_x}), \eta(\dlpt{t}_{j_y}))$ or $\dist_{G \cut H}(\eta(p_{j_y}), \eta(\dlpt{b}_{j_x}))$ is larger than $(1 - \eps)R$. W.l.o.g. assume the former. We have $\dist_G(p_{j_x}, \dlpt{t}_{j_y}) \leq (1 - \eps)R$, hence the path $\ppath{p_{j_x} \dlpt{t}_{j_y}}$ crosses $V(H)$.
	
	Let $u$ be the first vertex of $V(H)$ on a path from $p_{j_x}$ to $\dlpt{t}_{j_y}$. In particular, there is $u^* \in \mu(u) \subseteq X$ such that $\dist_G(p_{j_x}, u) = \dist_{G \cut H} (\eta(p_{j_x}), u^*)$. Let $z \in Z$ be the vertex at distance at most $\delta$ in $G \cut H$ from $u^*$. We have
	$$
		\dist_G(p_{j_y}, u) \leq
		\dist_{G \cut H}(\eta(p_{j_y}), u^*) \leq
		\dist_{G \cut H}(\eta(p_{j_y}), z) + \delta \leq
    $$
    $$ \leq
		\dist_{G \cut H}(p_{j_x}, z) + 2\delta \leq
		\dist_{G \cut H}(p_{j_x}, u^*) + 3\delta =
		\dist_G(p_{j_x}, u) + 3\delta,
	$$
    where all inequalities follow from the triangle inequality plus the fact that $\eta(p_{j_x}), \eta(p_{j_y})$ share the rounded distance to $z$.
    Thus,
    $$
		\dist_G(p_{j_y}, \dlpt{t}_{j_y}) \leq
		\dist_G(p_{j_y}, u) + \dist_G(p_{j_x}, \dlpt{t}_{j_y}) - \dist_G(p_{j_x}, u) \leq
		\dist_G(p_{j_x}, u) + 3\delta + (1 - \varepsilon)R - \dist_G(p_{j_x}, u) < R,
    $$
	which contradicts that $\DL$ is an $\eps$-double ladder in $G$.

    Therefore, our choice of $j_i$ and $\nu$ satisfies the lemma statement, which finishes the proof.
\end{proof}

\begin{proof}[Proof of \Cref{thm:genus}]
    Let $\{(p_i, q_i)\}_{i \in [L]}$ be an $\eps$-comatching in $G$ of Euler genus at most $g$. Let
	$
		R = \min_{i \in [L]} \dist(p_i, q_i)
	$.
    The set $\{(p_i, q_i, q_i)\}_{i \in [L]}$ is an $\eps$-double semi-ladder in $G$. Take a planar graph $G'$, indices $j_1, \dots, j_{L'} \in [L]$ and map $\eta : V(G) \to V(G')$ given by \Cref{lem:genus} applied to $(G, \{(p_i, q_i, q_i)\}_{i \in [L]})$. We have:
    \begin{itemize}
        \item $\dist_{G'}(\eta(p_{j_x}), \eta(q_{j_x})) \geq \dist_G(p_x, q_x) > R$ for $x \in [L']$, and
        \item $\dist_{G'}(\eta(p_{j_x}), \eta(q_{j_y})) = \dist_G(p_x, q_y) \leq (1 - \eps)R$ for $x,y \in [L']$, $x \neq y$,
    \end{itemize}
    hence $\{(\eta(p_{j_i}), \eta(q_{j_i}))\}_{i \in [L']}$ is an $\eps$-comatching in $G'$ and by \Cref{thm:comatching-bound} is of size at most $L_{\mathrm{planar}}(\eps^{-1})$. Thus, $L \leq \Oh(g^4 \varepsilon^{-8}) \cdot L_{\mathrm{planar}}(\eps^{-1})$ which finishes the proof.
\end{proof}

\begin{proof}[Proof of \Cref{thm:genus2}]
    Let $\DL = \{\dltri{i}\}_{i \in [L]}$ be an $\eps$-double ladder in $G$ of Euler genus at most $g$. Let
	$
		R = \min_{i \in [L]} \min(\dist(p_i, \dlpt{b}_i), \dist(p_i, \dlpt{t}_i))
	$.
    Take a planar graph $G'$, indices $j_1, \dots, j_{L'} \in [L]$ and map $\eta : V(G) \to V(G')$ given by \Cref{lem:genus} applied to $(G, \DL)$. We have:
    \begin{itemize}
        \item $\dist_{G'}(\eta(p_{j_x}), \eta(\dlpt{t}_{j_y})) = \dist_G(p_{j_x}, \dlpt{t}_{j_y}) \leq (1 - \eps)R$ for $1 \leq x < y \leq L'$,
        \item $\dist_{G'}(\eta(p_{j_y}), \eta(\dlpt{b}_{j_x})) = \dist_G(p_{j_y}, \dlpt{b}_{j_x}) \leq (1 - \eps)R$ for $1 \leq x < y \leq L'$,
        \item $\dist_{G'}(\eta(p_{j_y}), \eta(\dlpt{t}_{j_x})) \geq \dist_G(p_{j_y}, \dlpt{t}_{j_x}) > R$ for $1 \leq x \leq y \leq L'$,
        \item $\dist_{G'}(\eta(p_{j_x}), \eta(\dlpt{b}_{j_y})) \geq \dist_G(p_{j_x}, \dlpt{b}_{j_y}) > R$ for $1 \leq x \leq y \leq L'$,
    \end{itemize}
    hence $\{(\eta(p_{j_i}), \eta(\dlpt{b}_{j_i}), \eta(\dlpt{t}_{j_i}))\}_{i \in [L']}$ is an $\eps$-double ladder in $G'$, hence by \Cref{thm:doubleladder-bound} is of size at most $L'_{\mathrm{planar}}(\eps^{-1})$. Thus, $L \leq \Oh(g^4 \varepsilon^{-8}) \cdot L'_{\mathrm{planar}}(\eps^{-1})$ which finishes the proof.
\end{proof}

%% file: lower-bound.tex
\section{Lower bounds}\label{sec:lb}

\subsection{Beyond bounded genus graphs: Proof of \Cref{thm:comatching-lb}}

Let us first restate \Cref{thm:comatching-lb}:

\SokoLowerBound*

\begin{proof}
For a nonnegative integer $k$, let us define an undirected integer-weighted graph $G_k$ in the following way. First, let us create two complete binary trees $T_1$ and $T_2$ with roots $r_1$ and $r_2$, both of height of $k$ edges and let $I : V(T_1) \to V(T_2)$ be an arbitrary isomorphisms between them. Let $S : V(T_1) \to V(T_1) \cup \{\perp\}$ be a function that for a vertex $v$ of $T_1$ returns its unique sibling (with the exception of the root $r_1$, for which $S(r_1) = \perp$). Then, in addition to the edges of the binary trees, add all edges of the form $(v, I(B(v)))$ for $v \neq r$ (which we will call \textit{matching edges}). Moreover, all edges of $T_1$ and $T_2$ have unit weight and each edge $(v, I(B(v)))$ has a weight $2 \dist(r_1, v) - 1$.  Refer to \Cref{fig:Soko} for an illustrative example of $G_3$.

We will show that $G_k$ satisfies all of the required properties. 

\begin{enumerate}

\item Let $\cal{M}$ be the set of all pairs $(l, I(l))$, where $l$ is a leaf of the upper tree $T_1$. As $T_1$ has $2^k$ leaves, this set is of size $2^k$.

\begin{claim}
    $\cal{M}$ is a $(2k-1)$-comatching.
\end{claim}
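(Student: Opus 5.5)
The plan is a direct distance computation in $G_k$. I read the matching edges as $(v, I(S(v)))$ for $v \neq r_1$, with weight $2\,\dist_{T_1}(r_1,v)-1$; the $B$ in the construction appears to be a typo for the sibling map $S$. There are two directions to check. First, $\dist(l, I(l')) \leq 2k-1$ for distinct leaves $l \neq l'$. Second, $\dist(l, I(l)) > 2k-1$, i.e.\ $\dist(l, I(l)) \geq 2k$ since all weights are integers.

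\emph{Upper bound.} Let $w$ be the lowest common ancestor of $l$ and $l'$ in $T_1$, at depth $d < k$. Let $u$ be the child of $w$ on the way to $l$; then $S(u)$ is the child of $w$ on the way to $l'$, and $u$ has depth $d+1$. Consider the walk that goes up from $l$ to $u$ in $T_1$ ($k-d-1$ unit edges), takes the matching edge $u \to I(S(u))$ of weight $2(d+1)-1$, and descends in $T_2$ from $I(S(u))$ to $I(l')$ ($k-d-1$ edges, since $I$ is an isomorphism). Its total length is $2(k-d-1) + 2d+1 = 2k-1$.

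\emph{Lower bound.} I will project everything onto $T_1$ by identifying $T_2$ with $T_1$ through $I^{-1}$. A path from $l$ to $I(l)$ then becomes a closed walk at $l$ built from two kinds of moves:
\begin{itemize}
\item tree moves of cost $1$;
\item jumps $v \to S(v)$ of cost $2a-1$, where $a \geq 1$ is the common depth of $v$ and $S(v)$.
\end{itemize}
Every matching edge switches between $T_1$ and $T_2$, so the walk contains an odd number $j \geq 1$ of jumps. Let $a_{\min}$ be the smallest depth at which a jump occurs. Jumps preserve depth, so the walk must use at least $k-a_{\min}$ tree moves to climb from the leaf $l$ to depth $a_{\min}$, and at least as many to come back down to the leaf at the end.

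There are two cases. If $j \geq 3$, the cost is at least $2(k-a_{\min}) + (2a_{\min}-1) + 2\cdot 1 = 2k+1$, because each of the other jumps costs at least $1$. If $j = 1$, the single jump at some vertex $v$ of depth $a$ lands in $S(v)$. Before and after the jump the walk moves only along tree edges, so its cost is at least
\[ \dist_{T_1}(l,v) + (2a-1) + \dist_{T_1}(S(v), l). \]
At most one of $v$, $S(v)$ is an ancestor of $l$, so one of these two tree distances is at least $k-a$ and the other is at least $k-a+2$, because it must pass through the parent of $v$. This gives a total of at least $2k+1$. In both cases $\dist(l,I(l)) \geq 2k > 2k-1$.

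The main obstacle is this lower bound, in particular handling walks with several jumps. The argument above does it by charging the descent from and return to leaf depth to the shallowest jump and using that every jump costs at least $1$. I will also double-check that projecting through $I^{-1}$ is legitimate: tree distances in $T_2$ equal those in $T_1$ under $I$, and a jump at depth $a$ keeps the depth equal to $a$.
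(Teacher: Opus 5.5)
Your proof is correct. You are right that $B$ in the construction is the sibling map $S$. The upper bound is exactly the paper's argument: climb to the child of the lowest common ancestor, take its matching edge, and descend.

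For the lower bound you take a different route. The paper fixes a single matching edge $(x, I(S(x)))$ on an arbitrary $l$--$I(l)$ path. It bounds the two remaining pieces by triangle inequalities through the roots, $\dist(l,x) \geq k - \dist(x,r_1)$ and $\dist(I(S(x)), I(l)) \geq k - \dist(I(S(x)), r_2)$, which gives length at least $2k-1$. It then rules out equality, because equality would force $x$ to be an ancestor of $l$ and $I(S(x))$ to be an ancestor of $I(l)$, which is impossible for siblings.

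The paper's version needs no case split on the number of matching edges. It does, however, tacitly use that $\dist_{G_k}(l,r_1)=k$ and that the tree path is the only shortest $l$--$r_1$ path in $G_k$. Both facts are the statement that depth is $1$-Lipschitz: tree edges have weight $1$ and change depth by $1$, and matching edges preserve depth. The paper leaves this unstated.

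Your projection through $I^{-1}$ makes this depth argument explicit. Your split into $j \geq 3$ jumps versus $j=1$ jump, with the descent and return charged to the shallowest jump, costs a slightly longer case analysis. In exchange it gives the sharper and in fact tight bound $\dist(l,I(l)) \geq 2k+1$; you concluded only $\geq 2k$, but your argument already yields $2k+1$. This value is attained by jumping at any ancestor of $l$, and it matches the label $7$ at the image of the leftmost leaf in the paper's figure for $k=3$.
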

This claim will subsequently prove \Cref{it:com-lb1}.
\begin{proof}
Let us consider a pair of leaves $(l, m)$, where $l \in V(T_1)$ and $m \in V(T_2)$. We aim to prove that $\dist(l, m) \leq 2k-1 \Leftrightarrow I(l) \neq m$. An observant reader may inspect the distances from the leftmost leaf of the upper tree on \Cref{fig:Soko} and notice that their patterns easily generalize to higher values of $k$, but we will opt for a more detailed proof anyway.

Let $p : V(G_k) \to V(G_k) \cup \{\perp\}$ be a function that for a vertex returns its parent in the tree it belongs to (except for roots $r_1$ and $r_2$, where $p(r_1)=p(r_2)=\perp$).  Let us consider sequences $P_l = (l, p(l), p^2(l), \ldots, p^k(l))$ and $P_m = (m, p(m), p^2(m), \ldots, p^k(m))$. As we have that $p^k(l)=r_1$, $p^k(m)=r_2$ and $I(r_1)=r_2$, there exists a smallest index $c$ such that $I(p^c(l))=p^c(m)$. If $m \neq I(l)$, then $c>0$ and $I(p^{c-1}(l))$ is a sibling of $p^{c-1}(m)$, hence there is a matching edge between $p^{c-1}(l)$ and $p^{c-1}(m)$. That edge has length $2(k-(c-1))-1$, hence the path $l, p(l), \ldots, p^{c-1}(l), p^{c-1}(m), p^{c-2}(m), \ldots, p(m), m$ has length $2k-1$ showing that $m \neq I(l) \Rightarrow \dist(l, m) \le 2k-1$.

Let us now consider any path from $l$ to $I(l)$. It has to contain a matching edge --- let it be $(x, I(B(x)))$. 
We have that the weight of $(x, I(B(x)))$ is $2\dist(x, r_1)-1$, that $\dist(l, x) + \dist(x, r_1) \ge \dist(l, r_1)=k$, that $\dist(I(l), I(B(x))) + \dist(I(B(x)), r_2) \ge \dist(I(l), r_2) = k$, and that $\dist(x, r_1) = \dist(I(B(x)), r_2)$. Hence, the length of the considered path is at least $\dist(l, x) + 2 \dist(x, r_1)-1 + \dist(I(B(x)), I(l)) \ge (k-\dist(x, r_1)) + 2 \dist(x, r_1) - 1 + (k-\dist(I(B(x)), r_2) = 2k-1$. However, for the equality to hold, all of the considered inequalities would need to become equalities. Consequently, it would imply that $x$ is an ancestor of $l$ and that $I(B(x))$ is an ancestor of $I(l)$. But if $x$ is an ancestor of $l$, then $B(x)$ is not an ancestor of $l$ and $I(B(x))$ is not an ancestor of $I(l)$, which gives a contradiction proving that $\dist(l, I(l)) > 2k-1$, concluding that $\cal{M}$ is indeed a $(2k-1)$-comatching.
\end{proof}

\item Let us now focus on providing a tree decomposition of $G_k$ that will prove \Cref{it:com-lb2} and aid in proving \Cref{it:com-lb3} and \Cref{it:com-lb4}. 

Let $T$ denote the induced subgraph of $T_1$ on all of its non-leaves. We set our tree decomposition to be $\mathcal{T} = (T, B)$, where $B$ is a bag function such that $B(v)$ is a bag containing six vertices $v, s_1, s_2, I(v), I(s_1), I(s_2)$, where $s_1$ and $s_2$ are the children of $v$ in $T_1$. It is easy to verify that this in fact really is a tree decomposition of $G_k$. It also clearly satisfies that its bags are of size $6$ and that its adhesions are of size $2$, proving \Cref{it:com-lb2}.

\item For the proof of \Cref{it:com-lb3}, let us assume by a contradiction that $K_{3, 4}$ is a minor of $G_k$. In other words, there exist disjoint subsets $L_1, L_2, L_3, R_1, R_2, R_3$ and $R_4$ of $V(G_k)$ such that each of them induces a connected subgraph of $G_k$ and for each pair $(L_i, R_j)$ there exists an edge connecting a vertex of $L_i$ with a vertex of $R_j$. Let $T_{L_1}, \ldots, T_{R_4}$ be subgraphs of $T$, where $T_{X}$ is a~subgraph of $T$ induced on vertices $v$ such that $B(v) \cap X \neq \emptyset$. As $L_1, \ldots, R_4$ induce connected subgraphs of $G_k$, consequently we have that $T_{L_1}, \ldots, T_{R_4}$ are connected as well. By a known property, if it had been the case that all pairs of them intersected, there would have to be a~single bag that intersects all of them. However, that cannot be the case as that bag would have to have size at least seven, while all bags in $\cal T$ have size six. Hence, there is a pair $(X, Y)$ of subsets $L_1, \ldots, R_4$, such that $T_X$ and $T_Y$ do not intersect. This pair cannot be of the form $(L_i, R_j)$, because then there would be no edge connecting $L_i$ and $R_j$ contradicting our assumptions. Let us firstly assume that this is a pair of the form $(L_i, L_j)$. There exists a unique shortest path $P$ in $T$ that connects $T_{L_i}$ and $T_{L_j}$ and it has at least one edge. All of $T_{R_1}, \ldots, T_{R_4}$ have to intersect both $T_{L_i}$ and $T_{L_j}$ and consequently they all have to contain $P$. However, this implies that adhesions along $P$ are all of size at least four, which is a contradiction with the fact that adhesions in $\cal T$ are of size at most two. The case, where the not connected pair is of the form $(R_i, R_j)$ is analogous, with the difference that in the final argument we get adhesions of size at least three, instead of four, which is still a contradiction. That concludes the proof of \Cref{it:com-lb3}.

\item By a contradiction with \Cref{it:com-lb4}, let us assume that $K_5$ is a minor of $G_k$, that is, there exist five disjoint subsets $V_1, V_2, V_3, V_4$ and $V_5$ (called \textit{branch sets}) of $V(G_k)$ such that each of them induces a connected subgraph of $G_k$ and for each pair of them, there exists an edge connecting this pair of subsets. Through a folklore argument, there has to be a bag of $\cal{T}$ that contains at least one vertex for each of the five branch sets let $\hat{B}$ be such a~bag. As $\hat{B}$ induces a $C_6$ there are at most $6$ pairs of branch sets that can be connected by an edge within $\hat{B}$. Let $(V_i, V_j)$ be a pair that is not connected by an edge within $\hat{B}$. There have to be at least $|E(K_5)| -  |E(G_k[\hat{B}])|=10-6=4$ pairs like this. If they are not connected by an edge within $\hat{B}$, there exists another bag containing vertices from both of them, which implies that there exists an adhesion of $\hat{B}$ with one of its neighboring bags that contains vertices from both $V_i$ and $V_j$. However, each adhesion is of size two and the degrees of all vertices in $T$
are at most three, hence there are at most three pairs like that, showing a contradiction with the fact that there needed to be at least four pairs like this. That concludes the proof of \Cref{it:com-lb4}.

\item For the proof of \Cref{it:com-lb5} let us note that $G_3$ contains a topological minor of $K_{3, 3}$, as shown on the \Cref{fig:Soko2}, showing that its Euler genus is at least $1$. As $G_k$ for $k \ge 3$ contains $2^{k-3}$ disjoint copies of $G_3$ (ignoring weights) and each of them has Euler genus at least $1$, it follows that the Euler genus of $G_k$ is at least $2^{k-3}$. On the other hand, if for each $v \in V(T_1) \setminus \{r_1\}$ we add a cross-cap to an embedding as in \Cref{fig:Soko}
to accommodate an intersection of the edges $(v, I(B(v)))$ and $(B(v), I(v))$, 
we obtain an embedding of $G_k$ into a surface of Euler
genus equal to the number of such pairs, which is $2^k-1$. Hence, the Euler genus of $G_k$ is at most $2^k-1$, proving our claim.

\end{enumerate}

\end{proof}

\begin{figure}[ht] 
\centering

\begin{tikzpicture}[
    v/.style={circle,fill,inner sep=1.5pt},
    tree/.style={black,thick},
    match/.style={red,thick}
]

\def\ytop{3}
\def\yone{2}
\def\ytwo{1}
\def\ythree{0}

\def\ybthree{-1}
\def\ybtwo{-2}
\def\ybone{-3}
\def\ybbot{-4}
\def\dx{4}
\def\dxx{1.25}
\node[v, label=above:$3$] (t0)  at (0,\ytop) {};
\node[v, label=above:$2$] (t1)  at (-3,\yone) {};
\node[v, label=above:$4$] (t2)  at (3,\yone) {};

\node[v, label=above:$1$] (t3)  at (-4.5,\ytwo) {};
\node[v, label=above:$3$] (t4)  at (-1.5,\ytwo) {};
\node[v, label=above:$5$] (t5)  at (1.5,\ytwo) {};
\node[v, label=above:$5$] (t6)  at (4.5,\ytwo) {};

\node[v, label=above:$0$] (t7)  at (-5.25,\ythree) {};
\node[v, label=above:$2$] (t8)  at (-3.75,\ythree) {};
\node[v, label=above:$4$] (t9)  at (-2.25,\ythree) {};
\node[v, label=above:$4$] (t10) at (-0.75,\ythree) {};
\node[v, label=above:$6$] (t11) at (0.75,\ythree) {};
\node[v, label=above:$6$] (t12) at (2.25,\ythree) {};
\node[v, label=above:$6$] (t13) at (3.75,\ythree) {};
\node[v, label=above:$6$] (t14) at (5.25,\ythree) {};

\node[v, label=above:$4$] (b0)  at (0,\ybbot) {};
\node[v, label=above:$5$] (b1)  at (-3,\ybone) {};
\node[v, label=above:$3$] (b2)  at (3,\ybone) {};

\node[v, label=above:$6$] (b3)  at (-4.5,\ybtwo) {};
\node[v, label=above:$4$] (b4)  at (-1.5,\ybtwo) {};
\node[v, label=above:$4$] (b5)  at (1.5,\ybtwo) {};
\node[v, label=above:$4$] (b6)  at (4.5,\ybtwo) {};

\node[v, label=above:$7$] (b7)  at (-5.25,\ybthree) {};
\node[v, label=above:$5$] (b8)  at (-3.75,\ybthree) {};
\node[v, label=above:$5$] (b9)  at (-2.25,\ybthree) {};
\node[v, label=above:$5$] (b10) at (-0.75,\ybthree) {};
\node[v, label=above:$5$] (b11) at (0.75,\ybthree) {};
\node[v, label=above:$5$] (b12) at (2.25,\ybthree) {};
\node[v, label=above:$5$] (b13) at (3.75,\ybthree) {};
\node[v, label=above:$5$] (b14) at (5.25,\ybthree) {};

\draw[tree] (t0)--(t1) (t0)--(t2);
\draw[tree] (t1)--(t3) (t1)--(t4);
\draw[tree] (t2)--(t5) (t2)--(t6);
\draw[tree] (t3)--(t7) (t3)--(t8);
\draw[tree] (t4)--(t9) (t4)--(t10);
\draw[tree] (t5)--(t11) (t5)--(t12);
\draw[tree] (t6)--(t13) (t6)--(t14);

\draw[tree] (b0)--(b1) (b0)--(b2);
\draw[tree] (b1)--(b3) (b1)--(b4);
\draw[tree] (b2)--(b5) (b2)--(b6);
\draw[tree] (b3)--(b7) (b3)--(b8);
\draw[tree] (b4)--(b9) (b4)--(b10);
\draw[tree] (b5)--(b11) (b5)--(b12);
\draw[tree] (b6)--(b13) (b6)--(b14);

\draw[match]
(t1) .. controls ($(t1)+(\dx,0)$) and ($(b2)+(-\dx,0)$) .. node[pos=0.25,above] {$1$} (b2);

\draw[match]
(t2) .. controls ($(t2)+(-\dx,0)$) and ($(b1)+(\dx,0)$) .. node[pos=0.25,above] {$1$}(b1);

\draw[match]
(t3) .. controls ($(t3)+(\dxx,0)$) and ($(b4)+(-\dxx,0)$) .. node[pos=0.25,above] {$3$}(b4);

\draw[match]
(t4) .. controls ($(t4)+(-\dxx,0)$) and ($(b3)+(\dxx,0)$) .. node[pos=0.25,above] {$3$}(b3);

\draw[match]
(t5) .. controls ($(t5)+(\dxx,0)$) and ($(b6)+(-\dxx,0)$) .. node[pos=0.25,above] {$3$}(b6);

\draw[match]
(t6) .. controls ($(t6)+(-\dxx,0)$) and ($(b5)+(\dxx,0)$) .. node[pos=0.25,above] {$3$}(b5);

\draw[match, left=30]  (t7) to node[pos=0.25,above] {$5$} (b8);
\draw[match, right=30] (t8) to node[pos=0.25,above] {$5$} (b7);
\draw[match, left=30]  (t9) to node[pos=0.25,above] {$5$} (b10);
\draw[match, right=30] (t10) to node[pos=0.25,above] {$5$} (b9);
\draw[match, left=30]  (t11) to node[pos=0.25,above] {$5$} (b12);
\draw[match, right=30] (t12) to node[pos=0.25,above] {$5$} (b11);
\draw[match, left=30]  (t13) to node[pos=0.25,above] {$5$} (b14);
\draw[match, right=30] (t14) to node[pos=0.25,above] {$5$} (b13);

\end{tikzpicture}
\caption{A drawing of $G_3$. The isomorphism $I$ corresponds to mirroring the upper tree with respect to a horizontal line. Matching edges are drawn in red and their weights are displayed. To each vertex we associate its distance from the leftmost leaf of the upper tree. }
\label{fig:Soko}
\end{figure}
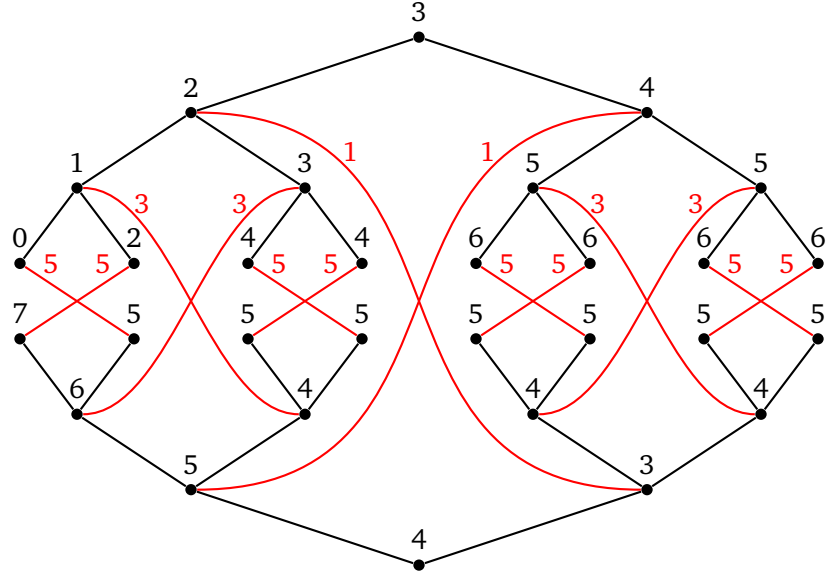

\begin{figure}[ht] 
\centering

\begin{tikzpicture}[
    v/.style={circle,fill,inner sep=1.5pt},
    L/.style={circle, draw=black, inner sep=3pt},
    R/.style={circle, fill=black, inner sep=3pt},
    tree/.style={black,thick, dashed},
    selected/.style={black,very thick}
]

\def\ytop{3}
\def\yone{2}
\def\ytwo{1}
\def\ythree{0}

\def\ybthree{-1}
\def\ybtwo{-2}
\def\ybone{-3}
\def\ybbot{-4}
\def\dx{4}
\def\dxx{1.25}
\node[v] (t0)  at (0,\ytop) {};
\node[R] (t1)  at (-3,\yone) {};
\node[v] (t2)  at (3,\yone) {};

\node[L] (t3)  at (-4.5,\ytwo) {};
\node[L] (t4)  at (-1.5,\ytwo) {};
\node[v] (t5)  at (1.5,\ytwo) {};
\node[v] (t6)  at (4.5,\ytwo) {};

\node[v] (t7)  at (-5.25,\ythree) {};
\node[v] (t8)  at (-3.75,\ythree) {};
\node[v] (t9)  at (-2.25,\ythree) {};
\node[v] (t10) at (-0.75,\ythree) {};
\node[v] (t11) at (0.75,\ythree) {};
\node[v] (t12) at (2.25,\ythree) {};
\node[v] (t13) at (3.75,\ythree) {};
\node[v] (t14) at (5.25,\ythree) {};

\node[v] (b0)  at (0,\ybbot) {};
\node[L] (b1)  at (-3,\ybone) {};
\node[v] (b2)  at (3,\ybone) {};

\node[R] (b3)  at (-4.5,\ybtwo) {};
\node[R] (b4)  at (-1.5,\ybtwo) {};
\node[v] (b5)  at (1.5,\ybtwo) {};
\node[v] (b6)  at (4.5,\ybtwo) {};

\node[v] (b7)  at (-5.25,\ybthree) {};
\node[v] (b8)  at (-3.75,\ybthree) {};
\node[v] (b9)  at (-2.25,\ybthree) {};
\node[v] (b10) at (-0.75,\ybthree) {};
\node[v] (b11) at (0.75,\ybthree) {};
\node[v] (b12) at (2.25,\ybthree) {};
\node[v] (b13) at (3.75,\ybthree) {};
\node[v] (b14) at (5.25,\ybthree) {};

\draw[tree] (t0)--(t1) (t0)--(t2);
\draw[selected] (t1)--(t3) (t1)--(t4);
\draw[tree] (t2)--(t5) (t2)--(t6);
\draw[selected] (t3)--(t7);
\draw[tree] (t3)--(t8);
\draw[selected] (t4)--(t9);
\draw[tree] (t4)--(t10);
\draw[tree] (t5)--(t11) (t5)--(t12);
\draw[tree] (t6)--(t13) (t6)--(t14);

\draw[selected] (b0)--(b1) (b0)--(b2);
\draw[selected] (b1)--(b3) (b1)--(b4);
\draw[tree] (b2)--(b5) (b2)--(b6);
\draw[tree] (b3)--(b7);
\draw[selected] (b3)--(b8);
\draw[tree] (b4)--(b9);
\draw[selected] (b4)--(b10);
\draw[tree] (b5)--(b11) (b5)--(b12);
\draw[tree] (b6)--(b13) (b6)--(b14);

\draw[selected]
(t1) .. controls ($(t1)+(\dx,0)$) and ($(b2)+(-\dx,0)$) .. (b2);

\draw[tree]
(t2) .. controls ($(t2)+(-\dx,0)$) and ($(b1)+(\dx,0)$) .. (b1);

\draw[selected]
(t3) .. controls ($(t3)+(\dxx,0)$) and ($(b4)+(-\dxx,0)$) .. (b4);

\draw[selected]
(t4) .. controls ($(t4)+(-\dxx,0)$) and ($(b3)+(\dxx,0)$) .. (b3);

\draw[tree]
(t5) .. controls ($(t5)+(\dxx,0)$) and ($(b6)+(-\dxx,0)$) .. (b6);

\draw[tree]
(t6) .. controls ($(t6)+(-\dxx,0)$) and ($(b5)+(\dxx,0)$) .. (b5);

\draw[selected, left=30]  (t7) to (b8);
\draw[tree, right=30] (t8) to (b7);
\draw[selected, left=30]  (t9) to (b10);
\draw[tree, right=30] (t10) to (b9);
\draw[tree, left=30]  (t11) to (b12);
\draw[tree, right=30] (t12) to (b11);
\draw[tree, left=30]  (t13) to (b14);
\draw[tree, right=30] (t14) to (b13);

\end{tikzpicture}
\caption{A drawing of $G_3$ with a topological minor model of $K_{3, 3}$ highlighted. }
\label{fig:Soko2}
\end{figure}
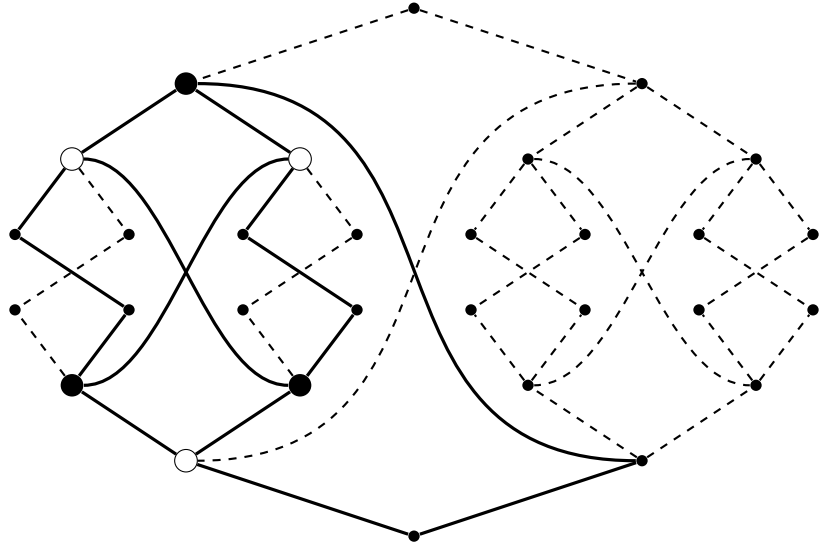

\subsection{Large comatchings in trees: Proof of \Cref{thm:kcenter-lb}}

We restate \Cref{thm:kcenter-lb}:

\KKComatching*

\begin{proof}
Let $k \geq 1$ be an integer.
The graph $G_k$ is constructed as follows. 
First, start with a full binary tree $T_k$ of depth $k$.
Let $r$ be the root of $T_k$. 
Then, for every vertex $v \neq r$ of $T_k$, if $h$ is the distance from $v$ to $r$
in $T_k$, then we attach to $v$ a path $P_v$ of length $h$ to $v$ and
denote by $\hat{v}$ the other endpoint of $P_v$. 
This completes the description of the graph $G_k$.

Let $L$ be the set of leaves of $T_k$. For every $v \in L$, let $Q_v$
be the path from $v$ to $r$ in $T_k$. 
Let $Y_v = N_{T_k}(V(Q_v))$, that is, the set of neighbors in $T_k$ of the vertices
on $Q_v$. (In other words, to construct $Y_v$, we go from $v$ towards $r$
and insert into $Y_v$ the sibling of every visited vertex.
Finally, let $X_v = \{\hat{u}~|~u \in Y_v\}$. 
We claim that $(v,X_v)_{v \in L}$ satisfies the desired properties.
Clearly, $|L| = 2^k$.

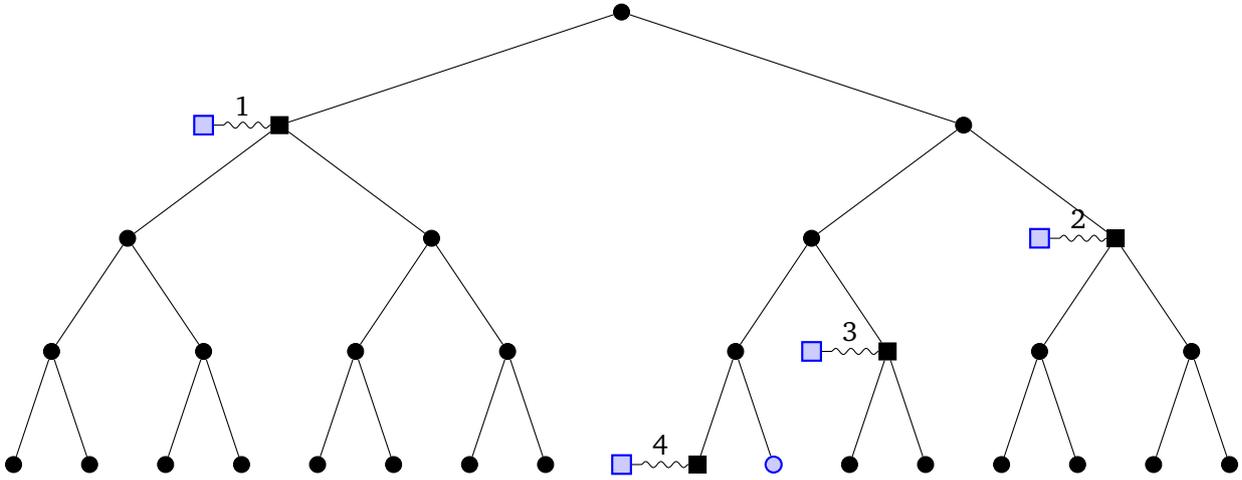
\begin{figure}[tb]
\centering
\begin{tikzpicture}
    \tikzstyle{vertex} = [circle, draw=black, fill=black, inner sep=0pt, minimum size=6pt]
    \tikzstyle{pvtx}  = [circle, draw=blue, thick, fill=blue!20, minimum size=6pt]
    \tikzstyle{xvtx}  = [rectangle, draw=black, thick, fill=black, minimum size=6pt]
    \tikzstyle{yvtx}  = [rectangle, draw=blue, thick, fill=blue!20, minimum size=6pt]
    \tikzstyle{every node} = [vertex]
    \tikzstyle{level 1} = [sibling distance=9cm]
    \tikzstyle{level 2} = [sibling distance=4cm]
    \tikzstyle{level 3} = [sibling distance=2cm]
    \tikzstyle{level 4} = [sibling distance=1cm]
    
    \node {}
        child { node[xvtx] (x1) {}
            child { node {}
                child { node {}
                    child { node {} }
                    child { node {} }
                }
                child { node {}
                    child { node {} }
                    child { node {} }
                }
            }
            child { node {}
                child { node {}
                    child { node {} }
                    child { node {} }
                }
                child { node {}
                    child { node {} }
                    child { node {} }
                }
            }
        }
        child { node {}
            child { node {}
                child { node {}
                    child { node[xvtx] (x4) {} }
                    child { node[pvtx] (v) {} }
                }
                child { node[xvtx] (x3) {}
                    child { node {} }
                    child { node {} }
                }
            }
            child { node[xvtx] (x2) {}
                child { node {}
                    child { node {} }
                    child { node {} }
                }
                child { node {}
                    child { node {} }
                    child { node {} }
                }
            }
        };
    \tikzstyle{every node} = []
    \foreach \n in {1,2,3,4} {
        \node[yvtx] (y\n) at ($ (x\n) + (-1, 0) $) {};
        \draw[decorate,decoration={snake,amplitude=.4mm,segment length=2mm,post length=1mm}] (x\n) -- node[above] {\n} (y\n);
    }
\end{tikzpicture}
\caption{Part of the graph $G_k$ of \Cref{thm:kcenter-lb} for $k=4$:
the tree $T_k$ and selected paths $P_u$.
Blue circle denotes one $v \in L$, blue squares denote the elements $\hat{u}$
of its corresponding $X_v$ and black squares denote the corresponding vertices $u$.}
\end{figure}

The analysis of the distances between the vertices of $L$ and the vertices $\hat{u}$
is encapsulated in the following lemma.
\begin{lemma}\label{lem:kcenter-lb-dist}
Let $v \in L$ and $u \in V(T_k)$. Then,
\[ \dist(v, \hat{u}) = k + 2\dist(u, V(Q_v)).\]
\end{lemma}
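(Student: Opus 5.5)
Since $G_k$ is a tree, $\dist(v,\hat{u})$ is the length of the unique $v$--$\hat u$ path. That path must traverse $P_u$ entirely, since $\hat u$ hangs off $u$ by $P_u$. So $\dist(v,\hat u) = \dist_{T_k}(v,u) + |P_u| = \dist_{T_k}(v,u) + \dist(u,r)$, using $|P_u| = \dist(u,r)$ by construction. The plan is to evaluate this sum using the position of $u$ relative to the root path $Q_v$.

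Let $w$ be the vertex of $V(Q_v)$ closest to $u$ in $T_k$. Then $w$ is the lowest common ancestor of $u$ and $v$: in the rooted tree, the path from $u$ to $Q_v$ climbs upward from $u$ until it first meets an ancestor of $v$. Hence $u$ is a descendant of $w$, and $\dist(u,w) = \dist(u,V(Q_v))$. Write $d = \dist(u,V(Q_v))$ and $h = \dist(w,r)$.

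Since $v$ is a leaf at depth $k$ and $w$ is an ancestor of $v$, we have $\dist(v,w) = k-h$. Since $u$ is a descendant of $w$ at distance $d$, we have $\dist(u,r) = h+d$. The tree path from $v$ to $u$ passes through $w$, so $\dist_{T_k}(v,u) = (k-h) + d$. Summing,
\[ \dist(v,\hat u) = (k-h+d) + (h+d) = k + 2d, \]
which is the claim.

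One edge case remains: the root $r$ has no $P_r$, so $\hat r$ is undefined. For $u = r$ the formula should be read with $\hat r = r$; it then gives $\dist(v,r) = k$ with $d = 0$, consistent. The only mildly delicate step is the identification of the closest point of $Q_v$ with the LCA. This follows because $Q_v$ is exactly the ancestor set of $v$, so the unique tree path from $u$ to $v$ enters $Q_v$ at the LCA.
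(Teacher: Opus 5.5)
Your proof is correct and follows the paper's argument essentially step for step. Both identify the vertex of $Q_v$ closest to $u$ with the lowest common ancestor $w$ of $u$ and $v$, and both obtain the value $k+2\dist(u,w)$ by summing $\dist(v,w)$, $\dist(w,u)$, and the length $\dist(u,r)$ of $P_u$. Your reading of $\hat r = r$ (a path of length $0$) is a harmless clarification that the paper leaves implicit.
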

\begin{proof}
    Let $h = \dist(u, r)$.
    Clearly, $u$ lies on the unique path in $G_k$ from $v$ to $\hat{u}$. 
    Let $w$ be the lowest common ancestor of $u$ and $v$ in $T_k$; note 
    that $w$ is also the closest to $u$ vertex of $Q_v$ and
    $\dist(w, r) = h - \dist(u,w)$.
    Hence,
    \begin{align*}
        \dist(v, \hat{u}) &= \dist(v, w) + \dist(w, u) + \dist(u, \hat{u}) \\
        &= (k-h+\dist(u,w)) + \dist(u,w) + h\\
        &= k + 2\dist(u, V(Q_v)).
    \end{align*}
    This finishes the proof. 
\end{proof}
Lemma~\ref{lem:kcenter-lb-dist} implies that for every $v \in L$
and $\hat{u} \in X_v$ we have $\dist(v, \hat{u}) = k+2$
as $u$ is within distance exactly $1$ from $V(Q_v)$ by definition. 

Furthermore, consider two distinct $v_1,v_2 \in L$.
Let $w$ be the lowest common ancestor of $v_1$ and $v_2$ in $T_k$
and let $u_1$ (resp. $u_2$) be the child of $w$ that is an ancestor
of $v_1$ (resp. $v_2$). Then, $u_1 \in V(Q_{v_1})$ and $\hat{u}_1 \in X_{v_2}$.
This implies that $\dist(v_1, X_{v_2}) \leq \dist(v_1, \hat{u}_1) = k$
by \Cref{lem:kcenter-lb-dist}.
\Cref{thm:kcenter-lb} follows.
\end{proof}

\subsection{Even larger comatchings in planar graphs: Proof of \Cref{thm:kcenter-lb2}}\label{ss:lb2}

Firstly, we restate \Cref{thm:kcenter-lb2}:

\SoullessTheorem*

\begin{proof}
Let $k,d \geq 1$ be integers.
Let $h \leq \min(k,d)$ and let $k = k_1+k_2+\ldots+k_h$, $d = d_1+d_2 + \ldots d_h$
such that each $k_i$ and $d_i$ is a positive integer. 
For these integers, we construct a graph $G$ as follows.

\paragraph{The cycle gadget.}
Let $i \in [h]$ and consider a cycle $C^i$ of length $1 + k_i(2d_i+1)$. 
For every vertex $v \in V(C^i)$, let $X_v^i$ be constructed as follows: split
the path $C^i \setminus \{v\}$ into $k_i$ paths of length $2d_i+1$ each
and insert into $X_v$ the middle vertex of each of the path. 
Clearly, $|X_v^i| = k_i$ and 
\begin{align}
 \dist_{C^i}(v, X_v^i) &= d_i+1 & \forall_{v \in V(C^i)}, \nonumber\\ 
 \dist_{C^i}(v, X_u^i) &\leq d_i & \forall{u,v \in V(C^i), u \neq v}.\label{eq:gadget:cycle}
\end{align}

\paragraph{The local gadget.}
Let $i \in [h]$. A \emph{local gadget} $H_i$ consists of:
\begin{itemize}
    \item a cycle $C^i$;
    \item a root vertex $r$ connected to every vertex of the cycle $C^i$ with a path of length $d_i$;
    \item for every $v \in C^i$, a pendant vertex $\hat{v}$, connected to $v$ with a
    path of length $\sum_{j=1}^{i-1} d_j$ (note that for $i=1$ we have just $\hat{v} = v$).
\end{itemize}
Note that the connections to the root $r$ do not distort the distances between vertices
$v \in V(C^i)$ in cycle gadgets and sets $X_u$, so~\eqref{eq:gadget:cycle} still holds
in a local gadget.

\paragraph{Arranging in a tree-like fashion.}
We now arrange local gadgets in a tree-like fashion.
A subtree of level $h$ is just a local gadget $H_h$ and the root of the subtree is the root of $H_h$.
For $1 \leq i < h$, a subtree of level $i$ consists of a local gadget $H_i$
plus, for every $v$ on the cycle $C^i$ in $H_i$, a copy of the subtree of level $i+1$
with its root identified with $v$. The root of the subtree of level $i$ is the root
of the local gadget $H_i$ used in its construction. 

A subtree of level $1$ is the final graph $G$.
Note that $G$ is planar, as the blocks of $G$ are local gadgets $H_i$
that are trivially planar.

\paragraph{Comatching definition.}
Let $L$ be the union of all vertices on cycles $C^h$ in all level-$h$ local gadgets
in the construction. Note that
\begin{equation}\label{eq:lb:sizeL}
|L| = \prod_{i=1}^h \left(1+ k_i(2d_i+1)\right).
\end{equation}

For a vertex $v \in L$, we denote $v^h = v$ and for $1 \leq i < h$
by $v^i$ we denote the root of the local gadget $H_{i+1}$ on the path upwards from 
$v$ in the construction.
As the shortest path from $v$ to $v^i$ consists of the paths from 
$v^j$ to $v^{j-1}$ (which is of length $d_j$) for $j=h,h-1,\ldots,i+1$,
we have
\begin{equation}\label{eq:lb:vi-dist}
 \dist(v, v^i) = \sum_{j=i+1}^h d_j.
\end{equation}

For every $v \in L$, construct a set $X_v$ as follows.
For every $i \in [h]$, look at the local gadget $H_i$ on the path upwards from $v$ in the construction. Let $C^i$ be the cycle in $H_i$. For every $u \in X^i_{v^i}$, insert $\hat{u}$ into $X_v$.

Observe that
\begin{equation}\label{eq:lb:sizeX}
|X_v| = \sum_{i=1}^h k_i = k.
\end{equation}

We claim that $(v,X_v)_{v \in L}$ is a $(k,d)$-comatching. We verify this in the next two lemmata.
\begin{lemma}\label{lem:lb:dist}
For every $v \in L$ , $\dist(v, X_v) \ge d+1$.
\end{lemma}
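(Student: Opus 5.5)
We fix $v \in L$ and an arbitrary element of $X_v$. By construction it is of the form $\hat u$, where $u \in X^i_{v^i}$ for some $i \in [h]$ and $u$ lies on the cycle $C^i$ of the local gadget $H_i$ on the path upwards from $v$. Recall the convention that $v^h = v$ and that, for $i<h$, $v^i$ is the root of $H_{i+1}$, which was identified with a vertex of $C^i$; so $v^i \in V(C^i)$. The plan is to show the exact identity
\[ \dist(v,\hat u) = \dist(v,v^i) + \dist_{H_i}(v^i,u) + \sum_{j=1}^{i-1} d_j, \]
and then to bound each of the three terms from below.

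\textbf{Step 1: the three-term identity.} We use the block structure of $G$. The blocks of $G$ are exactly the local gadgets, glued along cut vertices: a copy of the level-$(j+1)$ subtree hangs off each cycle vertex of an $H_j$. Together with the pendant path from $u$ to $\hat u$, this has two consequences.
\begin{itemize}
\item Every $v$--$\hat u$ path must pass through $v^i$, since $v$ lies in the subtree hanging at $v^i$ and $\hat u$ lies outside it.
\item Every such path must then enter the pendant path at $u$. Moreover, any excursion into the subtrees hanging at other cycle vertices of $H_i$ is a dead end, because those subtrees attach through single cut vertices; so such excursions can be discarded. The pendant path from $u$ to $\hat u$ has length $\sum_{j<i} d_j$.
\end{itemize}
Hence the shortest path decomposes as in the displayed identity.

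\textbf{Step 2: the first term.} We bound $\dist(v,v^i)$ by iterating the same cut-vertex argument through the chain $v=v^h, v^{h-1},\dots,v^i$. For each $j>i$, the path must pass through $v^{j-1}$, which is the root $r$ of $H_j$, starting from the cycle vertex $v^j$ of $H_j$. Within $H_j$, every path from a cycle vertex to $r$ ends with one of the spokes of length $d_j$, so this costs at least $d_j$. This gives $\dist(v,v^i) \ge \sum_{j=i+1}^h d_j$, consistent with \eqref{eq:lb:vi-dist}.

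\textbf{Step 3: the middle term.} We need $\dist_{H_i}(v^i,u) \ge \dist_{H_i}(v^i, X^i_{v^i}) = d_i + 1$. This is \eqref{eq:gadget:cycle}, together with the remark that the spokes to the root do not shorten distances. To justify that remark: any path that uses the root uses two spokes, so it has length at least $2d_i \ge d_i+1$.

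\textbf{Conclusion and main obstacle.} Adding the three bounds gives $\dist(v,\hat u) \ge \sum_{j>i} d_j + (d_i+1) + \sum_{j<i} d_j = d+1$. Since $\hat u \in X_v$ was arbitrary, $\dist(v,X_v) \ge d+1$. The main point needing care is Step 1: making the cut-vertex routing rigorous, and in particular checking that the identification of the root of $H_{i+1}$ with $v^i$ leaves no shortcut through other subtrees or through the root of $H_i$. Once that is settled, Steps 2 and 3 are the routine inequalities above.
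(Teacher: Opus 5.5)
Your proposal is correct and follows essentially the same route as the paper: the cut vertices $v^i$ and $u$ give the three-term decomposition of $\dist(v,\hat u)$, and the three terms are bounded using \eqref{eq:lb:vi-dist}, \eqref{eq:gadget:cycle}, and the length $\sum_{j<i} d_j$ of the pendant path. You also spell out two points the paper only asserts: the spokes of $H_i$ cannot shorten cycle distances, since any path through the root has length at least $2d_i \ge d_i+1$, and excursions out of $H_i$ through cut vertices are dead ends.
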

\begin{proof}
    Let $\hat{u} \in X_v$.
    Assume that $\hat{u}$ is the pendant vertex attached to a vertex $u$
    of the cycle $C^i$ in a local gadget $H_i$ on the upward path from $v$ in the construction.
    Note that both $v^i$ and $u$ separate $v$ from $\hat{u}$ in $G$, thus
    \[ \dist(v, \hat{u}) = \dist(v, v^i) + \dist(v^i, u) + \dist(u, \hat{u}). \]
    The first term is computed in~\eqref{eq:lb:vi-dist}.
    For the second term, by~\eqref{eq:gadget:cycle}, as $u \in X^i_{v^i}$,
    \[ \dist(v^i, u) \ge d_i + 1. \]
    Finally, the path from $u$ to $\hat{u}$ is of length
    \[ \dist(u, \hat{u}) = \sum_{j=1}^{i-1} d_j. \]
    The lemma follows as $d = \sum_{j=1}^h d_j$.
\end{proof}

\begin{lemma}\label{lem:lb:dist2}
For every two distinct $v,w \in L$, we have $\dist(v, X_w) \leq d$.
\end{lemma}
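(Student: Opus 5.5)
The plan is a direct distance computation: exhibit, for distinct $v,w \in L$, one element $\hat u \in X_w$ together with an explicit walk from $v$ to $\hat u$ of length at most $d$. Since only an upper bound is needed, walks inside local gadgets suffice and no separation argument is required.

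\textbf{Step 1: choose the branching level.} Let $i \in [h]$ be the largest index such that the local gadget $H_i$ on the upward path from $v$ is the same copy as the local gadget $H_i$ on the upward path from $w$. Such an $i$ exists since $H_1$ is shared by every vertex of $L$. I claim $v^i \neq w^i$, and that both lie on the cycle $C^i$ of this common gadget.
\begin{itemize}
\item If $i = h$, then $v^h = v \neq w = w^h$.
\item If $i < h$, then $v^i$ and $w^i$ are the roots of the gadgets $H_{i+1}$ on the respective upward paths, and these roots are cycle vertices of the common $H_i$. If $v^i = w^i$, the same copy of the level-$(i+1)$ subtree hangs there. The gadget $H_{i+1}$ would then be common, contradicting maximality of $i$.
\end{itemize}

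\textbf{Step 2: pick the witness.} The set $X_w$ contains $\hat u$ for every $u \in X^i_{w^i}$, where $X^i_{w^i}$ is taken in this same common cycle $C^i$. By \eqref{eq:gadget:cycle}, applied with $v^i \neq w^i$, there is $u \in X^i_{w^i}$ with $\dist_{C^i}(v^i, u) \leq d_i$.

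\textbf{Step 3: bound the walk.} Concatenate three pieces:
\begin{itemize}
\item the upward path from $v$ to $v^i$, of length $\sum_{j=i+1}^h d_j$ by \eqref{eq:lb:vi-dist};
\item a path along $C^i$ from $v^i$ to $u$, of length at most $d_i$;
\item the pendant path from $u$ to $\hat u$, of length $\sum_{j=1}^{i-1} d_j$.
\end{itemize}
Distances in $G$ are at most the lengths of paths in $C^i$, since $C^i$ is a subgraph of $G$. Hence
\[ \dist(v,\hat u) \leq \sum_{j=i+1}^h d_j + d_i + \sum_{j=1}^{i-1} d_j = d, \]
so $\dist(v, X_w) \leq d$.

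The only delicate point is Step 1: one must check that ``the same $H_i$'' refers to the same physical copy in the tree-like arrangement. Only then do $v^i$ and $w^i$ lie on one cycle, and only then is the set $X^i_{w^i}$ used to build $X_w$ the one to which \eqref{eq:gadget:cycle} applies relative to $v^i$. This follows by induction down the levels from the recursive construction of subtrees.
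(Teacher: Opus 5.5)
Your proposal is correct and follows the paper's own proof. You take the lowest common ancestor gadget $H_i$ of $v$ and $w$, use the cycle-gadget property with $v^i \neq w^i$ to find $u \in X^i_{w^i}$ within distance $d_i$ of $v^i$, and concatenate the upward path, the cycle path, and the pendant path to get length $d$; your maximality argument for $v^i \neq w^i$ simply spells out what the paper asserts in one line.
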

\begin{proof}
    Let $H_i$ be the local gadget that is the lowest common ancestor of $v$ and $w$
    in the construction. 
    As $v \neq w$, we have $v^i \neq w^i$.
    By~\eqref{eq:gadget:cycle}, there exists $u \in X^i_{w^i}$
    with $\dist(v^i, u) \leq d_i$. We claim that $\dist(v, \hat{u}) \leq d$;
    this proves the lemma as $\hat{u} \in X_w$.
    Clearly,
    \[ \dist(v, \hat{u}) \leq \dist(v, v^i) + \dist(v^i, u) + \dist(u, \hat{u}).\]
    The first term is computed in~\eqref{eq:lb:vi-dist}.
    The second term is at most $d_i$ by the choice of $u$. 
    Finally, the path from $u$ to $\hat{u}$ is of length
    \[ \dist(u, \hat{u}) = \sum_{j=1}^{i-1} d_j. \]
    The lemma follows as $d = \sum_{j=1}^h d_j$.
\end{proof}

\paragraph{Choosing $t$, $k_i$ and $d_i$.}
For the final bound of Theorem~\ref{thm:kcenter-lb2},
we choose $h = \min(k,d)$ and split $k$ into $k_i$s and $d$ into $d_i$s as evenly as possible.
Then, if $k \leq d$, we have $h = k$, $k_i=1$ for each $i \in [h]$, and
\[ 1 + k_i(2d_i+1) = 2d_i+2 \geq 2\left\lfloor\frac{d}{k}\right\rfloor+2. \]
Otherwise, if $d \leq k$, we have $h=d$, $d_i=1$ for each $i \in [h]$, and
\[ 1+ k_i(2d_i+1) = 1+3k_i \geq 3\left\lfloor\frac{k}{d}\right\rfloor+1.\]
The size of the constructed comatching follows from~\eqref{eq:lb:sizeL}.
\end{proof}